\renewcommand{\epsilon}{\varepsilon}
\newcommand{\e}{\epsilon}
\newcommand{\expect}{{\bf E}}
\newcommand{\prob}{{\bf Pr}}
\theoremstyle{plain}
\newtheorem{thm}{Theorem} 
\newtheorem*{thm*}{Theorem}
\newtheorem{Def}[thm]{Definition}
\newtheorem{cor}[thm]{Corollary}
\newtheorem{prop}[thm]{Proposition}
\newtheorem{lem}[thm]{Lemma}
\newtheorem{obs}[thm]{Observation}
\newtheorem*{cla*}{Claim}
\newtheorem{definition}[thm]{Definition}
\newtheorem{ineq}[thm]{Inequality}
\renewcommand{\geq}{\geqslant}
\renewcommand{\leq}{\leqslant}
\newcommand{\eps}{\varepsilon}
\DeclareMathOperator{\polylog}{polylog}
\newcommand{\R}{\mathcal{R}}
\newcommand{\F}{\mathcal{F}}
\newcommand{\C}{\mathcal{C}}
\newcommand{\I}{\mathcal{I}}
\newcommand{\tc}{t_{c}} 
\newcommand{\N}[2]{N_{G_{#1}}(#2) } 
\newcommand\floor[1]{\lfloor#1\rfloor}
\newcommand{\dataemail}{\textsf{email-Enron}}
\newcommand{\dataastroph}{\textsf{ca-AstroPh}}
\newcommand{\datafacebook}{\textsf{musae-facebook}}
\newcommand{\datahepth}{\textsf{cit-HepTh}}
\newcommand{\algopivot}{\textsc{Pivot-Dynamic}}
\newcommand{\algosingletons}{\textsc{Singletons}}
\newcommand{\algoagreement}{\textsc{Agree-Static}}
\newcommand{\dcc}{\textsc{Dynamic-Agreement}}
\newcommand{\datadrift}{\textsf{Drift}}
\newcommand{\vanillaagreementalgo}[1]{\textsc{AgreementAlgorithm(\ensuremath{#1})}}
\newcommand{\probabilisticagreement}[3]{\textsc{ProbabilisticAgreement(\ensuremath{#1, #2 ,#3})}}
\newcommand{\heavyprocedure}[2]{\textsc{Heavy(\ensuremath{#1, #2})}}
\newcommand{\connectprocedure}[2]{\textsc{Connect(\ensuremath{#1, #2})}}
\author{
    Vincent Cohen-Addad \\
    \small{Google Research}\\
    \small{\texttt{cohenaddad@google.com}}
    \and    Silvio Lattanzi \\
    \small{Google Research}\\
    \small{\texttt{silviol@google.com}}    \and
    Andreas Maggiori \\
    \small{Columbia University}\\
    \small{\texttt{cam6292@columbia.edu}}
    \and   Nikos Parotsidis \\
    \small{Google Research}\\
    \small{\texttt{nikosp@google.com}}
}
\title{Dynamic Correlation Clustering in Sublinear  Update Time}
\begin{document}
\maketitle








\begin{abstract}

We study the classic problem of correlation clustering in dynamic node streams. In this setting, nodes are either added or randomly deleted over time, and each node pair is connected by a positive or negative edge. The objective is to continuously find a partition which minimizes the sum of positive edges crossing clusters and negative edges within clusters. We present an algorithm that maintains an $O(1)$-approximation with $O(\polylog n)$ amortized update time.
Prior to our work~\citet{CharicarSinglePassStreaming} achieved a $5$-approximation with $O(1)$ expected update time in edge streams which translates in node streams to an $O(D)$-update time where $D$ is the maximum possible degree.
Finally we complement our theoretical analysis with experiments on real world data.
\end{abstract}

\section{Introduction} \label{sec:IntroductionDCC}

Clustering is a cornerstone of contemporary machine learning and data analysis.  A successful clustering algorithm partitions data elements so that similar items reside within the same group, while dissimilar items are separated. Introduced in 2004 by Bansal, Blum and Chawla~\cite{bansal2004correlation}, the correlation clustering objective offers a natural approach to model this problem. Due to its concise and elegant formulation, this problem has drawn significant interest from researchers and practitioners, leading to applications across diverse domains. These include ensemble clustering identification \citep{bonchi2013overlapping}, duplicate detection \citep{arasu2009large}, community mining  \citep{chen2012clustering}, disambiguation tasks \citep{kalashnikov2008web}, automated labeling \citep{agrawal2009generating, chakrabarti2008graph}, and many more.


In the correlation clustering problem we are given a graph where each edge has either a positive or negative label, and where a positive edge $(u,v)$ indicates that $u,v$ are similar elements (and a negative edge $(u,v)$ indicates that $u,v$ are dissimilar), the objective is to compute a partition of the graph that 
minimizes the number of negative edges within clusters plus positive 
edges between clusters. Since the problem is NP-hard, researchers
have focused on designing approximation algorithms.


The algorithm proposed by~\citet{clusterlp} achieves an approximation ratio of $1.43 + \epsilon$, improving upon the previous $1.73 + \epsilon$ and $1.994 + \epsilon$ achieved by \citet{Cohen-AddadL0N23,VincentCCBestApprox}. Prior to these developments, the best approximation guarantee of $2.06$ was attained by the algorithm of~\citet{chawla2015near}.\footnote{Note also that there is also a version of the problem~\cite{bansal2004correlation} where the objective is to maximize the number of positive edges whose both endpoints are in the same cluster plus the number of negative edges across clusters. Similarly, if the input is weighted an $O(\log n)$ approximation has
been shown by Demaine et al.~\cite{demaine2006correlation}.}

These above approaches are linear-programming-based: they require to 
solve a linear program and then provide a rounding algorithm. The best 
known ``combinatorial'' algorithm is due to 
a recent local search algorithm of~\citet{abs-2404-05433} 
achieving a $1.845+\eps$-approximation. Prior to this, 
the best known ``combinatorial'' algorithm was the celebrated pivot algorithm 
of~\citet{PivotAlgorithm} which consists in repeatedly
creating a cluster by picking a random unclustered node, and 
clustering it with all its positive unclustered neighbors. The 
versatility of the scheme has led the pivot algorithm to be used
in a variety of contexts, and in particular for designing dynamic 
algorithms~\cite{BehnezhadDHSS19, BehnezhadCMT22, BehnezhadCMT23}.


Dynamic algorithms hold a key position in algorithm design due to their relevance in handling real-world, evolving datasets.
Consequently, substantial research has focused on crafting clustering algorithms expressly designed for dynamic environments (including streaming, 
online, and distributed settings)~\cite{lattanzi2017consistent, fichtenberger2021consistent, jaghargh2019consistent, cohen2019fully, guo2021consistent, OCCC, AssasdiCC, onlineneurips, BehnezhadCMT22, BehnezhadCMT23, bateni2023optimal}.

While the classical approach is to design variations of the Pivot algorithm of~\citet{PivotAlgorithm}, ~\citet{DBLP:conf/icml/Cohen-AddadLMNP21} provide an alternative approach based on a notion called \emph{agreement}, which entails the calculation of the positive neighborhood similarity of pairs of nodes. While that approach is initially used in the context of distributed correlation clustering, it has been also used in~\cite{AssasdiCC} to design a static algorithm with $n \polylog n$ time complexity. In the latter setting the algorithm does not read the entire input, otherwise a time complexity of $n \polylog n$ would be impossible for dense graphs. However the algorithm has access to the graph through queries. We formalize that model in~\cref{def: database model} and ask the following natural question: in which settings can 
correlation clustering be solved in $o(n^2)$ time?

\paragraph{Our Contribution} In this paper, our focus lies on the dynamic case, where nodes are inserted adversarially and / or deleted randomly over time. 
This setting has already been studied for other clustering problems in~\cite{epasto2015efficient} and serves as a bridge between the fully adversarial and random input models.
Our objective is to maintain an $O(1)$-approximate solution at any point in time, while paying as little computation time
as possible upon modification of the input (node insertion
or deletion).
Unfortunately, the approximation algorithms mentioned above are computationally expensive and cannot be re-executed 
each time the input changes.

The best known bound for the fully dynamic
setting is due to~\citet{BehnezhadCMT23} who provided a $5$-approximation in $O(m)$ total update time for adversarial edge insertions and deletions, where $m$ is the total number of positive edges of the graph.
Of course, for node updates, the above approach
can be used to achieve a $5$-approximation with $O(D)$ update 
time, where $D$ is the maximum positive degree of a node throughout the vertex sequence. Note that for dense graphs this is equivalent to a $\Theta (n^2)$ algorithm.
We ask whether it is possible to go beyond this bound if we are given indirect access to the graph through queries and we do not need to read the entire input.

We answer the above question positively. More precisely, we provide the first algorithm which achieves a constant factor approximation model with poly-logarithmic update time per node insertion/deletion .
We also complement our theoretical result with experiments showing the effectiveness of our algorithm in practice.

\section{Problem Definition and the Database Model of Computation}\label{sec: ProblemDefinitionDCC}

The \textit{disagreements minimization} version of the correlation clustering problem receives as input a complete signed undirected graph $G = (V, E, s)$ where each edge $e = \{u,v\}$ is assigned a sign $s(e) \in \{ \text{`}+\text{'}, \text{`}-\text{'}\}$ and the goal is to find a partition of the nodes such that the number of $\text{`}-\text{'}$ edges inside the same cluster and $\text{`}+\text{'}$ edges in between clusters is minimized. For simplicity we denote the set of $\text{`}+\text{'}$ and $\text{`}-\text{'}$ edges by $E^+$ and $E^-$ respectively. A \emph{clustering} is a partition of the nodes $\mathcal{C} = \{C_1, C_2, \dots, C_k \}$ and the cost of that clustering is ${|\{u,v\} \in E^+ : u \in C_i, v \in C_j, i\neq j|} + {|\{u,v\} \in E^- : \exists i:  u, v \in C_i| }  $

Note that a complete undirected signed graph $G= (V, E, s)$ can be converted into a non-signed undirected graph $G=(V, E)$ where for each pair of nodes $\{u,v\}$ there is an edge between them in $G$ if and only if $s \left( \{u,v\} \right) = \text{`}+\text{'}$. Thus the absence of an edge between two nodes corresponds to a negative edge in the original signed graph and the presence of an edge to a positive edge in the original graph. For simplicity, throughout the paper we work with the non-signed equivalent definition of the correlation clustering problem.

The cost of a clustering $\mathcal{C}$ becomes:
\begin{align*}
    |\{u,v\} \in E : u \in C_i, v \in C_j, i\neq j| + \\
|\{u,v\} \not \in E : \exists i:  u, v \in C_i| 
\end{align*}

In our setting, nodes arrival are adversarial and node deletions are random. More precisely, at each time $t$ an adversary can decide either to add to the graph an adversarially chosen node  or to delete a random node from the graph. Upon arrival a node reveals all the edges to previously arrived nodes and upon deletion all edges of the node are deleted. We denote by $u_t$ the node that arrived or left at time $t$ and by $G_t$ the graph structure after the first $t$ node arrivals/deletions. We also denote by $V_t, E_t$ the set of nodes and edges of graph $G_t$ and by $n, m$ the total number of nodes and edges respectively that appeared throughout the dynamic stream. Note that $\forall t$ we have $|V_t| \leq n$ and $|E_t| \leq m$.

We denote by $OPT_t$ the cost of an optimal correlation clustering solution for graph $G_t$ and by $ALG_t$ the cost of a dynamic algorithm solution at time $t$. We say that an algorithm maintains a $c-$approximate solution if $\forall t$, input graphs and node streams we have $ALG_t \leq c \cdot OPT_t$. 

We now formally define the computation model that we study in the paper. For an unsigned graph $G = (V, E)$ we denote by $V$, $E$ the set of nodes and edges respectively, and for a node $u \in V$ we use $N_G (u)$ to denote the neighborhood of $u$ in $G$. This model was considered
by Assadi and Wang~\cite{AssasdiCC} for designing sublinear algorithms
for correlation clustering.

\begin{Def} [Database model~\cite{AssasdiCC}]\label{def: database model}
Given a graph $G = (V, E)$ we have access to the graph structure through the following queries which have a cost of $O (\log |V|)$ :
\begin{enumerate}
    \item Degree queries: $\forall u \in V$ we get its degree $|N_G (u)|$
    \item Edge queries: $\forall u, v \in V$ we get whether $(u, v) \in E$
    \item Neighborhood sample queries: $\forall u$, we get a node  $v \in N_G (u)$ uniformly at random from set $N_G (u)$
\end{enumerate}
\end{Def}
Note that all these queries are easily implementable in the classical computational model where the graph is stored in the same processing unit as the one we use to compute our clustering solution. Thus, other than permitting us to avoid reading and storing the graph locally the Database model is strictly harder than the classical RAM computational model.
Our goal is to maintain a constant approximation with respect to $OPT_t$ using only $\polylog n$ amortized update time (queries and computational operations). Note that for a dense graph this is sublinear in the time of reading the entire input sequence.

\section{Algorithm and Techniques}\label{sec: Algorithm and techniques}

Our approach draws inspiration from the Agreement algorithm, initially presented in~\cite{DBLP:conf/icml/Cohen-AddadLMNP21}.
In particular, we leverage their key insight that to obtain a constant factor approximation it is enough to cluster together nodes with similar neighborhoods. Essentially, it is enough to focus on identifying near-clique structures.
A second key idea that we use comes from~\citet{AssasdiCC} where it is noted that to discover these dense substructures one does not need to examine the entire neighborhood of each node but it is possible to carefully sub-sample the edges of the graph to obtain a sparser structure.
We build upon these two ideas along with developing  several new techniques to obtain a constant factor approximation algorithm for dynamic graph with sublinear complexity.

The section is structured as follows: (1) we introduce the Agreement algorithm of~\cite{DBLP:conf/icml/Cohen-AddadLMNP21} along with its useful properties; (2) we describe the challenges in applying that algorithm on a dynamic graph; (3) we describe a \emph{notification} procedure which is the base of our algorithm; and (4) provide the pseudocode of our algorihtm.

\subsection{The Agreement Algorithm}\label{sec:static-algorithm}

\newcommand{\T}[1]{%
  \ifx\relax#1%
    \mathrm{Type}
  \else
    \mathrm{Type}_{#1}%
  \fi}
  
Before describing the Agreement algorithm of~\cite{DBLP:conf/icml/Cohen-AddadLMNP21} we need to introduce two central notions to quantify the similarity between the neighborhood of two nodes.

\begin{Def} [Agreement]\label{def: epsilon-agreement between nodes}
Two nodes $u,v$ are in $\epsilon$-agreement in $G$ if 
$$|N_{G}(u) \triangle N_{G}(v)|  < \epsilon \max \{ |N_{G}(u)|, |N_{G}(v)| \}$$
where $\triangle$ denotes the symmetric difference of two sets.
\end{Def}

\begin{definition}[Heaviness]\label{def:heaviness}
A node is called \emph{$\e$-heavy} if it is in $\e$-agreement with more than a $(1-\e)$-fraction of its neighbors. Otherwise it is called \emph{$\e$-light}.
\end{definition}

When $\e$ is clear from the context we will simply say that two nodes are or are not in agreement and that a node is heavy or light.

The Agreement algorithm uses the agreement and heaviness definitions to compute a solution to the correlation clustering problem, as described in~\cref{alg:vanilla-agreement}. We call the output of~\cref{alg:vanilla-agreement} the agreement decomposition of the graph $G$.

\begin{algorithm}[h]
\caption{\vanillaagreementalgo{G} \label{alg:vanilla-agreement}}
\begin{algorithmic}
	\STATE {Create a graph $\tilde{G}$ from $G$ by discarding all edges whose endpoints are not in $\epsilon$-agreement.}
	
	\STATE {Discard all edges of $\tilde{G}$ between light nodes of $G$.}
	
	\STATE Compute the connected components of $\tilde{G}$, and output them as the solution.
\end{algorithmic}
\end{algorithm}

At a high level, the first two steps of~\cref{alg:vanilla-agreement} can be characterized as a filtering procedure which ensures that two nodes with similar neighborhoods end up in the same connected component of  $\Tilde{G}$ and consequently in the same cluster of the final partitioning.
The main lemma which helps bounding the approximation ratio of the Agreement algorithm and which will be also used to analyze the performance of our algorithm is the following:
\footnote{We note that~\cref{lem: original paper and how to bound the approximation ratio} is not explicitly stated in~\cite{DBLP:conf/icml/Cohen-AddadLMNP21} but it is a combination of lemmas 3.5, 3.6, 3.7 and 3.8 in the latter paper}

\begin{lem}[rephrased from~\cite{DBLP:conf/icml/Cohen-AddadLMNP21}]\label{lem: original paper and how to bound the approximation ratio}
Let $\mathcal{C} = \{C_1, C_2, \dots, C_k \}$ be a clustering solution for graph $G = (V, E)$ and $\epsilon$ a small enough constant. If the following properties hold:
\begin{enumerate}
    \item $\forall i \in \{ 1, 2, \dots, k\}$ and $u \in C_i$ such that $|C_i|>1$ we have $|N_G (u) \cap C_i| \geq \nicefrac{3}{4} |C_i|$
    \item $\forall e = (u, v) \in E$ such that $u \in C_i$, $v \in C_j$ and $i \neq j$ then either $u$ and $v$ are not in $\epsilon$-agreement or both nodes are not $\epsilon$-heavy.
\end{enumerate}
Then the cost of $\mathcal{C}$ is a constant factor approximation to that of the optimal correlation clustering solution for graph $G$
\end{lem}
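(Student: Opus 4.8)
The plan is to split the cost of $\mathcal C$ into the \emph{internal} part $\mathrm{cost}_{\mathrm{in}}$, which counts non-edges whose endpoints share a cluster, and the \emph{external} part $\mathrm{cost}_{\mathrm{out}}$, which counts edges whose endpoints lie in different clusters, and to bound each of the two by $O(\OPT)$ separately, where $\OPT$ is the optimal correlation clustering cost on $G$. The two parts are handled by genuinely different arguments: the internal cost is charged directly against the optimum's own within-cluster cost, whereas the external cost is charged against a packing of ``bad triangles''.

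For the internal part I would fix a non-singleton cluster $C_i$ with $|C_i|=N$. Property~1 says every $v\in C_i$ has at most $\tfrac14 N$ non-neighbours inside $C_i$, which gives $\mathrm{cost}_{\mathrm{in}}(C_i)\le\tfrac18 N^2$ and, more usefully, lets me compare $\mathrm{cost}_{\mathrm{in}}(C_i)$ with the cost $\OPT_i$ the optimum pays on pairs inside $C_i$. Let $\{P_a\}$ be the partition the optimum induces on $C_i$. If every part has size $\le N/2$, then each $v$ has at least $\tfrac14 N$ neighbours outside its own part, so the optimum cuts at least $\tfrac18 N^2\ge\mathrm{cost}_{\mathrm{in}}(C_i)$ pairs inside $C_i$. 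Otherwise one part $P_1$ has $|P_1|>N/2$: the optimum pays for the non-edges it keeps inside $P_1$ \emph{and} cuts every edge between $P_1$ and $Q:=C_i\setminus P_1$, and these two quantities sum to at least $\mathrm{cost}_{\mathrm{in}}(C_i)$ because each $v\in Q$ has at most $\tfrac14 N$ non-neighbours but at least $\tfrac14 N$ neighbours in $P_1$. Since distinct clusters involve disjoint pairs, $\sum_i\OPT_i\le\OPT$, hence $\mathrm{cost}_{\mathrm{in}}\le\OPT$.

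For the external part I would use that a \emph{bad triangle} — a triple of vertices inducing exactly two edges — is hit by at least one disagreement of \emph{every} clustering, so any family of bad triangles with weights $y_T\ge 0$ and per-pair load $\sum_{T\ni\{a,b\}}y_T\le\Lambda$ certifies $\sum_T y_T\le\Lambda\cdot\OPT$ (expand the load inequality over the disagreement pairs of the optimum). Call $\{u,v\}\in E$ a \emph{disagreement edge} if $u,v$ are not in $\epsilon$-agreement; then there are at least $\max\{1,\lceil\epsilon M\rceil\}$ vertices $x$ adjacent to exactly one of $u,v$, where $M=\max\{|N_G(u)|,|N_G(v)|\}$, and each such $x$ forms a bad triangle $\{u,v,x\}$ through $\{u,v\}$. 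Spreading unit weight uniformly over these triangles makes each disagreement edge contribute total weight $1$, and a short computation bounds every pair's load by $O(1/\epsilon)$, so there are $O(\OPT/\epsilon)$ disagreement edges. Finally, by Property~2 every external edge is either a disagreement edge or incident to an $\epsilon$-light node, and an $\epsilon$-light node has at least an $\epsilon$-fraction of its incident edges being disagreement edges; so the number of external edges of the second kind is at most $\tfrac2\epsilon$ times the number of disagreement edges. Combining, $\mathrm{cost}_{\mathrm{out}}=O(\OPT/\epsilon^2)$, and therefore $\mathrm{cost}(\mathcal C)\le\OPT+O(\OPT/\epsilon^2)=O(\OPT)$ for constant $\epsilon$.

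I expect the per-pair load bound for the bad-triangle family to be the main obstacle: a single pair $\{a,b\}$ can serve as a ``leg'' of triangles generated by many different disagreement edges incident to $a$ (or to $b$), so each triangle must be weighted by the reciprocal of the \emph{larger} endpoint degree $M$ rather than by a fixed endpoint's degree, which is what keeps a low-degree vertex shared by many disagreement edges from accumulating super-constant weight; the degenerate case of very small degrees must also be folded in via the $\max\{1,\cdot\}$. The other point requiring care is the internal-cost casework when the optimum shaves a small piece off an almost-complete cluster: the split into ``$|P_1|>N/2$'' and ``all parts $\le N/2$'' is arranged precisely so that the ``non-edges kept inside $P_1$'' term and the ``edges cut between $P_1$ and $Q$'' term recombine to $\mathrm{cost}_{\mathrm{in}}(C_i)$ without losing a constant.
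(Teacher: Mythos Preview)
Your proposal is correct. The paper does not give its own proof of this lemma: it is stated as a rephrasing from \cite{DBLP:conf/icml/Cohen-AddadLMNP21} and the footnote explicitly says it is a combination of Lemmas~3.5--3.8 of that work, so there is nothing in the present paper to compare against beyond the citation.

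Your argument is exactly the standard one underlying those lemmas. The internal-cost charging (Property~1 forces every non-singleton cluster to be a near-clique, so any partition of it pays comparable cost) and the external-cost charging (a fractional bad-triangle packing bounds the number of non-agreeing edges by $O(\OPT/\epsilon)$, and Property~2 together with the light-vertex observation lifts this to all inter-cluster edges with an extra $1/\epsilon$ factor) are precisely the two ingredients in the cited paper. Your load computation for the packing---weighting each triangle by $1/|N_G(u)\triangle N_G(v)|\le 1/(\epsilon\max\{|N_G(u)|,|N_G(v)|\})$ so that the leg-load through a vertex $a$ telescopes to at most $|N_G(a)|\cdot 1/(\epsilon|N_G(a)|)=1/\epsilon$---is the right calculation, and your two-case split for the internal cost (largest optimum piece above or below $N/2$) is clean and loses no constant.
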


For $\epsilon$ small enough~\citet{DBLP:conf/icml/Cohen-AddadLMNP21} prove that~\cref{alg:vanilla-agreement} satisfies both properties of~\cref{lem: original paper and how to bound the approximation ratio} and therefore produces a constant factor approximation.

\subsection{Challenges of Dynamic Agreement}

Our goal is to design a dynamic version of the Agreement algorithm which consistently maintains a sparse graph $\Tilde{G}$ whose induced clustering satisfy both conditions of~\cref{lem: original paper and how to bound the approximation ratio} while only spending $O(\polylog n)$ update time upon node insertions and random deletions.

We briefly describe what are the main challenges that we face in such endeavour:
\begin{enumerate}
    \item computing the agreement between two nodes or the heaviness of a node may take time $\Theta(n)$;
    \item the number of agreement calculations performed by~\cref{alg:vanilla-agreement} is equal to the number of edges in our graph; and
    \item since the total complexity that we aim is $O(n \polylog n)$, the graph $\tilde{G}$ that we maintain should be both sparse at any point and stable (do not change significantly between consecutive times).
\end{enumerate}

From a high level perspective, we solve those issues as follows. First, instead of computing exactly whether two nodes are in $\epsilon$-agreement and whether a node is $\epsilon$-heavy, we design two stochastic procedures, \mbox{namely~\probabilisticagreement{u}{v}{\epsilon} and} \heavyprocedure{u}{\epsilon}, which only need a sample of logarithmic size of the two neighborhoods to answer correctly, with high probability, those questions. We defer the description of those procedures in~\cref{sec: Agreement and heavyness calculation}. Second, for each dense substructure, we maintain dynamically a random set of $O(\polylog n)$ heavy nodes. We call that sample the \emph{anchor} set and show that the connections of those nodes are enough to recover a good clustering. Finally, to efficiently maintain our sparse graph $\Tilde{G}$ we design a message-passing procedure, which we call $\mathrm{Notify}$, to communicate events across neighboring nodes. Roughly, this procedure propagates information about the arrival or deletion of a node $u$ to a $O(\polylog n)$-size randomly chosen subset of nodes within small hop distance from $u$. Whenever a node receives a ``notification'', we either add, with some probability, this node to the \emph{anchor} set and we revisit the agreement between that node and nodes already in the anchor set.

While similar ideas to resolve the first and second challenges have been already explored in the sublinear static algorithm of~\citet{AssasdiCC} it is important to observe that applying the same principles in the dynamic sublinear setting is highly non-trivial. Indeed, in this setting it is not even clear if one can even maintain a good approximation of the degrees of nodes in sublinear time\footnote{While we are able to circumvent the degree computation in our algorithm, we note that this is an interesting open problem.}.


\subsection{Notify Procedure}\label{subsec: notify procedure}

A central sub-procedure in our algorithm, which allow us to keep track of evolving clusters is the Notify procedure.
We believe that it is of independent interest and we devote~\cref{subsec: notify procedure} entirely to its description.
As mentioned before the Notify procedure is responsible to propagate the information of node arrivals and deletions to $O (\polylog n)$ nodes.
We distinguish between different types of notifications depending on how many nodes did the notification propagate through from the ``source'' node that initiated the notify procedure.

Notifications are subdivided in categories depending on their type, which could be $\T{i}, i = 0,1,2$.
A central definition in our algorithm and in its analysis is the ``interesting event'' definition.

\begin{definition}
   We say that $u$ participates in an ``interesting event'' either when $u$ arrives or $u$ receives a $\T{0}$ or $\T{1}$ notification.
\end{definition}

To simplify the description of the algorithm. We denote by $d(u)$ the current degree of a node $u$ and we define the function $l(x) = \floor{\log (x)}, \forall x > 0$. With a slight abuse of notation for a node $u$ we denote by $l_u$ the quantity $l(d(u))$. Note that for all nodes $u$ we have that $2^{l_u} \leq d(u) < 2^{l_u+1} $.

Each node $u$ stores sets $I^0_u, I^1_u,  I^2_u, \dots, I^{\log n}_u$ and $B^0_u, B^1_u,  B^2_u, \dots, B^{\log n}_u$. Set $I^i_u$ contains $u$'s last neighborhood sample when its degree was in $[2^i, 2^{i+1})$ and set $B^i_u$ contains all nodes $v$ such that $u \in I^i_v$. 

Each time a node $u$ participates in an ``interesting event'' or receives a $\T{2}$ notification it stores a random sample of size $O (\log n)$ of its neighborhood in $I^{l_u}_u$. Further, $u$ enters the sets $B^{l_u}_v$ for all nodes $v \in I^{l_u}_u$ of its sample. Those nodes are responsible to notify $u$ when they get deleted. Thus, upon $u$'s deletion all nodes in $\bigcup_{i} B^i_u$ are notified.

This notification strategy has two key properties, that are: (1) $u$ gets notified and participates in an ``interesting event'' when a constant fraction of its neighborhood gets deleted; and (2) w.h.p. $u$ gets notified when its $2$-hop neighborhood changes substantially.
Interestingly, the notification strategy does this  while maintaining the expected complexity bounded by $O(\polylog n)$ each time the \emph{Notify} procedure is called. \cref{algo:notify} contains the pseudocode of the notification procedure

\begin{algorithm}
\label{algo:notify}
\caption{Notify($ u, \epsilon$)}
\label{algo:notify}
\begin{algorithmic}
\IF{$u$ arrived}
    \STATE $I^{l_u}_u \xleftarrow{} $  sample $10^{10} \log n / \epsilon$ random neighbors
    \STATE $\forall v \in I^{l_u}_u$: $B^{l_u}_v \xleftarrow{} B^{l_u}_v \cup \{ u\}$
    \STATE $\forall v \in I^{l_u}_u$: $u$ sends $v$ a $\T{0}$ notification
\ELSIF{$u$ was deleted}
    \STATE $\forall v \in \bigcup_i B^i_u$: $u$ sends $v$ a $\T{0}$ notification
\ENDIF
\FOR{$i = 0, 1, 2$}
    \FORALL{$w$ that received a $\T{i}$ notification}
        \STATE $\forall v \in I^{l_w}_w$: $B^{l_w}_v \xleftarrow{} B^{l_w}_v \setminus \{ w\}$
        \STATE $I^{l_w}_w \xleftarrow{} $  sample $10^{10} \log n / \epsilon$ random neighbors
        \STATE $\forall v \in I^{l_w}_w$: $B^{l_w}_v \xleftarrow{} B^{l_w}_v \cup \{ w\}$
        \IF{$i \in \{0, 1\}$}
            \STATE $\forall v \in I^{l_w}_w$: $w$ sends $v$ a $\T{i+1}$ notification
        \ENDIF
    \ENDFOR
\ENDFOR
\end{algorithmic}
\end{algorithm}

\newcommand{\np}[1]{%
  \if\relax#1%
    \mathrm{Notify}(\cdot)(\epsilon)
  \else
    \mathrm{Notify}({#1})(\epsilon)%
  \fi}

\subsection{Our Dynamic Algorithm Pseudocode}
Our~\cref{alg:dynamicagreement} contains 4 procedures: the $\text{Notify}(v_t, \epsilon)$ procedure which we described in~\cref{subsec: notify procedure}, the Clean($ u, \epsilon, t$) procedure, the Anchor($ u, \epsilon, t$) procedure and the Connect($ u, \epsilon, t$) procedure. Before describing the last three we introduce some auxiliary notation. 

We denote by $\I_t$ the set of nodes that participated in an ``interesting event'' at round $t$.
We also denote by $\Phi$ a dynamically changing (across the execution of our algorithm) subset of the nodes which we call anchor set. We avoid the subscript $t$ in the set $\Phi$ notation as it is always clear for the context at what time we are referring to. When a node $u$ is deleted, then, we also apply $\Phi \gets \Phi \setminus \{u\}$. We maintain a sparse graph $\Tilde{G_t}$ with the same node set of $G_t$ and with edge set $\Tilde{E_t} \subseteq E_t$. We start our algorithm with $\Tilde{G_0}$ being an empty graph. The backbone of our sparse solution $\Tilde{G_t}$ is the anchor set nodes $\Phi$. Indeed, we have that $\forall e = (u, v) \in \Tilde{E_t}$ either $u$ or $v$ are in $\Phi$. Moreover, for any node $u$ we denote by $\Phi_u$ the subset of the nodes in $\Phi$ connected to $u$ in our sparse solution $\Tilde{G_t}$.

\renewcommand{\algorithmicwhile}{\textbf{on arrival/deletion of}}
\renewcommand{\algorithmicendwhile}{\algorithmicend\ \textbf{on}}

\newcommand{\LFORALL}[1]{\State\algorithmicforall\ #1\ \algorithmicdo}
\newcommand{\ELFORALL}{\unskip\ \algorithmicend\ \algorithmicforall}

\begin{algorithm}
\caption{Dynamic Agreement (DA)}\label{alg:dynamicagreement}
\begin{algorithmic}
\WHILE{$v_t$}
    \STATE $\text{Notify}(v_t, \epsilon)$
    \FORALL{$u \in \I_t$}
        \STATE Clean($ u, \epsilon, t$)
        \STATE Anchor($ u, \epsilon, t$)
        \STATE Connect($ u, \epsilon, t$)
    \ENDFOR
\ENDWHILE
\end{algorithmic}
\end{algorithm}

After the arrival or deletion of node $v_t$ and the propagation of notifications through the $\text{Notify}(v_t, \epsilon)$ procedure, for all nodes $u$ in an important event we do the following.

First we call the Clean($ u, \epsilon, t$) procedure which is responsible to delete edges between nodes which are not in $\epsilon$-agreement anymore and delete from the anchor set nodes that lost too many edges in our sparse solution. These operations enable our algorithm to refine clusters which became too sparse or refine cluster assignment for nodes that are not anymore in $\epsilon$-agreement.

Then we call the Anchor($ u, \epsilon, t$) where if $u$ is heavy then with probability $\min \{ 1, \frac{10^{7}\log n}{\epsilon |N_{G_t} (u)|} \}$ we add this node to $\Phi$. If $u$ is added in $\Phi$ then we calculate  agreements with all of its neighbors and whenever $u$ is in agreement with a neighbor $v$ we add edge $(u, v)$ to our sparse solution $\Tilde{G_t}$. As mentioned previously the anchor nodes are representative nodes of clusters. Thus, the Anchor($ u, \epsilon, t$) procedure allows us to update the set of anchor nodes so that they behave approximately like a uniform sample of each cluster.

Finally, we initiate the \textit{Connect} procedure. The purpose of this step is to add some redundant information so that the clustering is stable and also ensure that nodes that are inserted lately are guaranteed to be connected to some anchor node of their cluster.

\begin{algorithm}
\caption{Connect($ u, \epsilon, t$)}\label{alg:connect}
\begin{algorithmic}
\STATE Let $J_u$ be a random sample of size $\nicefrac{10^{5} \log n}{\epsilon}$ from $N_{G_t}(u)$.
\FOR{$w \in J_u$}
    \FOR{$r \in \Phi_w$}
        \STATE If $r$ is heavy and in agreement with $u$ then add the edge $(r, w)$ to $\Tilde{G_t}$
    \ENDFOR
\ENDFOR
\end{algorithmic}
\end{algorithm}

\begin{algorithm}
\caption{Anchor($ u, \epsilon, t$)}\label{alg:anchor}
\begin{algorithmic}
\STATE $X_u \sim \text{Bernoulli}(\min \{ \nicefrac{10^{7} \log n}{\epsilon|N_{G_t}(u)|}, 1\})$
 \IF{$u \in \Phi$}
        \STATE Delete all edges $(u, v)$ where $v \not \in \Phi$ from $\Tilde{G_t}$
    \ENDIF
    \IF{$u$ is heavy in $G_t$ and $X_u = 1$}
        \STATE For every neighbor $v$, if $u$ and $v$ are in agreement add edge $(u, v)$ in our sparse solution $\Tilde{G_t}$.
    \ENDIF
    \IF{$X_u = 1$}
        \STATE Add $u$ to $\Phi$ at the end of iteration $t$
    \ELSIF{$ u \in \Phi$ and $X_u = 0$}
        \STATE Delete $u$ from $\Phi$ at the end of iteration $t$
    \ENDIF
\end{algorithmic}
\end{algorithm}

\begin{algorithm}
\caption{Clean($ u, \epsilon, t$)}\label{alg:clean}
\begin{algorithmic}
\FOR{$w \in \Phi_u$}
    \IF{$w$ not in agreement with $u$ or $w$ is not heavy}
        \STATE Delete edge $(w, u)$ from $\Tilde{G_t}$
    \ENDIF
    \IF{$w$ lost more than an $\epsilon$ fraction of its edges in $\Tilde{G_t}$ from when it entered $\Phi$}
        \STATE Delete $w$ from $\Phi$ and all edges between $w$ and its neighbors in $\Tilde{G_t}$
    \ENDIF
\ENDFOR
\end{algorithmic}
\end{algorithm}



\section{Overview of our Analysis}\label{sec: proof sketch}

In this section we give a high level description of our proof strategy, the full proof is available in the Appendix. 
We first present the high-level ideas on how to prove~\cref{thm: all together theorem} and then we bound the running time.

\begin{restatable}{thm}{maintheorem}\label{thm: all together theorem}
    For each time $t$ the Dynamic Agreement algorithm outputs an $O(1)-$approximate clustering with probability at least $1 - 5/n$.
\end{restatable}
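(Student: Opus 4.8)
The plan is to verify that the sparse graph $\tilde{G_t}$ maintained by the Dynamic Agreement algorithm induces a clustering (its connected components) that satisfies both conditions of~\cref{lem: original paper and how to bound the approximation ratio} with probability at least $1 - 5/n$; invoking that lemma then immediately yields the $O(1)$-approximation. So the whole argument reduces to two structural claims about $\tilde{G_t}$, each of which must hold uniformly over all times $t$ (hence a union bound over the at most $n$ time steps, which is where the $5/n$ ultimately comes from, with a constant number of bad events per step).

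\textbf{Condition 2 (separation).} This is the easier half. I would argue that every edge $(u,v)$ that is ever inserted into $\tilde{G_t}$ is inserted only after an explicit agreement check (and, on the endpoint that is an anchor, a heaviness check) — this is visible in \textsc{Anchor} and \textsc{Connect}. The subtlety is that an edge inserted at some earlier time may become ``stale'': the graph evolves and $u,v$ may cease to be in $\epsilon$-agreement, or an anchor may cease to be heavy. Here I would use the \textsc{Clean} procedure together with the \textsc{Notify} guarantees: the key property advertised for \textsc{Notify} is that a node participates in an interesting event (and thus its incident sparse edges get re-examined by \textsc{Clean}) whenever a constant fraction of its neighborhood changes, and w.h.p. whenever its $2$-hop neighborhood changes substantially. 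Since agreement and heaviness are quantities determined by $1$- and $2$-hop neighborhoods, a stale edge cannot persist: by the time agreement could have degraded past the threshold, a \textsc{Clean} call would have deleted it. I would formalize this as: condition on the high-probability event that all \textsc{Probabilistic-Agreement} and \textsc{Heavy} calls answer correctly (Chernoff over $O(\log n)$-size samples, union bound over the $\poly(n)$ calls) and on the \textsc{Notify} coverage event; on that event, every surviving edge of $\tilde{G_t}$ has endpoints in $\epsilon$-agreement with at least one endpoint $\epsilon$-heavy, which is exactly condition 2.

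\textbf{Condition 1 (each nontrivial cluster is a near-clique from every node's view).} This is the main obstacle. I need that for every connected component $C$ of $\tilde{G_t}$ with $|C|>1$ and every $u \in C$, $|N_{G_t}(u) \cap C| \ge \tfrac34 |C|$. The intended mechanism: a component is ``held together'' by a set of anchor nodes; if $r$ is an anchor in agreement with both $u$ and $v$, then $u$ and $v$ have large overlapping neighborhoods, so triangle-inequality-style arguments on symmetric differences (as in~\cite{DBLP:conf/icml/Cohen-AddadLMNP21}) force all nodes of $C$ to have pairwise small symmetric difference, which bounds the diameter of the component and makes it a near-clique. The delicate points are: (i) showing the component does not ``drift'' too far — a chain of agreement edges through many anchors could in principle accumulate symmetric-difference error, so I must show the number of anchors spanning a component is $O(1)$ in the relevant sense, or that agreement is transitive enough over the heavy core; (ii) showing, via the sampling in \textsc{Anchor} (probability $\approx \log n / (\epsilon |N(u)|)$) and \textsc{Connect}, that every node of a genuine near-clique actually gets connected to an anchor of that near-clique — this is a coupon-collector / Chernoff argument that each dense cluster retains $\Theta(\polylog n)$ anchors at all times, using that interesting events fire often enough to refresh the anchor sample as degrees change; and (iii) handling random deletions: when a random node leaves, the anchor set and degrees shift, but since deletions are random (not adversarial) the anchor set remains close to a uniform sample of each cluster in expectation, and \textsc{Clean}'s ``lost an $\epsilon$ fraction of edges'' trigger prevents a depleted anchor from lingering.

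I would therefore structure the proof as: (1) a high-probability correctness event $\mathcal{E}$ for all sampling-based subroutines and for \textsc{Notify} coverage, with $\Pr[\neg\mathcal{E}] \le O(1/n)$ via Chernoff + union bound; (2) conditioned on $\mathcal{E}$, prove condition 2 directly from the guards in \textsc{Anchor}/\textsc{Connect} plus the \textsc{Clean}/\textsc{Notify} freshness argument; (3) conditioned on $\mathcal{E}$, prove condition 1 by (3a) a geometric/symmetric-difference lemma bounding the $G_t$-diameter of any $\tilde{G_t}$-component, and (3b) an anchor-coverage lemma showing every node of such a component is linked to an in-component anchor, so components coincide with near-cliques; (4) apply~\cref{lem: original paper and how to bound the approximation ratio}. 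I expect step (3b) — maintaining, against random deletions and adversarial insertions, that every dense cluster always contains enough correctly-connected anchors while the total update time stays $\polylog n$ — to be the crux, and the place where the random-deletion assumption is essential.
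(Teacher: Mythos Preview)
Your argument for Condition~2 is backwards. You claim it suffices that every edge \emph{present} in $\tilde G_t$ joins an agreeing pair with one heavy endpoint, but Condition~2 of \cref{lem: original paper and how to bound the approximation ratio} concerns positive edges of $G_t$ that \emph{cross} the output clustering --- edges whose endpoints lie in different components of $\tilde G_t$. Such edges are by definition absent from $\tilde G_t$, so freshness of $\tilde G_t$-edges says nothing about them. Concretely: two non-anchor nodes $u,v$ may be in $\epsilon$-agreement with $u$ heavy, yet the pair is never tested (neither endpoint is an anchor at the relevant moment), so no $\tilde G_t$-edge joins them; you still owe a proof that they land in the same component via some anchor path.

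The paper handles this by a different route: it proves (\cref{thm: all nodes of C are clustered together}) that every non-singleton cluster $C$ of the \emph{static} \vanillaagreementalgo{G_t}, run with parameter $\epsilon/10^{14}$, is contained in a single component of $\tilde G_t$. Since the static algorithm keeps $u,v$ together precisely when they are in agreement and at least one is heavy, this containment is what delivers Condition~2 for the dynamic clustering; Condition~1 is then \cref{thm: C is dense}. Your item (ii) --- ``every node of a genuine near-clique gets connected to an anchor of that near-clique'' --- is exactly the content of \cref{thm: all nodes of C are clustered together}, but you have filed it under Condition~1 rather than recognizing it as the engine for Condition~2. That theorem is also where the real work lies: one tracks the last $\Theta(\epsilon|C|)$ interesting events touching $C$, shows the participating nodes are heavy and in agreement with most of $C$ \emph{at the time of their event} (not at time $t$), and that enough of them join the anchor set and remain there until $t$. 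The staleness control you invoke casually for Condition~2 is in fact the technical heart of this argument and is considerably more delicate than ``\textsc{Clean} deletes stale edges when notified.''
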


\subsection{Overview of the Correctness Proof}
To prove the correctness of our algorithm we show that at any time $t$: 1)~for every cluster $C$ that is identified by the offline Agreement algorithm (which is known to be constant factor approximate) our algorithm w.h.p. forms a cluster that is a superset of $C$, and 2) every cluster $C'$ detected by our algorithm is a dense cluster w.h.p., meaning that $\forall u \in C'$, $|\N{t}{u} \cap C'| \geq (1 - c' \cdot \epsilon) |C'| $ for small enough $\epsilon$, and a constant $c' \ll 1/\epsilon$.  
Combining the above two facts and union bounding over all time $t$, we get that all clusters detected by the offline agreement algorithm are found, and all detected clusters are very dense. So using the fact that the offline agreement algorithm is a constant approximation algorithm we can show that also our algorithm is.
The formal proof of the two facts requires the introduction of several concepts and probabilistic events, and is deferred to  \cref{sec: Finding dense clusters} and \cref{sec: All found clusters are dense}. Here we give a high-level overview of our proof strategy.
For the remainder of this section, we call a cluster computed by the offline agreement algorithm a good cluster.

\paragraph{All good clusters are detected.}

As discussed in~\cref{sec: Algorithm and techniques} the key idea of the algorithm is to design a sampling strategy and a notify procedure to keep track of the good clusters efficiently. The key idea is to not identify a good cluster only at the time that is formed, but to design a strategy to track the most important events that affect any node in the graph during the execution of the algorithm and to maintain the clustering structure through connections to the anchor set nodes.

Let a cluster $C$ be a good cluster at time $t$. In our analysis, we analyze the last $\frac{\epsilon}{c}|C|$ interesting events involving nodes of cluster $C$. Let $L\subset C$ be the set of nodes involved in these last interesting events, we further subdivide $L$ into $L_1$ containing the half of $L$ that participated earlier in interesting events, and $L_2$ containing those that participated later. We also denote $R=C\setminus L$. Intuitively we show that every node connects to another node in the anchor set either in the first or second part of our analysis.

Let $t_u$ be the last time in which a node $u \in L$  participates in an ``interesting event''. Our notification procedure ensures that w.h.p. $u$'s neighborhood does not change significantly after its last participation in an ``interesting event'', i.e., $N_{G_{t_u}} (u) \simeq N_{G_{t'}} (u) $, $\forall t' \in  [t_u, t]$ (see~\cref{lem: different neighborhood has very low probability} and the preceding discussion). At the same time, we know that at time $t$, $u$ belongs to the good cluster $C$ and by the properties of the agreement decomposition (see~\cref{sec: structural properties of the decomposition}) we have that $N_{G_{t}} (u) \simeq C$. Combining the last two observations we can conclude that $N_{G_{t_u}} (u) \simeq C$. This line of arguments can be extended to all nodes  $v \in N_{G_{t_u}} (u) \cap R$ which by the definition of $R$ do not participate in an ``interesting event'' after time $t_u$. Thus, $\forall v \in N_{G_{t_u}} (u) \cap R$, it holds that: $ N_{G_{t_u}} (v) \simeq C$. In addition, $R$ contains almost all nodes of $C$, thus $ N_{G_{t_u}} (u) \cap R \simeq C $. Combining all these observations we can conclude that $u$ is in agreement with almost all of its neighbors at time $t_u$, and therefore it is heavy.
In addition, if $u$ enters in the anchor set, it remains there until time $t$. This is proved in~\cref{thm: ui are heavy} of~\cref{sec: Finding dense clusters}.

We are ready to prove that our algorithm finds a cluster $C' \supseteq C$ at time $t$ for every good cluster $C$ detected by the offline agreement algorithm.
As described in~\cref{sec: Algorithm and techniques}, the clusters that our algorithms form are determined by the connected components in our sparse solution graph $\tilde{G}_t$. To this end we argue that each node of $C$ is connected to a node in $C$ that is in the anchor set in $\tilde{G}_t$. We show how this is true for each of the three sets $R, L_1, L_2$.
For $R$ we recall that each node in $L_1\subset L$ is heavy and enters the anchor set with probability $O(\frac{\log n}{\epsilon |N_{G_{t_u}}(u)|} )$ at time $t_u$.

Given that $|L_1|=\frac{c}{2\epsilon}|C|$ for $c \ll \epsilon$, we can show that each $v\in R$ has a neighbor $v' \in L_1$ that enters the anchor set w.h.p., and $v'$ connects to $v$ in $\tilde{G}_t$ during the $Anchor(v', \epsilon, t_{v'})$ procedure. This is formally proved in \cref{lem: all nodes in R get selected by some node in A1} of~\cref{sec: Finding dense clusters}.
Similarly, each node in $L_1$ has a neighbor in $L_2$ that enters the anchor set w.h.p..
Finally, we prove that each node $v$ in $L_2$ is connected to a node $v'$ in the anchor set in $L_1$ w.h.p.. In fact, since most pairs of nodes in $L_1\cup L_2$ are in agreement w.h.p. there are many common neighbors $w \in R$ such that $w$ is in agreement with both $v'$ and $v$, which implies that the $Connect(v, \epsilon, t_v)$ will connect $v$ to $v'$. See~\cref{lem: all nodes in L2 get connected trough the Connect procedure} of~\cref{sec: Finding dense clusters} for the formal proof. Hence, all nodes in $C$ get connected to a node in the anchor set (which as we claimed above remains in the anchor set until time $t$). To conclude the argument we also note that the nodes in the anchor set are connected to each other because they share most of their neighbors. This is proved in~\cref{thm: all nodes of C are clustered together} of~\cref{sec: Finding dense clusters}.

\paragraph{All found clusters are good.} To prove that all clusters identified by our algorithm are good clusters, we follow a proof strategy similarly to~\cite{DBLP:conf/icml/Cohen-AddadLMNP21}. Roughly speaking, we show that each connected component $C'$ of $\tilde{G}_t$ has diameter $4$, which follows by observing that all nodes in the anchor set are within distance $2$ from each other, and that each other node is connected to a node in the anchor set. 
Then, due to the transitivity of the agreement property, it follows that all nodes in the connected component $C'$ are in agreement with each other.
This last claim, then further implies that for each $v\in C'$ it holds that $|N_{G_{t}} \cap C'| \geq (1-c \epsilon) |C'|$
which makes $C'$ a good cluster. See~\cref{sec: All found clusters are dense} for the formal proof.







\subsection{Overview of Running Time Bound}
We observe that our algorithm is correct even when deletions occur adversarially, but this unfortunately does not hold for the analysis of its running time. Notice that for insertions the running time is bounded by the $O(\polylog n)$ forward notifications sent as a result of the insertion of a new node. On the other hand, the deletion of a node $u$ may cause all nodes in $\bigcup_{i} B^i_u$ to receive a notification, and this in turn causes those nodes to send forward notifications. The issue arises in that there is no bound in the size of $|\bigcup_{i} B^i_u|$ when deletions occur adversarially. In fact, we can construct an instance where $|\bigcup_{i} B^i_u| \in \widetilde{\Theta}(n)$ for $\widetilde{O}(n)$ many deletions. Thankfully, in the case where the deletions appear in a random order, this cannot happen as we can nicely bound the expectation of $|\bigcup_{i} B^i_u|$ to be $O(\polylog n)$ for a node $u$ chosen uniformly at random. We devote \cref{sec: Runtime analysis} in the formal proof of the running time analysis.
\section{Experimental Evaluation}\label{sec: experimental evaluation}

We conduct two sets of experiments. We first evaluate the performance of our algorithm to the same set of real-world graphs that were used in~\cite{onlineneurips}. Then, we investigate how the running time of our algorithm scales with the size of the input.


\subsection{Baselines and Datasets}

We compare our algorithm to \algosingletons{} where its output always consists of only singleton clusters and~\algopivot{}~\cite{BehnezhadCMT23} which is a dynamic variation of the Pivot algorithm~\cite{PivotAlgorithm} for edge streams with $O(1)$ update time. While~\algopivot{} guarantees a constant factor approximation, \algosingletons{} does not have any theoretical guarantees. Nevertheless, it has been observed in~\cite{onlineneurips}  that sparse graphs tend to not have a good correlation clustering structure, and often the clustering that consists of only singleton clusters is a competitive solution.

We consider five real-world datasets. 
For the first set of experiments, we use four graphs from SNAP~\cite{snapnets} that include a Social network (\datafacebook{}), an email network (\dataemail{}), a collaboration network (\dataastroph{}), and a paper citation network (\datahepth). 

In the second set of experiments where we investigate the runtime of our algorithm with respect to the size of the input we use \datadrift{}~\cite{vergara2012chemical, rodriguez2014calibration} from the UCI Machine Learning Repository~\cite{Dua:2019}. The dataset contains $13,910$ points embedded in a space of $129$ dimensions. Each point  corresponds to a node in our graph and we add a positive connection between two nodes if the euclidean distance of the corresponding points is below a certain threshold. The lower we set that threshold the denser the graph becomes.

All graphs are formatted so as to be undirected and without parallel edges.
In addition, we create the node streams of node additions and deletions as follows: we first create a random arrival sequence for all the nodes. Subsequently in between any two additions, with probability $p$, we select at random a node of the current graph and delete it. If all nodes have already arrived then at each time we randomly select one of those and delete it. 

\subsection{Setup and Experimental Details}
Our code is written in Python 3.11.5 and is available at \url{https://github.com/andreasr27/DCC}. We set the deletion probability in between any two node arrivals to be $0.2$. The agreement parameter is set to $\e =  0.2$, as this setting exhibited the best behavior in~\cite{DBLP:conf/icml/Cohen-AddadLMNP21} and~\cite{onlineneurips}. 

In addition, we set the number of samples in our procedures to a small constant. More precisely, all our subroutines use a random sample of size $2$ and the probability of a node joining the anchor set is set to $20/d_u$ where $d_u$ denotes its degree in the current graph. Here we deviate from the numbers we use in theory as we observe that, in practice, for sparse graphs only running time is affected. We note that in the runtime calculation we do not include reading the input and calculating the quality of our clustering. We do this in an effort to best approximate the Database model in~\cref{def: database model} while reading the input we implement suitable data structures\footnote{e.g.: \url{https://leetcode.com/problems/insert-delete-getrandom-o1/description/}} where the graph is stored in the form of adjacency lists which permit node additions, deletions and getting a random sample in $O(1)$ expected time.

\begin{figure}[ht]
  \centering
  \includegraphics[width=0.49\textwidth]{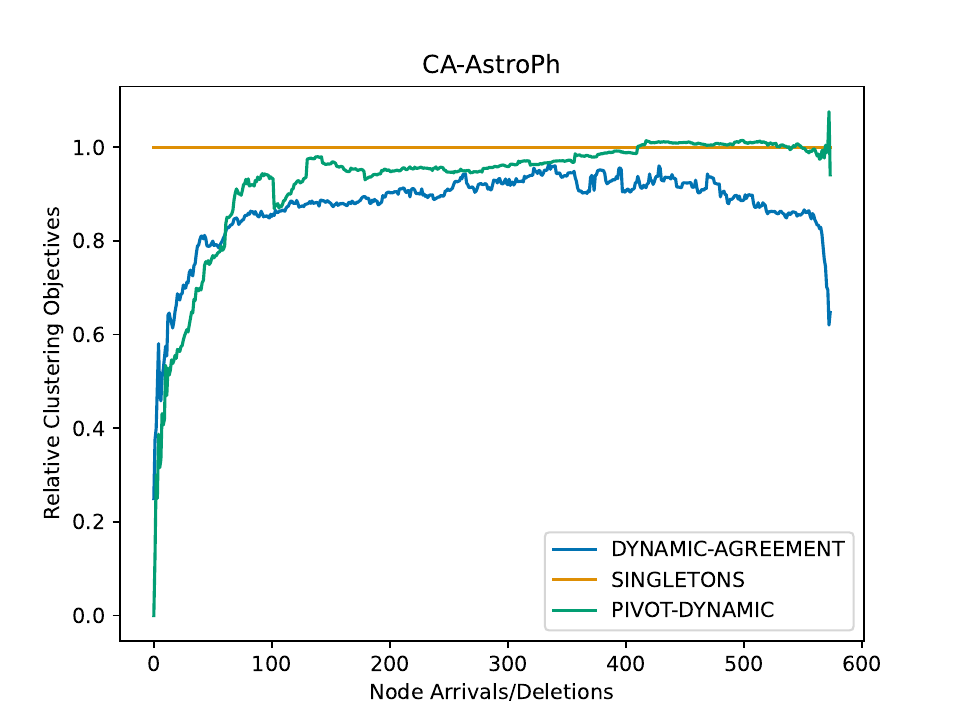} 
  \caption{Correlation clustering objective relative to singletons}
  \label{fig: Ca-AstroPh}
\end{figure}

\paragraph{Solution quality.}
In the first set of experiments we run all three algorithms and plot their performance relative to \algosingletons{}, that is, we plot the cost of solution produced by our algorithm or \algopivot{} divided by the cost of the solution of \algosingletons{}. In all datasets, \dcc{} consistently outperforms both \algopivot{} and \algosingletons{}. For example, in~\cref{fig: Ca-AstroPh} we plot the correlation clustering objective every $10$ nodes additions/deletions in the node stream. We observe that after a constant fraction of all nodes have arrived, the clustering objective of our algorithm relative to \algosingletons{} remains stable both for node additions and deletions. On the contrary the performance of \algopivot{} fluctuates and tends to increase especially in the last part of the sequence when all nodes have already arrived and the node stream contains predominantly deletions. A similar behaviour is observed in the other three datasets which are deferred to~\cref{sec: AdditionalExperiments}. In~\cref{sec: AdditionalExperiments} we also present a table which gives an estimate of the optimum offline solution based on the classical \textsc{Pivot} algorithm of~\citet{PivotAlgorithm}. In~\cref{tab:runtimes} we see that our algorithm is slower to the \algopivot{} implementation. This is something that we expect since \algopivot{} is extremely efficient for sparse graphs.

\begin{table}[t]
\label{tab:runtimes}
\begin{center}
\begin{small}
\begin{sc}
\begin{tabular}{lcccr}
\toprule
& {DA} & {PD}\\
\midrule
\datafacebook{} & 2.27 & 0.1 \\
\dataemail{} & 2.79 & 0.12 \\
\datahepth{} & 3.84 & 0.21 \\
\dataastroph{} & 2.96 & 0.11 \\
\bottomrule
\end{tabular}
\caption{Runtimes of the algorithms \dcc{} (DA) and \algopivot{} (PD).}
\end{sc}
\end{small}
\end{center}
\end{table}

\begin{table}[htb]
  \centering
  \label{tab:results}
  \begin{tabular}{lcccr}
    \toprule
    & \multicolumn{2}{c}{Relative Objective} & \multicolumn{2}{c}{Running Time} \\
    \cmidrule(lr){2-3} \cmidrule(lr){4-5}
    Density & DA & PD & DA & PD \\
    \midrule
    253.36 & 0.696 & 0.610 & 12.48 & 0.28 \\
    114.87 & 0.605 & 0.551 & 14.54 & 0.16 \\
    69.74 & 0.522 & 0.504 & 15.82 & 0.11 \\
    52.17 & 0.389 & 0.416 & 14.10 & 0.09 \\
    42.25 & 0.320 & 0.375 & 12.85 & 0.07 \\
    \bottomrule
  \end{tabular}
  \caption{Summary of our scalability experiment for the algorithms \dcc{} (DA) and \algopivot{} (PD), on the  \datadrift{} dataset at varying densities.}
\end{table}

\paragraph{Running time.}

As mentioned previously, we constructed a graph using the  \datadrift{} dataset by associating points with nodes and adding positive edges between nodes if the Euclidean distance of the corresponding points is less than a threshold. Different thresholds lead to the creation of distinct graphs. Now we relate the density of the graph (average node degree) with the runtimes and clustering quality of both \dcc{} and \algopivot{}. \cref{tab:results} shows the average relative clustering quality of each algorithm over the entire node stream and their running times.

We observe that both \dcc{} and \algopivot{} outperform \algosingletons{}, as expected, since \algosingletons{} excels in sparse graphs and offers poor quality solutions in denser graphs. Additionally, the running time of \dcc{} remains stable when the density of the graphs increases confirming that the algorithm's running time is not affected by the graph density. On the other hand, the running time of \algopivot{} increases linearly with the density. This suggests that for very large and dense graphs, where even reading the entire input is prohibitive \dcc{} scales smoothly, further validating the theory.

\paragraph{Experimental summary.} We observe that the newly proposed algorithm computes high quality solutions for both sparse and dense graphs. This is in contrast with comparison methods that fail to produce good solutions in at least one of the two settings. We also show that even if the runtime of our algorithm is higher compared to the competitor, its runtime does not increase with the density of the input graph. This is inline with our theoretical results and is achieved via our sampling and notify strategies.

\section*{Conclusion and Future Work}
We provide the first sublinear time algorithm for correlation clustering in dynamic streams where nodes are added adversarially and deleted randomly. Our algorithm is based on new and carefully defined sampling and notification strategies that can be of independent interest. We also show experimentally that our algorithm provides high quality solution both for dense and sparse graph outperforming previously known algorithms.

A very interesting open question is to extend our result in the general setting where both nodes' addition and deletion is adversarial. One possible way of achieving the later would be to use a different sparse-dense decomposition which is more stable and requires less updates so that it is maintained approximately, e.g., the one proposed by~\citet{AssasdiCC}.

Reducing the amortize update time in our setting is another very interesting and natural question.


\bibliography{main}

\newpage
\appendix
\onecolumn

\section{Finding Dense Clusters}\label{sec: Finding dense clusters}

In this section we prove that our algorithm correctly finds the non-singleton clusters of $G_t$ that are also found by the \vanillaagreementalgo{G_t} when the agreement and heaviness parameters are set to a small enough value. That is, let $C$ be a non-singleton cluster that is found by \vanillaagreementalgo{G_{t_{current}}} at time $t_{current}$ when the agreement and heaviness parameters are set to $\nicefrac{\epsilon}{10^{14}}$. It can be proven that cluster $C$ is extremely dense, i.e., it forms almost a clique, with very few outgoing edges.
In~\cref{thm: all nodes of C are clustered together} we prove that at time $t_{current}$ our algorithm outputs a cluster $C'$ that contains all nodes of $C$, i.e., $C'\supseteq C$.
While the latter may not seem surprising per se (note that a trivial algorithm which clusters all nodes in the same partition also achieves that property), our approach consists in a delicate argument where  we prove that for all non-trivial clusters $C$ found by \vanillaagreementalgo{G_{t_{current}}} our algorithm always constructs a cluster $C' \supseteq C$ and at the same time in~\cref{sec: All found clusters are dense} all clusters $C'$ constructed by our algorithm (which may not be constructed by \vanillaagreementalgo{G_{t_{current}}}) are very dense.
In order to simplify the notation we use $\tc$ instead of $t_{current}$ in our formulas.

The first challenge in the current section is to prove that the notify procedure correctly samples enough nodes of $C$, ``close enough'' to time $t_{current}$, so that many of these nodes enter the anchor set and help us reconstruct $C$.
Note that the latter ensures that enough nodes from a specific cluster enter the anchor set, but does not ensure that these clusters are correctly found. Indeed, the last arriving node may not enter the anchor set and at the same time the ``interesting events'' that its arrival generates may not induce any node to join the anchor set. However, to get a constant factor approximation, we still need to ensure that the last node of $C$ is clustered correctly with the rest of the cluster.
In order to ensure the latter, our algorithm uses the \connectprocedure{v}{\epsilon} procedure. \connectprocedure{v}{\epsilon} constructs a sample of $v$'s neighborhood and for every node $w$ in that sample, it checks whether $v$ is in $\epsilon$-agreement with any node $r \in \Phi_w$, i.e., any node that is in the anchor set of and if so $v$ is connected to $r$.

In the current section we make great use of the relation between $C$ and $\N{\tc}{u}$ for a node $u \in C$, indeed since $C$ is a dense cluster found by \vanillaagreementalgo{G_{t_{current}}} at time $t_{current}$, informally we have that $\N{\tc}{u} \approx C$ (see the properties of the Agreement decomposition in~\cref{sec: structural properties of the decomposition}).

Consider the last $\lfloor \nicefrac{\epsilon}{10^{4}} |C|\rfloor$ nodes of $C$ that participate in an ``interesting event'' and denote those nodes by $u_i$ for $i = 1, 2, \dots, \lfloor\nicefrac{\epsilon}{10^{4}} |C|\rfloor$. For each $u_i$ denote by $t_i$ the last time that this node participated in an ``interesting event'', so that $t_1 \leq t_2 \leq \dots \leq t_{\lfloor\nicefrac{\epsilon}{10^{4}} |C|\rfloor}$. If two nodes $u_i, u_j$ participated in an ``interesting event'' after the same arrival or deletion in the node streams then we order with respect to the type of notification received, that is, $t_i \leq t_j$ if $u_i$ just arrived or received a notification of a lower type than $u_j$. Ties are broken arbitrarily but consistently. Note that both nodes $u_i$ and times $t_i$ for $i = 1, 2, \dots, \lfloor\nicefrac{\epsilon}{10^{4}} |C|\rfloor$ are random variables which depend on the internal randomness of the Notify procedure.

We now introduce some auxiliary notation:
\begin{enumerate}
    \item $L = \{u_1, u_2, \dots, u_{\lfloor\nicefrac{\epsilon}{10^{4}} |C|\rfloor }\}$ and $R = C \setminus L$.
    \item $L_1 = \{u_1, u_2, \dots, u_{\lfloor\nicefrac{\epsilon}{2\cdot 10^{4}} |C|\rfloor}\}$.
    \item $L_2 = \{u_{\lfloor\nicefrac{\epsilon}{2\cdot 10^{4}} |C|\rfloor+1}, u_{\lfloor\nicefrac{\epsilon}{2\cdot 10^{4}} |C|\rfloor+2}, \dots, u_{\lfloor\nicefrac{\epsilon}{10^{4}} |C|\rfloor}\}$.
    \item $L^i = \{u_1, u_2, \dots, u_i\}$.
    \item We denote as $t'$ the earliest time that all nodes of $R$ have arrived.
\end{enumerate}
Again, note that since the $u_i$'s and $t_i$'s are random variables then also $R, L, L_1, L_2, L^i$ and $t'$ are random variables.

We start by some simple observations where we argue that after time $t'$ all nodes of $C$ that have already arrived form a dense subgraph. 
\begin{obs}\label{obs: bounds on degree of u at time tc wrt C}
$\forall u \in C$ we have:
$$
(1 - \epsilon/10^{13}) |C| \leq |\N{\tc}{u}| \leq (1 + \epsilon/10^{13}) |C|
$$
\end{obs}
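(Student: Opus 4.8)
The plan is to read this off from the near-clique structure of non-singleton clusters of the agreement decomposition, which is exactly what \cref{sec: structural properties of the decomposition} establishes. Recall that $C$ is a non-singleton connected component of the graph $\tilde G$ produced by \vanillaagreementalgo{G_{\tc}} run with agreement/heaviness parameter $\epsilon' := \epsilon/10^{14}$. I would invoke two structural facts about such a $C$: \emph{(density)} every $u \in C$ has $|\N{\tc}{u} \cap C| \ge (1 - c_1\epsilon')\,|C|$, and \emph{(small cut)} every $u \in C$ has $|\N{\tc}{u}\setminus C| \le c_2\epsilon'\,|C|$, for absolute constants $c_1,c_2$. The density fact is the $\nicefrac{3}{4}$-property of \cref{lem: original paper and how to bound the approximation ratio} sharpened to $1-O(\epsilon')$ using that any two nodes of $C$ are in $O(\epsilon')$-agreement (a component of $\tilde G$ has constant diameter, and $\epsilon'$-agreement is transitive up to a constant blow-up, as in the analysis of the offline Agreement algorithm). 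The small-cut fact is the usual bound on the number of edges leaving a cluster of the decomposition, in its per-vertex form.

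Given these two facts the observation is immediate. For the lower bound I would simply write $|\N{\tc}{u}| \ge |\N{\tc}{u}\cap C| \ge (1-c_1\epsilon')|C|$. For the upper bound I would split $\N{\tc}{u}$ into the disjoint parts $\N{\tc}{u}\cap C$ and $\N{\tc}{u}\setminus C$; since $u \notin \N{\tc}{u}$ the first part has size at most $|C|-1 \le |C|$, and by the small-cut fact the second has size at most $c_2\epsilon'|C|$, giving $|\N{\tc}{u}| \le (1+c_2\epsilon')|C|$. Since $\epsilon' = \epsilon/10^{14}$ and $c_1,c_2$ are small absolute constants (well below $10$), both $c_1\epsilon'$ and $c_2\epsilon'$ are at most $\epsilon/10^{13}$ with room to spare, which is the claimed bound. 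This last step is just constant bookkeeping — the only care needed is to track every ``$O(\epsilon')$'' as a fixed constant times $\epsilon'$ rather than letting it hide.

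\textbf{Main obstacle.} The one step that is not routine is the per-vertex cut bound: the basic guarantees of the decomposition control $|\N{\tc}{u}\cap C|$ but say nothing directly about how many neighbors of a fixed $u$ land outside $C$, and I need a per-vertex statement, not merely the aggregate $O(\epsilon')|C|^2$ bound on the whole cut. I would prove it by fixing a $\tilde G$-neighbor $v \in C$ of $u$ — one exists because $C$ is non-singleton — and using $\N{\tc}{u}\setminus C \subseteq (\N{\tc}{u}\triangle\N{\tc}{v}) \cup \big((\N{\tc}{u}\cap\N{\tc}{v})\setminus C\big)$: the first term has size at most $\epsilon'\max(|\N{\tc}{u}|,|\N{\tc}{v}|) = O(\epsilon')|C|$ because $u$ and $v$ are in $\epsilon'$-agreement and the degrees inside $C$ are mutually within $(1\pm O(\epsilon'))$, and the second term is small because a vertex outside $C$ cannot be in $O(\epsilon')$-agreement with many vertices of $C$ without being pulled into the component $C$ by the decomposition. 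That last point is where the diameter/transitivity structure genuinely enters, and it is precisely what \cref{sec: structural properties of the decomposition} is set up to supply; if that section already records degree bounds of the form $(1\pm O(\epsilon'))|C|$ outright, then this observation is a one-line instantiation with $\epsilon' = \epsilon/10^{14}$.
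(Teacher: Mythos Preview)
Your approach is correct and essentially the same as the paper's: the observation is a one-line instantiation of the structural properties in \cref{sec: structural properties of the decomposition} at parameter $\epsilon' = \epsilon/10^{14}$, with the paper citing \ref{property: N_G(u)  >= (1 - 9 epsilon) C} for the lower bound and \ref{property: N_G(u) setminus C < 3 epsilon N_G(u) < (3 epsilon)/(1 - 3 epsilon) C} for the upper bound. The per-vertex cut bound you flag as the ``main obstacle'' is exactly \ref{property: N_G(u) setminus C < 3 epsilon N_G(u) < (3 epsilon)/(1 - 3 epsilon) C}, so the conditional you close with (``if that section already records degree bounds\ldots'') is in fact the case and no additional argument is needed.
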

\begin{proof}
   From~\ref{property: N_G(u)  >= (1 - 9 epsilon) C} and~\ref{property: N_G(u) setminus C < 3 epsilon N_G(u) < (3 epsilon)/(1 - 3 epsilon) C} in~\cref{sec: structural properties of the decomposition} the result is immediate.
\end{proof}

\begin{obs}\label{obs: at tc u is connected to almost everybody in C}
$\forall u \in C$ we have:
$$
|\N{\tc}{u} \cap C| \geq (1 - \epsilon/10^{13}) |C|
$$
\end{obs}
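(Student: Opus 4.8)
The plan is to obtain this as an immediate consequence of the structural properties of the agreement decomposition collected in \cref{sec: structural properties of the decomposition}, in the same spirit as the one-line proof of \cref{obs: bounds on degree of u at time tc wrt C}. Recall that $C$ is a non-singleton cluster produced by \vanillaagreementalgo{G_{\tc}} with agreement and heaviness parameters set to $\delta := \nicefrac{\epsilon}{10^{14}}$. First I would split
\[
|\N{\tc}{u} \cap C| \;=\; |\N{\tc}{u}| \;-\; |\N{\tc}{u} \setminus C| ,
\]
and then lower-bound the first term and upper-bound the second term using the two decomposition properties already invoked in \cref{obs: bounds on degree of u at time tc wrt C}: property \ref{property: N_G(u)  >= (1 - 9 epsilon) C} gives $|\N{\tc}{u}| \geq (1 - 9\delta)|C|$, and property \ref{property: N_G(u) setminus C < 3 epsilon N_G(u) < (3 epsilon)/(1 - 3 epsilon) C} gives $|\N{\tc}{u} \setminus C| \leq \tfrac{3\delta}{1-3\delta}\,|C|$ (using that $C$ is a non-singleton cluster of the offline decomposition).

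Substituting these two bounds yields
\[
|\N{\tc}{u} \cap C| \;\geq\; \Bigl(1 - 9\delta - \tfrac{3\delta}{1-3\delta}\Bigr)|C| ,
\]
and it only remains to verify that the prefactor on the right is at least $1 - \nicefrac{\epsilon}{10^{13}}$. Since $\delta = \nicefrac{\epsilon}{10^{14}}$ and $\epsilon$ is a small enough constant, $\tfrac{3\delta}{1-3\delta}$ is within a negligible additive amount of $3\delta$, so $9\delta + \tfrac{3\delta}{1-3\delta} \le \nicefrac{\epsilon}{10^{13}}$, giving the claim. This last line is pure bookkeeping with the constants in the referenced properties; if that slack turns out to be tight, one can instead use the slightly sharper form $|\N{\tc}{u}\setminus C| < 3\delta\,|\N{\tc}{u}|$ together with the upper bound $|\N{\tc}{u}| \leq (1+\nicefrac{\epsilon}{10^{13}})|C|$ from \cref{obs: bounds on degree of u at time tc wrt C}, or simply pick $\epsilon$ smaller. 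There is no genuine mathematical obstacle here: the statement is a convenience corollary asserting that every already-arrived node of a good cluster currently sees almost all of that cluster in its neighborhood, and it is invoked repeatedly in the subsequent arguments about the sets $R$, $L_1$, and $L_2$.
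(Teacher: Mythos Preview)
Your approach is correct in spirit but takes an unnecessary detour. The paper's proof is literally one line: it invokes \ref{property: N_G(u) cap C  >= (1 - 9 epsilon) C} directly, which (with the agreement parameter $\delta=\epsilon/10^{14}$) gives $|\N{\tc}{u}\cap C|\ge(1-9\delta)|C|=(1-9\epsilon/10^{14})|C|\ge(1-\epsilon/10^{13})|C|$. You instead re-derive this property from \ref{property: N_G(u)  >= (1 - 9 epsilon) C} and \ref{property: N_G(u) setminus C < 3 epsilon N_G(u) < (3 epsilon)/(1 - 3 epsilon) C} via the split $|\N{\tc}{u}\cap C|=|\N{\tc}{u}|-|\N{\tc}{u}\setminus C|$, which costs you a constant: your bound is $1-9\delta-\tfrac{3\delta}{1-3\delta}\approx 1-12\delta$, and $12\delta=1.2\,\epsilon/10^{13}$ is (just barely) larger than the $\epsilon/10^{13}$ in the statement. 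Your suggested fixes (the sharper form of the second property, or ``pick $\epsilon$ smaller'') don't actually close this gap, since the loss comes from adding the two separate slacks rather than from any crudeness in either one, and $\epsilon$ is fixed. The clean resolution is simply to cite \ref{property: N_G(u) cap C  >= (1 - 9 epsilon) C}, which already \emph{is} the desired inequality at parameter $\delta$.
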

\begin{proof}
Follows immediately from ~\ref{property: N_G(u) cap C  >= (1 - 9 epsilon) C}.
\end{proof}

\begin{obs}\label{obs: at time t the v is connected with most nodes in C}
For each $u\in C$ that arrived at time $t \in [t',\tc]$ we have:
$$
|\N{t}{u} | \geq |\N{t}{u} \cap C | \geq (1 - 2 \epsilon/ 10^{4})|C|
$$
\end{obs}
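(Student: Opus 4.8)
\textbf{Proof proposal for \cref{obs: at time t the v is connected with most nodes in C}.}

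The plan is to track how the neighborhood of $u$ inside $C$ grows from the time $t$ of its arrival up to $\tc$, using the density structure of $C$ that is already available at the endpoint $\tc$. First I would note that $t\ge t'$ means that by time $t$ all nodes of $R$ have already arrived, so $R\subseteq V_t$; hence $R\cap\N{\tc}{u}\subseteq\N{t}{u}$ would follow \emph{if} $u$ were already adjacent to all of those nodes at time $t$ — but adjacency is determined at arrival time of the later endpoint, so since $u$ arrives at time $t\ge t'$, every edge between $u$ and a node of $R$ is in fact present already at time $t$. That is the key point: all of $u$'s edges to nodes of $R$ are ``frozen in'' at time $t$, because $u$ is the later-arriving endpoint of each such pair. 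Therefore $\N{t}{u}\cap C\supseteq \N{\tc}{u}\cap R$, and it remains to lower bound $|\N{\tc}{u}\cap R|$.

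For that I would combine \cref{obs: at tc u is connected to almost everybody in C}, which gives $|\N{\tc}{u}\cap C|\ge (1-\epsilon/10^{13})|C|$, with the bound $|L|=\lfloor \tfrac{\epsilon}{10^4}|C|\rfloor\le \tfrac{\epsilon}{10^4}|C|$ on the number of nodes removed from $C$ to form $R$. Since $R=C\setminus L$,
\[
|\N{\tc}{u}\cap R|\ \ge\ |\N{\tc}{u}\cap C| - |L|\ \ge\ \Bigl(1-\tfrac{\epsilon}{10^{13}}\Bigr)|C| - \tfrac{\epsilon}{10^4}|C|\ \ge\ \Bigl(1-\tfrac{2\epsilon}{10^4}\Bigr)|C|,
\]
where the last step uses $\epsilon/10^{13}\le \epsilon/10^4$ so that the two error terms together are at most $2\epsilon/10^4$. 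Chaining this with the inclusion from the previous paragraph gives $|\N{t}{u}|\ge|\N{t}{u}\cap C|\ge |\N{\tc}{u}\cap R|\ge (1-2\epsilon/10^4)|C|$, which is exactly the claimed chain of inequalities (the first inequality $|\N{t}{u}|\ge|\N{t}{u}\cap C|$ being trivial since $\N{t}{u}\cap C\subseteq\N{t}{u}$).

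The only subtle step — and the one I would state most carefully — is the ``frozen edges'' observation: it relies on the model's convention, stated in \cref{sec: ProblemDefinitionDCC}, that upon arrival a node reveals all edges to previously arrived nodes, and on the fact that nodes of $R$ are never deleted before $\tc$ (they are defined as members of the cluster $C$ that persists at $\tc$, hence still present at $\tc$, hence present throughout $[t,\tc]$). Given those two facts the inclusion $\N{\tc}{u}\cap R\subseteq \N{t}{u}$ is immediate, and everything else is the elementary counting above. If one wants to avoid invoking "$R$ is never deleted," one can instead argue directly that $R\cap V_{\tc}\subseteq V_t$ and that the adjacency of any such node to $u$ is identical at times $t$ and $\tc$, which suffices.
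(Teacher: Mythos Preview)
Your proposal is correct and follows essentially the same route as the paper: both arguments reduce to the inclusion $\N{\tc}{u}\cap R\subseteq \N{t}{u}\cap C$ and then bound $|\N{\tc}{u}\cap R|\ge|\N{\tc}{u}\cap C|-|L|$ via \cref{obs: at tc u is connected to almost everybody in C} and $|L|\le\tfrac{\epsilon}{10^4}|C|$. The paper's proof is terser on the inclusion step (it just notes that all of $R$ has arrived by $t'$), whereas you spell out why the edges are present; your emphasis on ``$u$ is the later-arriving endpoint'' is a slight over-specialization --- all that is really needed is that both $u$ and every node of $R$ are present at time $t$, which also covers the way the observation is later applied at times $t_i$ that need not be arrival times.
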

\begin{proof}
   The left-hand side inequality is obvious. For the right inequality note that at time $t'$ all nodes in $R$ have arrived. Thus, 
   \begin{align*} 
   |\N{t}{u} \cap C| &\geq |\N{\tc}{u} \cap C| - |L| \geq (1 - \epsilon/10^{13}) |C| - \epsilon/10^{4} |C| \\
                     &\geq (1 - 2 \epsilon/10^{4}) |C|
    \end{align*}
    Where in the second inequality we use~\cref{obs: at tc u is connected to almost everybody in C} and $|L|  = \lfloor\epsilon/10^{4} |C|\rfloor$.
\end{proof}

\begin{obs}\label{obs: C minus Ni is upper bounded}
For each $u\in C$ that has already arrived before time $t \in [t',\tc]$ we have:
$$
|C \setminus \N{t}{u} | \leq 2 \epsilon/  10^{4}|C| \leq  \epsilon/  10^{3}|\N{t}{u}|
$$
\end{obs}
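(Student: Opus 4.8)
The plan is to prove \cref{obs: C minus Ni is upper bounded} by combining the earlier observations with the simple monotonicity fact that the set $C$ is fixed while the neighborhood $\N{t}{u}$ at an intermediate time $t \in [t',\tc]$ differs from the final neighborhood $\N{\tc}{u}$ only by nodes that arrived or were deleted in the interval $(t,\tc]$. Since after time $t'$ all nodes of $R$ have already arrived, the only nodes of $C$ that can be in $\N{\tc}{u}$ but not yet present (hence not in $\N{t}{u}$) at time $t$ are nodes of $L$, and similarly no node of $C$ leaves in this window by assumption. Thus $C \cap \N{\tc}{u} \subseteq (C \cap \N{t}{u}) \cup L$, which will give the bound on $|C \setminus \N{t}{u}|$.

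Concretely, the first step is to write
\[
|C \setminus \N{t}{u}| = |C| - |C \cap \N{t}{u}| \le |C| - \bigl(|C \cap \N{\tc}{u}| - |L|\bigr),
\]
using the containment above. Then I would invoke \cref{obs: at tc u is connected to almost everybody in C}, which gives $|C \cap \N{\tc}{u}| \ge (1-\epsilon/10^{13})|C|$, together with $|L| = \lfloor \epsilon/10^{4}\,|C|\rfloor \le \epsilon/10^{4}\,|C|$, to conclude
\[
|C \setminus \N{t}{u}| \le \epsilon/10^{13}\,|C| + \epsilon/10^{4}\,|C| \le 2\epsilon/10^{4}\,|C|,
\]
which is the left inequality. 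This is essentially the computation already carried out inside the proof of \cref{obs: at time t the v is connected with most nodes in C}, so the reasoning is parallel and I would reuse it almost verbatim.

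For the right inequality I need to turn a bound in terms of $|C|$ into one in terms of $|\N{t}{u}|$, so I would lower bound $|\N{t}{u}|$ by a quantity proportional to $|C|$. By \cref{obs: at time t the v is connected with most nodes in C} (or \cref{obs: bounds on degree of u at time tc wrt C} together with the same window argument) we have $|\N{t}{u}| \ge |\N{t}{u}\cap C| \ge (1 - 2\epsilon/10^{4})|C| \ge |C|/2$ for $\epsilon$ small enough, hence $|C| \le 2|\N{t}{u}|$ and therefore $2\epsilon/10^{4}\,|C| \le 4\epsilon/10^{4}\,|\N{t}{u}| \le \epsilon/10^{3}\,|\N{t}{u}|$, completing the chain.

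I do not expect any real obstacle here: this is a bookkeeping observation that packages the already-established facts \cref{obs: at tc u is connected to almost everybody in C} and \cref{obs: at time t the v is connected with most nodes in C} into the exact form needed downstream. The only point requiring a moment of care is the set-containment claim $C \cap \N{\tc}{u} \subseteq (C \cap \N{t}{u}) \cup L$ for $t \in [t',\tc]$, which relies on the definition of $t'$ as the earliest time all of $R$ has arrived and on the fact that no node of $C$ is deleted in $(t', \tc]$ (a node of $C$ present at $\tc$ in a dense cluster was present throughout, since deletions that remove members of $C$ would have been among the tracked interesting events already absorbed into $L$); I would state this explicitly before doing the arithmetic.
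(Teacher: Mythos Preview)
Your proposal is correct and follows essentially the same approach as the paper. The paper's proof is even terser: it simply notes that the left inequality is immediate from \cref{obs: at time t the v is connected with most nodes in C} (since $|C\setminus \N{t}{u}| = |C| - |\N{t}{u}\cap C|$), and that the right inequality follows for $\epsilon$ small enough from the same observation; you have unpacked the content of that observation rather than just citing it, but the argument is identical.
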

\begin{proof}
  The left inequality is immediate from~\cref{obs: at time t the v is connected with most nodes in C} and the second one holds for $\epsilon$ small enough again using \cref{obs: at time t the v is connected with most nodes in C}.
\end{proof}

\cref{obs: C minus Ni is upper bounded} informally states that the neighborhood of any node $v \in C $ between times $t'$ and $ \tc$ contains $C$ almost in its entirety. Let $t \in [t', \tc]$ be a time when node $v$ has already arrived, then we have that: $\N{t}{v}  \supseteq C'$ and $C' \simeq C$. Ideally, we would like all nodes $v \in C $ to have almost their final neighborhood between times $t'$ and $ \tc$ so that the Anchor procedure correctly reconstructs the cluster $C$. That is, we would like to also have $\N{t}{v}  \subseteq C''$ and $C'' \simeq C$. While the latter may be not true in general, we prove that with high probability at every time $t_i$ it is true for $u_i$ and a large part of its neighborhood.

We introduce auxiliary notation to formalize these claims.

\begin{definition}\label{def: Mv and Av}
For a node $v \in C$ and time $t$ we define the following events:
\begin{align*}
    M_v^{t} &= \{|\N{t}{v} \setminus \N{\tc}{v}| < \nicefrac{\epsilon}{8 \cdot 10^{4}}|\N{t}{v}| \}\\
    A_v^{t} &= \left\{ \left\{ v \in \I_{t}\right\} \lor \left\{ v \text{ received a } Type_2 \text{ notification at time } t\right\} \right\}
\end{align*}
\end{definition}
$M_v^{t}$ is true if at least a $(1 - \nicefrac{\epsilon}{8 \cdot 10^{4}})$ fraction of $v$'s neighborhood at time $t$ does not get deleted until time $\tc$. Also, note that if $A_v^{t}$ is true then $v$ samples with replacement $10^{10} \log n / \epsilon$ neighbors at time $t$. By $\overline{M_v^{t}}$ and $\overline{A_v^{t}}$ we denote the complementary events of ${M_v^{t}}$ and $ {A_v^{t}}$ respectively.

The rest of the section is devoted in arguing that with high probability at each time $t_i$, both $u_i$ and nearly all its neighbors possess almost their ``final'' neighborhood.
The crux of our analysis is based on the following two observations (stated informally for the moment). For all $v \in R\cup L_i$: (1) If $M_v^{t_i}$ is true, then $\N{t_i}{v} \simeq C$; and (2) The event $\overline{M_v^{t_i}} \land A_v^{t_i}$ happens with low probability.

In the following lemma we prove structural properties of the neighborhood $\N{t_i}{v}$ of a node when $v$ when $M_v^{t_i}$ is true.

\begin{lem}\label{lem: high degree implies very different neighborhood}
    $\forall v \in R \cup L^i$ we have that if $M_v^{t_i}$ is true then:
    \begin{align}
        & |\N{t_i}{v}| < (1 + \nicefrac{\epsilon}{2 \cdot 10^{4}}) |C|\\
        &|\N{t_i}{v} \setminus C| < \nicefrac{\epsilon}{10^{4}} |C| < \nicefrac{\epsilon}{ 10^{3}}|\N{t_i}{v}|
    \end{align}
\end{lem}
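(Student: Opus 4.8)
The plan is to derive both inequalities from the hypothesis $M_v^{t_i}$ together with the structural facts already established about nodes of $C$ near time $\tc$. The key point is that $M_v^{t_i}$ says almost all of $\N{t_i}{v}$ survives until time $\tc$, so $\N{t_i}{v}$ is essentially contained (up to a tiny $\epsilon$-fraction) in $\N{\tc}{v}$, and we already control $\N{\tc}{v}$ very tightly via \cref{obs: bounds on degree of u at time tc wrt C} and \cref{obs: at tc u is connected to almost everybody in C}. I would first note that $v \in R \cup L^i \subseteq C$, so those observations apply to $v$.

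For the first inequality, I would write $|\N{t_i}{v}| \le |\N{t_i}{v} \cap \N{\tc}{v}| + |\N{t_i}{v}\setminus\N{\tc}{v}|$. The second term is $< \frac{\epsilon}{8\cdot 10^4}|\N{t_i}{v}|$ by $M_v^{t_i}$. The first term is at most $|\N{\tc}{v}| \le (1+\epsilon/10^{13})|C|$ by \cref{obs: bounds on degree of u at time tc wrt C}. Rearranging, $(1 - \frac{\epsilon}{8\cdot 10^4})|\N{t_i}{v}| < (1+\epsilon/10^{13})|C|$, which for $\epsilon$ small enough gives $|\N{t_i}{v}| < (1 + \frac{\epsilon}{2\cdot 10^4})|C|$ — a routine division-and-slack computation.

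For the second inequality I would estimate $|\N{t_i}{v}\setminus C|$. Split it as nodes that survive to $\tc$ and nodes that do not: $|\N{t_i}{v}\setminus C| \le |(\N{t_i}{v}\cap\N{\tc}{v})\setminus C| + |\N{t_i}{v}\setminus\N{\tc}{v}|$. The second term is $< \frac{\epsilon}{8\cdot10^4}|\N{t_i}{v}|$, which by the first inequality is $< \frac{\epsilon}{8\cdot10^4}(1+\frac{\epsilon}{2\cdot10^4})|C| \le \frac{\epsilon}{7\cdot10^4}|C|$ say. The first term is at most $|\N{\tc}{v}\setminus C|$; from \cref{obs: bounds on degree of u at time tc wrt C} and \cref{obs: at tc u is connected to almost everybody in C} we get $|\N{\tc}{v}\setminus C| = |\N{\tc}{v}| - |\N{\tc}{v}\cap C| \le (1+\epsilon/10^{13})|C| - (1-\epsilon/10^{13})|C| = \frac{2\epsilon}{10^{13}}|C|$. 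Adding the two bounds gives something comfortably below $\frac{\epsilon}{10^4}|C|$. Finally, to convert $\frac{\epsilon}{10^4}|C|$ into a bound in terms of $|\N{t_i}{v}|$, I would use that $M_v^{t_i}$ forces $|\N{t_i}{v}|$ to be large: indeed $|\N{t_i}{v}| \ge |\N{t_i}{v}\cap C| \ge |C| - |C\setminus\N{t_i}{v}|$, and $C\setminus\N{t_i}{v}$ is small — either because $v$ has arrived by time $t' \le t_i$ so \cref{obs: C minus Ni is upper bounded} applies (note $t_i \ge t'$ since $u_i$'s last interesting event is after all of $R$ arrived, and for $v\in R$ this also needs a short argument that $v$ arrived before $t_i$), giving $|C\setminus\N{t_i}{v}| \le \frac{2\epsilon}{10^4}|C|$. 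Hence $|\N{t_i}{v}| \ge (1-\frac{2\epsilon}{10^4})|C| \ge \frac{|C|}{2}$, so $\frac{\epsilon}{10^4}|C| \le \frac{2\epsilon}{10^4}|\N{t_i}{v}| < \frac{\epsilon}{10^3}|\N{t_i}{v}|$, completing the chain.

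The main obstacle I anticipate is not the arithmetic but justifying that $t_i \ge t'$ and, more subtly, that every $v \in R \cup L^i$ has actually arrived by time $t_i$ (so that the observations phrased ``for $u \in C$ that has already arrived before time $t$'' are applicable). For $v \in L^i$ this is clear since $v = u_j$ for some $j \le i$ had its last interesting event at $t_j \le t_i$, and a node must have arrived to participate. For $v \in R$: by definition $t'$ is the earliest time all of $R$ has arrived, and I must argue $t_i \ge t'$, i.e. that the last interesting event of $u_i$ occurs no earlier than all of $R$ has arrived — this should follow from the ordering convention (the $u_i$ are the last $\lfloor\epsilon|C|/10^4\rfloor$ nodes of $C$ to have an interesting event, and arrivals are themselves interesting events, so if some node of $R$ arrived after $t_i$ it would have an interesting event after $t_i$, contradicting that $u_i$ is among the last such). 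Once $t_i \ge t'$ and $v$ has arrived are pinned down, everything else is the bookkeeping above.
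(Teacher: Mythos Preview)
Your argument is correct and close in spirit to the paper's, with two small differences worth noting. For (1) the paper argues by contrapositive (assuming $|\N{t_i}{v}| \ge (1+\tfrac{\epsilon}{2\cdot10^4})|C|$ and deriving $\overline{M_v^{t_i}}$ via a case split on how $|\N{t_i}{v}|$ compares to $|\N{\tc}{v}|/(1-\tfrac{\epsilon}{8\cdot10^4})$), whereas you go directly via $(1-\tfrac{\epsilon}{8\cdot10^4})|\N{t_i}{v}| \le |\N{\tc}{v}| \le (1+\tfrac{\epsilon}{10^{13}})|C|$; your route is shorter and avoids the case split. For (2) the paper writes $|\N{t_i}{v}\setminus C| = |\N{t_i}{v}| - |\N{t_i}{v}\cap C|$ and plugs in (1) together with \cref{obs: at time t the v is connected with most nodes in C}, which yields only $3\epsilon/10^4\,|C|$ (so the paper's proof as written does not actually reach the $\epsilon/10^4\,|C|$ in the statement, though the weaker $\epsilon/10^3\,|\N{t_i}{v}|$ bound, which is all that is used later, still follows). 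Your decomposition into $(\N{t_i}{v}\cap\N{\tc}{v})\setminus C$ and $\N{t_i}{v}\setminus\N{\tc}{v}$ is tighter and does give $<\epsilon/10^4\,|C|$. Finally, you are right to flag the $t_i \ge t'$ prerequisite for invoking \cref{obs: at time t the v is connected with most nodes in C} and \cref{obs: C minus Ni is upper bounded}; the paper uses it implicitly, and your justification (every node of $R$ has its last interesting event, in particular its arrival, before $t_1 \le t_i$) is exactly what is needed.
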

\begin{proof}
\begin{enumerate}
\item For the first statement we prove the contrapositive, i.e., we argue that $|\N{t_i}{v}| \geq (1 + \nicefrac{\epsilon}{2 \cdot 10^{4}}) |C|$ implies  $\overline{M_v^{t_i}}$. We distinguish between the cases:
    \begin{itemize}
        \item $|\N{t_i}{v}| \geq \frac{|\N{\tc}{v}|}{(1 - \epsilon/(8 \cdot 10^{4}))}$, then
        \begin{align*}
            |\N{t_i}{v} \setminus \N{\tc}{v}| &\geq |\N{t_i}{v}| - |\N{\tc}{v}| \\
            &\geq \epsilon/(8 \cdot 10^{4})|\N{t_i}{v}|
        \end{align*}
        \item $|\N{t_i}{v}| < \frac{|\N{\tc}{v}|}{(1 - \epsilon/(8 \cdot 10^{4}))}$, in that case from~\cref{obs: bounds on degree of u at time tc wrt C} we have the bound $|\N{\tc}{v}| \leq (1 + \epsilon/10^{13}) |C|$
        \begin{align*}
            &|\N{t_i}{v} \setminus \N{\tc}{v}|\\
            &\geq |\N{t_i}{v}| - |\N{\tc}{v}| \\
            &\geq ( 1 + \epsilon/(2 \cdot 10^{4}) ) |C| - |\N{\tc}{v}| \\
            &\geq \left( \frac{1 + \epsilon/(2 \cdot 10^{4})}{1 + \epsilon/10^{13}} -1 \right) |\N{\tc}{v}| \\
            &\geq \left( \frac{1 + \epsilon/(2 \cdot 10^{4})}{1 + \epsilon/10^{13}} -1 \right) \left( 1 - \epsilon/(8 \cdot 10^{4}) \right) |\N{t_i}{v}| \\
            &\geq \left( \epsilon/(2 \cdot 10^{4}) - \epsilon/10^{13} \right) \left( \frac{1 - \epsilon/(8 \cdot 10^{4})}{1 + \epsilon/10^{13}} \right)  |\N{t_i}{v}| \\
            &\geq  \epsilon/(8 \cdot 10^{4})  |\N{t_i}{v}| \\
        \end{align*}
        Where in the second inequality we used our assumption that $|\N{t_i}{v}| \geq (1 + \nicefrac{\epsilon}{2 \cdot 10^{4}}) |C|$, the third one from~\cref{obs: bounds on degree of u at time tc wrt C}, the fourth one from the fact that we are in the second case and the last one holds for $\epsilon$ small enough.
    \end{itemize}
\item From the first part of the current lemma we have that $|\N{t_i}{v}| < (1 + \nicefrac{\epsilon}{2 \cdot 10^{4}}) |C|$ and from~\cref{obs: at time t the v is connected with most nodes in C} we have that $|\N{t_i}{v} \cap C| \geq (1- 2 \epsilon/10^{4}) |C|$, consequently:
\begin{align*}
    |\N{t_i}{v} \setminus C| &= |\N{t_i}{v}|-|\N{t_i}{v} \cap C|\\
                             &< (1 + \epsilon/(2\cdot 10^{4})) |C| - (1 - 2 \epsilon/ 10^{4}) |C|\\
                             &< 3 \epsilon/10^{4} |C|\\
                             &< 3 \epsilon/10^{4} \frac{|\N{t_i}{v}|}{(1 - 2 \epsilon/ 10^{4})}\\
                             &< \epsilon/10^{3} |\N{t_i}{v}|
\end{align*}
Where the second inequality holds for $\epsilon$ small enough and proves the first inequality of (2) in the current lemma, the third inequality uses again~\cref{obs: at time t the v is connected with most nodes in C} and the last inequality holds for $\epsilon$ small enough.
\end{enumerate}

\end{proof}

By Combining~\cref{lem: high degree implies very different neighborhood} and~\cref{obs: C minus Ni is upper bounded} we can argue that $\forall v \in R \cup L^i$ if $M_v^{t_i}$ is true then $\N{t_i}{v} \simeq C$. Consequently for two neighboring nodes $ v, u \in R \cup L^i$: $M_v^{t_i} \land M_u^{t_i}$ implies that $\N{t_i}{v} \simeq \N{t_i}{u} \simeq C$ and consequently $u$ and $v$ are in agreement. We formalize the latter in the following lemma.

\begin{lem}\label{lem: Oui and Ow suffice for ui and w being in agreement}
For all neighboring nodes $ v, u \in R \cup L^i$: if $M_v^{t_i} \land M_u^{t_i}$ is true then $v$ and $u$ are in $\epsilon/10$-agreement.
\end{lem}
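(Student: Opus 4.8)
The plan is to bound the symmetric difference $|\N{t_i}{v} \triangle \N{t_i}{u}|$ by controlling each side of it separately. Recall that by definition of $\epsilon/10$-agreement I need $|\N{t_i}{v} \triangle \N{t_i}{u}| < (\epsilon/10)\max\{|\N{t_i}{v}|, |\N{t_i}{u}|\}$. I would split the symmetric difference into $\N{t_i}{v}\setminus \N{t_i}{u}$ and $\N{t_i}{u}\setminus \N{t_i}{v}$ and bound each by first passing through $C$: for instance $\N{t_i}{v}\setminus \N{t_i}{u} \subseteq (\N{t_i}{v}\setminus C) \cup (C \setminus \N{t_i}{u})$.

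The key inputs are already available. From \cref{lem: high degree implies very different neighborhood} applied to $v$ (using $M_v^{t_i}$) I get $|\N{t_i}{v}\setminus C| < \nicefrac{\epsilon}{10^4}|C|$, and likewise $|\N{t_i}{u}\setminus C| < \nicefrac{\epsilon}{10^4}|C|$ from $M_u^{t_i}$. From \cref{obs: C minus Ni is upper bounded} (which applies to any node of $C$ that has arrived by time $t_i \in [t',\tc]$; note $u,v \in R\cup L^i \subseteq C$ and the relevant times are in this range, which I would state explicitly) I get $|C\setminus \N{t_i}{v}| \leq 2\nicefrac{\epsilon}{10^4}|C|$ and $|C\setminus \N{t_i}{u}| \leq 2\nicefrac{\epsilon}{10^4}|C|$. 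Adding these, each of $\N{t_i}{v}\setminus \N{t_i}{u}$ and $\N{t_i}{u}\setminus \N{t_i}{v}$ has size at most $3\nicefrac{\epsilon}{10^4}|C|$, so $|\N{t_i}{v}\triangle\N{t_i}{u}| \leq 6\nicefrac{\epsilon}{10^4}|C|$. Finally, using \cref{obs: at time t the v is connected with most nodes in C} (or \cref{obs: C minus Ni is upper bounded}) we have $\max\{|\N{t_i}{v}|,|\N{t_i}{u}|\} \geq |\N{t_i}{v}| \geq (1-2\nicefrac{\epsilon}{10^4})|C| \geq |C|/2$ for $\epsilon$ small, so $|\N{t_i}{v}\triangle\N{t_i}{u}| \leq 6\nicefrac{\epsilon}{10^4}|C| \leq \nicefrac{\epsilon}{10}\cdot \max\{|\N{t_i}{v}|,|\N{t_i}{u}|\}$ with a comfortable margin. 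This gives the claim.

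There is no real obstacle here; the lemma is essentially a triangle-inequality bookkeeping argument on top of the structural facts established just above. The only point requiring a little care is to make sure the hypotheses of \cref{obs: C minus Ni is upper bounded} and \cref{obs: at time t the v is connected with most nodes in C} are met — namely that $v$ and $u$ have already arrived by time $t_i$ and that $t_i \in [t',\tc]$. For nodes in $L^i$ this holds because $t_i$ is by construction a time at or after their last interesting event (hence after their arrival), and for nodes in $R$ it holds because $t' $ is defined as the earliest time all nodes of $R$ have arrived and $t_i \geq t_1 \geq t'$ can be arranged (or argued) from the ordering of the $t_j$'s; I would spell this out in one sentence before invoking the two observations.
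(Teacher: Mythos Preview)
Your proposal is correct and follows essentially the same approach as the paper: decompose each half of the symmetric difference via $C$ as $\N{t_i}{v}\setminus \N{t_i}{u} \subseteq (\N{t_i}{v}\setminus C)\cup (C\setminus \N{t_i}{u})$, then apply \cref{lem: high degree implies very different neighborhood} and \cref{obs: C minus Ni is upper bounded}. The only cosmetic difference is that the paper uses the $|\N{t_i}{\cdot}|$-normalized versions of those bounds directly (so no final conversion from $|C|$ to $\max\{|\N{t_i}{v}|,|\N{t_i}{u}|\}$ is needed), whereas you keep everything in terms of $|C|$ and convert at the end; both routes give the result with ample slack.
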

\begin{proof}
Since both $M_u^{t_i}$ and $M_{v}^{t_i}$ are true, from~\cref{lem: high degree implies very different neighborhood} we have that:
\begin{align*}
    & |\N{t_i}{u} \setminus C| < \epsilon/ 10^{3}|\N{t_i}{u}|
\end{align*}
and
\begin{align*}
    & |\N{t_i}{v} \setminus C|  < \epsilon / 10^{3} |\N{t_i}{v}|
\end{align*}
Also from~\cref{obs: C minus Ni is upper bounded} it holds that:
$$
|C \setminus \N{t_i}{u} |  \leq  \epsilon/  10^{3}|\N{t_i}{u}|
$$
and
$$
|C \setminus \N{t_i}{v} |  \leq  \epsilon/  10^{3}|\N{t_i}{v}|
$$
To argue that $| \N{t_i}{u} \triangle \N{t_i}{v} | \leq \epsilon/10 \max \{ |\N{t_i}{u}|, \N{t_i}{v}| \}$, we bound both $|\N{t_i}{u} \setminus \N{t_i}{v}|$ and $|\N{t_i}{v} \setminus\N{t_i}{u} |$ by $\epsilon/10^{2} \max \{ |\N{t_i}{u_i}|, \N{t_i}{w}| \}$. From the latter the lemma follows since $2\epsilon/10^{2} < \epsilon/10 $ for $\epsilon$ small enough. We upper bound the size of $\N{t_i}{u}\setminus\N{t_i}{v}$:
\begin{align*}
|\N{t_i}{u}\setminus\N{t_i}{v}| &\leq |\N{t_i}{u}\setminus C| + | C \setminus\N{t_i}{v}|\\
                                  &\leq \epsilon /10^{3} |\N{t_i}{u}| + \epsilon /10^{3} | \N{t_i}{v}|\\
                                  &\leq \epsilon/10^{2} \max \{ |\N{t_i}{u}|, \N{t_i}{v}| \}
\end{align*}
and the upper bound of $|\N{t_i}{v}\setminus\N{t_i}{u}|$ follows the same line of arguments.
\end{proof}

We know prove that the probability of a node $v \in R \cup L_i$ both sampling its neighborhood at time $t_i$ and having a neighborhood which is very different than $C$ is very low. The following lemma is crucial, as it subsequently help us argue that most of the neighboring nodes of $u_i$ at time $t_i$ have almost their final neighborhoods.

\begin{lem}\label{lem: different neighborhood has very low probability}
    $\forall v \in C$ and time $t$ we have that:
    $$
    \prob[\overline{M_v^{t}} \land A_v^{t} \land \{ v \in R \cup L^i\} \land \{t \geq t_i \}] < 1/n^{10^{4}}
    $$
\end{lem}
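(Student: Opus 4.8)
The plan is to bound the probability of the event
$\overline{M_v^{t}} \land A_v^{t} \land \{ v \in R \cup L^i\} \land \{t \geq t_i \}$
by first observing what $\overline{M_v^{t}}$ means in combination with the other conjuncts, and then using the randomness of the order in which deletions occur together with the notification sampling to control this probability. Concretely, if $v \in R \cup L^i$ and $t \ge t_i$, then by definition $v$ does not participate in any ``interesting event'' in the window $(t, \tc]$ (for $R$ this is because $t'\le t_i\le t$ and nodes of $R$ have finished arriving; for $L^i$ it is because $t_i$ is the \emph{last} interesting event of $u_i$ and $t\ge t_i$). The event $A_v^{t}$ forces $v$ to draw a fresh sample $I^{l_v}_v$ of $10^{10}\log n/\epsilon$ neighbors at time $t$, which places $v$ into $B^{l_v}_w$ for each sampled $w$; the Notify strategy guarantees that $v$ would be notified — hence participate in an interesting event — the moment a $\T{2}$ (or lower) notification is triggered at $v$, in particular whenever too many of the sampled $w$'s get deleted. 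So the key structural point is: conditioned on $A_v^{t}$ and on $v$ not being re-notified in $(t,\tc]$, only a small fraction of $v$'s time-$t$ sample has been deleted by $\tc$.

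The next step is to translate ``$\overline{M_v^{t}}$'' — i.e.\ $|\N{t}{v}\setminus\N{\tc}{v}| \ge \frac{\epsilon}{8\cdot 10^4}|\N{t}{v}|$, meaning at least an $\frac{\epsilon}{8\cdot 10^4}$-fraction of $\N{t}{v}$ is deleted between $t$ and $\tc$ — into a statement about the sample. Since the $10^{10}\log n/\epsilon$ neighbors sampled at time $t$ are uniform (with replacement) from $\N{t}{v}$, a standard Chernoff bound says that if an $\frac{\epsilon}{8\cdot 10^4}$-fraction of $\N{t}{v}$ is ultimately deleted, then with probability at least $1 - 1/n^{10^4}$ a constant fraction of the sampled nodes lies in that deleted set. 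Now I need to use the \emph{random order} of deletions: among those sampled-and-eventually-deleted nodes, the notification mechanism triggers a $\T{2}$-or-lower notification at $v$ as soon as an appropriate threshold (a constant fraction, matching the ``lost a constant fraction of the $B$-sample'' condition in the Notify/Clean design) of them is deleted. Because deletions arrive in uniformly random order, conditioned on the final deleted set and on the sample, the event that \emph{none} of the deletions hitting the sample occurs before time $\tc$ in a way that re-notifies $v$ is itself extremely unlikely — more precisely, once a constant fraction of a set of $\Theta(\log n/\epsilon)$ sampled neighbors is destined to be deleted, the chance that $v$ is never notified is exponentially small in $\log n/\epsilon$, i.e.\ $n^{-\Omega(1/\epsilon)}$, which we can make smaller than $1/n^{10^4}$ by the choice of the $10^{10}$ constant. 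Combining: $\overline{M_v^{t}} \land A_v^{t} \land \{v\in R\cup L^i\}\land\{t\ge t_i\}$ forces ``$v$ drew a large sample at $t$, a constant fraction of which is deleted by $\tc$, yet $v$ is never re-notified'', and each of these happening jointly has probability $< 1/n^{10^4}$.

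I would assemble the argument as: (1) fix $v$ and $t$, and condition on $A_v^{t}$ holding and on the sample $I_v^{l_v}$ drawn at time $t$; (2) under $\overline{M_v^{t}}$, apply Chernoff to conclude that w.h.p.\ (failure $< \tfrac12 n^{-10^4}$) a constant fraction of $I_v^{l_v}$ lies among the neighbors deleted in $(t,\tc]$; (3) under $\{v\in R\cup L^i, t\ge t_i\}$, note $v$ receives no interesting-event notification in $(t,\tc]$, so in particular the Notify procedure never fires a $\T{\le 2}$ notification at $v$ from the $B^{l_v}_\cdot$ sets — contradicting, with probability $1 - \tfrac12 n^{-10^4}$ over the random deletion order, the fact that a constant fraction of the sample is deleted; (4) union-bound the two failure probabilities. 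The main obstacle I anticipate is step (3): making precise the link between ``a constant fraction of the time-$t$ sample eventually gets deleted'' and ``$v$ gets re-notified before $\tc$'', since a priori deletions of sampled nearby nodes could be arranged (in a worst-case order) to all happen after $\tc$ — this is exactly where the random-deletion-order assumption must be used, and one must carefully set up the conditioning (on the identity of the eventually-deleted set, but not on its internal order) so that the relevant order statistic is uniform and a clean exponential tail bound applies. A secondary subtlety is bookkeeping the interaction with the degree-level buckets $l_v$: one should argue that $v$'s degree does not change by more than a constant factor in $(t,\tc]$ on the relevant event (using \cref{obs: bounds on degree of u at time tc wrt C} and \cref{obs: at time t the v is connected with most nodes in C}), so that the sample stored in $I^{l_v}_v$ at time $t$ remains the ``active'' one and its deleted members indeed route notifications back to $v$.
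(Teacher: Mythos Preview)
Your proposal rests on two misconceptions about the Notify mechanism, and these lead you to invoke the random-deletion-order assumption where it is neither needed nor used.

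First, there is no ``threshold'' for $v$ to be re-notified. Look at the deletion branch of Notify($u,\epsilon$): upon deletion of $u$, \emph{every} $v\in\bigcup_i B^i_u$ receives a $\T{0}$ notification. A $\T{0}$ notification is an interesting event. Hence if \emph{any single} node $w$ currently in $v$'s active sample $I^{l_v}_v$ is deleted, $v$ immediately participates in an interesting event. There is no ``lost a constant fraction'' trigger here; you are conflating Notify with the Clean procedure's anchor-removal rule. Consequently your step~(3) --- arguing, via the random order of deletions, that enough sampled nodes die early --- is attacking a nonexistent obstacle. The paper in fact stresses (Section on running time) that correctness holds even under \emph{adversarial} deletions; only the running-time bound needs random deletions.

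Second, your plan to fix the time-$t$ sample and apply Chernoff to it does not account for the fact that $I^{l_v}_v$ can be refreshed by $\T{2}$ notifications in $(t,\tc]$ (these are \emph{not} interesting events, so they are fully consistent with $v\in R\cup L^i$). After a $\T{2}$ refresh, the old sample is erased from the $B$-sets, so a subsequent deletion of an old sample point need not notify $v$. Your ``secondary subtlety'' paragraph gestures at degree buckets but not at this refresh issue.

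The paper's argument avoids both problems by reasoning \emph{sequentially over the deleted neighbors}. List the nodes $w_1,w_2,\ldots\in \N{t}{v}\setminus\N{\tc}{v}$ in deletion order; there are at least $\frac{\epsilon}{8\cdot 10^4}\,d_t$ of them under $\overline{M_v^t}$. The event $\{v\in R\cup L^i\}\land\{t\ge t_i\}$ forces, for every $j$, that $w_j\notin I_{v,t_{w_j}}$ (the sample \emph{as it stands at the moment $w_j$ is deleted}), since otherwise $v$ would get a $\T{0}$. Under $A_v^t$, the sample was refreshed at some time $\tilde t_{w_j}\in[t,t_{w_j})$ while the degree was still in the bucket $[2^{l_t},2^{l_t+1})$, so $|\N{\tilde t_{w_j}}{v}|<2d_t$ and $w_j\in\N{\tilde t_{w_j}}{v}$. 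By deferred decisions, each of the $10^{10}\log n/\epsilon$ sample points misses $w_j$ with probability at most $1-\tfrac{1}{2d_t}$, giving
\[
\prob\!\left[A_v^t\wedge E_j \;\Big|\; \bigcap_{j'<j}\{A_v^t\wedge E_{j'}\}\right]\le \left(1-\tfrac{1}{2d_t}\right)^{10^{10}\log n/\epsilon},
\]
and the product over $j\le \frac{\epsilon}{8\cdot 10^4}d_t$ is below $n^{-10^4}$. No randomness in deletion order enters anywhere; all the probability comes from the sampling in Notify.
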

\begin{proof}
Let $d_{t'}$ denote the degree of node $v$ at time $t'$ and $l_{t'} = l(d_{t'})$. In this proof we write $I_{v, t'}$ to denote the neighborhood sample stored in $I^{l_{t}}_v$ at time $t'$. Note that $I_{v, t'}$ and $I_{v, t''}$ for $t' < t''$ may be different if for example $v$ received a $\T{2}$ notification between those times and updated its neighborhood sample.

Also, for all $w \in \N{t}{v} \setminus \N{\tc}{v} $ let $t_w$ be the time when $w$ gets deleted from the node stream and let $w_j$ be the $j$-th element to be deleted in $\N{t}{v} \setminus \N{\tc}{v} $.
For $ j = 1, 2, \dots, |\N{t}{v} \setminus \N{\tc}{v}|$, we define event $E_j = \left\{ w_j \not \in I_{v, t_w} \right\}$.

We argue that $\{ v \in R \cup L^i\} \land \{t \geq t_i \}$ implies 
$\bigcap\limits_{1 \leq j \leq |\N{t}{v} \setminus \N{\tc}{v}|} E_j$. Indeed, $\overline{E_j}$ implies that $v \in B^{l_{t}}_w$ at time $t_w$. Consequently, upon its deletion, $w$ would send a $\T{0}$ notification to $v$ and $v$ would participate in an ``interesting event'' between times $[t_i +1, \tc)$. The latter implies that $v \not \in R \cup L_i$.

Thus: 

\begin{align*}
\prob[\overline{M_v^{t}} \land A_v^{t} \land \{ v \in R \cup L^i\} \land \{t \geq t_i \}] &\leq \prob[\overline{M_v^{t}} \land A_v^{t} \land \bigcap\limits_{1 \leq j \leq |\N{t}{v} \setminus \N{\tc}{v}|} E_j]\\
&\leq \prob[ A_v^{t} \land \bigcap\limits_{1 \leq j \leq \nicefrac{\epsilon}{8 \cdot 10^{4}} d_t } E_j]\\
&= \prod_{1 \leq j \leq \nicefrac{\epsilon}{8 \cdot 10^{4}} d_t} \prob[  A_v^{t} \land E_j \mid \bigcap\limits_{j' < j } \{ A_v^{t} \land E_{j'} \}]
\end{align*}
where in the second inequality we used that $|\N{t}{v} \setminus \N{\tc}{v}| \geq \nicefrac{\epsilon}{8 \cdot 10^{4}}|\N{t}{v}|$. 
Note that since:
\begin{align*}
    \left(1 - \nicefrac{1}{2 \cdot d_t}\right)^{ \left(\nicefrac{10^{10} \log n}{\epsilon} \right) \cdot \left(\nicefrac{\epsilon}{8 \cdot 10^{4}} d_t \right)} < 1/n^{10^{4}}
\end{align*}

It is enough to argue that:
\begin{align*}
    \prob[  A_v^{t} \land E_j \mid \bigcap\limits_{j' < j } \{ A_v^{t} \land E_{j'} \}] < (1 - \nicefrac{1}{2 \cdot d_t})^{10^{10} \log n / \epsilon}
\end{align*}

Let $\Tilde{t}_{w_{j'}}, j'<j$, be the random variable denoting the last time before $t_{w_{j'}}$ that $I^{l_t}_{v}$ was updated.
Since $t < t_{w_j}$:

\begin{enumerate}
    \item $A_v^{t}$ implies that:
        \begin{enumerate}
            \item $t \leq \Tilde{t}_{w_j} < t_{w_j}$; and
            \item $w_j \in \N{\Tilde{t}_{w_j}}{v}$
        \end{enumerate}
    \item  By the definition of $\Tilde{t}_{w_{j}}$: $|\N{\Tilde{t}_{w_j}}{v}| < 2 \cdot d_t$
\end{enumerate}

Let $r_i$ be the $i$-th random sample used to construct $I^{l_t}_v$ at time $t_w$, i.e., $I_{w_j, t_{w_j}}$. By the principle of deferred decisions it is enough to decide whether $r_i$ is equal to $w_j$ at time $t_{w_j}$. Note that given $\bigcap\limits_{j' < j } \{ A_v^{t} \land E_{j'} \}$, $r_i$ is a uniform at random sample from a set that contains $w_j$ and has at most $2 \cdot d_t$ elements, i.e., set $\N{\Tilde{t}_{w_j}}{v} \setminus \bigcup_{j' < j} w_{j'}$. Thus:

\begin{align*}
    \prob[  \left\{ r_i = v \right\} \mid \bigcap\limits_{j' < j } \{ A_v^{t} \land E_{j'} \} \}] \leq (1 - \nicefrac{1}{2 \cdot d_t})
\end{align*}
and the proof is concluded by noting that $I_{w_j, t_{w_j}} = \bigcup\limits_{i \leq 10^{10} \log n / \epsilon}{r_i}$ and $r_i$'s are independent.
\end{proof}

As already mentioned, we prove that for time $t_i$ most of $u_i$'s neighboring nodes have almost their final neighborhood. Towards that goal, for every $t_i$ we define a random set which contains all neighboring nodes of $u_i$ whose neighborhood at time $t_i$ is ``very'' different from their final one.

\begin{definition}\label{def: random set D_{t_i}}
    For $t_i$ we define the random set: $ D_{t_i} = \{  v : v\in \N{t_i}{u_i} \cap (R \cup L^i)  \land \overline{M_v^{t_i}} \}$.
\end{definition}

We prove that the size of this random set is small with high probability.

\begin{lem}\label{lem: almost all nodes have almost their final neighborhood at time ti}
    $\forall t_i$ we have that:
    $$
    \prob[ |D_{t_i}| \geq \epsilon/10^{5}  |\N{t_i}{u_i}|] < 1/n^{10^{3}}
    $$
\end{lem}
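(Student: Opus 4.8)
The strategy is to bound $\mathbb{E}[|D_{t_i}|]$ and then apply Markov's inequality, but with a twist: the set $D_{t_i}$ only collects neighbors $v$ of $u_i$ that both (a) lie in $R\cup L^i$ and (b) have $\overline{M_v^{t_i}}$ true. By definition of an ``interesting event'', a node $u_i$ that participates in an interesting event at time $t_i$ (either by arriving, or by receiving a $\T{0}$ or $\T{1}$ notification) resamples its neighborhood at $t_i$; in particular $A_{u_i}^{t_i}$ holds. I would first argue that, conditioned on $u_i$ participating in an interesting event at time $t_i$ and on the identity of its neighborhood $\N{t_i}{u_i}$, the $\T{0}/\T{1}$ notification $u_i$ received (or its arrival) triggers the inner loop of \cref{algo:notify}, which sends $\T{i+1}$ notifications to the nodes in $u_i$'s fresh sample $I^{l_{u_i}}_{u_i}$. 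More to the point: whenever $u_i$ participates in an interesting event, the Notify procedure causes $u_i$ to take a sample of $\Theta(\log n/\epsilon)$ of its current neighbors; but that sample is \emph{not} what controls $D_{t_i}$. Instead, the key is that for each fixed neighbor $v\in\N{t_i}{u_i}\cap(R\cup L^i)$, the event $\overline{M_v^{t_i}}$ together with $v\in R\cup L^i$ and $t_i\ge t_i$ (trivially) lets us invoke \cref{lem: different neighborhood has very low probability} — but only if $A_v^{t_i}$ holds. So the plan is to split into two cases for each such $v$: either $A_v^{t_i}$ holds (in which case the joint event has probability $<1/n^{10^4}$ by \cref{lem: different neighborhood has very low probability}), or $A_v^{t_i}$ fails.

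\textbf{Handling the $\overline{A_v^{t_i}}$ case.} This is where I expect the main obstacle. If $A_v^{t_i}$ fails then $v$ did not resample at $t_i$, so \cref{lem: different neighborhood has very low probability} does not directly apply at time $t_i$. The resolution should be that $v$ sampled its neighborhood at some earlier time $t^\star_v \le t_i$ — the last time $v$ participated in an interesting event or received a $\T{2}$ notification, i.e.\ the last time $A_v^{t^\star_v}$ held (and such a time exists, since $v$'s arrival counts). Crucially, between $t^\star_v$ and $t_i$ node $v$ receives no notification that would put it in an interesting event (since $v\in R\cup L^i$ and $t_i$ is $v$'s situation), and more importantly $v$'s sample is frozen over $[t^\star_v, t_i]$. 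Then $\overline{M_v^{t_i}}$, which says a large $\epsilon/(8\cdot 10^4)$-fraction of $\N{t_i}{v}$ gets deleted by time $\tc$, should be transferable to saying a comparable fraction of $\N{t^\star_v}{v}$ gets deleted by $\tc$, because $\N{t_i}{v}$ and $\N{t^\star_v}{v}$ are comparable in size (both are sandwiched between constant multiples of $|C|$ by \cref{obs: bounds on degree of u at time tc wrt C} and \cref{obs: at time t the v is connected with most nodes in C}, once we know $v$ stays in $C$). So I would apply \cref{lem: different neighborhood has very low probability} \emph{at time $t^\star_v$}, where $A_v^{t^\star_v}$ does hold: the conclusion $\prob[\overline{M_v^{t^\star_v}}\land A_v^{t^\star_v}\land\{v\in R\cup L^i\}\land\{t^\star_v\ge t_{i'}\}]<1/n^{10^4}$ kills this case too, after a union bound over the (at most $n$) candidate times $t^\star_v$. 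The delicate point is making the reduction ``$\overline{M_v^{t_i}}\Rightarrow$ a constant fraction of $\N{t^\star_v}{v}$ is deleted by $\tc$'' rigorous — one has to be careful that deletions from $\N{t_i}{v}$ that are not in $\N{t^\star_v}{v}$ don't spoil the count, but since $\N{t_i}{v}\subseteq\N{t^\star_v}{v}\cup(\text{nodes arrived in }(t^\star_v,t_i])$ and there is control on how many such new nodes there are relative to $|C|$, this should go through for $\epsilon$ small.

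\textbf{Finishing.} Having shown that for each fixed $v\in C$ (and each relevant configuration), $\prob[v\in D_{t_i}]\le \poly(n)/n^{10^4}$, I take a union bound over the at most $n$ nodes $v$ to conclude $\prob[|D_{t_i}|\ge 1]$ — hence a fortiori $\prob[|D_{t_i}|\ge \epsilon/10^5\,|\N{t_i}{u_i}|]$ — is at most $n\cdot\poly(n)/n^{10^4}<1/n^{10^3}$, using that $|\N{t_i}{u_i}|\ge 1$. (In fact the bound $|D_{t_i}|\ge\epsilon/10^5|\N{t_i}{u_i}|$ is weaker than $|D_{t_i}|\ge 1$ whenever $|\N{t_i}{u_i}|\ge 10^5/\epsilon$; for the few nodes of smaller degree one argues separately or absorbs them into the slack, but since everything here concerns $v\in C$ with $|C|$ large this is not an issue — and if $|C|$ is a small constant the whole statement is vacuous up to constants). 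One subtlety to flag: the index $i$ and the time $t_i$ are themselves random, so the union bound must be over all pairs (node $v$, time $\tau$) with the event ``$\tau = t_i$ for some $i$ and $v\in D_{t_i}$'' — this contributes at most an extra factor of $n^2$ (choices of $v$ and $\tau$), still comfortably swallowed by $n^{10^4}$. I would phrase the whole argument with these quantifiers fixed up front to avoid circularity with the randomness of $u_i, t_i$.
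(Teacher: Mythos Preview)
Your approach is genuinely different from the paper's, and the $\overline{A_v^{t_i}}$ case contains a real gap.

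The paper does \emph{not} attempt to bound $\prob[v\in D_{t_i}]$ for each $v$ individually. Instead it exploits the fact that $u_i$ itself participates in an interesting event at $t_i$ and therefore draws a fresh uniform sample $I_{u_i}\subseteq \N{t_i}{u_i}$ of size $10^{10}\log n/\epsilon$ and \emph{sends notifications to every node in that sample}. The event $\{|D_{t_i}|\ge \epsilon/10^{5}\,|\N{t_i}{u_i}|\}$ is then split according to whether $I_{u_i}$ hits $D_{t_i}$ or not. If it misses, a straight sampling bound gives probability at most $(1-\epsilon/10^{5})^{10^{10}\log n/\epsilon}\le 1/n^{10^{5}}$. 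If it hits, then some $v\in D_{t_i}$ lies in $I_{u_i}$, hence $v$ receives a notification from $u_i$ at time $t_i$, so $A_v^{t_i}$ holds \emph{automatically}, and now \cref{lem: different neighborhood has very low probability} applies to that $v$ with no further work; a union bound over $v$ finishes. The whole point of routing through $I_{u_i}$ is precisely to manufacture the condition $A_v^{t_i}$ and thereby dodge the case you are trying to handle.

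Your treatment of the $\overline{A_v^{t_i}}$ case has two concrete problems. First, \cref{lem: different neighborhood has very low probability} as stated requires $t\ge t_i$; applying it at $t^\star_v<t_i$ with the same index $i$ is not legitimate, and you would need to re-derive a variant of the lemma keyed to ``$v$ has no interesting event after $t^\star_v$'' rather than ``after $t_i$''. Second, and more seriously, the transfer $\overline{M_v^{t_i}}\Rightarrow\overline{M_v^{t^\star_v}}$ need not hold: nodes $w$ that arrive in $(t^\star_v,t_i]$, become neighbors of $v$, and are later deleted by $\tc$ contribute to $\N{t_i}{v}\setminus\N{\tc}{v}$ but not to $\N{t^\star_v}{v}\setminus\N{\tc}{v}$. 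You assert ``there is control on how many such new nodes there are relative to $|C|$'', but no such control is established here --- those arrivals could have arbitrarily large degree, so the probability that any single one samples $v$ (and thus notifies $v$) is only $\Theta(\log n/(\epsilon\, d_w))$, which can be negligible. Consequently the per-vertex bound $\prob[v\in D_{t_i}]\le \poly(n)/n^{10^{4}}$ you aim for is not justified, and indeed your stronger target $\prob[|D_{t_i}|\ge 1]<1/n^{10^{3}}$ may simply be false; the paper never claims it and its argument does not need it.
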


\begin{proof}
In this proof when we write $I_{u_i}$ we refer to $u_i$'s neighborhood sample constructed via the connect procedure at that specific time $t_i$.
Note that:
\begin{align*}    
\prob[ |D_{t_i}| \geq \epsilon/10^{5}  |\N{t_i}{u_i}|]
&= \prob[\{ |D_{t_i}| \geq \epsilon/10^{5}  |\N{t_i}{u_i}|\} \land \{ D_{t_i} \cap I_{u_i} = \emptyset \}]\\ 
&+ \prob[\{ |D_{t_i}| \geq \epsilon/10^{5}  |\N{t_i}{u_i}|\} \land \{D_{t_i} \cap I_{u_i} \neq \emptyset \}]
\end{align*}

We bound each of the two terms in the right hand side independently.

For the first term we have:
\begin{align*}   
\prob[\{ |D_{t_i}| \geq \epsilon/10^{5}  |\N{t_i}{u_i}|\} \land \{ D_{t_i} \cap I_{u_i} = \emptyset \}] 
    & \leq (1 - \epsilon/10^{5})^{10^{10} \log n / \epsilon} \\
    & \leq 1 / n^{10^{5}}
\end{align*}
since $\N{t_i}{u_i} \setminus D_{t_i}$ is at least a $(1 - \epsilon/10^{5})$ fraction of $\N{t_i}{u_i}$.

For the second term we have:

\begin{align*}   
\prob[\{ |D_{t_i}| \geq \epsilon/10^{5}  |\N{t_i}{u_i}|\} \land \{D_{t_i} \cap I_{u_i} \neq \emptyset \}]
    &= \sum_{v \in C} \prob[ \{ v \in \N{t_i}{u_i} \} \land \{ v \in D_{t_i} \} \land \{ v \in I_{u_i} \} ] \\
    &\leq \sum_{v \in C} \prob[ \{ v \in \N{t_i}{u_i} \} \land \{ v \in D_{t_i} \} \land \{ v \in A_{v}^{t_i} \} ] \\
    &\leq \sum_{v \in C} \prob[ \{ v \in D_{t_i} \} \land  A_{v}^{t_i} ] \\
    &\leq \sum_{v \in C} \prob[ \{ v \in R \cup L^i \} \land \{ \overline{M_v^{t_i}} \} \land  A_{v}^{t_i} ] \\
    &\leq n \cdot  (1 /n^{10^{4}})\\
    &\leq 1 /n^{10^{4}-1}
\end{align*}
Where in the first inequality note that $v \in I_{u_i}$ implies $A_v^{t_i}$ (since $v$ receives a notification from $u_i$), in the second inequality we used that $D_{t_i} \subseteq \N{t_i}{u_i}$, in the third inequality we used the definition of set $D_{t_i}$ and in the fourth inequality we used~\cref{lem: different neighborhood has very low probability}.

Combining the two upper bounds concludes the proof.
\end{proof}

The next lemma argues that if $D_{t_i}$ is small and $\N{t_i}{u_i} \simeq C$ then $u_i$ is heavy at time $t_i$.

\begin{thm}\label{thm: ui are heavy}
    If $ \{ |D_{t_i}| < \epsilon/10^{5}  |\N{t_i}{u_i}| \} \land M_{u_i}^{t_i}$ is true then $u_i$ is $\epsilon/10$-heavy at time $t_i$. In addition, if $u_i$ also enters the anchor set at time $t_i$ then it remains in it at least until time $\tc$.
\end{thm}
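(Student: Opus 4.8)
The plan is to prove the two assertions separately, both relying on the structural fact that $M_{u_i}^{t_i}$ together with $|D_{t_i}|$ being small forces $u_i$'s time-$t_i$ neighborhood to agree with all but a tiny fraction of its neighbors.

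First I would establish heaviness. Since $M_{u_i}^{t_i}$ holds and $u_i \in L^i$, \cref{lem: high degree implies very different neighborhood} gives $|\N{t_i}{u_i}\setminus C| < \epsilon/10^3 |\N{t_i}{u_i}|$, so all but an $\epsilon/10^3$ fraction of $u_i$'s neighbors lie in $C$; by definition of $R$ and $L^i$ these neighbors lie in $R\cup L^i$ (the later nodes $u_{i+1},\dots$ have not yet had their last interesting event, but they could be neighbors — so I should be slightly careful and note that $\N{t_i}{u_i}\cap C \subseteq (R\cup L^i) \cup \{u_{i+1},\dots\}$, and the nodes of $C\setminus(R\cup L^i)$ beyond index $i$ number at most $|L| - i \le \epsilon/10^4 |C|$, which by \cref{obs: at time t the v is connected with most nodes in C} is at most $O(\epsilon)|\N{t_i}{u_i}|$). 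Among the neighbors in $R\cup L^i$, all but at most $|D_{t_i}| < \epsilon/10^5 |\N{t_i}{u_i}|$ of them satisfy $M_v^{t_i}$. For each such $v$ we have $M_v^{t_i} \land M_{u_i}^{t_i}$, so \cref{lem: Oui and Ow suffice for ui and w being in agreement} gives that $u_i$ and $v$ are in $\epsilon/10$-agreement. Collecting the bad neighbors — those outside $C$, those in $C$ with index $>i$, and those in $D_{t_i}$ — their total count is at most $(\epsilon/10^3 + O(\epsilon/10^4) + \epsilon/10^5)|\N{t_i}{u_i}| < \epsilon|\N{t_i}{u_i}|$ for $\epsilon$ small, hence $u_i$ is in $\epsilon/10$-agreement with more than a $(1-\epsilon/10)$ fraction of its neighbors and is $\epsilon/10$-heavy at time $t_i$.

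Second I would argue persistence in the anchor set. Suppose $u_i$ enters $\Phi$ at time $t_i$; I need to show it is never removed before $\tc$. A node leaves $\Phi$ in exactly two ways: in \textsc{Anchor} when it participates in an interesting event with $X_u=0$, or in \textsc{Clean} when it has lost more than an $\epsilon$ fraction of its $\tilde G$-edges since entering $\Phi$. The first case cannot occur after $t_i$: by the definition of the $u_i$'s, $t_i$ is the \emph{last} time $u_i$ participates in an interesting event, and $X_{u_i}=0$ only triggers a removal inside an \textsc{Anchor} call, which only runs for nodes in $\I_t$; so no such call happens for $u_i$ after $t_i$. For the second case I must show $u_i$ does not lose too many $\tilde G$-edges. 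When $u_i$ entered $\Phi$ during $Anchor(u_i,\epsilon,t_i)$ it connected to every neighbor it was in agreement with; since (from the heaviness argument above and \cref{lem: Oui and Ow suffice for ui and w being in agreement}) $u_i$ is in agreement with at least a $(1-\epsilon/10)$ fraction, roughly $|C|$ many, of its neighbors, and by $M_{u_i}^{t_i}$ at most an $\epsilon/(8\cdot10^4)$ fraction of $\N{t_i}{u_i}$ is ever deleted by time $\tc$, only that tiny fraction of its anchor-edges can disappear through endpoint deletion. Edges can also be removed by \textsc{Clean} when the other endpoint stops being in agreement with $u_i$ or stops being heavy, but since $u_i$'s neighborhood barely changes (by $M_{u_i}^{t_i}$) and the neighbors in $R$ likewise barely change, agreement is preserved with all but an $O(\epsilon)$ fraction; this keeps the fraction of lost edges strictly below $\epsilon$. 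Hence the \textsc{Clean} removal condition is never met for $u_i$, so $u_i$ stays in $\Phi$ through $\tc$.

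The main obstacle I expect is the second part: controlling the edges of $u_i$ inside $\tilde G_t$ across the whole interval $[t_i,\tc]$. One has to track not only endpoint deletions (easy via $M_{u_i}^{t_i}$) but also removals triggered in \textsc{Clean} when a neighbor $w\in\Phi_{u_i}$ is judged "not in agreement with $u_i$" or "not heavy" — this requires arguing that for the bulk of $u_i$'s anchor-neighbors (those lying in $R$), the relevant neighborhoods are essentially frozen after $t_i$, so that agreement and heaviness both persist; and one also needs that the $\epsilon$-loss threshold in \textsc{Clean} is measured against the edge count at entry time, which must itself be shown to be close to $|C|$. Making these "neighborhoods barely change, so agreement persists" steps precise — quantifying the cumulative drift and checking the constants chain through to a bound below $\epsilon$ — is where the real work lies, and it leans heavily on \cref{lem: different neighborhood has very low probability} and the earlier observations bounding $|C\setminus\N{t}{u}|$.
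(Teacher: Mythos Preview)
Your Part~1 argument is essentially the paper's: bound the ``bad'' neighbors of $u_i$ as those outside $C$ (via \cref{lem: high degree implies very different neighborhood}), those in $L\setminus L^i$ (at most $|L|\le\epsilon/10^4|C|$), and those in $D_{t_i}$ (by hypothesis), and apply \cref{lem: Oui and Ow suffice for ui and w being in agreement} to everyone else.

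For Part~2 you miss the key simplification and end up proposing something much harder than what is needed. You plan to argue that agreement between $u_i$ and its sparse-graph neighbors \emph{persists} throughout $[t_i,\tc]$, and you correctly flag this as the obstacle. The paper sidesteps this entirely with a counting argument: an edge $(u_i,v)\in\tilde E$ can be deleted (other than by $v$ leaving the graph) only inside \textsc{Clean}$(v,\epsilon,\cdot)$, and \textsc{Clean} runs only for $v\in\I_t$. Hence the number of edges $u_i$ can lose after entering $\Phi$ is bounded by the number of its time-$t_i$ neighbors that either get deleted or participate in an interesting event after $t_i$. But by definition no node of $R\cup L^i$ has a further interesting event after $t_i$, and every node of $C$ survives to $\tc$; so any neighbor that could cause an edge loss lies in $(\N{t_i}{u_i}\setminus C)\cup(L\setminus L^i)$, of size at most
\[
|\N{t_i}{u_i}\setminus C|+|L|\;\le\;\epsilon/10^3\,|\N{t_i}{u_i}|+\epsilon/10^4\,|C|\;\le\;2\epsilon/10^3\,|\N{t_i}{u_i}|.
\]
Since $u_i$ entered $\Phi$ with at least $(1-\epsilon/10)|\N{t_i}{u_i}|$ sparse-graph edges, it never reaches the $\epsilon$-fraction loss threshold in \textsc{Clean}, and (as you note) \textsc{Anchor} never runs on $u_i$ again. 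No ``agreement persists'' argument is needed.

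Your proposed route is not wrong in spirit, but it would require controlling the neighborhoods of all of $u_i$'s neighbors uniformly over the whole interval $[t_i,\tc]$, whereas the hypotheses of the theorem only pin things down at the single time $t_i$ (via $D_{t_i}$ and $M_{u_i}^{t_i}$). In particular, your worry about edges to $R$-nodes being removed by \textsc{Clean} is vacuous: an $R$-node never calls \textsc{Clean} after $t_i$, so those edges cannot be touched regardless of whether agreement would survive the check.
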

\begin{proof}

By the definition of $D_{t_i}$ we have that $\forall w \in \N{t_i}{u_i} \cap (R \cup L^i) \setminus D_{t_i}$ the event $M_w^{t_i}$ is true. Thus, since $M_{u_i}^{t_i}$ is also true from~\cref{lem: Oui and Ow suffice for ui and w being in agreement} we can conclude that $u_i$ and $w$ are in $\epsilon/10$-agreement.

We now argue that $u_i$ is in $\epsilon/10$ agreement with at least a $1 - \epsilon/10$ fraction of its neighborhood at time $t_i$. For this, note that:
\begin{align*}
    |\N{t_i}{u_i} \setminus \left( \N{t_i}{u_i} \cap (R \cup L^i) \setminus D_{t_i} \right)| 
        &\leq |\N{t_i}{u_i} \setminus \left( \N{t_i}{u_i} \cap (R \cup L^i) \right)  | + | D_{t_i}| \\
        & = |\N{t_i}{u_i} \setminus (R \cup L^i) | + | D_{t_i}| \\
        & \leq |\N{t_i}{u_i} \setminus C | + | C \setminus (R \cup L^i) | + | D_{t_i}| \\
        & \leq |\N{t_i}{u_i} \setminus C | + | C \setminus R | + | D_{t_i}| \\
        & = |\N{t_i}{u_i} \setminus C | + | L | + | D_{t_i}| \\
        &\leq \epsilon /10^{3} | \N{t_i}{u_i}| + \epsilon /10^{4} | C| + |D_{t_i}|\\
        &\leq \epsilon /10^{3} | \N{t_i}{u_i}| + 5 \epsilon /10^{4} | \N{t_i}{u_i}| + |D_{t_i}|\\
        &\leq  \epsilon /10^{3} | \N{t_i}{u_i}| + 5 \epsilon /10^{4} | \N{t_i}{u_i}| +  \epsilon / 10^{5} |\N{t_i}{u_i}|\\
        &\leq \epsilon /10 | \N{t_i}{u_i}|
\end{align*}
Where in the sixth inequality we used (2) of~\cref{lem: high degree implies very different neighborhood} (since $M_{u_i}^{t_i}$ is true), in the seventh inequality we used that  $|D_{t_i}| < \epsilon / 10^{5} |\N{t_i}{u_i}|$ and the last inequality holds for $\epsilon$ small enough.

To prove the second claim of the theorem, we note that if $u_i$ enters the anchor set at time $t_i$ then to exit that set before time $\tc$ at least an $\epsilon$-fraction of its neighborhood at time $t_i$ needs to call the $Clean (\cdot, \epsilon)$ procedure and consequently participate in an ``interesting event''. That is, for $\epsilon$ small enough, at least $\epsilon (1 - \epsilon/10)|\N{t_i}{u_i}| \geq \epsilon/2 |\N{t_i}{u_i}|$ nodes in $\N{t_i}{u_i}$ need to participate in an ``interesting event''. However, such nodes can be at most:
\begin{align*}
    |\N{t_i}{u_i} \setminus C| + |L \setminus (R \cup L^i) | 
            &\leq |\N{t_i}{u_i} \setminus C| + |L|\\
            &\leq \epsilon/10^{3} | \N{t_i}{u_i}| + \epsilon/10^{4} | C| \\
            &\leq \epsilon/10^{3} | \N{t_i}{u_i}| + \epsilon/10^{3} | \N{t_i}{u_i}|\\
            &\leq 2 \cdot \epsilon/10^{3} | \N{t_i}{u_i}| \\
\end{align*}
which is smaller than $\epsilon/2 |\N{t_i}{u_i}|$. In the second inequality we used the cardinality of $L$ and in the third inequality we used (2) of~\cref{lem: high degree implies very different neighborhood}.
\end{proof}
Note that $A_{u_i}^{t_i}$ is always true from the definition of $u_i$ and $t_i$. Combining~\cref{lem: different neighborhood has very low probability} and~\cref{lem: almost all nodes have almost their final neighborhood at time ti} we deduce that event $ \{ |D_{t_i}| < \epsilon/10^{5}  |\N{t_i}{u_i}| \} \land M_{u_i}^{t_i}$ happens with high probability. Thus, from~\cref{thm: ui are heavy} we conclude that $u_i$ is $\epsilon/10$-heavy at time $t_i$ with high probability.

The proof proceeds arguing that the following events happen with high probability:
\begin{itemize}
    \item For every node $v \in R$ there exists a node $u_i \in L_1$ such that: (1) $u_i$ enters the anchor set at time $t_i$; (2) $u_i$ is in agreement with node $v$ at time $t_i$; and (3) $u_i$ does not exit the anchor set before time $\tc$. Consequently edge $(u_i, v)$ is added to our sparse solution graph $\Tilde{G_{t_i}}$ and remains in our sparse solution at least until time $\tc$.
    \item Similarly, we argue that for every node $v \in L_1$ there exists a node $u_i \in L_2$ such that (1), (2) and (3) hold. So that edge $(u_i, v)$ is added to our sparse solution graph $\Tilde{G_{t_i}}$ and remains in our sparse solution at least until time $\tc$.
    \item The last part of the proof argues how the \textit{Connect}($\cdot, \epsilon$) procedure clusters nodes in $L_2$ that do not enter the anchor set with the rest of $C$. At time $t_i$ when $u_i\in L_2 $ participates in an ``interesting event'' most nodes in $R$ have their neighborhood similar to their final, i.e., at time $\tc$, neighborhood, consequently they are in agreement with $u_i$. In addition to that, $u_i$ is in agreement with almost all nodes in $L_1$, therefore there are many triangles of the form $u_i, v, w$ where $v \in R$, $w \in \Phi_v \cap L_1$ and all three nodes are in agreement. The \textit{Connect}($u_i, \epsilon$)  connects $u_i$ to the rest of the cluster with high probability using the edge $(u_i, w)$ in one of those triangles.
\end{itemize}

To prove the aforementioned points, we define the following random set:
\begin{definition}\label{def: random set Tv}
    For $v \in C$ we define the random set:
    $$
    T_v = \left\{i \in \{1,2,\dots, \lfloor \epsilon/10^{4} |C| \rfloor \}: \overline{M_v^{t_i}} \land \{ (u_i, v) \in E_{t_i} \} \right\}
    $$
\end{definition}

For a node $v \in C$,~\cref{def: random set Tv} captures which times during the last $\lfloor \epsilon/10^{4} |C| \rfloor$ ``interesting events'', node $v$ was connected to $u_i$ and its neighborhood was different from its final one. We proceed by proving that for every node in $v \in R \cup L_1$, $|T_v|$ is small with high probability.

\begin{lem}\label{lem: v in R has almost the same neighborhood as its final neighborhood for the largest part of the last important events}
Let $v \in R \cup L_1$, then
$$
\prob [|T_v| \geq \epsilon/10^{5} |C| ] < 1 / n^{10^{2}}
$$
\end{lem}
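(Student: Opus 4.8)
\textbf{Proof plan for \cref{lem: v in R has almost the same neighborhood as its final neighborhood for the largest part of the last important events}.}
The plan is to mimic the structure of the proof of \cref{lem: almost all nodes have almost their final neighborhood at time ti}, but tracking a single node $v$ across all of the last $\lfloor \epsilon/10^4 |C|\rfloor$ interesting events rather than the neighbors of one $u_i$. The key observation is that each index $i\in T_v$ forces the conjunction of two conditions: (1) $\overline{M_v^{t_i}}$, i.e.\ at time $t_i$ a non-trivial $(\ge \epsilon/(8\cdot 10^4))$-fraction of $v$'s neighborhood will still be deleted before $\tc$; and (2) $(u_i,v)\in E_{t_i}$, so $v$ is a neighbor of the node $u_i$ that participated in the interesting event at time $t_i$. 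Condition (2) in turn has a useful consequence: when $u_i$ runs \connectprocedure{u_i}{\epsilon} (or more generally when it participates in an interesting event and samples its neighborhood), there is a reasonable chance that $v$ is picked into $u_i$'s sample, and being picked into that sample means $v$ receives a $\T{i}$ notification, hence $v$ participates in an interesting event at time $t_i$ — which contradicts $v \in R \cup L_1$ if this happens at more than a handful of the last events. So the proof should be a "balls into bins / if $|T_v|$ is large then $v$ is sampled too often" argument, just as in \cref{lem: almost all nodes have almost their final neighborhood at time ti}.

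Concretely, I would fix $v\in R\cup L_1$ and condition on the (random) sequence of interesting events. Write $S = \{i : \overline{M_v^{t_i}} \land (u_i,v)\in E_{t_i}\}$ for the candidate index set; note $T_v \subseteq S$ deterministically once the event sequence is fixed, but the cleaner route is to bound $\prob[|T_v|\ge \epsilon/10^5 |C|]$ directly. For each such $i$, at time $t_i$ the node $u_i$ draws a sample of size $\Theta(\log n/\epsilon)$ uniformly from $N_{G_{t_i}}(u_i)$, and by \cref{obs: bounds on degree of u at time tc wrt C} and \cref{lem: high degree implies very different neighborhood} (applied to $u_i$, which is in $R\cup L^i$) we have $|N_{G_{t_i}}(u_i)| = \Theta(|C|)$ whenever $M_{u_i}^{t_i}$ holds (and $M_{u_i}^{t_i}$ holds w.h.p., so we can condition on it, paying a $1/\poly(n)$ term). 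Hence, conditioned on $v\in N_{G_{t_i}}(u_i)$, the probability that $v$ is \emph{not} picked into $u_i$'s sample is at most $(1 - c/|C|)^{\Theta(\log n/\epsilon)}$ for an absolute constant $c$. If $v$ is picked at \emph{any} of these $\ge \epsilon/10^5 |C|$ events, then $v$ receives a notification at that time and participates in an interesting event strictly after $t_1$, contradicting $v\in R\cup L_1$ (since by construction $R$ contains no node with an interesting event after $t'$, and $L_1$ consists of the \emph{earliest} half of the last interesting events — any additional interesting event for $v$ would push $v$'s last interesting time beyond $t_{|L_1|}$). Therefore
\[
\prob[|T_v|\ge \epsilon/10^5 |C|] \;\le\; \Bigl(1 - \tfrac{c}{|C|}\Bigr)^{(\epsilon/10^5 |C|)\cdot \Theta(\log n/\epsilon)} + \frac{1}{\poly(n)} \;\le\; \frac{1}{n^{10^2}},
\]
after using the samples are independent across distinct events and that $M_{u_i}^{t_i}$ fails with probability only $1/\poly(n)$ by \cref{lem: different neighborhood has very low probability} and \cref{lem: almost all nodes have almost their final neighborhood at time ti}; one also union-bounds the $\poly(n)$-small contributions over the at most $|C|\le n$ events.

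The main obstacle I anticipate is making the "if $|T_v|$ is large then $v$ is notified too often, contradicting $v\in R\cup L_1$" implication fully rigorous, because the index set $T_v$, the times $t_i$, the identities $u_i$, and the samples drawn at those times are all correlated random variables — the same internal randomness of \textsc{Notify}/\textsc{Connect} that decides $T_v$ also decides who gets notified. The clean way to handle this, following the bookkeeping already used in \cref{lem: different neighborhood has very low probability}, is to apply the principle of deferred decisions: reveal the event sequence and the $M^{t_i}$-indicators first, and only then reveal, one event at a time, whether $v$ was drawn into $u_i$'s sample; conditioning on the history keeps each such draw uniform over a set of size $O(|C|)$ containing $v$, so the product bound goes through. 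A secondary subtlety is the asymmetry between $R$ and $L_1$: for $v\in R$, "no interesting event after $t'$" is immediate, whereas for $v\in L_1$ one must argue that an extra interesting event at some $t_i$ with $i$ in the relevant range would reorder the $t_i$'s and eject $v$ from $L_1$; this is exactly the tie-breaking-by-notification-type ordering fixed in the setup, so it should be a short argument but needs to be stated carefully. Everything else is the same Chernoff/geometric-decay computation as in the preceding lemmas, with the constant $10^2$ in the exponent chosen with ample slack.
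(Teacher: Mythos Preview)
Your overall scaffold matches the paper's: condition on $\bigcap_i M_{u_i}^{t_i}$ (this holds with probability $\ge 1-1/n^{10^3}$ by a union bound over \cref{lem: different neighborhood has very low probability}), then split according to whether $v$ ever lands in $I_{u_i}$ for some $i\in T_v$. For the ``never sampled'' branch your geometric bound is exactly the paper's; the paper merely pads each $\N{t_i}{u_i}$ with dummy nodes to size $2|C|$ so that the per-draw miss probability is uniformly $1-1/(2|C|)$.

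The gap is in the ``sampled'' branch. You assert that $v\in I_{u_i}$ forces $v$ to participate in an interesting event at time $t_i$, contradicting $v\in R\cup L_1$. This implication fails: if $u_i$'s own interesting event was a $\T{1}$ notification, then $u_i$ forwards only $\T{2}$ to the nodes in its sample, and receiving $\T{2}$ is \emph{not} an interesting event by definition. So membership in $I_{u_i}$ guarantees only the weaker event $A_v^{t_i}$ (namely, $v$ resamples its own neighborhood), not $v\in\I_{t_i}$. Your deferred-decisions remark addresses the coupling, not the notification type, and the tie-breaking argument for $L_1$ does not help here either.

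The paper closes this branch by using the other half of ``$i\in T_v$'', namely $\overline{M_v^{t_i}}$. Once $v\in I_{u_i}$ yields $A_v^{t_i}$, one has $\overline{M_v^{t_i}}\wedge A_v^{t_i}$, and \cref{lem: different neighborhood has very low probability} bounds this by $1/n^{10^4}$; summing over $i$ gives at most $n/n^{10^4}$. The contradiction you want is two-step, not one-step: $A_v^{t_i}$ makes $v$ refresh its own sample, and $\overline{M_v^{t_i}}$ guarantees that many of those freshly sampled neighbors will later be deleted, each such deletion sending $v$ a $\T{0}$ --- \emph{that} future $\T{0}$ is the interesting event contradicting $v\in R\cup L^i$, and it is already packaged inside \cref{lem: different neighborhood has very low probability}. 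Routing your sampled branch through that lemma turns your sketch into the paper's proof.
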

\begin{proof}
In this proof, for a node $u_i \in L$ when we write $I_{u_i}$ we refer to $u_i$'s neighborhood sample constructed via the connect procedure at time $t_i$. For short we write $\left[ \epsilon/10^{4} |C|  \right]$ instead of $\{1,2,\dots, \lfloor \epsilon/10^{4} |C| \rfloor \}$. We first argue that event $\bigcap_{i \in \left[ \epsilon/10^{4} |C|  \right]} M_{u_i}^{t_i}$ happens with high probability.
\begin{align*}
    \prob [\bigcap_{i \in \left[ \epsilon/10^{4} |C|  \right]} M_{u_i}^{t_i}] &=
    1 - \prob [\exists i \in \left[ \epsilon/10^{4} |C|  \right]: \overline{M_{u_i}^{t_i}}]\\
    &\geq 1 - \sum_{i \in \left[ \epsilon/10^{4} |C|  \right]} \prob [\overline{M_{u_i}^{t_i}}]\\
    &\geq 1 - \sum_{i \in \left[ \epsilon/10^{4} |C|  \right]} \prob [\overline{M_{u_i}^{t_i}} \land A_{u_i}^{t_i}]\\
    &\geq 1 - n \cdot 1/n^{10^{4}}\\
    &\geq 1 - 1/n^{10^{3}}\\
\end{align*}

where in the first inequality we used the union bound, in the second inequality we used that $A_{u_i}^{t_i}$ is always true by definition of $u_i$ and $t_i$ and in the third inequality we used~\cref{lem: different neighborhood has very low probability}.
Thus using
\begin{align*}
    \prob [ |T_v| \geq \epsilon/10^{5} |C| ] &=
\prob [ \{|T_v| \geq \epsilon/10^{5} |C| \} \land \bigcap_{i \in \left[ \epsilon/10^{4} |C|  \right]} M_{u_i}^{t_i} ] +
\prob [ \{|T_v| \geq \epsilon/10^{5} |C| \} \land \overline{\bigcap_{i \in \left[ \epsilon/10^{4} |C|  \right]} M_{u_i}^{t_i}} ] \\
&\leq \prob [ \{|T_v| \geq \epsilon/10^{5} |C| \} \land \bigcap_{i \in \left[ \epsilon/10^{4} |C|  \right]} M_{u_i}^{t_i} ] + 1 / n^{10^{3}}
\end{align*}
We focus on upper bounding the first term of the right hand side. Again, using the law of total probability we have that:
\begin{align*}
\prob [ \{|T_v| \geq \epsilon/10^{5} |C| \} \land \bigcap_{i \in \left[ \epsilon/10^{4} |C|  \right]} M_{u_i}^{t_i} ] &= 
\prob [ \{|T_v| \geq \epsilon/10^{5} |C| \} \land \bigcap_{i \in \left[ \epsilon/10^{4} |C|  \right]} M_{u_i}^{t_i} \land \{ \forall i \in T_v \; v \not \in I_{u_i}\} ]\\
&+\prob [ \{|T_v| \geq \epsilon/10^{5} |C| \} \land \bigcap_{i \in \left[ \epsilon/10^{4} |C|  \right]} M_{u_i}^{t_i} \land \{ \exists i \in T_v: \; v \in I_{u_i}\} ]
\end{align*}

For the first term note that for $\epsilon$ small enough and using (1) of~\cref{lem: high degree implies very different neighborhood} we have that: if $\bigcap_{i \in \left[ \epsilon/10^{4} |C|  \right]} M_{u_i}^{t_i}$ is true then $\forall i \in \left[ \epsilon/10^{4} |C|  \right]$ we have that $ |\N{t_i}{u_i}| < 2 |C|$. Now consider the following random process: each time our algorithm samples $u_i$'s neighborhood to create $I_{u_i}$ we sample uniformly at random the set $\N{t_i}{u_i} \cup \{ r_{|\N{t_i}{u_i}|+1},r_{|\N{t_i}{u_i}|+2},\dots, r_{2|C|}  \}$ to create sets $\Tilde{I}_{u_i}$ where $r_j$ nodes only exist for analysis purposes. Now note that:
\begin{align*}
    &\prob [ \{|T_v| \geq \epsilon/10^{5} |C| \} \land \bigcap_{i \in \left[ \epsilon/10^{4} |C|  \right]} M_{u_i}^{t_i} \land \{ \forall i \in T_v :\; v \not \in I_{u_i}\} ] \leq \\
    &\prob [ \{|T_v| \geq \epsilon/10^{5} |C| \} \land \bigcap_{i \in \left[ \epsilon/10^{4} |C|  \right]} M_{u_i}^{t_i} \land \{ \forall i \in T_v :\; v \not \in \Tilde{I}_{u_i}\} ] \leq \\
    &\left( 1-  1 /(2 |C|) \right)^{(10^{10} \log n/ \epsilon) \cdot (\epsilon/10^{5} |C|)} \leq\\
    &\left( 1-  1 /(2 |C|) \right)^{10^{5} \log n |C|} \leq\\
    &(1/n)^{10^{5} /2 } \leq\\
    & 1/n^{10^{4}}
\end{align*}

We continue by upper bounding $\prob [ \{|T_v| \geq \epsilon/10^{5} |C| \} \land \bigcap_{i \in \left[ \epsilon/10^{4} |C|  \right]} M_{u_i}^{t_i} \land \{ \exists i \in T_v: \; v \in I_{u_i}\} ]$ as follows:

\begin{align*}   
\prob [ \{|T_v| \geq \epsilon/10^{5} |C| \} \land \bigcap_{i \in \left[ \epsilon/10^{4} |C|  \right]} M_{u_i}^{t_i} \land \{ \exists i \in T_v: \; v \in I_{u_i}\} ] 
    &\leq \sum_{i \in \{1,2,\dots, \epsilon/10^{4} |C| \}} \prob[ i \in T_v , v \in I_{u_i} ] \\
    &\leq \sum_{i \in \{1,2,\dots, \epsilon/10^{4} |C| \}} \prob[ \Tilde{M_v^{t_i}} \land  A_v^{t_i} ] \\
    &\leq n \cdot (1 /n^{10^{4}})\\
    &\leq 1 /n^{10^{4}-1}
\end{align*}
Where in the first inequality we used the union bound, in the second inequality we used the fact that event $\{v \in I_{u_i}\}$ implies $A_v^{t_i}$, and in the third one we used~\cref{lem: different neighborhood has very low probability}.

The proof follows by noting that $ 1 / n^{10^{3}} + 1 / n^{10^{4}} + 1 / n^{10^{4}-1} < 1 / n^{10^{2}} $
\end{proof}
We continue proving that, with high probability, every node in $R$ is selected by a node in $L_1$ which enters the anchor set and remains in the anchor set until at least time $\tc$.
\begin{lem}\label{lem: all nodes in R get selected by some node in A1}
For every node $v \in R$ let 
\begin{align*}
    X_v = \left\{ \exists i \in [\epsilon/(2\cdot 10^{4}) |C| ]: 
\{u_i \in L_1 \text{ enters the anchor set }\} \land \{ (v, u_i) \in  E_{t_{i'}}, \forall i' \in [i, \tc] \right\}
\end{align*}
then $\prob [X_v] > 1 - 1/n^{10} $.
\end{lem}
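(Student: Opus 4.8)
The plan is to fix a node $v \in R$ and bound the probability that \emph{no} node $u_i \in L_1$ both selects $v$ (via the $Anchor$ procedure at time $t_i$) and survives in the anchor set until time $\tc$. I would work on the high-probability event $\mathcal{G}$ on which three things hold simultaneously: (a) $|T_v| < \epsilon/10^5 |C|$ (from~\cref{lem: v in R has almost the same neighborhood as its final neighborhood for the largest part of the last important events}); (b) for every $u_i \in L_1$, both $\{|D_{t_i}| < \epsilon/10^5 |\N{t_i}{u_i}|\}$ and $M_{u_i}^{t_i}$ hold (from~\cref{lem: different neighborhood has very low probability} and~\cref{lem: almost all nodes have almost their final neighborhood at time ti}, union bounded over the $\le n$ indices in $L_1$); and (c) every $u_i \in L_1$ has $|\N{t_i}{u_i}| \le 2|C|$ (from (1) of~\cref{lem: high degree implies very different neighborhood}, which applies since $M_{u_i}^{t_i}$ holds on $\mathcal{G}$). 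Each of these fails with probability at most $\poly(1/n)$, so $\prob[\overline{\mathcal{G}}] \le 1/n^{11}$, say, and it suffices to show $\prob[\overline{X_v} \land \mathcal{G}] \le 1/n^{11}$.

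Next I would identify, on $\mathcal{G}$, a large set of ``good'' indices $i \in [\epsilon/(2\cdot 10^4)|C|]$, namely those for which $u_i$ is a neighbor of $v$ at time $t_i$ \emph{and} $i \notin T_v$, which by the definition of $T_v$ forces $M_v^{t_i}$ to hold. For such a good index, $M_{u_i}^{t_i} \land M_v^{t_i}$ holds and $u_i, v$ are neighbors in $R \cup L^i$, so by~\cref{lem: Oui and Ow suffice for ui and w being in agreement} the nodes $u_i$ and $v$ are in $\epsilon/10$-agreement at time $t_i$; moreover by~\cref{thm: ui are heavy} (whose hypothesis $\{|D_{t_i}| < \epsilon/10^5|\N{t_i}{u_i}|\} \land M_{u_i}^{t_i}$ holds on $\mathcal{G}$), $u_i$ is $\epsilon/10$-heavy at $t_i$ and \emph{if it enters the anchor set it stays there until $\tc$}. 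I would count the good indices: the nodes $u_i \in L_1$ non-adjacent to $v$ at time $t_i$ number at most $|C \setminus \N{t_i}{v}|$-type quantities plus $|L|$; combining~\cref{obs: C minus Ni is upper bounded} (giving $|C\setminus \N{t_i}{v}| \le 2\epsilon/10^4 |C|$, since $v$ has arrived by $t' \le t_i$) with $|T_v| < \epsilon/10^5|C|$ on $\mathcal{G}$ and $|L| = \lfloor \epsilon/10^4|C|\rfloor$, one gets that the number of bad indices is at most (a small constant)$\cdot\epsilon/10^4|C|$, which is less than half of $|L_1| = \lfloor \epsilon/(2\cdot 10^4)|C|\rfloor$, so there are at least $\Omega(\epsilon|C|)$ good indices.

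For each good index $i$, the $Anchor(u_i,\epsilon,t_i)$ procedure adds edge $(u_i,v)$ to $\tilde G_{t_i}$ provided $X_{u_i}=1$, which happens independently with probability $\min\{10^7\log n/(\epsilon|\N{t_i}{u_i}|), 1\} \ge 10^7 \log n/(2\epsilon|C|) $ since $|\N{t_i}{u_i}| \le 2|C|$ on $\mathcal{G}$; and by~\cref{thm: ui are heavy} that edge (and $u_i$'s membership) then persists through $\tc$, so good index $i$ with $X_{u_i}=1$ witnesses $X_v$. Hence on $\mathcal{G}$,
\begin{align*}
\prob[\overline{X_v} \mid \mathcal{G}] \le \Bigl(1 - \frac{10^7\log n}{2\epsilon|C|}\Bigr)^{\Omega(\epsilon|C|)} \le 1/n^{11},
\end{align*}
and a union bound with $\prob[\overline{\mathcal{G}}]$ gives $\prob[\overline{X_v}] \le 1/n^{10}$. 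The main obstacle is the careful bookkeeping in the bad-index count — in particular making sure the ``$v$ not adjacent to $u_i$ at time $t_i$'' term is controlled uniformly over $i$ using only the already-established observations, and that the Bernoulli trials $X_{u_i}$ can legitimately be treated as independent of (or at least unaffected by conditioning on) the event $\mathcal{G}$, which pins down $u_i$, $t_i$, $T_v$, and the $M,D$ events but not the anchor coin flips themselves.
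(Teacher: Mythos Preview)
Your approach matches the paper's almost exactly: define a high-probability event (the paper calls it $\mathcal{R}$) combining the bounds on $|T_v|$, $|D_{t_i}|$, and $M_{u_i}^{t_i}$; on that event identify a large set of ``good'' indices $i$ where $u_i$ is heavy, adjacent to $v$, and in agreement with $v$; then use the independent anchor coin flips (whose parameter is at least $10^{7}\log n/(2\epsilon|C|)$ since $|\N{t_i}{u_i}|\le 2|C|$) to conclude. The independence concern you flag at the end is handled exactly as the paper does: the anchor Bernoullis are fresh coins whose parameters depend only on $|\N{t_i}{u_i}|$, so conditioning on the Notify-driven event $\mathcal{G}$ leaves them untouched.

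The one genuine gap is your bad-index count. You invoke \cref{obs: C minus Ni is upper bounded} to bound the non-adjacent $u_i$'s by $|C\setminus \N{t_i}{v}|\le 2\epsilon/10^{4}|C|$ and then add $|L|=\epsilon/10^{4}|C|$ and $|T_v|<\epsilon/10^{5}|C|$; but $|L_1|$ itself is only $\epsilon/(2\cdot 10^{4})|C|$, so your bad-index tally (roughly $3\epsilon/10^{4}|C|$) already exceeds $|L_1|$ and leaves no good indices at all --- the claim that it is ``less than half of $|L_1|$'' is false with these numbers. The paper fixes this by a sharper adjacency bound: since both $u_i$ and $v$ are present throughout $[t_i,\tc]$ (here $v\in R$ arrived before $t'\le t_1\le t_i$), we have $(u_i,v)\in E_{t_i}$ iff $(u_i,v)\in E_{\tc}$, and then \cref{obs: at tc u is connected to almost everybody in C} at time $\tc$ gives only $\epsilon/10^{13}|C|$ non-adjacent indices. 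With that bound (and no $|L|$ term, which is not needed), the good set has size at least $\epsilon/(2\cdot 10^{4})|C|-\epsilon/10^{13}|C|-\epsilon/10^{5}|C|>\epsilon/(4\cdot 10^{4})|C|$, and your final product bound then goes through exactly as you wrote.
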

\begin{proof}
Let $i \in \{1,2,\dots, \lfloor \epsilon/(2\cdot 10^{4}) |C| \rfloor \}$ be such that $ \{ |D_{t_i}| < \epsilon/10^{5}  |\N{t_i}{u_i}| \}  \land {M_v^{t_i}} \land {M_{u_i}^{t_i}} \land \{ (u_i, v) \in E_{t_i} \}$ is true, we then know from~\cref{thm: ui are heavy} that if $u_i$ enters the anchor set at that time $t_i$, edge $(u_i, v)$ is added to our sparse solution and remains in it at least until time $\tc$. Thus, it is useful to define the following set:
\begin{align*}
    Y_v = \left\{  i \in \{1,2,\dots, \lfloor \epsilon/(2\cdot 10^{4}) |C| \rfloor: \; \{ |D_{t_i}| < \epsilon/10^{5}  |\N{t_i}{u_i}| \} \land {M_v^{t_i}} \land {M_{u_i}^{t_i}} \land \{ (u_i, v) \in E_{t_i} \} \right\}
\end{align*}
We proceed by proving that with high probability $|Y_v|$ is ``large''.

Note that from~\cref{obs: at tc u is connected to almost everybody in C} we have that: 
\begin{align*}
    |\left\{  i \in \{1,2,\dots, \lfloor \epsilon/(2\cdot 10^{4}) |C| \rfloor: \; (u_i, v) \in E_{t_i} \right\}|
    \geq \epsilon/(2 \cdot 10^{4}) |C| - \epsilon/ 10^{13} |C| > \epsilon/(3 \cdot 10^{4}) |C|
\end{align*}
where the second inequality holds for $\epsilon$ small enough.

For simplicity let:
\begin{align*}
    \mathcal{R} =  \bigcap_{ i \in \left[ \epsilon/(2\cdot 10^{4}) |C|  \right]} \left\{ \{ |D_{t_i}| < \epsilon/10^{5}  |\N{t_i}{u_i}| \}  \land M_{u_i}^{t_i} \right\} \land \{ |T_v| < \epsilon/10^{5} |C| \}
\end{align*}
Note that $\mathcal{R}$ implies that:
\begin{align*}
    |Y_v| \geq \epsilon/(3\cdot 10^{4}) |C| - \epsilon/10^{5} |C| > \epsilon/(4\cdot 10^{4}) |C|    
\end{align*}
where the second inequality holds for $\epsilon$ small enough.
We continue upper bounding the probability that $\mathcal{R}$ does not occur as follows:
\begin{align*}
    \prob [\overline{\mathcal{R}}] \leq &\sum_{i \in \left[ \epsilon/(2\cdot 10^{4}) |C|  \right]} \prob [  \{ |D_{t_i}| \geq \epsilon/10^{5}  |\N{t_i}{u_i}| \}] \\
&+\sum_{i \in \left[ \epsilon/(2\cdot 10^{4}) |C|  \right]} \prob [\overline{ M_{u_i}^{t_i} }]\\
&+ \prob [ \{ |T_v| \geq \epsilon/10^{5} |C| \}] \\
&\leq n \cdot (1/n^{10^{4}}) +  n \cdot (1/n^{10^{3}}) +  1/n^{10^{2}} \\
&\leq 3/n^{10^2}
\end{align*}
where in the first inequality we use the union bound and in the second we use~\cref{lem: different neighborhood has very low probability},~\cref{lem: almost all nodes have almost their final neighborhood at time ti} and~\cref{lem: v in R has almost the same neighborhood as its final neighborhood for the largest part of the last important events}.

Note that $\mathcal{R}$ also implies that $u_i$ enters the anchor set at time $t_i$ with probability at least  $\min \{ \nicefrac{10^{7} \log n}{2 \epsilon|C|}, 1\}$ where we used (1) in~\cref{lem: high degree implies very different neighborhood} for $\epsilon$ small enough. In addition, note that under any realization of the random variables $|\N{t_1}{u_1}|, |\N{t_2}{u_2}|, \dots$ the randomness of the \textit{Anchor} procedure is independent from the randomness of the \textit{Connect} procedure. Consequently:
\begin{align*}
    \prob [\overline{X_v}] &= \prob \overline{[X_v} \cap \overline{\mathcal{R}}] + \prob [\overline{X_v} \cap \mathcal{R}]\\
        &\leq \prob [\overline{\mathcal{R}}] + \prob [\overline{X_v} \cap \mathcal{R}]\\
        &\leq 3/n^{10^{2}} + \prob [\overline{X_v} \cap \mathcal{R}]\\
        &\leq 3/n^{10^{2}} + \prob [\overline{X_v} \cap \{ |Y_v| > \epsilon/(4\cdot 10^{4}) |C|\}]\\
        &\leq 3/n^{10^{2}} + (1 - \min \{ \nicefrac{10^{7} \log n}{2 \epsilon|C|}, 1\})^{\epsilon/(4\cdot 10^{4})|C|} \\
        &\leq 3/n^{10^{2}} + 1/n^{10^{2}} \\
        &< 1/n^{10} \\
\end{align*}
\end{proof}

Using the same line of arguments as~\cref{lem: all nodes in R get selected by some node in A1} in~\cref{lem: all nodes in L1 get selected by some node in L2} we argue that every node in $L_1$ is selected by a node in $L_2$ which enters the anchor set and remains in the anchor set until at least time $\tc$. We state the lemma and omit the proof.

\begin{lem}\label{lem: all nodes in L1 get selected by some node in L2}
For every node $v \in L_1$ let 
\begin{align*}
    X_v = \left\{ \exists i \in \{\epsilon/(2\cdot 10^{4}) |C| + 1,  \dots, \epsilon/10^{4} |C|\} ]: 
\{u_i \in L_2 \text{ enters the anchor set }\} \land \{ (v, u_i) \in  E_{t_{i'}}, \forall i' \in [i, \tc] \right\}
\end{align*}
then $\prob [X_v] > 1 - 1/n^{10} $.
\end{lem}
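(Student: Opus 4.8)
The plan is to reproduce the argument of~\cref{lem: all nodes in R get selected by some node in A1} almost verbatim, with the roles of $R$ and $L_1$ now played by $L_1$ and $L_2$ respectively, and with the index range $[\epsilon/(2\cdot 10^{4})|C|]$ replaced by $\{\lfloor\epsilon/(2\cdot 10^{4})|C|\rfloor+1,\dots,\lfloor\epsilon/10^{4}|C|\rfloor\}$, i.e.\ exactly the set of indices $i$ with $u_i\in L_2$. Fix $v\in L_1$. For these indices I would define the random set
\[
Y_v=\Big\{ i\in\{\lfloor\tfrac{\epsilon}{2\cdot 10^{4}}|C|\rfloor+1,\dots,\lfloor\tfrac{\epsilon}{10^{4}}|C|\rfloor\}:\ \{|D_{t_i}|<\tfrac{\epsilon}{10^{5}}|\N{t_i}{u_i}|\}\land M_v^{t_i}\land M_{u_i}^{t_i}\land\{(u_i,v)\in E_{t_i}\}\Big\},
\]
so that, by~\cref{thm: ui are heavy}, whenever $i\in Y_v$ and $u_i$ additionally enters the anchor set at time $t_i$, the edge $(u_i,v)$ is inserted into $\tilde G_{t_i}$ and survives in the sparse solution at least until time $\tc$; this is precisely the event $X_v$.

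Next I would introduce the good event
\[
\mathcal{R}=\bigcap_{i\in\{\lfloor\frac{\epsilon}{2\cdot 10^{4}}|C|\rfloor+1,\dots,\lfloor\frac{\epsilon}{10^{4}}|C|\rfloor\}}\Big(\{|D_{t_i}|<\tfrac{\epsilon}{10^{5}}|\N{t_i}{u_i}|\}\land M_{u_i}^{t_i}\Big)\ \land\ \{|T_v|<\tfrac{\epsilon}{10^{5}}|C|\},
\]
and bound $\prob[\overline{\mathcal{R}}]\le 3/n^{10^{2}}$ by a union bound over~\cref{lem: different neighborhood has very low probability},~\cref{lem: almost all nodes have almost their final neighborhood at time ti}, and~\cref{lem: v in R has almost the same neighborhood as its final neighborhood for the largest part of the last important events}; the last of these is stated for $v\in R\cup L_1$, so it applies here. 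I would then argue that $\mathcal{R}$ forces $|Y_v|>\epsilon/(4\cdot 10^{4})|C|$: by~\cref{obs: at tc u is connected to almost everybody in C} applied to $v\in C$, at most $\epsilon/10^{13}|C|$ nodes of $C$ are non-adjacent to $v$ at time $\tc$, hence all but that many of the $u_i\in L_2$ satisfy $(u_i,v)\in E_{t_i}$ (using that $v$, being $u_j$ for some $j\le\lfloor\epsilon/(2\cdot 10^{4})|C|\rfloor<i$, has arrived by time $t_j\le t_i$, and that neither $v$ nor $u_i$ is deleted during $[t_i,\tc]$ since both lie in $C$, so the edge cannot have been removed in the meantime); the conjuncts $\{|D_{t_i}|<\tfrac{\epsilon}{10^{5}}|\N{t_i}{u_i}|\}$ and $M_{u_i}^{t_i}$ hold on the whole range under $\mathcal{R}$ and so cost nothing, while $|T_v|<\epsilon/10^{5}|C|$ removes only a further $\epsilon/10^{5}$-fraction, leaving $|Y_v|>\epsilon/(4\cdot 10^{4})|C|$ for $\epsilon$ small.

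Finally I would conclude exactly as in~\cref{lem: all nodes in R get selected by some node in A1}: under $\mathcal{R}$, part~(1) of~\cref{lem: high degree implies very different neighborhood} gives $|\N{t_i}{u_i}|<2|C|$, so each $u_i$ with $i\in Y_v$ enters the anchor set with probability at least $\min\{10^{7}\log n/(2\epsilon|C|),1\}$; these Bernoulli trials are mutually independent and independent of the Notify and Connect randomness that determines $Y_v$, hence $\prob[\overline{X_v}\cap\mathcal{R}]\le(1-\min\{10^{7}\log n/(2\epsilon|C|),1\})^{\epsilon|C|/(4\cdot 10^{4})}\le 1/n^{10^{2}}$, and therefore $\prob[\overline{X_v}]\le\prob[\overline{\mathcal{R}}]+1/n^{10^{2}}\le 4/n^{10^{2}}<1/n^{10}$. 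The only point requiring genuine care — and the sole substantive difference from~\cref{lem: all nodes in R get selected by some node in A1}, where $v\in R$ is present before every $t_i$ by definition of $R$ and $t'$ — is the verification above that for $v\in L_1$ the edge $(u_i,v)$ is present at time $t_i$ for all but $\epsilon/10^{13}|C|$ of the indices $i$ labelling $L_2$; this is where membership of $v$ in $L_1$ (hence participation in an interesting event strictly before any $u_i\in L_2$) is used, and everything else is a word-for-word repetition with the index set relabelled.
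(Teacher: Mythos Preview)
Your proposal is correct and matches the paper's approach exactly: the paper simply says the proof ``follows the same line of reasoning with the proof of~\cref{lem: all nodes in R get selected by some node in A1}'' and omits it, and you have faithfully carried out that repetition with the index set shifted to $L_2$ and $v$ taken from $L_1$. Your identification of the one point needing care---that $v\in L_1$ has already arrived by every time $t_i$ with $u_i\in L_2$ because $v=u_j$ with $j<i$ and hence $t_j\le t_i$, and that neither $v$ nor $u_i$ is deleted before $\tc$---is exactly the extra observation required, and your invocation of~\cref{lem: v in R has almost the same neighborhood as its final neighborhood for the largest part of the last important events} for $v\in L_1$ is legitimate since that lemma is stated for $v\in R\cup L_1$.
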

\begin{proof}
    Omitted as it follows the same line of reasoning with the proof of~\cref{lem: all nodes in R get selected by some node in A1}.
\end{proof}

We underline that while the essence of both~\cref{lem: all nodes in R get selected by some node in A1} and~\cref{lem: all nodes in L1 get selected by some node in L2} could be summarized in a single lemma, in the last part of this section we use the facts that~\cref{lem: all nodes in R get selected by some node in A1} refers to how nodes in $R$ get into the cluster through nodes in $L_1$ and~\cref{lem: all nodes in L1 get selected by some node in L2} refers to how nodes in $L_1$ get into the cluster through nodes in $L_2$.

The last part of the section is devoted in arguing that through the \textit{Connect}($\cdot, \epsilon$) procedure, with high probability, all nodes in $L_2$ get connected to some node in $L_1$ in our sparse solution.

\begin{lem}\label{lem: all nodes in L2 get connected trough the Connect procedure}
    For every node $u_j \in L_2$ let 
    \begin{align*}
    X_{u_j} = \left\{ \exists i \in [\epsilon/(2\cdot 10^{4}) |C| ]: 
\{u_i \in L_1 \text{ enters the anchor set }\} \land \{ (u_j, u_i) \in  E_{t_{i'}}, \forall i' \in [t_j, \tc] \right\}
\end{align*}
then 
\begin{align*}
    \prob [ X_{u_j} ] \geq 1 - 1/n^{10}
\end{align*}
\end{lem}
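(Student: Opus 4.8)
The plan is to follow the template of~\cref{lem: all nodes in R get selected by some node in A1}; the new ingredient is that $u_j\in L_2$ cannot reach an anchor directly — it must sample (inside \textit{Connect}) a neighbour $v\in R$ that is \emph{already} attached in $\tilde G$ to some anchor $u_i\in L_1$, after which \textit{Connect} wires $u_j$ to $u_i$. Fix the index $j'$ with $u_{j'}=u_j$; since every $R$-node triggers an interesting event upon arrival, all of $R$ has arrived by $t_1\le t_j$, so $t_j\in[t',\tc]$, $L_1\subseteq L^{j'}$, and~\cref{obs: at time t the v is connected with most nodes in C,obs: C minus Ni is upper bounded} apply to $u_j$ at time $t_j$. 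Define the good event $\mathcal R$ as the conjunction of: (i) the event $X_v$ of~\cref{lem: all nodes in R get selected by some node in A1} for every $v\in R$; (ii) $M_{u_i}^{t_i}$ and $|D_{t_i}|<\epsilon/10^5|\N{t_i}{u_i}|$ for every $u_i\in L_1$ (so by~\cref{thm: ui are heavy} each such $u_i$ that enters $\Phi$ at $t_i$ stays there until $\tc$); (iii) $M_{u_j}^{t_j}$; and (iv) for every $u_i\in L_1$, none of the deletion-detection failures used in~\cref{lem: different neighborhood has very low probability} occurs on $[t_i,\tc]$, which together with (ii) freezes $\N{\cdot}{u_i}$ on $[t_i,\tc]$ up to an $\epsilon$-fraction, so that $M_{u_i}^{t_j}$ holds and, rerunning the computation inside~\cref{thm: ui are heavy} at time $t_j$, $u_i$ is still $\epsilon/10$-heavy at $t_j$. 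A union bound over the $\le n$ items in (i), (ii), (iv) and over (iii) — using, as in~\cref{lem: all nodes in R get selected by some node in A1}, that the per-item failure probabilities of~\cref{lem: different neighborhood has very low probability,lem: almost all nodes have almost their final neighborhood at time ti} carry exponents far larger than needed to absorb the factor $n$ — yields $\prob[\overline{\mathcal R}]\le 1/(2n^{10})$.

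Condition on $\mathcal R$ and set $S:=R\cap\N{t_j}{u_j}$; then $|S|\ge\tfrac{1}{2}|\N{t_j}{u_j}|$ by~\cref{obs: at time t the v is connected with most nodes in C}, part (2) of~\cref{lem: high degree implies very different neighborhood} applied to $u_j$ (legitimate by (iii)), and $|L|=\lfloor\epsilon/10^4|C|\rfloor$. For $v\in S$, let $u_{i(v)}\in L_1$ be the anchor promised by $X_v$: it enters $\Phi$ at $t_{i(v)}\le t_j$ and the edge $(v,u_{i(v)})$ survives throughout $[t_{i(v)},\tc]\ni t_j$, so $u_{i(v)}\in\Phi_v$ at time $t_j$. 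By (iv), $u_{i(v)}$ is $\epsilon/10$-heavy at $t_j$ and $M_{u_{i(v)}}^{t_j}$ holds; together with (iii) and $u_{i(v)},u_j\in R\cup L^{j'}$, the argument of~\cref{lem: Oui and Ow suffice for ui and w being in agreement} (which only needs $\N{t_j}{u_{i(v)}}\simeq C\simeq\N{t_j}{u_j}$) gives that $u_{i(v)}$ and $u_j$ are in $\epsilon/10$-agreement at $t_j$. Hence, if $v$ is drawn into the sample $J_{u_j}$ of \textit{Connect}$(u_j,\epsilon,t_j)$, the inner loop over $\Phi_v$ reaches $r=u_{i(v)}$, finds it heavy and in agreement with $u_j$, and inserts the edge $(u_{i(v)},u_j)$ into $\tilde G_{t_j}$; this edge is never deleted afterwards, since it can only be removed inside a \textit{Clean} or \textit{Anchor} call issued by $u_{i(v)}$ or by $u_j$, and neither node participates in an interesting event after $t_j$ (and $u_{i(v)}$ stays in $\Phi$ by (ii)). Finally, exactly as in~\cref{lem: all nodes in R get selected by some node in A1}, under any realisation of the arrival/deletion sequence and of all internal randomness except the draw of $J_{u_j}$ the event $\mathcal R$ and the set $S$ are determined while $J_{u_j}$ is a fresh uniform sample of size $10^5\log n/\epsilon$ from $\N{t_j}{u_j}$, so
\[
\prob[\overline{X_{u_j}}]\le\prob[\overline{\mathcal R}]+\Bigl(1-\tfrac{1}{2}\Bigr)^{10^5\log n/\epsilon}\le\frac{1}{2n^{10}}+\frac{1}{2n^{10}}=\frac{1}{n^{10}}
\]
for $\epsilon$ a small enough constant.

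The main obstacle is step (iv): promoting ``$u_i$ is heavy at its own last interesting event $t_i$'' and ``$\N{t_i}{u_i}\simeq C$'' forward to the later time $t_j$. This is exactly where the \textit{Notify} machinery pays off — the coupling behind~\cref{lem: different neighborhood has very low probability} keeps $u_i$'s neighbourhood essentially frozen on $[t_i,\tc]$ once $t_i$ is its last interesting event — but one must re-derive the heaviness estimate of~\cref{thm: ui are heavy} at time $t_j$ rather than cite it verbatim, and check that the union bounds over all of $L_1$ (and over all of $R$ in (i)) only cost a polynomially small amount. The remaining accounting — the independence between the \textit{Anchor} and \textit{Connect} randomness, and the persistence of the edge $(u_{i(v)},u_j)$ — is identical to what is already done in~\cref{lem: all nodes in R get selected by some node in A1} and~\cref{thm: ui are heavy}.
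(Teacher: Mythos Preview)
Your approach is structurally different from the paper's, and step (iv) contains a real gap.

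The paper does \emph{not} try to show that every (or even any pre-specified) $u_i\in L_1$ satisfies $M_{u_i}^{t_j}$. Instead, it works with the subset of $L_1$-anchors that \emph{do} satisfy $M_v^{t_j}$: for a sampled $w\in R$, it lower-bounds $|\{v\in \N{t_j}{w}\cap\N{t_j}{u_j}\cap L_1: M_v^{t_j}\}|$ via the bound $|D_{t_j}|<\epsilon/10^5|\N{t_j}{u_j}|$ (\cref{lem: almost all nodes have almost their final neighborhood at time ti} applied at index $j$), and then shows that one of \emph{these} enters $\Phi$ with high probability. Because the candidate anchor already carries $M_v^{t_j}$, agreement with $u_j$ and heaviness at $t_j$ follow from \cref{lem: high degree implies very different neighborhood,lem: Oui and Ow suffice for ui and w being in agreement} applied at time $t_j$; there is no need to ``promote'' anything from $t_i$ to $t_j$.

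Your route instead fixes, via \cref{lem: all nodes in R get selected by some node in A1}, a particular anchor $u_{i(v)}$ for each $v\in R$, and then attempts in (iv) to derive $M_{u_{i(v)}}^{t_j}$ and heaviness at $t_j$ by ``freezing $\N{\cdot}{u_i}$ on $[t_i,\tc]$'' using the deletion-detection behind \cref{lem: different neighborhood has very low probability}. This does not work: the sample $I^{l_{t_i}}_{u_i}$ taken at $t_i$ only detects deletions \emph{from} $\N{t_i}{u_i}$ --- it bounds $|\N{t_i}{u_i}\setminus\N{t'}{u_i}|$, not $|\N{t_j}{u_i}\setminus\N{\tc}{u_i}|$. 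Nodes that \emph{arrive} after $t_i$, join $\N{\cdot}{u_i}$, and are deleted before $\tc$ are invisible to that sample (they were never in it), yet they contribute to $\N{t_j}{u_i}\setminus\N{\tc}{u_i}$. So $M_{u_i}^{t_j}$ does not follow from (ii)+(iv), and you cannot invoke \cref{lem: different neighborhood has very low probability} at $t_j$ either, since $A_{u_i}^{t_j}$ need not hold. Without $M_{u_{i(v)}}^{t_j}$ you have no control on $\N{t_j}{u_{i(v)}}\simeq C$, so neither agreement with $u_j$ nor heaviness at $t_j$ is established, and the Connect check may legitimately fail on $u_{i(v)}$. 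Similarly, ``rerunning \cref{thm: ui are heavy} at $t_j$'' would require an analog of $|D_{t_j}|$ small for $u_i$'s neighbours, which you have not secured.

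To repair this along the paper's lines you would have to include $|D_{t_j}|<\epsilon/10^5|\N{t_j}{u_j}|$ in $\mathcal R$ and then argue that the anchor you reach through $\Phi_v$ lies in the ``good'' subset $\{v\in L_1\cap\N{t_j}{u_j}:M_v^{t_j}\}$ --- but there is no reason the specific $u_{i(v)}$ delivered by \cref{lem: all nodes in R get selected by some node in A1} is in that subset, which is exactly why the paper samples the anchor \emph{after} imposing $M_v^{t_j}$ rather than before.
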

\begin{proof}
The proof proceeds by arguing that:
\begin{enumerate}
    \item $u_j$ is connected to almost all nodes $R$;
    \item let $w \in R$, then $\N{t_j}{u_j} \cap \N{t_j}{w} \cap L_1$ is a sufficiently large fraction of $L_1$;
    \item with high probability almost all nodes in $\N{t_j}{u_j} \cap \N{t_j}{w} \cap L_1$ have almost their final neighborhood; and
    \item with high probability one of those nodes is in the anchor set.
\end{enumerate}
It is useful to define the following event:
\begin{align*}
    \mathcal{R} =  \bigcap_{ i \in \left[ \epsilon/(2\cdot 10^{4}) |C|  \right]} \left\{ \{ |D_{t_i}| < \epsilon/10^{5}  |\N{t_i}{u_i}| \}  \land M_{u_i}^{t_i} \right\}
\end{align*}
We upper bound the probability that $\mathcal{R}$ does not occur as follows:
\begin{align*}
    \prob [\overline{\mathcal{R}}] \leq &\sum_{i \in \left[ \epsilon/(2\cdot 10^{4}) |C|  \right]} \prob [  \{ |D_{t_i}| \geq \epsilon/10^{5}  |\N{t_i}{u_i}| \}] \\
&+\sum_{i \in \left[ \epsilon/(2\cdot 10^{4}) |C|  \right]} \prob [\overline{ M_{u_i}^{t_i} }]\\
&\leq n \cdot (1/n^{10^{4}}) +  n \cdot (1/n^{10^{3}}) \\
&\leq 1/n^{10^{2}}
\end{align*}
where in the first inequality we use the union bound and in the second we use~\cref{lem: different neighborhood has very low probability} and \cref{lem: almost all nodes have almost their final neighborhood at time ti}.

Let $w \in \N{t_j}{u_j} \cap R$, from~\cref{obs: bounds on degree of u at time tc wrt C} we have that:
\begin{align*}
|\N{t_j}{u_j} \cap \N{t_j}{w} \cap L_1|
    &\geq |L_1| - \epsilon/ 10^{13} |C| - \epsilon/ 10^{13} |C| \\
    &\geq \epsilon/ (2\cdot 10^{4}) |C| - \epsilon/ 10^{13} |C| - \epsilon/ 10^{13} |C|\\
    &\geq \epsilon/ (4 \cdot 10^{4} )|C|
\end{align*}

Moreover, note that event $\mathcal{R}$ implies that at least $(1 - \epsilon/10^{5}) |\N{t_j}{u_j}|$ neighbors of $u_j$ at time $t_j$ have almost their final neighborhood. Consequently, $\mathcal{R}$ implies the following bound for all $w \in \N{t_j}{u_i} \cap R$:
\begin{align*}
     |v \in \N{t_j}{w} \cap \N{t_j}{u_j} \cap L_1: \; M_{v}^{t_j}| &> \epsilon/ (4 \cdot 10^{4} ) |C| -\epsilon/ 10^{5} |\N{t_j}{u_j}|\\
     &> \epsilon/(4 \cdot 10^{4} ) |C| - 2 \epsilon/ 10^{5} |C|\\
     &> \epsilon/ 10^{5} |C|
\end{align*}
where in the second inequality we used (1) in~\cref{lem: high degree implies very different neighborhood} for $\epsilon$ small enough.

As in the proof of~\cref{lem: all nodes in R get selected by some node in A1}, $\mathcal{R}$ implies that $u_i$ enters the anchor set at time $t_i$ with probability at least  $\min \{ \nicefrac{10^{7} \log n}{2 \epsilon|C|}, 1\}$ (using (1) in~\cref{lem: high degree implies very different neighborhood} for $\epsilon$ small enough). In addition, under any realization of the random variables $|\N{t_1}{u_1}|, |\N{t_2}{u_2}|, \dots$ the randomness of the \textit{Anchor} procedure is independent from the randomness of the other procedures of our algorithm. Consequently, $\forall w \in \N{t_j}{u_j} \cap R$ let $H_w = \left\{ \{v \in \N{t_j}{w} \cap \N{t_j}{u_j} \cap L_1: \; M_{v}^{t_j}\} \cap \Phi = \emptyset \right\}$, we have:

\begin{align*}
    \prob [\mathcal{R} \cap  H_w ] &\leq (1 - \min \{ \nicefrac{10^{7} \log n}{2 \epsilon|C|}, 1\})^{\epsilon/10^{5} |C|}\\
    &\leq e^{-  \nicefrac{10^{7} \log n}{2 \epsilon|C|}\cdot {\epsilon/10^{5} |C|}}\\
    &\leq 1/n^{50}\\
\end{align*}


From~\cref{obs: at time t the v is connected with most nodes in C} we have that $|\N{t_j}{u_j} \cap C| > (1- 2\epsilon/10^{4}) |C|$, consequently:
\begin{align*}
|\N{t_j}{u_j} \cap R|
&\geq |\N{t_j}{u_j} \cap C| - |C \setminus R|\\
&= |\N{t_j}{u_j} \cap C| - |L|\\
&\geq(1- 2 \epsilon/ 10^{4})|C| - \epsilon/ 10^{4}|C|\\
&\geq(1- 3\epsilon/10^{4})|C|
\end{align*}
Again, using (1) in~\cref{lem: high degree implies very different neighborhood} for $\epsilon$ small enough, $\mathcal{R}$ implies that
\begin{equation*}
    |\N{t_j}{u_j} \cap R| \geq(1- 3\epsilon/(2\cdot 10^{4}))|\N{t_j}{u_j}| >(1- 4\epsilon/10^{4})|\N{t_j}{u_j}|
\end{equation*}
where the second inequality holds for $\epsilon$ small enough.

We are now ready to upper bound $\prob [ \overline{X_{u_j}} \cap \mathcal{R} ]$, note that $J_{u_j}$ denotes the random sample constructed via the Connect procedure of $u_j$'s neighborhood at time $t_j$.
\begin{align*}
    \prob [ \overline{X_{u_j}} \cap \mathcal{R} ] &\leq  \prob [ \overline{X_{u_j}} \cap \mathcal{R} \cap \{ J_{u_j} \cap \N{t_j}{u_j} \cap R = \emptyset \}] \\
    &+ \prob [ \overline{X_{u_j}} \cap \mathcal{R} \cap \{ J_{u_j} \cap \N{t_j}{u_j} \cap R \neq \emptyset \}]\\
    &\leq (4 \epsilon / 10^{4})^{10^{5} \log n / \epsilon} + \prob [\mathcal{R} \bigcap \bigcup_{w \in J_{u_j} \cap \N{t_j}{u_j} \cap R} H_w]\\
     &\leq (4 \epsilon / 10^{4})^{10^{5} \log n / \epsilon} + n \cdot (1/n^{50})\\
     &\leq 1/n^{20}
\end{align*}
We conclude again using the law of total probability:
\begin{align*}
    \prob [ \overline{X_{u_j}} ] &= \prob [ \overline{X_{u_j}} \cap \mathcal{R} ] + \prob [ \overline{X_{u_j}} \cap \overline{\mathcal{R}} ] \\
    &\leq \prob [ \overline{X_{u_j}} \cap \mathcal{R} ] + \prob [\overline{\mathcal{R}} ] \\
    &\leq 1/n^{20} + 1/n^{10^{2}}\\
    &\leq 1/n^{10}\\
\end{align*}
\end{proof}

\begin{thm}\label{thm: all nodes of C are clustered together}
   With probability at least $(1 - 1/n^{8})$ all nodes of $C$ are clustered together at time $\tc$ by our algorithm.
\end{thm}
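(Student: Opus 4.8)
The plan is to combine the three structural lemmas just proved—\cref{lem: all nodes in R get selected by some node in A1}, \cref{lem: all nodes in L1 get selected by some node in L2}, and \cref{lem: all nodes in L2 get connected trough the Connect procedure}—with the fact that the anchor nodes we identify all lie in $C$ and are mutually close in $\tilde G_{\tc}$, and then take a union bound over the $|C| \le n$ relevant events. First I would recall the partition $C = R \cup L_1 \cup L_2$ and observe that every node of $C$ falls into exactly one of these three sets, so it suffices to exhibit, for each $v \in C$, a path in $\tilde G_{\tc}$ from $v$ to a fixed reference anchor node. For $v \in R$, \cref{lem: all nodes in R get selected by some node in A1} gives (with failure probability $< 1/n^{10}$) an index $i$ with $u_i \in L_1$ such that $u_i$ enters the anchor set at time $t_i$ and the edge $(v,u_i)$ is present in $\tilde G_{t_{i'}}$ for all $i' \in [i,\tc]$, hence in particular in $\tilde G_{\tc}$. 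For $v \in L_1$, \cref{lem: all nodes in L1 get selected by some node in L2} does the same via some $u_i \in L_2$. For $v = u_j \in L_2$, \cref{lem: all nodes in L2 get connected trough the Connect procedure} connects $u_j$ directly to some $u_i \in L_1$ that is in the anchor set, with edge persisting through time $\tc$.

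Next I would argue that all the anchor nodes produced above are connected to one another in $\tilde G_{\tc}$. Each such anchor node $u_i$ is $\epsilon/10$-heavy at time $t_i$ (by \cref{thm: ui are heavy}, since the conditioning events hold with high probability by \cref{lem: different neighborhood has very low probability} and \cref{lem: almost all nodes have almost their final neighborhood at time ti}), and when it entered the anchor set the \textit{Anchor} procedure added the edge $(u_i, v)$ for every neighbor $v$ in agreement with $u_i$. Since $M_{u_i}^{t_i}$ holds we have $\N{t_i}{u_i} \simeq C$ (via \cref{lem: high degree implies very different neighborhood}), so in particular $u_i$ is in agreement with $\Omega(|C|)$ common nodes of $C$ with any other such anchor $u_{i'}$; more simply, any two anchor nodes $u_i, u_{i'}$ share a common neighbor $w \in R$ with which both are in agreement at the respective times, and one checks that the relevant edges $(u_i,w)$ and $(u_{i'},w)$ (or a direct edge between them) survive up to $\tc$, giving a path of length at most $2$ between them. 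Combined with the length-$1$ (or length-$\le 2$ for the $L_2$ case) connections from the previous paragraph, every node of $C$ is in the same connected component of $\tilde G_{\tc}$, hence in the same output cluster; since $\tilde E_{\tc} \subseteq E_{\tc}$ and our clustering is exactly the connected components of $\tilde G_{\tc}$, those nodes are clustered together.

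Finally I would assemble the probability bound. There are at most $n$ nodes $v \in C$; each invocation of one of the three lemmas fails with probability at most $1/n^{10}$, and the auxiliary high-probability events ($\{|D_{t_i}| < \epsilon/10^5 |\N{t_i}{u_i}|\}$, $M_{u_i}^{t_i}$ for $i$ in the relevant range, and the agreement transitivity bookkeeping) each fail with probability at most $1/n^{10^2}$ after their own union bounds. Summing these over all $v \in C$ and over the $O(|C|) \le n$ anchor indices gives a total failure probability at most $n \cdot O(1/n^{10}) \le 1/n^{8}$, so all nodes of $C$ are clustered together at time $\tc$ with probability at least $1 - 1/n^8$, as claimed. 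The main obstacle I anticipate is not any single estimate but the careful verification that the edges guaranteed by the three lemmas at the \emph{times they are created} genuinely persist in $\tilde G_{t'}$ for all $t' \in [\,\cdot\,, \tc]$—this requires invoking the second claim of \cref{thm: ui are heavy} (anchor nodes do not leave $\Phi$ before $\tc$) together with the fact that the \textit{Clean} procedure only removes an edge $(u_i,v)$ when $u_i$ ceases to be heavy or in agreement with $v$, neither of which happens for the specific pairs produced above under the conditioning events; making this bookkeeping airtight while keeping the failure probabilities additive is the delicate part.
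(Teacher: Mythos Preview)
Your proposal is correct and follows essentially the same approach as the paper: apply the three lemmas to attach every node of $R$, $L_1$, $L_2$ to some anchor node in $L$, show that any two such anchors share a common neighbor in $R$ (the paper packages the conditioning into a single event $\mathcal{R}$ and computes $|W_{t_i}\cap W_{t_j}\cap R|\ge 1$, which is exactly your ``common neighbor $w\in R$'' step), and finish with the union bound $1/n^{10^{2}}+n\cdot 1/n^{10}<1/n^{8}$. Your anticipated obstacle about edge persistence is also handled precisely as you guessed, via the second claim of \cref{thm: ui are heavy}.
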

\begin{proof}

As in previous lemmas we define the following event:
\begin{align*}
    \mathcal{R} =  \bigcap_{ i \in \left[ \epsilon/ 10^{4} |C|  \right]} \left\{ \{ |D_{t_i}| < \epsilon/10^{5}  |\N{t_i}{u_i}| \}  \land M_{u_i}^{t_i} \right\}
\end{align*}
for which $\prob [\overline{\mathcal{R}}] \leq 1/n^{10^{2}}$.

From~\cref{thm: ui are heavy} we have that $\mathcal{R}$ implies that $\forall u_i \in L$ that joins the anchor set at time $t_i$:
\begin{enumerate}
    \item they remain there until at least time $\tc$; and
    \item all edges $(v, u_i)$ where $v\in R$ added in our sparse solution are not deleted until at least time $\tc$.
\end{enumerate}

We now argue that $\mathcal{R}$ also implies that any two nodes $u_i, u_j \in L$ which joined the anchor set at times $t_i$ and $t_j$ respectively belong to the same connected component of our sparse solution at time $\tc$. 
To that end, let $W_{t_i}$ and $W_{t_j}$ denote the set of nodes that are in agreement respectively with $u_i$ and $u_j$ at times $t_i$ and $t_j$. 
\begin{align*}
    |W_{t_i}| &\geq (1 - \epsilon/10) |\N{t_i}{u_i}| \\
    &> (1 - \epsilon/10) (1 - 2 \epsilon/ 10^{4})|C| > \\
    &>(1 - 2 \epsilon/10) |C|
\end{align*}
where in the second inequality we used~\cref{obs: at time t the v is connected with most nodes in C}.

In addition, from~\cref{obs: at time t the v is connected with most nodes in C} we have that $|\N{t_i}{u_i} \cap C| > (1- 2\epsilon/10^{4}) |C|$, consequently:
\begin{align*}
|\N{t_i}{u_i} \cap R|
&\geq |\N{t_i}{u_i} \cap C| - |C \setminus R|\\
&= |\N{t_i}{u_i} \cap C| - |L|\\
&\geq(1- 2 \epsilon/ 10^{4})|C| - \epsilon/ 10^{4}|C|\\
&\geq(1- 3\epsilon/10^{4})|C|
\end{align*}
Consequently, combining the latter two inequalities and the fact that $W_{t_i} \subseteq \N{t_i}{u_i}$:
\begin{align*}
    |W_{t_i} \cap R| &> (1 - 2\epsilon/10 - 3\epsilon/10^{4}) |C|\\
    &> (1 - 3 \epsilon/10) |C|
\end{align*}
where the second inequality holds for $\epsilon$ small enough.

Thus:
\begin{align*}
     |W_{t_i} \cap W_{t_j} \cap R| &> (1 - 6 \epsilon/10) |C|\\
     &\geq 1
\end{align*}
where the second inequality holds for $\epsilon$ small enough.

To conclude, let $C_{correct}$ be the event that all nodes in $C$ are clustered together by our algorithm at time $\tc$. Then:
\begin{align*}
    \prob [\overline{C_{correct}}] &= \prob \left[\overline{C_{correct}} \cap \left\{ \mathcal{R} \bigcap_{v \in C} X_v \right\}\right] +\prob \left[\overline{C_{correct}} \cap \overline{\left\{ \mathcal{R} \bigcap_{v \in C} X_v \right\}}\right]\\
    &\leq \prob \left[\overline{C_{correct}} \cap \left\{ \mathcal{R} \bigcap_{v \in C} X_v \right\}\right] +\prob \left[\overline{\left\{ \mathcal{R} \bigcap_{v \in C} X_v \right\}}\right]\\
    &\leq 0 +\prob \left[\overline{\left\{ \mathcal{R} \bigcap_{v \in C} X_v \right\}}\right]\\
     &\leq \prob \left[\left\{ \overline{\mathcal{R}} \cup \bigcup_{v \in C} \overline{X_v} \right\}\right]\\
    &\leq 0 + 1/n^{10^{2}}+ n \cdot  1/n^{10}\\
    &<1/n^{8}
\end{align*}
where in the first inequality we used the law of total probability, in the third we used that $\mathcal{R} \bigcap_{v \in C} X_v $ implies $C_{correct}$ and in the fifth one we used~\cref{lem: all nodes in R get selected by some node in A1}, ~\cref{lem: all nodes in L1 get selected by some node in L2} and~\cref{lem: all nodes in L2 get connected trough the Connect procedure}.
\end{proof}
\newcommand{\ta}[1]{t_{#1}^{\text{arrival}} }
\newcommand{\tl}[1]{t_{#1}^{\text{last}} } 
\newcommand{\Canchor}{C_{\Phi}}
\newcommand{\Cnotanchor}{C_{\overline{\Phi}}}
\newcommand{\Nt}[2]{N_{\widetilde{G}_{#1}}(#2) } 

\section{All Found Clusters are Dense.}\label{sec: All found clusters are dense}
The goal of this section is to prove that all clusters found by our algorithm are dense, i.e., any node $u$ that belongs to a cluster $C$ is connected to almost all nodes in $C$ in graph $G_t$.
We underline that a cluster $C$ found by our algorithm is always induced by a connected component of the sparse solution $\widetilde{G_t}$ and our goal is to prove the main~\cref{thm: C is dense} of this section which states that $\forall u \in C$, $|\N{t}{u} \cap C| \geq (1 - 541080 \epsilon) |C| $ for a small enough $\epsilon$.

Similarly to the notation of~\cref{sec: Finding dense clusters} we denote by $\tc$ the current time and by $C$ a cluster found by our algorithm at that time.
For all $u \in C$ we denote by $t_u$ the last time before $\tc + 1$ that $u$ participated in an ``interesting event'', note that $t_u$ (similarly to the definition of times $t_i$ in~\cref{sec: Finding dense clusters}) is a random variable. We denote by $C_{\Phi_t}$ the subset of nodes in $C$ that are in the anchor set at time $t$, i.e., $\Phi_t \cap C$. We avoid the subscript $t$ in the set $\Canchor$ notation as it will always be clear for the context at what time we are referring to. Equivalently, we denote by $\Cnotanchor$ the rest of the nodes in $C$, i.e., $C \setminus \Canchor$, at time $t$.

Initially we prove the two crucial lemmas, these are~\cref{lem: When my degree drops I get notified} and~\cref{lem: bound on Adu}. In both lemmas we use the properties of our notification procedure and argue that with high probability for a node $u$ after time $t_u$ (and at least until time $\tc$):
\begin{itemize}
    \item~\cref{lem: When my degree drops I get notified}: $u$ does not lose more than a very small fraction of its neighborhood after time $t_u$.
    \item~\cref{lem: bound on Adu} $u$'s neighborhood does not increase with many nodes of ``small'' degree.
\end{itemize}

We proceed to the formal statement and proof of these two lemmas and start with~\cref{lem: When my degree drops I get notified} where we actually prove something slightly stronger that what was mentioned in the previous paragraph. To facilitate the description of the next lemmas, similarly to~\cref{sec: Finding dense clusters}, we define the following:

\begin{definition}\label{def: Mv and Av}
For a node $v \in C$ and t  we define the following events:
\begin{align*}
    T_i^{v, t} &= \left\{  v \text{ received a } Type_i \text{ notification at time } t \right\}\\
    T_i^{v, t, t'} &= \left\{  v \text{ received a } Type_i \text{ notification during the interval } (t, t'] \right\}\\
    \mathcal{T}^{v, t, t'} &= \left\{  v \text{ participated in an ``interesting event'' during the interval } (t, t'] \right\}\\
    A_v^{t} &= \left\{ \left\{ v \in \I_{t}\right\} \lor \left\{ v \text{ received a } Type_2 \text{ notification at time } t\right\} \right\}\\
    M_v^{t, t'} &= \{|\N{t}{v} \setminus \N{t'}{v}| \leq \nicefrac{\epsilon}{10^{5}}|\N{t}{v}| \}\\
\end{align*}
\end{definition}

\begin{lem}\label{lem: When my degree drops I get notified}
Let $u \in V$ and times $t', t $ where $t < t'$. Then:
\begin{equation*}
    \prob \left[  A_v^t \land \overline{T_0^{v, t, t'}} \land \overline{M_v^{t, t'}} \right] < 1/n^{10^{4}}
\end{equation*}
\end{lem}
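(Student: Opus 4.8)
The plan is to mimic the argument of \cref{lem: different neighborhood has very low probability}: bound the probability that $v$ loses many neighbors after time $t$ \emph{without} ever receiving a $Type_0$ notification, by observing that each ``unnoticed'' deletion corresponds to a sampling failure, and these failures are nearly independent. Fix $u$ (the lemma calls it $v$ in the statement; I will use $v$). Let $d_t = |\N{t}{v}|$ and $l_t = l(d_t)$. Since $A_v^t$ holds, $v$ refreshes its sample $I^{l_t}_v$ at time $t$ with $10^{10}\log n/\epsilon$ uniform neighbors. Enumerate the neighbors that are present at time $t$ but deleted by time $t'$ as $w_1, w_2, \dots$ in order of deletion, and let $t_{w_j}$ be the deletion time of $w_j$. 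As in the earlier lemma, define $E_j = \{ w_j \notin I_{v, t_{w_j}} \}$, where $I_{v, \cdot}$ is whatever is currently stored in the slot $I^{l_t}_v$ at that moment. The key structural observation is that $\overline{T_0^{v,t,t'}}$ implies $\bigcap_j E_j$: if some $w_j$ were in $v$'s stored sample when it was deleted, then $v \in B^{l_t}_{w_j}$ and $w_j$ would send $v$ a $Type_0$ notification upon deletion, contradicting $\overline{T_0^{v,t,t'}}$. Combined with $\overline{M_v^{t,t'}}$, which forces at least $\tfrac{\epsilon}{10^5} d_t$ such deletions, we reduce to bounding $\prob[A_v^t \land \bigcap_{1 \le j \le \epsilon d_t / 10^5} E_j]$.

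The next step is to factor this probability using the principle of deferred decisions, exactly as in \cref{lem: different neighborhood has very low probability}. We condition on $\bigcap_{j' < j}\{A_v^t \land E_{j'}\}$ and argue that $\prob[A_v^t \land E_j \mid \bigcap_{j'<j}\{A_v^t \land E_{j'}\}] \le (1 - 1/(2d_t))^{10^{10}\log n / \epsilon}$. The point is that $A_v^t$ guarantees the relevant sample in slot $l_t$ was last refreshed at some time $\tilde t_{w_j} \in [t, t_{w_j})$ at which (i) $w_j$ was still a neighbor of $v$, and (ii) because $\tilde t_{w_j}$ is a refresh of the slot indexed $l_t$, the degree at that moment satisfied $|\N{\tilde t_{w_j}}{v}| < 2^{l_t+1} \le 2d_t$. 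So each of the $10^{10}\log n/\epsilon$ independent draws used to build that sample misses $w_j$ with probability at least $1 - 1/(2d_t)$, using that $w_j$ lies in the set being sampled, which has at most $2d_t$ elements. Multiplying over the $j$'s and over the draws, and using $(1 - 1/(2d_t))^{(10^{10}\log n/\epsilon)(\epsilon d_t/10^5)} = (1-1/(2d_t))^{10^5 d_t \log n} \le n^{-10^5/2} < n^{-10^4}$, finishes the bound.

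The main obstacle — and the place I would slow down to be careful — is the bookkeeping around \emph{which} sample is ``current'' in the slot $I^{l_t}_v$ at each deletion time $t_{w_j}$. The slot can be overwritten between $t$ and $t'$ (e.g. $v$ receives a $Type_2$ notification, or — crucially — receives a $Type_1$ notification, which is allowed since we only exclude $Type_0$), and if $v$'s degree moves out of $[2^{l_t}, 2^{l_t+1})$ then a \emph{different} slot becomes active and the relevant $B$-sets change. I need to argue that this does not break either the ``unnoticed deletion $\Rightarrow$ sampling failure'' implication or the $2d_t$ bound on the sampled-set size. For the former: whatever sample sits in the active slot at time $t_{w_j}$, if it contains $w_j$ then $v$ is in the corresponding $B$-set and gets a $Type_0$ notification, so $E_j$ still does the job regardless of refreshes and slot changes. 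For the $2d_t$ bound I should be slightly more careful if the degree has grown past $2^{l_t+1}$ so that the active slot is now $l_{t_{w_j}} > l_t$; here I would either (a) restrict attention to the interval on which the slot index equals $l_t$ and note that a growth out of that range is itself only possible through interesting events — but degree growth comes from insertions adjacent to $v$, which are \emph{not} interesting events for $v$, so this needs thought — or (b) more robustly, observe that the bound I actually need is that the active sampled-set at time $\tilde t_{w_j}$ has size $< 2^{l_{\tilde t_{w_j}}+1}$, and since a lost neighbor only decreases degree, at the relevant refresh time the degree is at least $d_{t'} \ge (1-\epsilon/10^5)d_t$ and at most... this is exactly the subtlety. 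I expect the cleanest fix is to phrase $M_v^{t,t'}$ and the whole argument per dyadic level, or to absorb the constant slack into the exponent (which is enormous), and I would check this against the proof of \cref{lem: different neighborhood has very low probability} to keep the two arguments parallel.
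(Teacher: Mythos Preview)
Your proposal is correct and follows the same approach as the paper, which explicitly omits the proof and points back to \cref{lem: different neighborhood has very low probability}. The first two paragraphs of your write-up already constitute a complete argument.

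Your last paragraph over-worries about a non-issue. The entire argument is run with the slot index \emph{fixed} at $l_t$: you always look at the content of $I^{l_t}_v$, never at the ``active'' slot. The invariant maintained by the Notify procedure is that $v \in B^{l_t}_{w}$ if and only if $w$ is in the current content of $I^{l_t}_v$; this holds regardless of what happens to other slots. So if $w_j$ is in $I^{l_t}_v$ at its deletion time, $v$ receives a $\T{0}$ notification, period. As for the $2d_t$ bound: slot $I^{l_t}_v$ is only ever refreshed at a time when $l(d(v)) = l_t$, i.e., when $d(v) \in [2^{l_t}, 2^{l_t+1})$. Hence at \emph{any} refresh of this particular slot the sampled set has size $< 2^{l_t+1} \le 2d_t$. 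The degree of $v$ may wander outside this range and other slots may be refreshed, but that leaves slot $l_t$ and its associated $B$-pointers untouched, so neither the implication $\overline{T_0^{v,t,t'}} \Rightarrow \bigcap_j E_j$ nor the size bound is disturbed. You can safely delete the hedging in your final paragraph; the version (ii) of your reasoning (``a refresh of the slot indexed $l_t$'') is exactly right and needs no further case analysis.
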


\begin{proof}
The proof of the current lemma follows the same line of arguments as the proof of~\cref{lem: different neighborhood has very low probability} and it is omitted.
\end{proof}

To facilitate the description of the~\cref{lem: bound on Adu} we define the following random variable.

\begin{definition}
    For every node $v \in V$ and times $t, t'$ where $t < t'$ and a positive integer $d$:
    \begin{align*}
       P_{d, t, t'}^v =  \left( \N{t'}{v} \setminus \N{t}{v}\right) \cap  \left\{ w: \exists t'' \in (t, t']:  |\N{t''}{w}| < 10^{2}d \right\}\
    \end{align*}
\end{definition}
In other words $P_{d, t, t'}^v$ contains all neighbors of $v$ that arrived between times $t$ and $t'$ whose degree at some point between those times is ``small''. 

In~\cref{lem: bound on Adu} we argue that $ |P_{d, t_v, t'}^v| < \epsilon/10^{2} d$ with high probability. The intuition behind the latter statement is that if for some $t$ $ |P_{d, t, t'}^v| \geq \epsilon/10^{2} d$ then, with high probability, $v$ participates in an ``interesting event'' at a time $t'' \in (t, t']$.

\begin{lem}\label{lem: bound on Adu}
 For every node $v \in V$, time $t'$ and a positive integer $d$:
    \begin{align*}
    \prob [ \left\{|P_{d, t_v, t'}^v| > \epsilon/10^{2} d \right\}] < 3 / n^{10^{3}}
    \end{align*}
\end{lem}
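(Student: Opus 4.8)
The plan is to show that if $|P_{d,t_v,t'}^v|$ is large then with high probability $v$ would have received a $\T{0}$ notification after time $t_v$, contradicting the definition of $t_v$ as the last time $v$ participated in an interesting event (receiving a $\T{0}$ notification is an interesting event). So I would fix a positive integer $d$ and a time $t'$, and condition on the identity of $t_v$ and on the set $P := P_{d,t_v,t'}^v$ of late-arriving, low-degree neighbors of $v$. The key structural observation is the following: when such a $w \in P$ has degree below $10^2 d$ at some time $t'' \in (t_v,t']$, and $v$ is one of $w$'s neighbors at that moment, then each time $w$ samples its neighborhood (which happens on arrival and on every notification it receives, cf.\ \cref{algo:notify}) $v$ lands in $w$'s sample $I^{l_w}_w$ with probability at least $\Omega(1/d)$, because $w$'s degree is $O(d)$ and $w$ draws $\Theta(\log n/\epsilon)$ samples. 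If $v \in I^{l_w}_w$ at the time $w$ is deleted, or more generally if $v$ ever enters $B^{l_w}_w$ while $w$ is still present and $w$ subsequently triggers a notification reaching $v$, then $v$ receives a $\T{0}$ (or forwarded) notification in $(t_v,t']$. I would set things up exactly as in the proof of \cref{lem: different neighborhood has very low probability}: define, for the $j$-th element $w_j$ of $P$, the event $E_j$ that $v$ fails to be sampled into the relevant set of $w_j$ at the critical time, argue that $\{v$ receives no interesting-event notification in $(t_v,t']\}$ implies $\bigcap_j E_j$, and then bound $\prob[\bigcap_j E_j]$ by a product of conditional probabilities each at most $(1 - 1/(10^2 d))^{\Theta(\log n/\epsilon)}$ via the principle of deferred decisions, exactly the estimate already used in \cref{lem: different neighborhood has very low probability}.

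More concretely, I would carry out the steps in this order. First, handle a bad event where some $w \in P$ has an anomalously large degree at an intermediate time causing the sampling probability bound to fail; but by definition of $P_{d,t_v,t'}^v$ every $w \in P$ has degree below $10^2 d$ at \emph{some} point in the interval, and I only need the sampling analysis at that point, so this is a non-issue — I just pick, for each $w_j$, the relevant witnessing time $t''_j$. Second, as in the earlier lemma, I condition on the last sampling time of $v$'s own neighborhood structure relative to each $t''_j$ to ensure $|\N{t''_j}{v}|$ is within a factor $2$ of the degree it had at its last update; this is where a small probability loss of the form $3/n^{10^3}$ (rather than $1/n^{10^4}$) will come from — there are a couple of auxiliary bad events (e.g.\ $v$'s own degree behaving badly, or the conditioning events from \cref{lem: When my degree drops I get notified}) that need to be union-bounded in. Third, assuming $|P| > \epsilon d/10^2$, I get at least $\epsilon d /10^2$ independent-ish trials, each succeeding (i.e.\ putting $v$ into $w_j$'s $B$-set and hence eventually notifying $v$) with probability $\gtrsim 1/(10^2 d)$, giving failure probability at most $(1 - 1/(10^2 d))^{(\epsilon d/10^2)\cdot \Theta(\log n /\epsilon)} = (1-1/(10^2 d))^{\Theta(d\log n)} \le 1/n^{10^3}$ for suitable constants, and then I union-bound over the (polynomially many, after conditioning) choices of the identity of $P$.

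The main obstacle I anticipate is the dependency structure: unlike the clean setting of \cref{lem: different neighborhood has very low probability} where the deleted nodes $w_j$ are fixed by the stream, here the set $P$ is itself a random variable depending on the algorithm's sampling (through which nodes participated in interesting events and thus re-sampled, affecting degrees and the $\T{2}$ cascade). The fix is the standard one used throughout this section: condition on a realization of the relevant degree sequences $|\N{t''}{w}|$ and on the structure that determines $P$, observe that under this conditioning the remaining randomness — namely each $w_j$'s fresh uniform neighborhood samples — is independent across $j$ and independent of the conditioning, and apply deferred decisions sample-by-sample. A secondary subtlety is making sure that $v \in B^{l_w}_w$ actually results in $v$ being notified within $(t_v, t']$: if $w$ is deleted by time $t'$ this is immediate; if $w$ survives, I need that $w$'s re-sampling (triggered by $w$ itself participating in interesting events, which low-degree newly-arrived nodes frequently do) or $w$'s own deletion still propagates to $v$ — here I can either restrict attention to those $w_j$ that are deleted within the window, or invoke that $w$'s arrival already sent $v$ a $\T{0}$ notification when $v \in I^{l_w}_w$ at arrival time. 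I would choose the cleanest sufficient sub-event (deletion-within-window, or sampled-at-arrival) so that the reduction to $\bigcap_j E_j$ goes through verbatim as in the earlier proof, and then the calculation is routine.
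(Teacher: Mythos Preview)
Your high-level instinct is right---if $|P|$ is large then many low-degree neighbors of $v$ should sample $v$ and thereby send $v$ an interesting-event notification after $t_v$---but there is a real gap in how you realize it. You write that at the witnessing time $t''_j$ (when $|N_{G_{t''_j}}(w_j)| < 10^2 d$) the node $w_j$ samples its neighborhood and hits $v$ with probability $\Omega(1/d)$, and you call this a ``non-issue.'' It is an issue: $w_j$ only resamples $I^{l_{w_j}}_{w_j}$ when it arrives or when it receives a notification (\cref{algo:notify}); nothing forces a resample at the particular time $t''_j$. If $w_j$ arrived with degree far above $10^3 d$ and its degree later collapsed below $10^2 d$, the arrival sample is useless (the hit probability on $v$ is tiny), and you have given no mechanism that produces a resample while $w_j$'s degree is small. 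Your two proposed fallbacks do not cover this: ``sampled-at-arrival'' only handles the nodes whose arrival degree was already $\le 10^3 d$, and ``deletion-within-window'' is vacuous because every $w\in P$ is, by definition, a neighbor of $v$ at time $t'$ and hence has not been deleted.

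The paper closes exactly this gap by splitting $P$ into $S=\{w\in P:\,|N_{G_{t^{\text{arrival}}_w}}(w)|\le 10^3 d\}$ and $L=P\setminus S$. For $S$ your arrival-sample argument works. For $L$ the paper applies \cref{lem: When my degree drops I get notified} to each $w\in L$ (not to $v$): since $w$'s degree drops from $>10^3 d$ at arrival to $<10^2 d$ at $\hat t_w$, the event $\mathcal{R}$ built from that lemma guarantees $w$ receives a $\T{0}$ notification at some $\tilde t_w\in(t^{\text{arrival}}_w,\hat t_w]$ with $|N_{G_{\tilde t_w}}(w)|<10^3 d$; at that moment $w$ resamples and forwards a $\T{1}$ to its sample, which hits $v$ with probability $\gtrsim 1/(10^3 d)$, and a $\T{1}$ to $v$ \emph{is} an interesting event. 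Note also that your ``Second'' step conditions on $v$'s degree behavior, which plays no role here; the auxiliary high-probability event $\mathcal{R}$ you need is a union bound of \cref{lem: When my degree drops I get notified} over all nodes $u$ and time pairs, used to control the $w$'s in $L$. Once you make the $S/L$ split and redirect \cref{lem: When my degree drops I get notified} to the $w$'s, the remaining product-of-conditionals computation goes through essentially as you sketched.
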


\begin{proof}
We define the following event:
\begin{align*}
    \mathcal{R} =  \bigcap_{u \in V, t_1, t_2: t_1 < t_2 } 
    \left\{ 
    \overline{A_u^{t_1}} \lor T_0^{u, t_1, t_2} \lor M_u^{t_1, t_2}
    \right\}
\end{align*}
for which, using~\cref{lem: When my degree drops I get notified} and a union bound, we get: $\prob [\overline{\mathcal{R}}] \leq 1/n^{10^{4} - 3}$.
Using the law of total probability, we have:
\begin{align*}
    \prob \left[ \left\{|P_{d, t_v, t'}^v| > \epsilon/10^{2} d \right\} \right]  &= \prob \left[ \left\{|P_{d, t_v, t'}^v| > \epsilon/10^{2} d \right\}  \wedge \mathcal{R} \right] + \prob \left[ \left\{|P_{d, t_v, t'}^v| > \epsilon/10^{2} d \right\} \wedge \overline{\mathcal{R}} \right]\\
    &\leq\prob \left[ \left\{|P_{d, t_v, t'}^v| > \epsilon/10^{2} d \right\} \wedge \mathcal{R} \right] + \prob \left[\overline{\mathcal{R}} \right] \\
    &\leq \prob \left[ \left\{|P_{d, t_v, t'}^v| > \epsilon/10^{2} d \right\} \wedge \mathcal{R} \right] + 1/n^{10^{4} - 3}
\end{align*}
Thus, in the rest of the proof we focus on upper bounding the term $\prob \left[ \left\{|P_{d, t_v, t'}^v| > \epsilon/10^{2} d \right\} \wedge \mathcal{R} \right]$. 

To that end, for each node $u \in V$ let $\ta{u}$ be its arrival time and note that $\forall u \in P_{d, t_v, t'}^v$ we have that $ \ta{u} \in (t, t']$. We define the following sets:
\begin{align*}
    S &= \left\{u \in  P_{d, t_v, t'}^v: |\N{\ta{v}}{v}| \leq 10^{3} d  \right\}\\
    L &= \left\{u \in  P_{d, t_v, t'}^v: |\N{\ta{v}}{v}| > 10^{3} d  \right\}
\end{align*}

The set $S$ contains nodes of $ P_{d, t_v, t'}^v$ whose degree when they arrived was relatively ``small'' and on the contrary $L$, which is equal to $ P_{d, t_v, t'}^v \setminus S$,  contains nodes whose degree on arrival was ``large''. Since $ |P_{d, t_v, t'}^v| = |L| + |S|, $ we have that event $\left\{|P_{d, t_v, t'}^v| > \epsilon/10^{2} d \right\} \wedge \left\{ |S| < \epsilon/10^{3} d \right\}$ implies $ \left\{ |L| > 9\epsilon/10^{3} d \right\}$. Using the total law of probability we have:
\begin{align*}
    &\prob \left[ \left\{|P_{d, t_v, t'}^v| > \epsilon/10^{2} d \right\} \wedge \mathcal{R} \right] \\
    &= \prob \left[ \left\{|P_{d, t_v, t'}^v| > \epsilon/10^{2} d \right\} \wedge \mathcal{R} \wedge \left\{ |S| \geq \epsilon/10^{3} d \right\} \right] + \prob \left[ \left\{|P_{d, t_v, t'}^v| > \epsilon/10^{2} d \right\} \wedge \mathcal{R} \wedge \left\{ |S| < \epsilon/10^{3} d \right\} \right] \\
    &\leq \prob \left[ \left\{|P_{d, t_v, t'}^v| > \epsilon/10^{2} d \right\} \wedge \mathcal{R} \wedge \left\{ |S| \geq \epsilon/10^{3} d \right\} \right] + \prob \left[ \left\{|P_{d, t_v, t'}^v| > \epsilon/10^{2} d \right\} \wedge \mathcal{R} \wedge \left\{ |L| > 9 \epsilon/10^{3} d \right\} \right] \\
    &\leq \prob \left[ \left\{|P_{d, t_v, t'}^v| > \epsilon/10^{2} d \right\} \wedge \left\{ |S| \geq \epsilon/10^{3} d \right\} \right] + \prob \left[ \left\{|P_{d, t_v, t'}^v| > \epsilon/10^{2} d \right\} \wedge \mathcal{R} \wedge \left\{ |L| > 9 \epsilon/10^{3} d \right\} \right]\\
\end{align*}
We bound each of the two terms separately. For the first term: $\forall u \in S$ let $I_u$ be the sample constructed by $u$ at arrival and note that $u$ sends a $\mathrm{Type_0}$ notification to all nodes in $I_u$. Since $\forall u \in S \quad \ta{u} > t_v$ we have that $\{ \forall u \in S: v \not \in I_u \}$. Thus:
\begin{align*}
    \prob \left[ \left\{|P_{d, t_v, t'}^v| > \epsilon/10^{2} d \right\} \wedge \left\{ |S| \geq \epsilon/10^{3} d \right\} \right] &\leq 
    \prob \left[\left\{ \forall u \in S: v \not \in I_u\right\} \wedge \left\{ |S| \geq \epsilon/10^{3} d \right\} \right]\\
    &\leq \left( 1 - \nicefrac{1}{10^{3}d} \right)^{(\nicefrac{10^{10} \log n}{\epsilon}) \cdot (\nicefrac{\epsilon d}{10^{3}})}\\
     &\leq 1 / n^{10^{4}}
\end{align*}
Where in the second inequality we use that $\forall u , u' \in S$ events $\{  v \not \in I_u \}$ and $\{  v \not \in I_{u'} \}$ are independent.
\newcommand{\htu}{\widehat{t}_u}

We now turn our attention to the second term. Note that $\forall u \in L$ event $A_u^{\ta{u}}$ is always true and for each node $u \in L$
denote by $\htu \in (t_u, t']$ the last time when $  |\N{\htu}{u}| < 10^{2}d$.
\begin{align*}
    |\N{\ta{u}}{u} \setminus \N{\htu}{u}|
        &\geq |\N{\ta{u}}{u}| - |\N{\htu}{u}| \\
        &> 10^{3} d - 10^{2} d\\
        &= 9\cdot 10^{2} d
\end{align*}
where in the second inequality we used the definition of set $L$.

Note that we just argued that event $A_u^{\ta{u}} \land \overline{M^{\ta{u}, \htu}_u}$ is always true. Consequently $\mathcal{R}$ implies that $u \in L$ will keep getting notifications of $\mathrm{Type_0}$ until its degree is close enough to its degree at time $\htu$, in other words, there exists a time $\widetilde{t}_u \in (\ta{u}, \htu]$ such that $T^{u, \widetilde{t}_u}_0 \land M^{\widetilde{t}_u, \htu}_u $ is true. At time $\widetilde{t}_u$, $u$ receives a $\mathrm{Type_0}$ notification and its degree can be upper bounded as follows:
\begin{align*}
|\N{\widetilde{t}_u}{u}|&\leq |\N{\widetilde{t}_u}{u} \setminus \N{\htu}{v}| + |\N{\htu}{v}|\\
                    &< \epsilon / 10^{5} |\N{\widetilde{t}_u}{u} |+ 10^{2} d
\end{align*}
where we used the fact that $M^{\widetilde{t}_u, \htu}_u$ is true.
Thus, for $\epsilon$ small enough it holds that 
\begin{align*}
    |\N{\widetilde{t}_u}{u}| &<  \frac{10^{2} d}{1 - \epsilon / 10^{5}} \Rightarrow{}\\
    |\N{\widetilde{t}_u}{u}| &<  10^{3} d \\
\end{align*}
Similarly to the arguments when we were bounding the first term of the sum, $\forall u \in L$ let $I_u$ be the sample constructed by $u$ at time $\widetilde{t}_u$, and note that $u$ sends a $\mathrm{Type_1}$ notification to all nodes in $I_u$. Since $\forall u \in S \quad \widetilde{t}_u > t_v$ we have that $\{ \forall u \in L: v \not \in I_u \}$. 
\begin{align*}
    \prob \left[ \left\{|P_{d, t_v, t'}^v| > \epsilon/10^{2} d \right\} \wedge \mathcal{R} \wedge \left\{ |L| > 9 \epsilon/10^{3} d \right\} \right] &\leq 
    \prob \left[\left\{ \{ \forall u \in L: v \not \in I_u \}\right\} \wedge \left\{ |L| > 9 \epsilon/10^{3} d \right\} \right]\\
    &\leq \left( 1 - \nicefrac{1}{10^{3}d} \right)^{(\nicefrac{10^{10} \log n}{\epsilon}) \cdot (\nicefrac{9\epsilon d}{10^{3}})}\\
     &\leq  1 / n^{10^{3}}
\end{align*}

Combining the previous bounds, we conclude the proof of the lemmas as follows:
\begin{align*}
    \prob \left[ \left\{|P_{d, t_v, t'}^v| > \epsilon/10^{2} d \right\} \right] <  1 / n^{10^{4}} +  1 / n^{10^{3}}  +  1 / n^{10^{4} - 3} <  3 / n^{10^{3}}
\end{align*}
\end{proof}

We now define the following event which will be a crucial for our arguments in the rest of the section

\newcounter{importantdefinitioncounter}
\setcounter{importantdefinitioncounter}{\value{definition}}
\begin{restatable}{definition}{defeventnotmanylosesnotmanynewnodes}\label{def:event such after tu, nodes neither lose too many neighbors nor they get new neighbors of low degree}
\begin{align*}
    \mathcal{R} = 
        &\bigcap_{u \in C, t>t_u } M_u^{t_u, t}
        \bigcap_{u \in C, d\in \left[1, n \right]}
            \left\{
                |P_{d, t_v, \tc}^v| \leq \epsilon/10^{2} d 
            \right\}
\end{align*}
\end{restatable}{definition}
From~\cref{lem: When my degree drops I get notified},~\cref{lem: bound on Adu}, and using the union bound we get the following:
\begin{obs}\label{obs: R happens with high probability}
\begin{align*}
\prob \left[ \mathcal{R} \right] > 1 - 1/n^{10^{2}}    
\end{align*}
\end{obs}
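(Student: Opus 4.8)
The plan is to establish the equivalent bound $\prob[\overline{\mathcal{R}}] < 1/n^{10^2}$ by a union bound over the two families of events whose intersection defines $\mathcal{R}$ in \cref{def:event such after tu, nodes neither lose too many neighbors nor they get new neighbors of low degree}: the first family, $\bigcap_{u\in C,\,t>t_u}M_u^{t_u,t}$, will be controlled using \cref{lem: When my degree drops I get notified}, and the second, $\bigcap_{u\in C,\,d\in[1,n]}\{|P_{d,t_u,\tc}^u|\le \epsilon/10^2\cdot d\}$, using \cref{lem: bound on Adu}. Both lemmas will essentially be invoked as black boxes; the only point needing an argument is that the hypotheses $A_u^{t_u}$ and $\overline{T_0^{u,t_u,t}}$ of \cref{lem: When my degree drops I get notified} are automatically satisfied, and this is exactly where the defining property of $t_u$ --- that it is the \emph{last} time before $\tc+1$ at which $u$ participates in an ``interesting event'' --- enters.

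For the first family I would fix a node $u\in C$ and two times $s<t\le\tc$, and let $\mathcal{B}_{u,s,t}$ be the event ``$s$ is the last time $\le\tc$ at which $u$ participates in an interesting event, and $\overline{M_u^{s,t}}$ holds.'' If $s$ is this last interesting event, then $u\in\I_s$ (either $u$ arrived at time $s$, or it received a $\T{0}$ or $\T{1}$ notification at time $s$), so $A_u^s$ holds; and $u$ receives no $\T{0}$ notification during $(s,t]\subseteq(s,\tc]$, so $\overline{T_0^{u,s,t}}$ holds. Hence $\mathcal{B}_{u,s,t}\subseteq A_u^s\wedge\overline{T_0^{u,s,t}}\wedge\overline{M_u^{s,t}}$, and \cref{lem: When my degree drops I get notified} gives $\prob[\mathcal{B}_{u,s,t}]<1/n^{10^4}$. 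Restricting the first family to the range $t\in(t_u,\tc]$ that is actually used in the remainder of the section, its complement is contained in $\bigcup_{u\in C}\bigcup_{s<t\le\tc}\mathcal{B}_{u,s,t}$ (take $s$ equal to the realized value of $t_u$); since there are at most $n$ choices of $u$ and at most $2n$ choices of each of $s,t$ (the stream has at most $2n$ steps), a union bound bounds its probability by $O(n^3)\cdot n^{-10^4}<1/n^{10^3}$.

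For the second family I would apply \cref{lem: bound on Adu} directly with $t'=\tc$ to each pair $(u,d)$ with $u\in C$ and $d\in[1,n]$: each event $\{|P_{d,t_u,\tc}^u|>\epsilon/10^2\cdot d\}$ has probability $<3/n^{10^3}$, and there are at most $n^2$ such pairs, so the probability that some pair is bad is $<3/n^{10^3-2}$. Combining the two estimates gives $\prob[\overline{\mathcal{R}}]<1/n^{10^3}+3/n^{10^3-2}<1/n^{10^2}$ for $n$ large enough, which is the claim; the slack is so large that no care with constants is needed. I expect the only real obstacle to be making the inclusion $\mathcal{B}_{u,s,t}\subseteq A_u^s\wedge\overline{T_0^{u,s,t}}\wedge\overline{M_u^{s,t}}$ precise for the random value $s=t_u$ --- i.e.\ translating ``$u$ has no later interesting event up to $\tc$'' into exactly the event on which \cref{lem: When my degree drops I get notified} gives a tail bound --- after which everything is routine counting plus a union bound.
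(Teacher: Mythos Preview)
Your proposal is correct and follows exactly the approach the paper indicates: the paper's proof consists of the single sentence ``From \cref{lem: When my degree drops I get notified}, \cref{lem: bound on Adu}, and using the union bound we get the following,'' and you have simply spelled out that union bound in detail, including the (necessary) observation that the defining property of $t_u$ as the \emph{last} interesting event forces both $A_u^{t_u}$ and $\overline{T_0^{u,t_u,t}}$ to hold so that \cref{lem: When my degree drops I get notified} applies.
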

In addition, by the definition of event $\mathcal{R}$:
\begin{obs}\label{obs: Ntu can only increase with high degree nodes}
    $\mathcal{R}$ implies that for all $v \in C$ and $ t \in (t_v, \tc]$:
    \begin{enumerate}
    \item  $|\N{t_v}{v} \setminus \N{t}{v} | \leq \epsilon/{10^{5}} |\N{t_v}{v} |$; and
    \item  $|\left( \N{t}{v} \setminus \N{t_v}{v}\right) \cap  \left\{ u : |\N{t}{u}| < 10^{2} |\N{t}{v}| \right\}| \leq \epsilon / 10^{2} |\N{t}{v}|$
\end{enumerate}
\end{obs}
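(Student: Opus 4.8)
The plan is to obtain both claims by directly unpacking the definition of $\mathcal{R}$ in \cref{def:event such after tu, nodes neither lose too many neighbors nor they get new neighbors of low degree}; this is exactly why the statement is phrased as an observation rather than a lemma, and no new probabilistic argument is needed. All the actual work has already been carried out in \cref{lem: When my degree drops I get notified} and \cref{lem: bound on Adu}, and the observation simply reads off their consequences under $\mathcal{R}$.

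\textbf{Claim (1).} Fix $v\in C$ and $t\in(t_v,\tc]$. Since $t>t_v$, the event $M_v^{t_v,t}$ is one of the conjuncts in the definition of $\mathcal{R}$, and by definition $M_v^{t_v,t}$ is precisely the event $\{\,|\N{t_v}{v}\setminus\N{t}{v}|\le\tfrac{\epsilon}{10^5}|\N{t_v}{v}|\,\}$. Hence (1) holds verbatim whenever $\mathcal{R}$ holds; there is nothing further to check.

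\textbf{Claim (2).} Fix $v\in C$ and $t\in(t_v,\tc]$. If $|\N{t}{v}|=0$ then the set on the left is empty and the bound is trivial, so assume $d:=|\N{t}{v}|\ge 1$, which is a positive integer in $[1,n]$. I would show the left-hand side of (2) is contained in $P^v_{d,t_v,t}$, and then invoke the bound $|P^v_{d,t_v,t}|\le\tfrac{\epsilon}{10^2}d=\tfrac{\epsilon}{10^2}|\N{t}{v}|$ that is part of $\mathcal{R}$. For the inclusion, take any $u$ in $\bigl(\N{t}{v}\setminus\N{t_v}{v}\bigr)\cap\{\,w:|\N{t}{w}|<10^2|\N{t}{v}|\,\}$. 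Then $u\in\N{t}{v}\setminus\N{t_v}{v}$, matching the first factor in the definition of $P^v_{d,t_v,t}$, and taking $t''=t\in(t_v,t]$ as the witness time gives $|\N{t''}{u}|=|\N{t}{u}|<10^2|\N{t}{v}|=10^2 d$, so $u\in P^v_{d,t_v,t}$. This yields the inclusion, and (2) follows immediately.

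The only mildly non-obvious point, and the one to state carefully, is the bookkeeping of which sub-events of $\mathcal{R}$ are invoked in (2): we need the $P$-bound at the ``running time'' $t$, not only at $\tc$, so we rely on $\mathcal{R}$ containing the events $\{\,|P^u_{d,t_u,t}|\le\tfrac{\epsilon}{10^2}d\,\}$ for every $u\in C$, every integer $d\in[1,n]$, and every $t\in(t_u,\tc]$. These are exactly the events whose individual failure probabilities are bounded by \cref{lem: bound on Adu} (which is stated there for an arbitrary end-time $t'$), and whose union bound is absorbed into \cref{obs: R happens with high probability}. So there is no genuine obstacle here; the observation is a one-line corollary of the definition of $\mathcal{R}$ together with the choice of scale $d=|\N{t}{v}|$.
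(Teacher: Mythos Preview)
Your proposal is correct and matches the paper's approach, which simply says the observation holds ``by the definition of event $\mathcal{R}$'' without further argument. Your careful bookkeeping in the last paragraph is apt: as literally written, the second conjunct of $\mathcal{R}$ in \cref{def:event such after tu, nodes neither lose too many neighbors nor they get new neighbors of low degree} ranges only over the endpoint $\tc$, whereas claim~(2) needs the $P$-bound at each intermediate $t\in(t_v,\tc]$; you are right that this is harmless since \cref{lem: bound on Adu} is stated for arbitrary $t'$ and the extra $n$ factor is absorbed by the union bound in \cref{obs: R happens with high probability}.
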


As in the previous~\cref{sec: Finding dense clusters}, we will focus on how the neighborhood of a node $v \in \Canchor$ can change after the last time it participated in an ``interesting event'', i.e., time $t_v$. From~\cref{obs: Ntu can only increase with high degree nodes} we know that with high probability $v$ does not lose more than a small fraction of its neighborhood until time $\tc$. At the same time, its neighborhood may increase drastically, with many nodes of high degree. Thus, in an approximate sense, we have that $\Nt{t}{v} \supseteq \Nt{t_v}{v}$.
The next~\cref{lem: anchor set nodes do not change their neighborhood too much} and~\cref{cor: anchor set nodes do not change their neighborhood too much} argue that for all $t \in (t_v, \tc]$ such that $v$ is in agreement with another node (a new edge adjacent to $v$ may be added to our sparse solution in that case) we have that $\N{t}{v} \simeq \N{t_v}{v}$.


\begin{lem}\label{lem: anchor set nodes do not change their neighborhood too much}
    Let $u \in \C$ and $t > t_u$ a time when $u$ is in $\epsilon$-agreement with $v$, and either $u$ or $v$ is $\epsilon$-heavy. Then $\mathcal{R}$ implies that:
    $$
    (1 + 10 \epsilon) |\N{t_u}{u}| \geq |\N{t}{u}| \geq (1 - \epsilon/10^{5}) |\N{t_u}{u}|
    $$
\end{lem}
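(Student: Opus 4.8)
The plan is to read off the lower bound directly from the event $\mathcal{R}$ and to prove the upper bound by contradiction, using the simple fact that a node whose degree is far larger than $|\N{t}{u}|$ cannot be in $\epsilon$-agreement with $u$ (nor with any node whose degree is close to $|\N{t}{u}|$). Throughout I work under $\mathcal{R}$ and with $t\in(t_u,\tc]$, which is the regime in which \cref{obs: Ntu can only increase with high degree nodes} applies.

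For the lower bound, part (1) of \cref{obs: Ntu can only increase with high degree nodes} gives $|\N{t_u}{u}\setminus\N{t}{u}|\le\tfrac{\epsilon}{10^5}|\N{t_u}{u}|$, and since $|\N{t}{u}|\ge|\N{t_u}{u}|-|\N{t_u}{u}\setminus\N{t}{u}|$, the bound $|\N{t}{u}|\ge(1-\epsilon/10^5)|\N{t_u}{u}|$ is immediate and uses neither the agreement nor the heaviness hypothesis.

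For the upper bound I would assume toward a contradiction that $|\N{t}{u}|>(1+10\epsilon)|\N{t_u}{u}|$. First, from $|\N{t_u}{u}|<|\N{t}{u}|/(1+10\epsilon)$ a one-line estimate gives $|\N{t}{u}\setminus\N{t_u}{u}|\ge|\N{t}{u}|-|\N{t_u}{u}|>5\epsilon|\N{t}{u}|$ for $\epsilon$ below a fixed constant. Next, split the new neighbors into the low-degree ones $H=\{w\in\N{t}{u}\setminus\N{t_u}{u}:|\N{t}{w}|<10^2|\N{t}{u}|\}$ and the high-degree ones $B=(\N{t}{u}\setminus\N{t_u}{u})\setminus H$; part (2) of \cref{obs: Ntu can only increase with high degree nodes} bounds $|H|\le\tfrac{\epsilon}{10^2}|\N{t}{u}|$, hence $|B|>4\epsilon|\N{t}{u}|$. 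Now the crux: for every $w\in B$ we have $|\N{t}{w}|\ge10^2|\N{t}{u}|$, so $|\N{t}{w}\triangle\N{t}{u}|\ge|\N{t}{w}|-|\N{t}{u}|\ge(1-10^{-2})|\N{t}{w}|>\epsilon|\N{t}{w}|$, i.e.\ (by \cref{def: epsilon-agreement between nodes}) $w$ is not in $\epsilon$-agreement with $u$. If $u$ is the $\epsilon$-heavy node of the hypothesis, this already contradicts \cref{def:heaviness}, since $u$ fails to agree with the $>4\epsilon|\N{t}{u}|>\epsilon|\N{t}{u}|$ neighbors lying in $B$. If instead $v$ is the $\epsilon$-heavy node, I would transfer the conclusion to $v$: from $u$'s $\epsilon$-agreement with $v$ we get $|\N{t}{v}|\in(1\pm2\epsilon)|\N{t}{u}|$ and $|\N{t}{u}\setminus\N{t}{v}|<\epsilon\max\{|\N{t}{u}|,|\N{t}{v}|\}$, so all but $<\epsilon\max\{|\N{t}{u}|,|\N{t}{v}|\}$ of $B$ lies in $\N{t}{v}$, giving more than $|B|-\epsilon\max\{\cdots\}>\epsilon|\N{t}{v}|$ neighbors of $v$ of degree $\ge10^2|\N{t}{u}|\ge50|\N{t}{v}|$; by the same degree comparison none of these is in $\epsilon$-agreement with $v$, contradicting that $v$ is $\epsilon$-heavy.

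The main obstacle is purely the constant bookkeeping: one must check that the chain ``$>5\epsilon$ new neighbors $\Rightarrow$ $>4\epsilon$ high-degree new neighbors $\Rightarrow$ $>\epsilon|\N{t}{v}|$ non-agreeing neighbors of the heavy node'' survives the small slack introduced by transferring from $u$ to $v$ (the factor $(1\pm2\epsilon)$ and the $\epsilon\max\{\cdots\}$ loss), and that $10^2|\N{t}{u}|\ge 50|\N{t}{v}|$ so that the high-degree set $B$ is genuinely non-agreeing with $v$ as well. All of this holds for $\epsilon$ smaller than a fixed absolute constant, and there is no conceptual difficulty beyond choosing that constant appropriately.
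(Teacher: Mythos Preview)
Your proof is correct and uses essentially the same mechanism as the paper: the lower bound comes straight from part (1) of \cref{obs: Ntu can only increase with high degree nodes}, and the upper bound comes from combining part (2) of that observation (few new low-degree neighbors) with the fact that a heavy node cannot have many neighbors of degree $\ge 10^{2}|\N{t}{u}|$.

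The only difference is packaging. The paper argues directly: it first notes that in either case $u$ itself is $5\epsilon$-heavy (if $v$ is the heavy one, they invoke a transitivity-style argument via \ref{property: N_G(u) setminus C  <  3 epsilon N_G(u)} and \ref{property: N_G(u) cap N_G(v) >= (1 - 5 epsilon) max(N_G(u), N_G(v)}), so at least a $(1-5\epsilon)$-fraction of $\N{t}{u}$ has degree below $2|\N{t}{u}|$; part (2) then caps the new neighbors among these at $\epsilon/10^{2}|\N{t}{u}|$, giving $|\N{t}{u}\setminus\N{t_u}{u}|<6\epsilon|\N{t}{u}|$ and hence the bound. You instead go by contradiction and split the cases $u$ heavy / $v$ heavy explicitly, in the second case transferring the high-degree set $B$ into $\N{t}{v}$ rather than first promoting $u$ to $5\epsilon$-heavy. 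Both routes are equivalent; the paper's is a bit shorter because the transitivity step collapses the two cases into one, while yours avoids appealing to that transitivity and is therefore a little more self-contained.
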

\begin{proof}
The right hand side is implied immediately by (1) of~\cref{obs: Ntu can only increase with high degree nodes}. We prove the left hand side for the case when $v$ is $\epsilon$-heavy and omit the case where $u$ is $\epsilon$-heavy as it is proven similarly. Since $v$ is $\epsilon$-heavy and in $\epsilon$-agreement with $u$, from~\ref{property: N_G(u) setminus C  <  3 epsilon N_G(u)} and~\ref{property: N_G(u) cap N_G(v) >= (1 - 5 epsilon) max(N_G(u), N_G(v)}, we can deduce that $u$ is $5 \epsilon$-heavy, i.e., it is in $5 \epsilon$-agreement with at least a $(1 - 5 \epsilon)$ fraction of its neighborhood at time $t$. Thus, again using~\ref{property: N_G(u) cap N_G(v) >= (1 - 5 epsilon) max(N_G(u), N_G(v)}, we have that at least $(1 - 5 \epsilon) |\N{t}{u}|$ neighboring nodes of $u$ have degree at most $\frac{|\N{t}{u}|}{(1 - 5 \epsilon)} < 2 |\Nt{t}{u}|$ for $\epsilon$ small enough. Due to (2) of~\cref{obs: Ntu can only increase with high degree nodes}, from those nodes at most $\epsilon/10^{2} |\N{t}{u}| $ of them could have arrived after time $t_v$. Consequently, $|\N{t}{u} \setminus \N{t_v}{u}| \leq \epsilon/10^{2} |\N{t}{u}| + 5 \epsilon |\N{t}{u}| <  6 \epsilon |\N{t}{u}|$. Using that $|\N{t}{u}| - |\N{t_u}{u}| \leq |\N{t}{u} \setminus \N{t_u}{u}|$ we conclude that:
\begin{align*}
    |\N{t}{u}| \leq \frac{|\N{t_u}{u}|}{1 - 6 \epsilon} < (1 + 10 \epsilon) |\N{t_u}{u}|
\end{align*}
where the second inequality holds for $\epsilon$ small enough.
\end{proof}

\begin{cor}\label{cor: anchor set nodes do not change their neighborhood too much}
   Let $u \in \C$ and $t \geq t_u$ a time when $u$ is in $\epsilon$-agreement with $v$, and either $u$ or $v$ is $\epsilon$-heavy. Then $\mathcal{R}$ implies that:
    \begin{enumerate}[ref=(\arabic*) of Corollary \thecor]
        \item  $|\N{t_u}{u} \setminus \N{t}{u}| \leq \epsilon/10^{5} |\N{t_u}{u}| <  \epsilon / 10^{4}  |\N{t}{u}|$; and \label{cor: Nt remains the same p1}
        \item $|\N{t}{u} \setminus \N{t_u}{u}| \leq 6 \epsilon |\N{t}{u}| < 7 \epsilon  |\N{t_u}{u}|$\label{cor: Nt remains the same p2}
    \end{enumerate}
\end{cor}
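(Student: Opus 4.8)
The plan is to read off both parts of the corollary from Lemma~\ref{lem: anchor set nodes do not change their neighborhood too much} together with Observation~\ref{obs: Ntu can only increase with high degree nodes}, adding only the elementary arithmetic needed to pass between $|\N{t_u}{u}|$ and $|\N{t}{u}|$ on the two sides of each bound. First I would dispose of the boundary case $t=t_u$, for which $\N{t_u}{u}\setminus\N{t}{u}$ and $\N{t}{u}\setminus\N{t_u}{u}$ are both empty and every inequality holds trivially; for the remainder assume $t>t_u$, which is the regime covered by Lemma~\ref{lem: anchor set nodes do not change their neighborhood too much}. Throughout we work under $\mathcal{R}$, so both that lemma and Observation~\ref{obs: Ntu can only increase with high degree nodes} are available.

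For the first statement, the left inequality $|\N{t_u}{u}\setminus\N{t}{u}|\le\tfrac{\epsilon}{10^5}|\N{t_u}{u}|$ is precisely item~(1) of Observation~\ref{obs: Ntu can only increase with high degree nodes} instantiated at the node $u$ (valid since $u\in C$ and $t\in(t_u,\tc]$). For the right inequality I would invoke the lower bound $|\N{t}{u}|\ge(1-\epsilon/10^5)|\N{t_u}{u}|$ from Lemma~\ref{lem: anchor set nodes do not change their neighborhood too much}, which gives $|\N{t_u}{u}|\le|\N{t}{u}|/(1-\epsilon/10^5)<2|\N{t}{u}|$ for $\epsilon$ small enough, and then multiply through by $\tfrac{\epsilon}{10^5}$ to obtain $\tfrac{\epsilon}{10^5}|\N{t_u}{u}|<\tfrac{2\epsilon}{10^5}|\N{t}{u}|<\tfrac{\epsilon}{10^4}|\N{t}{u}|$.

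For the second statement, the left inequality $|\N{t}{u}\setminus\N{t_u}{u}|\le6\epsilon|\N{t}{u}|$ is exactly the intermediate bound established inside the proof of Lemma~\ref{lem: anchor set nodes do not change their neighborhood too much}: agreement with an $\epsilon$-heavy endpoint forces $u$ itself to be $5\epsilon$-heavy, so all but a $5\epsilon$-fraction of $\N{t}{u}$ have degree below $2|\N{t}{u}|$, and by item~(2) of Observation~\ref{obs: Ntu can only increase with high degree nodes} at most an $\epsilon/10^2$-fraction of these arrived after $t_u$, whence $|\N{t}{u}\setminus\N{t_u}{u}|\le\tfrac{\epsilon}{10^2}|\N{t}{u}|+5\epsilon|\N{t}{u}|<6\epsilon|\N{t}{u}|$. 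I would simply record this bound, noting that the same argument handles the symmetric case where $u$ rather than $v$ is $\epsilon$-heavy. For the right inequality $6\epsilon|\N{t}{u}|<7\epsilon|\N{t_u}{u}|$ I would use the upper bound $|\N{t}{u}|\le(1+10\epsilon)|\N{t_u}{u}|$ from the same lemma, which is strictly less than $\tfrac76|\N{t_u}{u}|$ once $\epsilon<1/60$.

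I do not expect a genuine obstacle here: this corollary is a bookkeeping restatement of Lemma~\ref{lem: anchor set nodes do not change their neighborhood too much} in the two forms that are convenient for Section~\ref{sec: All found clusters are dense}. The only points to watch are not re-deriving the two heaviness cases (they are inherited verbatim from the lemma and its proof) and keeping the direction of the $\le$/$\ge$ bounds straight when a neighborhood size sits in a denominator on the right-hand side.
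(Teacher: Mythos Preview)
Your proposal is correct and follows essentially the same approach as the paper: dispose of $t=t_u$ trivially, then read off part~(1) from Observation~\ref{obs: Ntu can only increase with high degree nodes} together with the lower bound of Lemma~\ref{lem: anchor set nodes do not change their neighborhood too much}, and part~(2) from the intermediate $6\epsilon$ bound inside that lemma's proof together with its upper bound $(1+10\epsilon)|\N{t_u}{u}|$. Your arithmetic and case handling match the paper's; if anything you are slightly more explicit about the heaviness case split than the paper is.
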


\begin{proof}
If $t = t_u$ then the claim trivially holds. In the following we assume that $t > t_u$.

The left hand side of the first inequality is immediate from (1) of~\cref{obs: Ntu can only increase with high degree nodes}. For the right hand side of the first inequality note that from~\cref{lem: anchor set nodes do not change their neighborhood too much} we have $|\N{t_u}{u}| < 1 / (1 - \epsilon/10^{5}) |\N{t}{u}|$. Thus, for $\epsilon$ small enough we get $(\epsilon/10^{5})\cdot(1 / (1 - \epsilon/10^{5})) < \epsilon / 10^{4} $.

For the left hand side of the second inequality it suffices to repeat the arguments of~\cref{obs: Ntu can only increase with high degree nodes} and for the right hand side, again using~\cref{obs: Ntu can only increase with high degree nodes}, and the fact that $6 \epsilon (1 + 10 \epsilon) < 7 \epsilon$ for $\epsilon$ small enough.
\end{proof}

The next~\cref{lem: anchor set nodes do not change their neighborhood too much properties} concentrates on how the neighborhood of a node $u \in \Canchor$ in our sparse solution changes after time $t_u$. At a high level, we argue that $\Nt{t_u}{u} \simeq \Nt{t}{u}$.

\begin{lem}\label{lem: anchor set nodes do not change their neighborhood too much properties}
    Let $u \in \Canchor$ and $t > t_u$. Then $\mathcal{R}$ implies that:
    \begin{enumerate}[ref=(\arabic*) of Lemma \thelem]
        \item $\Nt{t_u}{u} \subseteq \N{t_u}{u}$\label{lem: Nttu is stable p1}
        \item $|\N{t_u}{u} \setminus \Nt{t_u}{u}| < \epsilon |\N{t_u}{u}|$\label{lem: Nttu is stable p2}
        \item $|\Nt{t_u}{u}| \geq (1 - \epsilon) |\N{t_u}{u}|$\label{lem: Nttu is stable p3}
        \item $|\Nt{t_u}{u} \setminus \Nt{t}{u}| < \epsilon |\Nt{t_u}{u}|$\label{lem: Nttu is stable p4}
        \item $|\Nt{t}{u} \setminus \Nt{t_u}{u}| < 8 \epsilon |\N{t_u}{u}|$\label{lem: Nttu is stable p5}
        \item $|\Nt{t}{u} \setminus \Nt{t_u}{u}| < 12 \epsilon |\Nt{t_u}{u}|$\label{lem: Nttu is stable p6}
        \item $|\Nt{t}{u} | < (1 + 12 \epsilon) |\Nt{t_u}{u}|$\label{lem: Nttu is stable p7}
        \item $( 1 - \epsilon) |\Nt{t_u}{u} | < |\Nt{t}{u}|$\label{lem: Nttu is stable p8}
    \end{enumerate}
\end{lem}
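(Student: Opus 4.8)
The plan is to read the eight items as a cascade. Item (1) is just the standing invariant $\widetilde E_t\subseteq E_t$ recorded when $\widetilde G$ is introduced, so it needs nothing. Items (2) and (3) are equivalent once (1) is known, so it suffices to prove $|\Nt{t_u}{u}|\ge(1-\eps)|\N{t_u}{u}|$. Items (4) and (5) — that in the window $(t_u,\tc]$ the vertex $u$ loses at most an $\eps$-fraction of its $\widetilde G$-neighbours and gains at most $8\eps\,|\N{t_u}{u}|$ new ones — carry the real content, and items (6)--(8) are then pure arithmetic: (6) follows from (5) and (3) since $|\N{t_u}{u}|\le|\Nt{t_u}{u}|/(1-\eps)$ and $8\eps/(1-\eps)<12\eps$ for $\eps$ small; (7) from (6) via $|\Nt{t}{u}|\le|\Nt{t_u}{u}|+|\Nt{t}{u}\setminus\Nt{t_u}{u}|$; and (8) from (4) via $|\Nt{t}{u}|\ge|\Nt{t_u}{u}|-|\Nt{t_u}{u}\setminus\Nt{t}{u}|$.

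For (2)/(3) I would analyse round $t_u$ itself. Since $u\in\Canchor$ at time $\tc$ and $t_u$ is the last interesting event of $u$, the procedure $\mathrm{Anchor}(u,\eps,t_u)$ runs at that round, and the Bernoulli draw must give $X_u=1$ — otherwise $u$ leaves, or never enters, $\Phi$ at round $t_u$ and can never return, as $\mathrm{Anchor}$ is never invoked on $u$ afterwards. Moreover $u$ must be $\eps$-heavy at $t_u$: a light $u$ with $X_u=1$ receives no new $\widetilde G$-edge from $\mathrm{Anchor}$ and is left adjacent in $\widetilde G$ only to anchor vertices, an $o(1)$-fraction of $\N{t_u}{u}$, and those few edges are later stripped by $\mathrm{Clean}$, contradicting $u$ lying in a non-singleton component at $\tc$. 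Granting heaviness, within round $t_u$ the preceding $\mathrm{Clean}(u,\cdot)$ removes only $\widetilde G$-edges to vertices of $\Phi_u$ and the following $\mathrm{Connect}(u,\cdot)$ only adds edges, so after that round every $v\in\N{t_u}{u}$ in $\eps$-agreement with $u$ is a $\widetilde G$-neighbour of $u$; heaviness furnishes at least $(1-\eps)|\N{t_u}{u}|$ such $v$, which is (3), and (2) follows from (1).

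For (4) and (5) I would enumerate every change to $\widetilde G$ at $u$ over $t'\in(t_u,\tc]$. Since $u$ has no interesting event there, $\mathrm{Anchor}(u,\cdot)$ is never called and $u$ is never removed from $\Phi$ (again, that would be irreversible), so an edge $(u,v)\in\Nt{t_u}{u}$ can vanish only (a) when $v$ is deleted, or (b) when a $\widetilde G$-neighbour $v$ of $u$ runs $\mathrm{Clean}$ and drops $(u,v)$ because $u$ is not $\eps$-heavy or not in $\eps$-agreement with $v$ at that instant. Case (a) costs $\le|\N{t_u}{u}\setminus\N{\tc}{u}|\le(\eps/10^5)|\N{t_u}{u}|<\eps|\Nt{t_u}{u}|$ by the first clause of \cref{obs: Ntu can only increase with high degree nodes} (the $M_u^{t_u,\tc}$ part of $\mathcal R$) and (3). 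For case (b), using the near-clique structure of the agreement decomposition from \cref{sec: structural properties of the decomposition} together with both clauses of \cref{obs: Ntu can only increase with high degree nodes}, I would argue that under $\mathcal R$ the vertex $u$ stays $\eps$-heavy and in $\eps$-agreement with every non-deleted $t_u$-neighbour throughout the window, so (b) fires only on vertices already charged to (a); this gives (4). For (5), a $\widetilde G$-edge newly incident to $u$ in the window can only be inserted by $\mathrm{Anchor}(r,\cdot)$ for a heavy new anchor $r$ adjacent to and in $\eps$-agreement with $u$, or by $\mathrm{Connect}(\cdot)$, which inserts such an edge only when $u$ is $\eps$-heavy and in $\eps$-agreement with the triggering node; in either case the new neighbour $x$ of $u$ satisfies $x\in\N{t'}{u}$ and is in $\eps$-agreement with $u$ with a heavy endpoint, so \cref{cor: anchor set nodes do not change their neighborhood too much} applies at the insertion time $t'$, and $x$ lies either in $\N{t_u}{u}\setminus\Nt{t_u}{u}$ (size $<\eps|\N{t_u}{u}|$ by (2)) or in $\N{t'}{u}\setminus\N{t_u}{u}$; taking $t'$ to be the last insertion time before $t$ and invoking part (2) of \cref{cor: anchor set nodes do not change their neighborhood too much} caps the latter by $6\eps|\N{t'}{u}|<7\eps|\N{t_u}{u}|$, so $|\Nt{t}{u}\setminus\Nt{t_u}{u}|<8\eps|\N{t_u}{u}|$, which is (5).

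The genuinely delicate step should be case (b) of (4): maintaining for the whole of $(t_u,\tc]$ that $u$ is still $\eps$-heavy and in $\eps$-agreement with each of its stable $t_u$-neighbours, since $\mathcal R$ only restricts which vertices leave $\N{\cdot}{u}$ and forces the new arrivals to have large degree, and one must rule out that these high-degree arrivals, taken together, drive $u$ below the heaviness threshold — this is where the agreement-decomposition structure must do the work. A secondary point needing care is the heaviness claim used in (2)/(3), that a vertex of $\Canchor$ lying in a non-singleton component is $\eps$-heavy at its last interesting event.
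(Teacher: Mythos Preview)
Your handling of items (1), (5), and (6)--(8) matches the paper's; for (5) the paper does exactly the split $\Nt{t}{u}\setminus\Nt{t_u}{u}\subseteq(\N{t}{u}\setminus\N{t_u}{u})\cup(\N{t_u}{u}\setminus\Nt{t_u}{u})$ at the last insertion time and invokes \cref{cor: anchor set nodes do not change their neighborhood too much}. For (2)/(3) your argument is in fact more careful than the paper's, which simply asserts that $u\in\Canchor$ forces $u$ to be $\eps$-heavy at $t_u$.

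The genuine gap is in item (4). You try to show that under $\mathcal R$ the vertex $u$ remains $\eps$-heavy and in $\eps$-agreement with every surviving $t_u$-neighbour throughout $(t_u,\tc]$, so that the first branch of $\mathrm{Clean}$ never removes an edge at $u$. You correctly flag this as delicate and appeal to ``the near-clique structure of the agreement decomposition from \cref{sec: structural properties of the decomposition}'' to control the high-degree arrivals. But that appeal does not go through: the properties in \cref{sec: structural properties of the decomposition} describe clusters of the \emph{offline} agreement decomposition, whereas here $C$ is just a connected component of $\widetilde G_{\tc}$, and its density is exactly what \cref{sec: All found clusters are dense} is in the process of establishing. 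Moreover $\mathcal R$ places no bound on how many neighbours of $u$ with degree $\ge 10^{2}|\N{t'}{u}|$ can arrive in $(t_u,\tc]$; enough such arrivals can make $u$ light at an intermediate time, at which point the first branch of $\mathrm{Clean}$ \emph{does} delete edges incident to $u$. So your case (b) cannot be emptied out.

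The paper bypasses all of this with a one-line algorithmic invariant you overlooked. The \emph{second} branch of $\mathrm{Clean}(v,\eps,t')$ checks, for every $w\in\Phi_v$, whether $w$ has lost more than an $\eps$-fraction of its $\widetilde G$-edges since entering $\Phi$; if so, $w$ is removed from $\Phi$. This check is performed immediately after any $\mathrm{Clean}$-deletion of an edge $(u,v)$. Since $u\in\Canchor$ at time $\tc$, the check never triggered on $u$, hence $|\Nt{t_u}{u}\setminus\Nt{t}{u}|<\eps\,|\Nt{t_u}{u}|$. No argument about $u$'s heaviness or agreement at intermediate times is needed; whether or not the first branch of $\mathrm{Clean}$ fires is irrelevant, because the second branch caps the cumulative damage directly.
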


\begin{proof}
We prove each statement as follows:
\begin{enumerate}
    \item The neighborhood of a node in our sparse solution is always a subset of its true neighborhood.
    \item Since $u \in \Canchor$, $u$ is $\epsilon$-heavy at time $t_u$ and in $\epsilon$-agreement with at least an $\epsilon$ fraction of its neighborhood at that time. All edges $(u, v)$ where $v\in\N{t_u}{u}$ is in $\epsilon$-agreement with $u$ are added to our sparse solution at that time.
    \item Again, due to $u$ being $\epsilon$-heavy at time $t_u$.
    \item Due to the $Clean(\cdot, \epsilon)$ procedure $u$ cannot lose more than an $\epsilon$ fraction of its neighborhood in our sparse solution at time $t_u$ and remain in the anchor set.
    \item W.l.o.g. we assume that time $t$ was the last time $u$'s neighborhood in the sparse solution increased, and let $v$ be its last new neighbor in the sparse solution. Note that since $t > t_u$, $v$ must be in $\epsilon$-agreement with $u$ at time $t$ and either $u$ or $v$ must be $\epsilon$-heavy. We have:
    \begin{align*}
        |\Nt{t}{u} \setminus \Nt{t_u}{u}| &\leq |\Nt{t}{u} \setminus \N{t_u}{u}| + |\N{t_u}{u} \setminus \Nt{t_u}{u}|\\
        &\leq |\N{t}{u} \setminus \N{t_u}{u}| + |\N{t_u}{u} \setminus \Nt{t_u}{u}|\\
        &< 7 \epsilon |\N{t_u}{u}| + \epsilon|\N{t_u}{u}|\\
        &\leq 8 \epsilon |\N{t_u}{u}|\\
    \end{align*}
    Where in the second inequality we used the fact that $\Nt{t}{u} \subseteq \N{t}{u}$ is always true, and in the third we used both (2) of the current lemma and~\ref{cor: Nt remains the same p2}.
    \item It is immediate by combining (3) and (5), since for $\epsilon$ small enough it holds that $8 \epsilon / (1 - \epsilon) < 12 \epsilon$.
    \item It is immediate from (6).
    \item It is immediate from (4).
\end{enumerate}
\end{proof}

The following~\cref{lem: anchor set nodes have a high overlap if the interesection is nonempty}, ~\cref{lem: two anchor sets have a non empty intersection in our sparse solution} and~\cref{cor: main corolarry lower bounding the interesection of the neighborhoods between two anchor set nodes in the sparse solution} pave the road to the main~\cref{thm: C is dense} of the current section by arguing that for any two nodes $u, v \in \Canchor$ at time $t \geq \max \{t_u, t_v \} $ their neighborhood in the sparse solution is very similar, i.e., $\Nt{t}{u} \simeq \Nt{t}{v} $. To that end~\cref{lem: anchor set nodes have a high overlap if the interesection is nonempty} proves that if $\Nt{t}{u} \cap  \Nt{t}{v} \neq \emptyset$ then $\Nt{t}{u} \simeq \Nt{t}{v}$ ,~\cref{lem: two anchor sets have a non empty intersection in our sparse solution} continues arguing that $\Nt{t}{u} \cap  \Nt{t}{v}$ is always non-empty and~\cref{cor: main corolarry lower bounding the interesection of the neighborhoods between two anchor set nodes in the sparse solution} concludes that indeed $\Nt{t}{u} \simeq \Nt{t}{v} $.

Before proceeding to~\cref{lem: anchor set nodes have a high overlap if the interesection is nonempty} we state a useful set inequality.
\begin{ineq}\label{ineq: set inequality}
     Let $A, B, C$ and $D$ be four sets, then:
     \begin{align*}
             |A \cap B| \geq |C \cap D| - |C \setminus A| - |D \setminus B|
     \end{align*}
 \end{ineq}
\begin{proof}
     Let $s \in C \cap D$. It is enough to prove that the latter implies: $s \in (A \cap B) \cup (C \setminus A) \cup (D \setminus B) $. We consider the following cases:
\begin{enumerate}
    \item If $s$ is in $A \cap B$, the claim is satisfied.
    \item If $s$ is not in $A \cap B$, then either $s$ is not in set $A$ or $s$ is not in set $B$.
    \begin{enumerate}
        \item If $s$ is not in $A$, then $s$ must be in $C \setminus A$.
        \item If $s$ is not in $B$, then $s$ must be in $D \setminus B$.
    \end{enumerate}
\end{enumerate}
\end{proof}

\begin{lem}\label{lem: anchor set nodes have a high overlap if the interesection is nonempty}
    Let $u, v \in \Canchor$ and $t' \geq \max \{ t_u, t_v\}$  such that $\Nt{t'}{u} \cap \Nt{t'}{v}$ is non-empty. Then $\mathcal{R}$ implies that: $|\Nt{t'}{u} \cap \Nt{t'}{v}| \geq (1 -  80 \epsilon)  \max \{ |\Nt{t'}{u}|, |\Nt{t'}{v}| \}$. 
\end{lem}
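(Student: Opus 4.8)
The plan is to pivot through a common sparse–neighbour of $u$ and $v$. Fix $w\in\Nt{t'}{u}\cap\Nt{t'}{v}$, which exists by hypothesis; the edges $(u,w)$ and $(v,w)$ then lie in $\widetilde E_{t'}$, and the invariants enforced by the $Clean$/$Anchor$/$Connect$ procedures guarantee that at time $t'$ the node $u$ is in $\epsilon$-agreement with $w$, the node $v$ is in $\epsilon$-agreement with $w$, and $u,v$ (being in $\Canchor$) are $\epsilon$-heavy. I would then invoke \cref{ineq: set inequality} with $A=\Nt{t'}{u}$, $B=\Nt{t'}{v}$, $C=\N{t'}{u}$, $D=\N{t'}{v}$ to obtain
\[
|\Nt{t'}{u}\cap\Nt{t'}{v}|\ \ge\ |\N{t'}{u}\cap\N{t'}{v}|\ -\ |\N{t'}{u}\setminus\Nt{t'}{u}|\ -\ |\N{t'}{v}\setminus\Nt{t'}{v}|,
\]
so it suffices to lower bound the first term and upper bound the last two.

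For the first term I would use approximate transitivity of $\epsilon$-agreement. Since $u$ and $v$ are each in $\epsilon$-agreement with $w$ at time $t'$, the triangle inequality for the symmetric difference gives $|\N{t'}{u}\triangle\N{t'}{v}|\le|\N{t'}{u}\triangle\N{t'}{w}|+|\N{t'}{w}\triangle\N{t'}{v}|$, and since $\epsilon$-agreement forces $|\N{t'}{u}|,|\N{t'}{w}|,|\N{t'}{v}|$ to coincide up to a $(1\pm O(\epsilon))$ factor, the right-hand side is at most $O(\epsilon)\max\{|\N{t'}{u}|,|\N{t'}{v}|\}$ (concretely at most $4\epsilon$ times this maximum). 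Hence $|\N{t'}{u}\cap\N{t'}{v}|\ge(1-O(\epsilon))\max\{|\N{t'}{u}|,|\N{t'}{v}|\}$.

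For the last two terms I would show that the sparse neighbourhood of an anchor is almost all of its true neighbourhood. I do this for $u$ (and symmetrically for $v$) via the telescoping inclusion
\[
\N{t'}{u}\setminus\Nt{t'}{u}\ \subseteq\ \big(\N{t'}{u}\setminus\N{t_u}{u}\big)\cup\big(\N{t_u}{u}\setminus\Nt{t_u}{u}\big)\cup\big(\Nt{t_u}{u}\setminus\Nt{t'}{u}\big).
\]
The three pieces are controlled respectively by \ref{cor: Nt remains the same p2} (whose hypothesis is met because $u$ is in $\epsilon$-agreement with $w$ at the time $t'\ge t_u$ and $u$ is $\epsilon$-heavy), by \ref{lem: Nttu is stable p1} together with \ref{lem: Nttu is stable p2}, and by \ref{lem: Nttu is stable p4}; moreover \cref{lem: anchor set nodes do not change their neighborhood too much}, \ref{lem: Nttu is stable p3}, \ref{lem: Nttu is stable p7} and \ref{lem: Nttu is stable p8} show that $|\N{t'}{u}|,|\N{t_u}{u}|,|\Nt{t_u}{u}|,|\Nt{t'}{u}|$ are all within a $(1\pm O(\epsilon))$ factor of one another, so after converting sizes I obtain $|\N{t'}{u}\setminus\Nt{t'}{u}|\le c\,\epsilon\,|\N{t'}{u}|$ for an explicit small constant $c$ (around $10$).

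Substituting the three bounds into the displayed inequality gives
\[
|\Nt{t'}{u}\cap\Nt{t'}{v}|\ \ge\ (1-O(\epsilon))\max\{|\N{t'}{u}|,|\N{t'}{v}|\}-c\epsilon|\N{t'}{u}|-c\epsilon|\N{t'}{v}|\ \ge\ (1-O(\epsilon))\max\{|\N{t'}{u}|,|\N{t'}{v}|\},
\]
and since $\Nt{t'}{u}\subseteq\N{t'}{u}$ and $\Nt{t'}{v}\subseteq\N{t'}{v}$ we have $\max\{|\N{t'}{u}|,|\N{t'}{v}|\}\ge\max\{|\Nt{t'}{u}|,|\Nt{t'}{v}|\}$; adding up the constants and taking $\epsilon$ small enough yields the claimed $|\Nt{t'}{u}\cap\Nt{t'}{v}|\ge(1-80\epsilon)\max\{|\Nt{t'}{u}|,|\Nt{t'}{v}|\}$. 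I expect the only real subtlety — and hence the step to be careful with — to be the third paragraph: passing from $\Nt{t'}{u}$ back to $\N{t'}{u}$ must route through the intermediate time $t_u$, and it depends on feeding the correct ``$\epsilon$-agreement $+$ $\epsilon$-heaviness'' hypothesis (extracted from the edge $(u,w)\in\widetilde E_{t'}$) into \cref{lem: anchor set nodes do not change their neighborhood too much} and \cref{cor: anchor set nodes do not change their neighborhood too much}; the agreement-transitivity step and the final arithmetic are routine.
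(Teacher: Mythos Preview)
There is a genuine gap in the first move of your argument. You assert that because $(u,w),(v,w)\in\widetilde E_{t'}$, the algorithm's invariants force $u$ and $w$ (and $v$ and $w$) to be in $\epsilon$-agreement \emph{at time $t'$}, and $u,v$ to be $\epsilon$-heavy at time $t'$. Neither is an invariant of the algorithm. Agreement and heaviness are only tested when a node participates in an interesting event: an edge is inserted into $\widetilde G$ at the moment the relevant $Anchor$/$Connect$ step runs (when the agreement check passes), and it is removed by $Clean$ only the next time one of its endpoints is in $\mathcal I_t$. Between such events the true neighbourhoods drift while the sparse edge persists untouched. In particular, after $t_u$ (resp.\ $t_v$) no $Clean$ is applied to $u$ (resp.\ $v$) at all, so an edge $(u,w)$ present in $\widetilde G_{t'}$ tells you only that agreement held at some earlier time --- either at $t_u$, or at the time $w$ ran $Anchor$/$Connect$ and created the edge --- not at $t'$. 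Likewise, $u\in\Canchor$ records heaviness at $t_u$, not at $t'$. Consequently the hypothesis you need to feed into \cref{lem: anchor set nodes do not change their neighborhood too much} and \cref{cor: anchor set nodes do not change their neighborhood too much} (``at time $t'$, $u$ is in $\epsilon$-agreement with some node and one of them is $\epsilon$-heavy'') is not available, and the bound $|\N{t'}{u}\setminus\N{t_u}{u}|\le 7\epsilon|\N{t_u}{u}|$ does not follow.

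This is precisely why the paper's proof is more elaborate: it goes back to the \emph{minimum} time $t\ge\max\{t_u,t_v\}$ at which $\Nt{t}{u}\cap\Nt{t}{v}$ becomes nonempty, because at that instant the agreements were actually verified by the procedure that inserted the edge(s). Two scenarios arise --- either a single $w$ is simultaneously connected to both $u$ and $v$ at time $t$ (so both agreements hold at $t$), or $v$ connects at $t=t_v$ to a $w$ that was already linked to $u$ since some earlier $t''=\max\{t_u,t_w\}$ (so the $u$--$w$ agreement is only known at $t''$ and one must show $\N{t''}{w}\simeq\N{t}{w}$). From there the stability lemmas are applied at the times where their hypotheses are legitimately available, and the result is transported to $t'$. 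Your outline collapses this time-tracking into a single step that the algorithm does not justify; to make your approach work you would have to reintroduce essentially the same back-in-time argument the paper carries out.
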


\begin{proof}
W.l.o.g. we assume that $t_v \geq t_u $, i.e., $v$ entered the anchor set after $u$. Let $t \geq t_v$ be the minimum time such that $\Nt{t}{u} \cap \Nt{t}{v} \neq \emptyset$. We distinguish between the following two scenarios
\begin{enumerate}
    \item At time $t$ a node $w$ participates in an ``interesting'' event and gets connected to both nodes $u$ and $v$; or
    \item At time $t$, $v$ participates in an ``interesting event'' and gets connected to a node $w$ which was already connected to node $u$. Note that in this case we have $t = t_v$.
\end{enumerate}

We prove that in both scenarios $u$ and $v$ have a very large neighborhood overlap in our sparse solution. 

\paragraph{Case 1} We have that $u$ and $v$ are both in $\epsilon$-agreement with $w$ and that either $w$ is $\epsilon$-heavy or both $u$ and $v$ are $\epsilon$-heavy. From the latter observation and using~\ref{property: N_G(u) cap N_G(v) >= (1 - 5 epsilon) max(N_G(u), N_G(v)} we have that:
\begin{align}\label{eq: of the current proof}
    |\N{t}{u} \cap \N{t}{v}| \geq (1 -  5 \epsilon) \max \{ |\N{t}{u}|, |\N{t}{v}| \}
\end{align}
that is $\N{t}{u} \simeq \N{t}{v}$.

We proceed arguing that $\N{t_u}{u} \simeq \N{t_v}{v}$.

\begin{align*}
|\N{t_u}{u} \cap \N{t_v}{v}|
    &\geq |\N{t}{u} \cap \N{t}{v}| - |\N{t}{u} \setminus \N{t_u}{u}| - |\N{t}{v} \setminus \N{t_v}{v}|\\
    &\geq |\N{t}{u} \cap \N{t}{v}| - 7 \epsilon |\N{t_u}{u}| - 7 \epsilon |\N{t_v}{v}|\\
    &\geq (1 -  5 \epsilon) \max \{ |\N{t}{u}|, |\N{t}{v}| \} - 7 \epsilon |\N{t_u}{u}| - 7 \epsilon |\N{t_v}{v}|\\
    &\geq (1 -  5 \epsilon) (1 - \epsilon/10^{5}) \max \{ |\N{t_u}{u}|, |\N{t_v}{v}| \} - 7 \epsilon |\N{t_u}{u}| - 7 \epsilon |\N{t_v}{v}|\\
    &\geq (1 -  20 \epsilon) \max \{ |\N{t_u}{u}|, |\N{t_v}{v}| \}\\
\end{align*}
Where in the first inequality we use~\cref{ineq: set inequality}, in the second we use~\ref{cor: Nt remains the same p2}, in the third we use~\cref{eq: of the current proof} of the current proof, in the fourth~\cref{lem: anchor set nodes do not change their neighborhood too much} and the last one holds for $\epsilon$ small enough.

In the same manner we proceed arguing that $\Nt{t_u}{u} \simeq \Nt{t_v}{v}$ by lower bounding $|\Nt{t_u}{u} \cap \Nt{t_v}{v}|$ as follows:
\begin{align*}
|\Nt{t_u}{u} \cap \Nt{t_v}{v}|
    &\geq |\N{t_u}{u} \cap \N{t_v}{v}| - |\N{t_u}{u} \setminus \Nt{t_u}{u}| - |\N{t_v}{v} \setminus \Nt{t_v}{v}|\\
    &\geq|\N{t_u}{u} \cap \N{t_v}{v}|- \epsilon |\N{t_u}{u}| - \epsilon |\N{t_v}{v}|\\
    &\geq|\N{t_u}{u} \cap \N{t_v}{v}| - \epsilon/ (1 - \epsilon) |\Nt{t_u}{u}| - \epsilon/ (1 - \epsilon) |\Nt{t_v}{v}|\\
    &\geq (1 -  20 \epsilon) \max \{ |\N{t_u}{u}|, |\N{t_v}{v}| \} - \epsilon/ (1 - \epsilon) |\Nt{t_u}{u}| - \epsilon/ (1 - \epsilon) |\Nt{t_v}{v}|\\
    &\geq (1 -  20 \epsilon) \max \{ |\N{t_u}{u}|, |\N{t_v}{v}| \} - 3 \epsilon |\Nt{t_u}{u}| - 3 \epsilon |\Nt{t_v}{v}|\\
    &\geq (1 -  20 \epsilon) (1 - \epsilon) \max \{ |\Nt{t_u}{u}|, |\Nt{t_v}{v}| \} - 3 \epsilon |\Nt{t_u}{u}| - 3 \epsilon |\Nt{t_v}{v}|\\
    &\geq (1 -  30 \epsilon) \max \{ |\Nt{t_u}{u}|, |\Nt{t_v}{v}| \} \\
\end{align*}

Where in the first inequality we use~\cref{ineq: set inequality}, in the second inequality we use~\ref{lem: Nttu is stable p2}, in the third inequality we use~\ref{lem: Nttu is stable p3}, in the fourth one we used the lower bound on $|\N{t_u}{u} \cap \N{t_v}{v}|$ that we just proved in the current lemma, in the fifth and the seventh holds for $\epsilon$ small enough and, in the sixth inequality we use again~\ref{lem: Nttu is stable p3}, and the fifth and seventh inequality hold for $\epsilon$ small enough.

We can now finish the first case by lower bounding the $|\Nt{t'}{u} \cap \Nt{t'}{v}|$.
\begin{align*}
|\Nt{t'}{u} \cap \Nt{t'}{v}|
    &\geq |\Nt{t_u}{u} \cap \Nt{t_v}{v}| - |\Nt{t_u}{u} \setminus \Nt{t'}{u}| - |\Nt{t_v}{v} \setminus \Nt{t'}{v}|\\
    &\geq |\Nt{t_u}{u} \cap \Nt{t_v}{v}| - \epsilon |\Nt{t_u}{u}| -  \epsilon |\Nt{t_v}{v}|\\
    &\geq (1 -  30 \epsilon) \max \{ |\Nt{t_u}{u}|, |\Nt{t_v}{v}| \} - \epsilon |\Nt{t_u}{u}| -  \epsilon |\Nt{t_v}{v}||\\
    &\geq \frac{1 -  30 \epsilon}{1 + 12 \epsilon} \max \{ |\Nt{t'}{u}|, |\Nt{t'}{v}| \} - \frac{\epsilon}{1 - \epsilon} |\Nt{t'}{u}| -  \frac{\epsilon}{1 - \epsilon} |\Nt{t'}{v}|\\
    &\geq (1 -  60 \epsilon)  \max \{ |\Nt{t'}{u}|, |\Nt{t'}{v}| \}\\
\end{align*}
Where in the first inequality we use~\cref{ineq: set inequality}, in the second we use~\ref{lem: Nttu is stable p4}, in the third we use the lower bound on $\Nt{t_u}{u} \cap \Nt{t_v}{v}|$ that we proved in the current lemma, in the fourth inequality we use both~\ref{lem: Nttu is stable p7} and~\ref{lem: Nttu is stable p8}, and the last inequality holds for $\epsilon$ small enough.

\paragraph{Case 2} 
We now turn our attention to the second case.
In that case w.l.o.g. we assume that $u$ was connected to $w$ in our sparse solution before time $t$ and that at time $t$, $v$ was inserted in the anchor set and got connected with node $w$ with which they are in $\epsilon$-agreement at time $t$. Consequently, $|\N{t}{v} \cap \N{t}{w}| \geq (1 - \epsilon) \max \{ |\N{t}{v}|, |\N{t}{w}|\} $.

We can also assume that $w$ does not participate in an ``interesting event'' at time $t$ since this situation is already covered by the first case.

Let $t'' = \max \{t_u, t_w\}$ and note that at that time $u$ and $w$ were in $\epsilon$-agreement and one of them is $\epsilon$-heavy, thus we have that $|\N{t''}{u} \cap \N{t''}{w}| \geq (1 - \epsilon) \max \{ |\N{t''}{u}|, |\N{t''}{w}| \} $

The goal is to argue that $\Nt{t}{u} \approx \Nt{t}{v}$, and after that use~\cref{lem: anchor set nodes do not change their neighborhood too much properties} to deduce that $\Nt{t'}{u} \approx \Nt{t'}{v}$.

Towards this goal, we initially prove that $\N{t''}{w} \approx \N{t}{w}$. Note that in both times, $t$ and $t''$ node $w$ was in $\epsilon$-agreement with another node and got connected to it.
We distinguish between two cases:
\begin{enumerate}[label=(\alph*)]
    \item $t'' = t_u > t_w $; and
    \item $t '' = t_w$
\end{enumerate}
In (a) we prove that $\N{t''}{w} \simeq \N{t_w}{w}$ and $\N{t}{w} \simeq \N{t_w}{w}$ to conclude that $\N{t''}{w} \approx \N{t}{w}$.
\begin{align*}
    |\N{t''}{w} \cap \N{t_w}{w}| 
    &\geq \max \{ |\N{t''}{w}|, |\N{t_w}{w}|\} - |\N{t_w}{w} \setminus \N{t''}{w}| - |\N{t''}{w} \setminus \N{t_w}{w}|\\
    &\geq \max \{ |\N{t''}{w}|, |\N{t_w}{w}|\} - (7 \epsilon + \epsilon/10^{4}) \max \{ |\N{t''}{w}|, |\N{t_w}{w}|\}\\
    &\geq (1 - 8 \epsilon) \max \{ |\N{t''}{w}|, |\N{t_w}{w}|\}\\
\end{align*}
Where in the second inequality we use~\ref{cor: Nt remains the same p1} and~\ref{cor: Nt remains the same p2}, and the third inequality holds for $\epsilon$ small enough.

Similarly, we have 
\begin{align*}
    |\N{t}{w} \cap \N{t_w}{w}| 
    &\geq \max \{ |\N{t}{w}|, |\N{t_w}{w}|\} - |\N{t_w}{w} \setminus \N{t}{w}| - |\N{t}{w} \setminus \N{t_w}{w}|\\
    &\geq \max \{ |\N{t}{w}|, |\N{t_w}{w}|\} - (7 \epsilon + \epsilon/10^{4}) \max \{ |\N{t''}{w}|, |\N{t_w}{w}|\}\\
    &\geq (1 - 8 \epsilon) \max \{ |\N{t}{w}|, |\N{t_w}{w}|\}\\
\end{align*}

We now argue that $\N{t''}{w} \simeq \N{t}{w}$ as follows:
\begin{align*}
    |\N{t}{w} \cap \N{t''}{w}| 
    &\geq |\N{t_w}{w}| - |\N{t_w}{w} \setminus \N{t}{w}| - |\N{t_w}{w} \setminus \N{t''}{w}|\\
    &\geq (1 - 8 \epsilon)\max \{ |\N{t}{w}|, |\N{t''}{w}|\} - (8 \epsilon + 8 \epsilon) \max \{ |\N{t}{w}|, |\N{t''}{w}|\}\\
    &\geq (1 - 24 \epsilon) \max \{ |\N{t}{w}|, |\N{t''}{w}|\}\\
\end{align*}

Case (b) is simpler and we can immediately use~\ref{cor: Nt remains the same p1} and~\ref{cor: Nt remains the same p2} to argue that $\N{t''}{w} \simeq \N{t}{w}$, as follows:
\begin{align*}
    |\N{t}{w} \cap \N{t''}{w}| 
    &\geq \max \{ |\N{t}{w}|, |\N{t''}{w}|\} - |\N{t}{w} \setminus \N{t''}{w}| - |\N{t''}{w} \setminus \N{t}{w}|\\
     &\geq \max \{ |\N{t}{w}|, |\N{t''}{w}|\} -  (7 \epsilon + \epsilon/10^{4}) \max \{ |\N{t''}{w}|, |\N{t_w}{w}|\}\\
    &\geq (1 - 8 \epsilon) \max \{ |\N{t}{w}|, |\N{t''}{w}|\}
\end{align*}

Where the third inequality holds for $\epsilon$ small enough.

Thus, in both cases (a) and (b) we have that:

\begin{align*}
    |\N{t}{w} \cap \N{t''}{w}| \geq (1 - 24 \epsilon) \max \{ |\N{t}{w}|, |\N{t''}{w}|\}
\end{align*}

We continue arguing that $\N{t''}{u} \approx \N{t}{v}$.
Indeed we have 

\begin{align*}
|\N{t}{v} \cap \N{t''}{u}|
    &\geq |\N{t}{w} \cap \N{t''}{w} | - |\N{t}{w} \setminus \N{t}{v}| - |\N{t''}{w} \setminus \N{t''}{u}|\\
    &\geq (1 - 24 \epsilon) \max \{ |\N{t}{w}|, |\N{t''}{w}|\} - \epsilon |\N{t}{w}| - \epsilon |\N{t''}{w}|\\
    &\geq (1 - 26 \epsilon) \max \{ |\N{t}{w}|, |\N{t''}{w}|\}\\
    &\geq \frac{1 - 26 \epsilon}{1 - \epsilon} \max \{ |\N{t}{v}|, |\N{t''}{u}|\}\\
    &\geq (1 - 27 \epsilon) \max \{ |\N{t}{v}|, |\N{t''}{u}|\}\\
\end{align*}

Where in the second inequality we used the lower bound on $|\N{t}{w} \cap \N{t''}{w}| $ that we proved in the current lemma and the fact that at times $t$ and $t''$, $w$ is in $\epsilon$-agreement with nodes $v$ and $u$ respectively and got connected to them in our sparse solution. In the fourth inequality we again used the latter fact and the last inequality holds for $\epsilon$ small enough.

We now argue why $\N{t_u}{u} \approx \N{t_v}{v}$.  Note that $t = t_v$ and that $u$ does not participate in an ``interesting event'' after $t''$.
\begin{align*}
|\N{t_v}{v} \cap \N{t_u}{u}|
    &\geq |\N{t}{v} \cap \N{t''}{u} |  - |\N{t}{v} \setminus \N{t_v}{v}| - |\N{t''}{u} \setminus \N{t_u}{u}|\\
    &= |\N{t}{v} \cap \N{t''}{u} |  - |\N{t''}{u} \setminus \N{t_u}{u}|\\
    &\geq |\N{t}{v} \cap \N{t''}{u} |  - 6 \epsilon |\N{t''}{u}|\\
    &\geq (1 - 6 \epsilon)|\N{t}{v} \cap \N{t''}{u} |\\
    &\geq (1 - 6 \epsilon)(1 - 27 \epsilon) \max \{ |\N{t}{v}|, |\N{t''}{u}|\}\\
    &\geq (1 - \epsilon/10^{5}) (1 - 6 \epsilon)(1 - 27 \epsilon) \max \{ |\N{t}{v}|, |\N{t_u}{u}|\}\\
    &= (1 - \epsilon/10^{5}) (1 - 6 \epsilon)(1 - 27 \epsilon) \max \{ |\N{t_v}{v}|, |\N{t_u}{u}|\}\\
    &\geq (1 -  40 \epsilon) \max \{ |\N{t_v}{v}|, |\N{t_u}{u}|\}
\end{align*}
where in the first inequality we use~\cref{eq: of the current proof},
the second inequality holds since either $t'' > t_u$ and we use~\ref{cor: Nt remains the same p2} or $t'' = t_u$ and in that case $|\N{t''}{u} \setminus \N{t_u}{u}| = 0$, in the fourth inequality we use the lower bound on $|\N{t}{v} \cap \N{t''}{u} |$ that we proved in the current lemma, in the fifth inequality we use~\cref{lem: anchor set nodes do not change their neighborhood too much}, and the last inequality holds for $\epsilon$ small enough.

Using what we just proved, i.e., $|\N{t_v}{v} \cap \N{t_u}{u}| \geq (1 -  40 \epsilon) \max \{ |\N{t_v}{v}|, |\N{t_u}{u}|\} $, we now lower bound $|\Nt{t_u}{u} \cap \Nt{t_v}{v}|$ using the same arguments as in the first case.

\begin{align*}
|\Nt{t_u}{u} \cap \Nt{t_v}{v}|
    &\geq |\N{t_u}{u} \cap \N{t_v}{v}| - |\N{t_u}{u} \setminus \Nt{t_u}{u}| - |\N{t_v}{v} \setminus \Nt{t_v}{v}|\\
    &\geq|\N{t_u}{u} \cap \N{t_v}{v}|- \epsilon |\N{t_u}{u}| - \epsilon |\N{t_v}{v}|\\
    &\geq|\N{t_u}{u} \cap \N{t_v}{v}| - \epsilon/ (1 - \epsilon) |\Nt{t_u}{u}| - \epsilon/ (1 - \epsilon) |\Nt{t_v}{v}|\\
    &\geq (1 -  40 \epsilon) \max \{ |\N{t_u}{u}|, |\N{t_v}{v}| \} - \epsilon/ (1 - \epsilon) |\Nt{t_u}{u}| - \epsilon/ (1 - \epsilon) |\Nt{t_v}{v}|\\
    &\geq (1 -  40 \epsilon) \max \{ |\N{t_u}{u}|, |\N{t_v}{v}| \} - 3 \epsilon |\Nt{t_u}{u}| - 3 \epsilon |\Nt{t_v}{v}|\\
    &\geq (1 -  40 \epsilon) (1 - \epsilon) \max \{ |\Nt{t_u}{u}|, |\Nt{t_v}{v}| \} - 3 \epsilon |\Nt{t_u}{u}| - 3 \epsilon |\Nt{t_v}{v}|\\
    &\geq (1 -  50 \epsilon) \max \{ |\Nt{t_u}{u}|, |\Nt{t_v}{v}| \} \\
\end{align*}

And conclude the proof of the second case, in a similar manner to the first case, that is, proving $\Nt{t}{u} \approx \Nt{t}{v}$.

\begin{align*}
|\Nt{t'}{u} \cap \Nt{t'}{v}|
    &\geq |\Nt{t_u}{u} \cap \Nt{t_v}{v}| - |\Nt{t'}{u} \setminus \Nt{t_u}{u}| - |\Nt{t'}{v} \setminus \Nt{t_v}{v}|\\
    &\geq |\Nt{t_u}{u} \cap \Nt{t_v}{v}| - 12 \epsilon |\Nt{t_u}{u}| - 12 \epsilon |\Nt{t_v}{v}|\\
    &\geq (1 -  50 \epsilon) \max \{ |\Nt{t_u}{u}|, |\Nt{t_v}{v}| \} - 12 \epsilon |\Nt{t_u}{u}| - 12 \epsilon |\Nt{t_v}{v}|\\
    &\geq (1 -  74 \epsilon) / (1 + 12 \epsilon) \max \{ |\Nt{t'}{u}|, |\Nt{t'}{v}| \}\\
    &\geq (1 -  80 \epsilon)  \max \{ |\Nt{t'}{u}|, |\Nt{t'}{v}| \}\\
\end{align*}
 Where in the second inequality we use~\ref{lem: Nttu is stable p6}, in the third inequality we use the lower bound on $\Nt{t_u}{u} \cap \Nt{t_v}{v}|$ that we proved in the current lemma, in the fourth inequality we use~\ref{lem: Nttu is stable p7} and the last inequality holds for $\epsilon$ small enough.

\end{proof}

\begin{lem}\label{lem: two anchor sets have a non empty intersection in our sparse solution}
    $\mathcal{R}$ implies that $\forall u, v \in \Canchor$: $\Nt{\tc}{u} \cap \Nt{\tc}{v} \neq \emptyset$.
\end{lem}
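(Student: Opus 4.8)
The plan is to walk along a path of $\widetilde{G}_\tc$ between $u$ and $v$, contract it to a sequence of anchor vertices, show that consecutive anchors on the path share a neighbor in $\widetilde{G}_\tc$, and then ``bridge'' along the path with \cref{lem: anchor set nodes have a high overlap if the interesection is nonempty}. Fix $u,v\in\Canchor$; we may assume $u\neq v$, so $|C|\geq 2$. Since $C$ is a connected component of $\widetilde{G}_\tc$, fix a simple path $P\colon u=p_0,\dots,p_k=v$ in $\widetilde{G}_\tc$. The algorithm maintains the invariant that every edge of $\widetilde{E}_\tc$ is incident to a node of $\Phi$, so the non-anchor vertices are independent in $\widetilde{G}_\tc$; hence along $P$ no two consecutive vertices are both non-anchor, and between two consecutive anchor vertices of $P$ there is at most one (necessarily non-anchor) vertex. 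Let $u=r_0,\dots,r_\ell=v$ be the anchor vertices of $P$ in order; since each $r_j$ lies in $\Phi$ at time $\tc$, we have $t_{r_j}\leq\tc$.

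\textbf{Step 1: consecutive anchors share a neighbor in $\widetilde{G}_\tc$.} I would show $\Nt{\tc}{r_j}\cap\Nt{\tc}{r_{j+1}}\neq\emptyset$ for every $j$. If a vertex $w$ of $P$ lies strictly between $r_j$ and $r_{j+1}$, then $w$ is non-anchor and $(r_j,w),(w,r_{j+1})\in\widetilde{E}_\tc$, so $w\in\Nt{\tc}{r_j}\cap\Nt{\tc}{r_{j+1}}$. If instead $r_j$ and $r_{j+1}$ are adjacent in $\widetilde{G}_\tc$, I would trace back to the time $t\leq\tc$ at which this edge was (last) inserted: by inspection of \textsc{Anchor} and \textsc{Connect}, at time $t$ the two endpoints are in $\epsilon$-agreement and at least one of them, say $r_j$, is $\epsilon$-heavy; the heavy endpoint is adjacent in $\widetilde{G}$ to a $(1-\epsilon)$-fraction of its neighbors around that time (it connected to all of its $\epsilon$-agreement neighbors when it entered $\Phi$, and by \cref{lem: anchor set nodes do not change their neighborhood too much properties} this is not undone by $\tc$), while $\epsilon$-agreement gives $|\N{t}{r_j}\cap\N{t}{r_{j+1}}|\geq(1-2\epsilon)\max\{|\N{t}{r_j}|,|\N{t}{r_{j+1}}|\}$ (also using the decomposition properties of \cref{sec: structural properties of the decomposition}). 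Pushing this forward to time $\tc$ with \cref{obs: Ntu can only increase with high degree nodes}, \cref{cor: anchor set nodes do not change their neighborhood too much}, \cref{lem: anchor set nodes do not change their neighborhood too much properties}, and the fact (argued as in \cref{thm: ui are heavy}) that \textsc{Clean} does not delete the edges of $\widetilde{G}$ incident to the anchors on $P$ before $\tc$, one obtains a common neighbor of $r_j$ and $r_{j+1}$ lying in $\Nt{\tc}{r_j}\cap\Nt{\tc}{r_{j+1}}$.

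\textbf{Step 2: bridging.} Suppose toward a contradiction that $\Nt{\tc}{r_0}\cap\Nt{\tc}{r_\ell}=\emptyset$, and let $j$ be the largest index with $\Nt{\tc}{r_0}\cap\Nt{\tc}{r_j}\neq\emptyset$ (well defined: $j=0$ works since $r_0$ has a neighbor in $C$, and $j<\ell$ by assumption). By Step 1, $\Nt{\tc}{r_j}\cap\Nt{\tc}{r_{j+1}}\neq\emptyset$, and since $\tc\geq\max\{t_{r_0},t_{r_j},t_{r_{j+1}}\}$ and $r_0,r_j,r_{j+1}\in\Canchor$, \cref{lem: anchor set nodes have a high overlap if the interesection is nonempty} gives $|\Nt{\tc}{r_0}\cap\Nt{\tc}{r_j}|\geq(1-80\epsilon)|\Nt{\tc}{r_j}|$ and $|\Nt{\tc}{r_{j+1}}\cap\Nt{\tc}{r_j}|\geq(1-80\epsilon)|\Nt{\tc}{r_j}|$. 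These are subsets of $\Nt{\tc}{r_j}$, each of size exceeding $|\Nt{\tc}{r_j}|/2$ for $\epsilon$ small enough ($160\epsilon<1$), hence they intersect, yielding $z\in\Nt{\tc}{r_0}\cap\Nt{\tc}{r_{j+1}}$ and contradicting the maximality of $j$. Therefore $\Nt{\tc}{u}\cap\Nt{\tc}{v}=\Nt{\tc}{r_0}\cap\Nt{\tc}{r_\ell}\neq\emptyset$.

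\textbf{Main obstacle.} The delicate step is the adjacent-anchor case of Step 1: converting ``agreement plus heaviness at the edge-insertion time $t$'', which is information about $G_t$, into a common neighbor present in $\widetilde{G}_\tc$ at the final time $\tc$. This requires simultaneously controlling how the true neighborhoods of $r_j,r_{j+1}$ (and of a candidate common neighbor) evolve on $(t,\tc]$ and verifying that the relevant edges of $\widetilde{G}$ are never removed by \textsc{Clean} in that window; once this bookkeeping is set up, the remaining estimates are routine applications of the lemmas already established in this section.
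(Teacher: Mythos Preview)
Your approach is essentially the paper's. The paper packages the same idea as a minimal-counterexample argument: it takes a closest pair $u,v\in\Canchor$ with $\Nt{\tc}{u}\cap\Nt{\tc}{v}=\emptyset$, asserts such a pair must have $d(u,v)>2$, uses the ``every edge of $\widetilde G_\tc$ touches $\Phi$'' invariant to force $d(u,v)\in\{3,4\}$, finds one intermediate anchor $w$ on the shortest path with $d(u,w)\le 2$ and $d(w,v)\le 2$, and then applies \cref{lem: anchor set nodes have a high overlap if the interesection is nonempty} once to $(u,w)$ and once to $(w,v)$ to obtain $|\Nt{\tc}{u}\cap\Nt{\tc}{v}|\ge(1-240\epsilon)\max\{|\Nt{\tc}{u}|,|\Nt{\tc}{v}|\}>0$. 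Your Step~2 is exactly this bridging, done inductively along the whole anchor subsequence instead of in one shot; both are correct.

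The case you flag as the ``main obstacle'' (two adjacent anchors in $\widetilde G_\tc$) is not spelled out in the paper either: the paper's opening sentence ``Suppose the lemma is not true, then $d(u,v)>2$'' already presumes that an adjacent anchor pair shares a $\widetilde G_\tc$-neighbor, and this is used again when applying the overlap lemma to the pair $(w,v)$ in the $d=3$ case. Your sketch for handling it is on the right track, but you are working harder than necessary. The cleanest way is to rerun the computation in the proof of \cref{lem: anchor set nodes have a high overlap if the interesection is nonempty} starting not from ``the first time $\Nt{t}{u}\cap\Nt{t}{v}\neq\emptyset$'' but from ``the time the edge $(u,v)$ was inserted'': at that moment the two nodes are in $\epsilon$-agreement with one of them $\epsilon$-heavy, which (via \cref{cor: anchor set nodes do not change their neighborhood too much} and \cref{lem: anchor set nodes do not change their neighborhood too much properties}) gives $\N{t_u}{u}\approx\N{t_v}{v}$ and then $\Nt{\tc}{u}\approx\Nt{\tc}{v}$ exactly as in Case~2 of that proof with the intermediary $w$ collapsed to $u$. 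No separate tracking of \textsc{Clean} deletions is needed beyond what those lemmas already encapsulate.
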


\begin{proof}
The proof follows very similar arguments to the~\cite{DBLP:conf/icml/Cohen-AddadLMNP21} that we repeat here for completeness.
Let $d(x, y)$ be the distance of two nodes $x, y$ in our sparse solution at time $\tc$.
Suppose the lemma is not true, then we have that $d(u, v) > 2$.
Towards a contradiction also assume $u, v$ to be the minimum distance nodes in $\Canchor$ such that $d(u, v) > 2$.
Note that for any edge in our sparse solution $\widetilde{G}_{\tc}$, one of its endpoints is in $\Canchor$. If $d(u, v) \geq 5$, let $P = \langle u, u', u'', \ldots, v \rangle$ be the shortest $u$-$v$ path in $\widetilde{G}_{\tc}$, thus either $u'$ or $u''$ must be in $\Canchor$, contradicting the minimality assumption regarding the distance between $u$ and $v$.
If $d(u, v) \leq 4$, then either $d(u, v) = 4$ or $d(u, v) = 3$. We end up in a contradiction in both cases:
\begin{enumerate}
    \item If  $d(u, v) = 4$ then $\exists v', w, u'$ such that $v' \in \Nt{\tc}{v} \cap \Nt{\tc}{w}$, $w \in \Canchor$ and $u' \in \Nt{\tc}{w} \cap \Nt{\tc}{u}$. 
    \item If  $d(u, v) = 3$ then $\exists w, u'$ such that $w \in \Nt{\tc}{v} \cap \Nt{\tc}{u'}$, $u' \in \Nt{\tc}{w} \cap \Nt{\tc}{u}$ and either $w$ or $u$ are in $\Canchor$. Assume that $w \in \Canchor$ (and the case where $u' \in \Canchor$ is similar)
\end{enumerate}
In both cases, from~\cref{lem: anchor set nodes have a high overlap if the interesection is nonempty} we have:
\begin{align*}
|\Nt{\tc}{u} \cap \Nt{\tc}{v}|
    &\geq |\Nt{\tc}{w}| - |\Nt{\tc}{w} \setminus \Nt{\tc}{u}| - |\Nt{\tc}{w} \setminus \Nt{\tc}{v}|\\
    &\geq |\Nt{\tc}{w}| - 80 \epsilon  |\Nt{\tc}{u}| - 80 \epsilon  |\Nt{\tc}{v}| \\
    &\geq (1 - 80 \epsilon) \max \{|\Nt{\tc}{u}|, |\Nt{\tc}{v}| \} - 80 \epsilon  |\Nt{\tc}{u}| - 80 \epsilon  |\Nt{\tc}{v}| \\
    &\geq (1 - 240 \epsilon) \max \{|\Nt{\tc}{u}|, |\Nt{\tc}{v}| \}\\
    &> 0
\end{align*}
Where the last inequality holds for $\epsilon$ small enough.
Thus, we ended in a contradiction.
\end{proof}

\begin{cor}\label{cor: main corolarry lower bounding the interesection of the neighborhoods between two anchor set nodes in the sparse solution}
    Let $u, v \in \Canchor$, $\mathcal{R}$ implies that $|\Nt{\tc}{u} \cap \Nt{\tc}{v}| \geq (1 -  80 \epsilon)  \max \{ |\Nt{\tc}{u}|, |\Nt{\tc}{v}| \}$.
\end{cor}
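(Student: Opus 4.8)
The plan is to obtain this corollary by simply chaining the two preceding lemmas, with one trivial observation about the time parameter. First I would note that for any node $u \in \Canchor$ the time $t_u$ — the last time before $\tc + 1$ that $u$ participated in an ``interesting event'' — satisfies $t_u \le \tc$, and hence for any pair $u, v \in \Canchor$ we have $\tc \ge \max\{t_u, t_v\}$. This is immediate from the definition of $t_u$ stated at the beginning of this section, and it is the only ingredient not already packaged in a lemma.

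Next I would invoke \cref{lem: two anchor sets have a non empty intersection in our sparse solution}, which asserts that under the event $\mathcal{R}$ we have $\Nt{\tc}{u} \cap \Nt{\tc}{v} \neq \emptyset$ for all $u, v \in \Canchor$. With nonemptiness established, I would apply \cref{lem: anchor set nodes have a high overlap if the interesection is nonempty} with $t' = \tc$: its three hypotheses — that $u, v \in \Canchor$, that $t' \ge \max\{t_u, t_v\}$ (verified in the first step), and that $\Nt{t'}{u} \cap \Nt{t'}{v}$ is nonempty (verified in the second step) — are all met. Hence $\mathcal{R}$ implies $|\Nt{\tc}{u} \cap \Nt{\tc}{v}| \ge (1 - 80\epsilon)\,\max\{|\Nt{\tc}{u}|, |\Nt{\tc}{v}|\}$, which is exactly the statement to be proved.

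I do not anticipate any genuine obstacle here, since all the analytic work has already been carried out in the two lemmas; the corollary is essentially a bookkeeping step that records their combination in the form most convenient for the main \cref{thm: C is dense}. The only points worth flagging explicitly are: (i) that \cref{lem: two anchor sets have a non empty intersection in our sparse solution} already relies internally on \cref{lem: anchor set nodes have a high overlap if the interesection is nonempty}, so the two results are truly needed in tandem; and (ii) that the constant $(1 - 80\epsilon)$ asserted here is the sharper bound coming from \cref{lem: anchor set nodes have a high overlap if the interesection is nonempty} applied with the now-guaranteed nonempty intersection, not the looser $(1 - 240\epsilon)$ estimate that appears only as an intermediate contradiction device inside the proof of the nonemptiness lemma.
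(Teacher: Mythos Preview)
Your proposal is correct and matches the paper's own proof, which simply states that the corollary is immediate from \cref{lem: anchor set nodes have a high overlap if the interesection is nonempty} and \cref{lem: two anchor sets have a non empty intersection in our sparse solution}. Your extra remark verifying $\tc \ge \max\{t_u,t_v\}$ is a harmless explicit check of a hypothesis the paper leaves implicit.
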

\begin{proof}
    It is immediate from~\cref{lem: anchor set nodes have a high overlap if the interesection is nonempty} and~\cref{lem: two anchor sets have a non empty intersection in our sparse solution}.
\end{proof}

Before proceeding to the last theorems of this section we state some useful set inequalities.
\begin{obs}\label{lem: min implies max implies min set inequalities}
     For sets $A, B$, $C$ and positive reals $\alpha, \beta$: 
    \begin{enumerate}
    \item If $|A \triangle B| \leq \epsilon \max \{|A|, |B| \}$ then $|A \cap B| \geq (1 - \epsilon) \max \{|A|, |B| \}$.
    \item If $|A \cap B| \geq (1 - \epsilon) \max \{|A|, |B| \}$ then for $\epsilon$ small enough:\\
    $$|A \triangle B| \leq 2 \epsilon \max \{|A|, |B| \} \leq \frac{2 \epsilon}{1 - \epsilon} \min \{|A|, |B| \} \leq 3 \epsilon \min \{|A|, |B| \}$$
    \item If $|A \triangle B| \leq \epsilon \max \{|A|, |B| \}$ then for $\epsilon$ small enough $|A \triangle B| \leq 3 \epsilon \min \{|A|, |B| \}$
    \item $|A \triangle C| \leq |A \triangle B| + |B \triangle C|$.
    \item $\alpha \min \{|A|, |B| \} + \beta \min \{|B|, |C| \} \leq (\alpha + \beta ) \max \{ |A|, |C|\}$
\end{enumerate}
\end{obs}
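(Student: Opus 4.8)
The plan is to prove each of the five parts by elementary counting, relying throughout on the two identities $|A\cup B| = |A| + |B| - |A\cap B|$ and $|A\triangle B| = |A| + |B| - 2|A\cap B|$, together with $\min\{|A|,|B|\} + \max\{|A|,|B|\} = |A| + |B|$. For (1) I would assume without loss of generality that $|A|\ge|B|$, so that $\max\{|A|,|B|\}=|A|$; since $A\setminus B\subseteq A\triangle B$ we get $|A\setminus B|\le\epsilon|A|$, hence $|A\cap B| = |A|-|A\setminus B| \ge (1-\epsilon)|A|$. For the first inequality of (2), I substitute $|A|+|B| = \min\{|A|,|B|\} + \max\{|A|,|B|\}$ into $|A\triangle B| = |A|+|B|-2|A\cap B|$ and use the hypothesis to obtain $|A\triangle B| \le \min\{|A|,|B|\} - (1-2\epsilon)\max\{|A|,|B|\} \le 2\epsilon\max\{|A|,|B|\}$, the last step using $\min\le\max$ and $\epsilon<1/2$. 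For the second inequality of (2), note $\min\{|A|,|B|\} \ge |A\cap B| \ge (1-\epsilon)\max\{|A|,|B|\}$, so $\max\{|A|,|B|\} \le \frac{1}{1-\epsilon}\min\{|A|,|B|\}$, which upgrades $2\epsilon\max$ to $\frac{2\epsilon}{1-\epsilon}\min$; the final bound $\frac{2\epsilon}{1-\epsilon}\le 3\epsilon$ is exactly $\epsilon\le 1/3$.

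For (3) I would first invoke (1) to get $|A\cap B|\ge(1-\epsilon)\max\{|A|,|B|\}$, which as above yields $\max\{|A|,|B|\}\le\frac{1}{1-\epsilon}\min\{|A|,|B|\}$; then $|A\triangle B|\le\epsilon\max\{|A|,|B|\}\le\frac{\epsilon}{1-\epsilon}\min\{|A|,|B|\}\le 3\epsilon\min\{|A|,|B|\}$ once $\epsilon$ is small (here $\epsilon\le 2/3$ suffices). For (4) I would establish the set containment $A\triangle C\subseteq(A\triangle B)\cup(B\triangle C)$: any $x\in A\triangle C$ lies in exactly one of $A,C$, and testing whether $x\in B$ places $x$ in $A\triangle B$ or in $B\triangle C$ in either case; taking cardinalities and using $|X\cup Y|\le|X|+|Y|$ finishes it. For (5) the key observation is that $\min\{|A|,|B|\}\le|A|\le\max\{|A|,|C|\}$ and, symmetrically, $\min\{|B|,|C|\}\le|C|\le\max\{|A|,|C|\}$; multiplying these by $\alpha$ and $\beta$ respectively and adding gives the stated bound.

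There is no genuine difficulty here, since every step is a one-line manipulation of cardinalities; the only things to keep track of are the precise thresholds hidden behind ``$\epsilon$ small enough'' — namely $\epsilon\le 1/3$ in parts (2) and (3) — and being consistent about which of $|A|,|B|,|C|$ is taken to be largest in the ``without loss of generality'' steps. If anything, the part most prone to a slip is (4), where one must remember that the containment, not an equality, is what holds, and that no ``$\epsilon$ small enough'' hypothesis is needed there at all.
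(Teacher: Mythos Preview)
Your proof is correct in every part, and the reasoning is exactly the kind of elementary cardinality manipulation one expects for such an observation. The paper itself states this result as an observation without proof, so there is no authors' argument to compare against; your write-up would serve perfectly well as the omitted justification. One very minor remark: in part~(2) the condition $\epsilon<1/2$ is not actually needed for the first inequality, since $\min-(1-2\epsilon)\max\le\max-(1-2\epsilon)\max=2\epsilon\max$ follows from $\min\le\max$ alone; the binding constraint, as you correctly identify, is $\epsilon\le 1/3$ for the final step.
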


\newcommand{\Nh}[2]{N_{\widehat{G}_{#1}}(#2) } 

We now proceed proving the main~\cref{thm: C is dense} of this section. Note that on~\cref{thm: C is dense} we also argue that $\forall v \in C$ their degree when they last participated in an ``interesting event'' is lower bounded by the $|C|/2$. The latter property will be crucial in~\cref{sec: Runtime analysis}.

\begin{thm}\label{thm: C is dense}
$\mathcal{R}$ implies that cluster $C$ is dense and $\forall v \in C$:
    $$
    |\N{\tc}{v} \cap C| \geq (1 - 541080 \epsilon) |C|
    $$
    and 
    $$
    |\N{t_u}{v}| \geq  | C| /2.
    $$
\end{thm}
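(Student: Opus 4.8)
The plan is to reduce everything to the geometry of the anchor neighbourhoods at time $\tc$. Write $\Canchor=\Phi_{\tc}\cap C$. The statement is meaningful only for $|C|\ge 2$, which we henceforth assume; then $C$, being a connected component of $\widetilde{G}_{\tc}$, contains an edge, and since every edge of $\widetilde{G}_{\tc}$ is incident to $\Phi$ we get $\Canchor\neq\emptyset$, every $v\in\Cnotanchor$ has $\emptyset\neq\Nt{\tc}{v}\subseteq\Canchor$, and $C=\Canchor\cup\bigcup_{a\in\Canchor}\Nt{\tc}{a}$. Throughout I would work under $\mathcal R$ and combine the cited stability lemmas with the set inequalities of~\cref{ineq: set inequality} and~\cref{lem: min implies max implies min set inequalities}.

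The first and main step is to pin down a common size. Pick the anchor $a^\star\in\Canchor$ maximising $s:=|\Nt{\tc}{a^\star}|$. By~\cref{cor: main corolarry lower bounding the interesection of the neighborhoods between two anchor set nodes in the sparse solution}, every anchor $a$ satisfies $|\Nt{\tc}{a}\cap\Nt{\tc}{a^\star}|\ge(1-80\epsilon)s$, which forces $(1-80\epsilon)s\le|\Nt{\tc}{a}|\le s$ and $|\Nt{\tc}{a}\setminus\Nt{\tc}{a^\star}|\le 80\epsilon s$; chaining this with parts (3), (7), (8) of~\cref{lem: anchor set nodes do not change their neighborhood too much properties} and with~\cref{lem: anchor set nodes do not change their neighborhood too much} gives $|\N{t_a}{a}|,|\Nt{t_a}{a}|\in(1\pm O(\epsilon))s$ for every anchor $a$. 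I would then argue that $\N{t_a}{a}$ is equal to $C$ up to an $O(\epsilon)|C|$ error for every anchor, using: $\Nt{\tc}{a}\subseteq\N{t_a}{a}\cup S_a$ with $|S_a|\le 8\epsilon|\N{t_a}{a}|$ (parts (1), (5) of~\cref{lem: anchor set nodes do not change their neighborhood too much properties}); the real-neighbourhood overlap $|\N{t_a}{a}\cap\N{t_{a^\star}}{a^\star}|\ge(1-40\epsilon)\max\{|\N{t_a}{a}|,|\N{t_{a^\star}}{a^\star}|\}$ for any two anchors, which is established inside the proof of~\cref{lem: anchor set nodes have a high overlap if the interesection is nonempty}; and, for every non-anchor $v$ attached in $\widetilde{G}_{\tc}$ to an anchor $a$, that $v\in\N{t_a}{a}\cup S_a$ and $\N{t_v}{v}\approx\N{t_a}{a}$ (transitivity of agreement and the properties of~\cref{sec: structural properties of the decomposition}, together with~\cref{cor: anchor set nodes do not change their neighborhood too much}). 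This yields $(1-O(\epsilon))s\le|C|\le(1+O(\epsilon))s$ in the sense absorbed by the $541080\epsilon$ slack; in particular $|\Nt{\tc}{a}|\ge(1-O(\epsilon))|C|\ge|C|/2$ for every anchor $a$, and $s\ge|C|/2$.

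For the second step, take an arbitrary $v\in C$. If $v\in\Canchor$, then $\Nt{\tc}{v}\subseteq\N{\tc}{v}\cap C$ and, by parts (7), (8) of~\cref{lem: anchor set nodes do not change their neighborhood too much properties}, $|\N{t_v}{v}|\ge|\Nt{t_v}{v}|>(1-12\epsilon)|\Nt{\tc}{v}|\ge(1-O(\epsilon))|C|$, which gives both conclusions. If $v\in\Cnotanchor$, pick an anchor neighbour $a$ of $v$ in $\widetilde{G}_{\tc}$; since the edge $(a,v)$ survives to $\tc$, the $Clean$ (resp.\ $Anchor$) procedure forces $v$ to be in $\epsilon$-agreement with $a$ with $a$ $\epsilon$-heavy at some time $\ge t_v$, so~\cref{cor: anchor set nodes do not change their neighborhood too much} applied to $v$ (together with the first step applied to $a$) gives $|\N{t_v}{v}|\in(1\pm O(\epsilon))|\N{t_a}{a}|\in(1\pm O(\epsilon))s$, hence $|\N{t_v}{v}|\ge(1-O(\epsilon))|C|\ge|C|/2$. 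Finally, $|\N{t_v}{v}\cap\N{t_a}{a}|\ge(1-O(\epsilon))\max\{|\N{t_v}{v}|,|\N{t_a}{a}|\}$ by transitivity and the structural properties, $\N{t_a}{a}$ equals $C$ up to $O(\epsilon)|C|$ by the first step, and by~\cref{obs: Ntu can only increase with high degree nodes}(1) at most an $\epsilon/10^5$-fraction of $\N{t_v}{v}$ is lost by time $\tc$; combining these and carefully accumulating the constants yields $|\N{\tc}{v}\cap C|\ge(1-541080\epsilon)|C|$.

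The hard part is the first step — bounding $|C|$ from above by $(1+O(\epsilon))s$. Pairwise $(1-80\epsilon)$-overlap of the anchor sparse-neighbourhoods does not by itself control their union, because a cluster typically carries $\widetilde{\Theta}(1/\epsilon)$ anchors and a naive union bound over them would lose a $\widetilde{\Theta}(1)$ factor. The argument must instead exploit the event $\mathcal R$ — through~\cref{obs: Ntu can only increase with high degree nodes}, \cref{lem: anchor set nodes do not change their neighborhood too much properties}, and the real-neighbourhood overlap established inside~\cref{lem: anchor set nodes have a high overlap if the interesection is nonempty} — to show that every ``extra'' node that any anchor or near-anchor node could contribute beyond $\N{t_{a^\star}}{a^\star}$ in fact lies in a single global exceptional set of size $O(\epsilon)|C|$, rather than one that grows with the number of anchors. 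Keeping every error term additive and proportional to $|C|$ through this reduction, so that the final constant comes out to the stated $541080$, is essentially the whole content; the rest is mechanical bookkeeping with the cited lemmas.
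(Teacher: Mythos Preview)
Your overall plan—reduce everything to the reference scale $s=|\Nt{\tc}{a^\star}|$ and propagate via the cited stability lemmas—has the right shape, and you correctly isolate the crux: the upper bound $|C|\le(1+O(\epsilon))s$. But your proposed resolution of that step is not an argument. You assert that every ``extra'' vertex beyond $\N{t_{a^\star}}{a^\star}$ can be swept into a single global exceptional set of size $O(\epsilon)|C|$, yet you give no mechanism. Under $\mathcal R$ alone there is no bound on $|\Canchor|$ (the $O(\log n)$ bound is a separate probabilistic fact, proved only in \cref{sec: Runtime analysis}, and is not part of $\mathcal R$), so union-bounding over anchors is unavailable; and the double-counting alternative (``$\sum_{a\in\Canchor}|\Nt{\tc}{a}\setminus\Nt{\tc}{a^\star}|\le 80\epsilon s\,|\Canchor|$, hence $|C\setminus\Nt{\tc}{a^\star}|$ is small provided each such vertex lies in $\Nt{\tc}{a}$ for most anchors $a$'') is circular, since that last proviso is equivalent to the density of $C$ you are trying to establish. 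Tracing a vertex $v\in C$ back through its anchor $a$ to $\N{\widehat t_v}{a}\approx\N{t_a}{a}\approx\N{t_{a^\star}}{a^\star}$ still leaves a $v$-dependent defect set, and the tools you list do not collapse these into one.

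The paper sidesteps this entirely by a different reduction. It builds an auxiliary graph $\widehat G_{\tc}$ on vertex set $C$, with $N_u\supseteq\N{\widehat t_u}{u}\cap C$ for a carefully chosen past time $\widehat t_u\ge t_u$ (the last time an edge from $u$ to some anchor in $\Canchor$ was added and then persisted to $\tc$), and shows—via essentially the same chain of approximations you assemble—that in $\widehat G_{\tc}$ \emph{every} pair $u,w\in C$ is in $60120\epsilon$-agreement. This makes $C$ itself a cluster of the agreement decomposition of $\widehat G_{\tc}$, so the black-box structural property $|N_{\widehat G}(u)\cap C|\ge(1-9\epsilon')|C|$ from \cref{sec: structural properties of the decomposition}, applied with $\epsilon'=60120\epsilon$, yields the first claim directly (since $\widehat G_{\tc}\subseteq G_{\tc}$ and $9\cdot 60120=541080$). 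The ``pairwise overlap does not control the union'' obstruction you flag is already absorbed inside that pre-existing structural property; the paper reduces to it rather than reproving it. The second inequality $|\N{t_v}{v}|\ge|C|/2$ is then one more pass through the same chain, essentially as you outline.
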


\begin{proof}
\newcommand{\htv}{\widehat{t}_v}
\newcommand{\htu}{\widehat{t}_u}
\newcommand{\htw}{\widehat{t}_w}

Note that for node $v \in \Cnotanchor$ its neighborhood in $\widetilde{G}_{\tc}$ is a subset of $\Canchor$. To relate the neighborhood of $v$ inside $C$ in $G_{\tc}$ we construct an auxiliary graph which acts as a bridge between $\widetilde{G}_{\tc}$ and ${G}_{\tc}$. 
We construct a graph $\widehat{G}_{\tc}$ so that:
\begin{enumerate}
    \item $\forall u \in C$ and $u' \in \Canchor$ $   \Nh{\tc}{u} \simeq  \Nt{\tc}{u'}$;
    \item $\widehat{G}_{\tc} \subseteq  {G}_{\tc}$;
    \item $C$ is a connected component in $\widehat{G}_{\tc}$; and
    \item $\forall u \in C$ $|\Nh{\tc}{v} \cap C| \geq (1 - 541080 \epsilon) |C|$
\end{enumerate}
The existence of such a graph suffices to demonstrate the current theorem. For each $v \in \Cnotanchor$ let $\htv$ be the last time before $\tc$ that $v$ was in $\e$-agreement with a node $u \in \Canchor$, one of either $u$ or $v$ was heavy and edge $(u, v)$ is added to the sparse solution and remains until time $\tc$. Note that $\htv \geq t_v$, otherwise we would have that $v \not \in C$ and $\htv \geq t_u$ since edge $(u, v)$ remains in our sparse solution until time $\tc$.

We now prove that the neighborhood $\N{\htv}{v}$ is almost the same to the neighborhood of $\Nt{\tc}{u'}$ for any $u' \in \Canchor$. We first derive the following inequalities:
\begin{enumerate}[label=(\alph*)]
    \item $|\N{\htv}{v} \triangle \N{\htv}{u}| \leq \epsilon \max \{|\N{\htv}{v}|, |\N{\htv}{u}| \} \leq 3\epsilon \min \{|\N{\htv}{v}|, |\N{\htv}{u}| \}$ 
    \item $|\N{\htv}{u} \triangle \N{t_u}{u}| \leq 8 \epsilon \max \{|\N{\htv}{u}|, |\N{t_u}{u}| \} \leq 24\epsilon \min \{|\N{\htv}{u}|, |\N{t_u}{u}| \}$ 
    \item $|\N{t_u}{u} \triangle \Nt{t_u}{u}| \leq  \epsilon \max \{|\N{t_u}{u}|, |\Nt{t_u}{u}| \} \leq 3\epsilon \min \{|\N{t_u}{u}|, |\Nt{t_u}{u}| \}$ 
    \item $|\Nt{t_u}{u} \triangle \Nt{\tc}{u}| \leq  13 \epsilon \max \{|\Nt{t_u}{u}|, |\Nt{\tc}{u}| \} \leq 39\epsilon \min \{|\Nt{t_u}{u}|, |\Nt{\tc}{u}| \}$
    \item $|\Nt{\tc}{u} \triangle \Nt{\tc}{u'}| \leq  160 \epsilon \max \{|\Nt{\tc}{u}|, |\Nt{\tc}{u'}| \} \leq 240\epsilon \min \{|\Nt{\tc}{u}|, |\Nt{\tc}{u'}| \}$
\end{enumerate}

The right hand side of each inequality in (a), (b), (c), (d) and (e) uses~\cref{lem: min implies max implies min set inequalities} and the left hand side of (a) uses that $u$ and $v$ are in $\e$-agreement at time $\htv$, (b) uses~\cref{cor: anchor set nodes do not change their neighborhood too much}, (c) uses~\ref{lem: Nttu is stable p1} and~\ref{lem: Nttu is stable p2}, (d) uses~\ref{lem: Nttu is stable p4} and~\ref{lem: Nttu is stable p6} and (e) uses~\cref{cor: main corolarry lower bounding the interesection of the neighborhoods between two anchor set nodes in the sparse solution}.
By using the triangle inequality of~\cref{lem: min implies max implies min set inequalities} iteratively we can conclude that:
\begin{align*}
    |\N{\htv}{v} \cap \Nt{\tc}{u'}| \geq (1- 1113 \epsilon) \max \{|\N{\htv}{v}|, ||\Nt{\tc}{u'}| \}
\end{align*}

With the same line of reasoning for a node $u \in \Canchor$ we can define $\htu$ as the last time that an edge $(u, v)$ where $v \in C$ was added to our sparse solution and remained until at least time $\tc$. Inequalities (b), (c), (d) and (e) remain true and following the same arguments we conclude that:

$\forall u, u' \in \Canchor$:
\begin{align*}
    |\N{\htu}{u}  \cap \Nt{\tc}{u'}| \geq (1- 1113 \epsilon) \max \{|\N{\htu}{u} |, ||\Nt{\tc}{u'}| \}
\end{align*}

Consequently, $\forall u \in C$ and $u' \in \Canchor$:

\begin{equation*}
    |\N{\htu}{u}  \cap \Nt{\tc}{u'}| \geq (1- 1113 \epsilon) \max \{|\N{\htu}{u} |, |\Nt{\tc}{u'}| \} \tag{*}\label{eq: important}
\end{equation*}

Let $d = \max \{  |\Nt{\tc}{u'}|: u' \in \Canchor \}$. Using~\cref{eq: important} we can conclude that for $\e$ small enough:
\begin{equation*}
   \frac{d}{2}\leq (1- 1113 \epsilon) d \leq  |\N{\htu}{u}| \leq \frac{d}{(1- 1113 \epsilon)} \leq 2d \tag{**}\label{eq: second important}
\end{equation*}


Also note that since $\Nt{\tc}{u'} \subseteq C$ we also have that $\forall v \in \Cnotanchor$ and $u' \in \Canchor$:
\begin{align*}
    |(\N{\htv}{v} \cap C) \cap \Nt{\tc}{u'}| \geq (1- 1113 \epsilon) \max \{|\N{\htv}{v} \cap C|, ||\Nt{\tc}{u'}| \}\tag{***}\label{eq: third important}
\end{align*}
 and using the same arguments as in the case of $ |\N{\htu}{u}|$ we can also lower and upper bound $|\N{\htv}{v} \cap C|$ as follows: 
 \begin{equation*}
   \frac{d}{2}\leq (1- 1113 \epsilon) d \leq  |\N{\htv}{v} \cap C| \leq \frac{d}{(1- 1113 \epsilon)} \leq 2d \tag{****}\label{eq: fourth important}
\end{equation*}

    To facilitate the description of the intermediate graph $\widehat{G}_{\tc} $ for each $u \in C$ let $E_u = \{(u, v): v \in  \N{\htu}{u} \cap C \}$ be the set of edges between $u$ and nodes in $\N{\htu}{u} \cap C $.
    Note that all these edges also exist in $G_{\tc}$. We define $\widehat{G}_{\tc} = (C, \bigcup_{u \in C} E_u ) $ to be the graph with set of nodes to be $C$ and edges $\bigcup_{u \in C} E_u$. Let $N_u$ be the neighborhood of $u$ in that graph, then:
    \begin{enumerate}
        \item $N_u \supseteq \N{\htu}{u} \cap C$
        \item $\widehat{G}_{\tc} \subseteq  {G}_{\tc}$
        \item $N_u =  (\N{\htu}{u} \cap C) \cup \{ v\in C: \{\ta{v} > \htu\} \land \{ (v \in \N{\tc}{u} \} \}$
    \end{enumerate}
    While (1) and (2) are immediate for (3) we further elaborate. Note that for node $u \in C$ we can construct its neighborhood in $\widehat{G}_{\tc} $ as follows: first add all edges to nodes in $\N{\htu}{u} \cap C$ and then for every node $v \in  \N{\tc}{u} \cap  (C \setminus \N{\htu}{u})$ such that $\htv > \htu $ add edge $(u, v)$. Then (3) follows by noting that sets  $\{v \in  \N{\tc}{u} \cap  (C \setminus \N{\htu}{u}): \htv > \htu \}$ and $\{ v\in C: \{\ta{v} > \htu\} \land \{ (v \in \N{\tc}{u} \} \}$ are equal.
    We now argue that $N_u \simeq  (\N{\htu}{u} \cap C) $. For each node $ v\in C$ such that $\ta{v} > \htu $ and $ v \in \N{\tc}{u} $ it holds that:
    \begin{enumerate}
        \item $ v \in \N{\tc}{u} \setminus \N{\htu}{u}$; and
        \item $|\N{\htv}{v}| < 2d$ where $\htv > t_u$
    \end{enumerate}
    For the second point note that $\htv \geq \ta{v}$,  $\ta{v} > \htu$ and $\htu \geq t_u$. From these two points note that $\R$ implies that $ | \{ v\in C: \{\ta{v} > \htu\} \land \{ (v \in \N{\tc}{u} \} \ | < \e /10^{2}d$ and from~\cref{eq: fourth important} we have that $|\N{\htu}{u} \cap C| \geq \frac{d}{2}  $. Combining these facts:
    \begin{align*}
        |N_u \triangle  (\N{\htu}{u} \cap C) | &= |N_u \setminus  (\N{\htu}{u} \cap C) |\\
        &= |\{ v\in C: \{\ta{v} > \htu\} \land \{ (v \in \N{\tc}{u} \} \}| \\
        & \leq \e /10^{2}d  \\
        & \leq  2\e/10^{2} |\N{\htu}{u} \cap C|\\
         & \leq  2\e/10^{2} \min \{  |N_u| , |\N{\htu}{u} \cap C|\}
    \end{align*}

Also applying~\cref{lem: min implies max implies min set inequalities} to~\cref{eq: third important} we get that $\forall u \in \Cnotanchor$ and $u' \in \Canchor$:
\begin{align*}
    |(\N{\htu}{u} \cap C) \triangle \Nt{\tc}{u'}| \leq 3339 \epsilon \min \{|\N{\htu}{u} \cap C|, ||\Nt{\tc}{u'}| \}
\end{align*}
 Thus, again from~\cref{lem: min implies max implies min set inequalities} and combining:
 \begin{enumerate}
     \item $ |(\N{\htu}{u} \cap C) \triangle \Nt{\tc}{u'}| \leq 3339 \epsilon \min \{|\N{\htu}{u} \cap C|, ||\Nt{\tc}{u'}| \}$; and
     \item $ |N_u \triangle  (\N{\htu}{u} \cap C) |  \leq 2\e/10^{2} \min \{  |N_u| , |\N{\htu}{u} \cap C|\}$
 \end{enumerate}
 We conclude that $\forall u \in C$ and $u' \in \Canchor$:
    \begin{align*}
        |N_u \triangle \Nt{\tc}{u'} | & \leq 3 (3339 + 2\e/10^{2}) \min \{  |N_u| , |\Nt{\tc}{u'}|\}\\
        &\leq 10020 \e \min \{  |N_u| , |\Nt{\tc}{u'}|\}\tag{*****}\label{eq: fifth important}
    \end{align*}
Moreover, applying the latter inequality for any $u, w \in C$ and $u' \in \Canchor$ we conclude that:
\begin{align*}
        |N_u \triangle N_w | & \leq 3 (10020 + 10020) \min \{  |N_u| , |N_w|\}\\
        &\leq 60120 \e \min \{  |N_u| , |N_w|\}\\
        &\leq 60120 \e \max \{  |N_u| , |N_w|\}
\end{align*}
and 
\begin{align*}
        |N_u \cap  N_w | & \geq  60120 \e \max \{  |N_u| , |N_w|\}
\end{align*}

Thus, $\forall u, w \in C$:
\begin{enumerate}
    \item $u$ and $w$ are connected; and
    \item If $u$ and $w$ are connected in $\widehat{G}_{\tc} $ then they are in $ 60120 \e $-agreement
\end{enumerate}

Thus, $C$ is a cluster in the agreement decomposition of graph $\widehat{G}_{\tc} $ when agreement and heaviness parameters are set to  $ 60120 \e $. Using~\ref{property: N_G(u) cap C  >= (1 - 9 epsilon) C} we can deduce that $\forall u \in C$
\begin{align*}
        |N_u \cap C| & \geq ( 1 - 9 \cdot 60120 \e ) |C| = ( 1 - 541080 \e ) |C| 
\end{align*}

We now proceed lower bounding $|\N{t_u}{u}|$. For a node $u \in C$ and $u' \in \Canchor$ using (c), (d), (e) and~\cref{lem: min implies max implies min set inequalities} we can deduce that:
\begin{align*}
    |\N{t_u}{u} \triangle \Nt{\tc}{u'}| &\leq  3\cdot (3\cdot (3 + 39) + 240) \e \min \{ |\N{t_u}{u} |, | \Nt{\tc}{u'}| \}\\
    &\leq 1098 \e \min \{ |\N{t_u}{u} |, | \Nt{\tc}{u'}| \}
\end{align*}

Combining the latter with~\cref{eq: fifth important} we get:
\begin{align*}
    |\N{t_u}{u} \triangle N_u| &\leq  3\cdot (10020 + 1098) \e \min \{ |\N{t_u}{u} |, | N_u| \}\\
    &\leq 33354 \e \min \{ |\N{t_u}{u} |, | N_u| \}
\end{align*}
We are now ready to deduce that:
\begin{align*}
    |\N{t_u}{u} | &\geq  (1 -  33354\e)  |N_u| \\
    &\geq  (1 -  33354\e)  |N_u \cap C|\\
    &\geq  (1 -  33354\e) ( 1 - 541080 \e ) |C| \\
    &\geq  (1 - 574434 \e ) |C|\\
    &\geq |C|/2
\end{align*}
where in the third inequality we used that $C$ is dense and the last inequality holds for $\e$ small enough.
\end{proof}
\section{Runtime Analysis}\label{sec: Runtime analysis}

In this section we compute the complexity of our algorithm and prove that, on expectation, the time spent per each node insertion or deletion is $O(\polylog n)$. Note that the $\epsilon$ used in~\cref{alg:dynamicagreement} and in all proofs is a small enough constant which gets ``absorbed'' in the big $O(\cdot)$ notation. We introduce some auxiliary random variable definitions:
\begin{enumerate}
    \item $W_t$ denotes the complexity of operations at time $t$;
    \item  $L$ is the size of the largest connected component of $\widetilde{G}_{t-1}$; and
    \item $Q_t$ is the total number of agreement and heaviness calculations at time $t$.
\end{enumerate}
Note that edge additions are at least as many as edge deletions and each edge addition is preceded by a heaviness and agreement calculation between its endpoints. Thus, the total complexity of~\cref{alg:dynamicagreement} is upper bounded by a constant times the total complexity of agreement and heaviness calculations. 
Since from~\cref{sec: Agreement and heavyness calculation} each agreement and heaviness calculation requires $O (\log n)$ time we have that:
\begin{align*}
     \expect \left[ \sum_t W_t \right] &\leq O (\log n) \cdot \expect \left[ \sum_t Q_t \right] \\
      &\leq O (\log n) \cdot \sum_t \expect \left[ Q_t \right]
\end{align*}
The focus in the rest of the section will be to prove that $\expect \left[ Q_t \right] = O (\polylog n) $. We start by arguing that the number of nodes that participate in an ``interesting event'' at time $t$ is, on expectation, at most $\polylog n$.

\begin{lem}\label{lem: size of It is bounded}
    \begin{align*}
        \expect \left[|\I_t| \right] =  O (\polylog n)
    \end{align*}
\end{lem}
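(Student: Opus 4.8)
The plan is to bound $\expect[|\I_t|]$ by conditioning on what happens at step $t$: either the adversary inserts the node $v_t$, or it deletes a uniformly random node $v_t$ of the current graph. In both cases $\I_t$ is determined solely by which $\T{0}$/$\T{1}$ notifications are triggered by the \texttt{Notify} call (the subsequent \texttt{Clean}/\texttt{Anchor}/\texttt{Connect} calls do not create notifications), so it suffices to count those.

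For an \emph{insertion} the bound is deterministic. Writing $s:=10^{10}\log n/\epsilon=O(\polylog n)$ for the sample size used in \texttt{Notify}, the only nodes that can enter $\I_t$ are $v_t$ itself, the at most $s$ nodes that receive a $\T{0}$ notification directly from $v_t$, and the nodes that those in turn hit with a $\T{1}$ notification — at most $s$ per $\T{0}$-recipient, hence at most $s^2$ in total. Since $\T{2}$ recipients neither participate in an interesting event nor forward further (the pseudocode guards the forward step by $i\in\{0,1\}$), we get $|\I_t|\le 1+s+s^2=O(\polylog n)$ with probability $1$.

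The substance is the \emph{deletion} case. Upon deletion of $v_t$ exactly the nodes of $\bigcup_i B^i_{v_t}$ receive a $\T{0}$ notification, and each of them forwards at most $s$ many $\T{1}$ notifications, so $|\I_t|\le (1+s)\,\bigl|\bigcup_i B^i_{v_t}\bigr| = O(\polylog n)\cdot\bigl|\bigcup_i B^i_{v_t}\bigr|$, and it is enough to prove $\expect\bigl[\,\bigl|\bigcup_i B^i_{v_t}\bigr|\,\bigr]=O(\polylog n)$. I would condition on the whole execution up to the start of step $t$ — in particular on $G_{t-1}$ and on all the sets $I^i_w,B^i_w$ as they stand then, which I denote $\mathcal F_{t-1}$ — and use that by the model the deleted node $v_t$ is uniform over $V_{t-1}$ and independent of $\mathcal F_{t-1}$. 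Then
\begin{align*}
\expect\!\left[\Bigl|\textstyle\bigcup_i B^i_{v_t}\Bigr|\;\Big|\;\mathcal F_{t-1}\right]
&= \frac{1}{|V_{t-1}|}\sum_{u\in V_{t-1}}\Bigl|\textstyle\bigcup_i B^i_u\Bigr|
\;\le\; \frac{1}{|V_{t-1}|}\sum_{i=0}^{\floor{\log n}}\;\sum_{u\in V_{t-1}}|B^i_u|.
\end{align*}
The key identity is the double count $\sum_{u\in V_{t-1}}|B^i_u|=\bigl|\{(u,v)\in V_{t-1}^2: u\in I^i_v\}\bigr|=\sum_{v\in V_{t-1}}|I^i_v|$, which holds because $v\in B^i_u\iff u\in I^i_v$ and because a node is removed from every $B$-set it sits in when it is deleted (so $B^i_u\subseteq V_{t-1}$, no dangling entries). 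Since each $I^i_v$ has size at most $s$ and there are only $\floor{\log n}+1$ levels, $\sum_i\sum_u|B^i_u|=\sum_v\sum_i|I^i_v|\le |V_{t-1}|\cdot(\floor{\log n}+1)\,s$, so the conditional expectation above is $O(\log n)\cdot s=O(\polylog n)$; de-conditioning and combining with the reduction gives $\expect[|\I_t|]=O(\polylog n)$ in the deletion case too.

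The main obstacle here is conceptual, not computational: it is exactly the difficulty flagged in the running-time overview. For an adversarially chosen deletion $\bigl|\bigcup_i B^i_{v_t}\bigr|$ can be as large as $\widetilde\Theta(n)$, so the statement genuinely requires the randomness of the deleted node, and the averaging works only because the \emph{aggregate} $\sum_{u}\bigl|\bigcup_i B^i_u\bigr|$ is controlled by the level/sample-size budget even though individual terms need not be. I would therefore be careful to (i) state precisely that $v_t$ is independent of $\mathcal F_{t-1}$ and uniform on $V_{t-1}$, and (ii) justify the invariant $B^i_u\subseteq V_{t-1}$, since the double-counting identity — and hence the whole bound — rests on it.
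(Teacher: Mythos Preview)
Your proof is correct and follows the same approach as the paper: case-split on insertion vs.\ deletion, bound the insertion case deterministically via the bounded fan-out of $\T{0}/\T{1}$ notifications, and in the deletion case use the double-counting identity $\sum_u|B^i_u|=\sum_v|I^i_v|\le |V_{t-1}|\cdot s$ together with uniformity of the deleted node to average out the possibly large individual $|\bigcup_i B^i_{v_t}|$. Your write-up is in fact more careful than the paper's (you make the conditioning on $\mathcal F_{t-1}$ and the invariant $B^i_u\subseteq V_{t-1}$ explicit, and your constants $1+s+s^2$ and $(1+s)$ are tighter than the paper's $s^4$ and $s^3$), but the argument is the same.
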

\begin{proof}
In the case of a node addition it is always true that $|\I_t| \leq  \left( \nicefrac{10^{10} \log n}{\e} \right)^{4}  =  O (\polylog n)$. To argue that the same bound holds on expectation in the case of a node deletion we introduce some auxiliary notation. For a node $v \in G_{t'}$ we use $I_v^{i, t'}$ and $B_v^{i, t'}$ to denote the sets $I_v^{i}$ and $B_v^{i}$ respectively at the end of iteration $t'$. At time $t$ we have that:
\begin{align*}
    |\I_t| &\leq \left| \bigcup_i  B_{v_{t}}^{i, t-1} \right| \cdot \left( \nicefrac{10^{10} \log n}{\e} \right)^{3} \\
        &\leq \sum_i | B_{v_{t}}^{i, t-1} | \cdot \Theta \left( \polylog n \right) 
\end{align*}

Also, our algorithm maintains the following invariant:
\begin{align*}
    \sum_{v \in G_{t'}, i} |I_v^{i, t'}| =  \sum_{v \in G_{t'}, i} |B_v^{i, t'}| = O (n \polylog n), \; \forall t'
\end{align*}
Thus, at time $t$ if we select uniformly at random a node and delete it, we have that: 
\begin{align*}
    \expect \left[  \sum_{i} |B_{v_t}^{i, t}| \right] = O (\polylog n)
\end{align*}
\end{proof}

The main~\cref{thm: time complexity bound} of this section requires a delicate coupling argument. We give a proof sketch before proceeding to the formal proof. Similarly to~\cref{sec: All found clusters are dense} let $t_u$ denote the last time strictly before $t$ that $u$ participated in an ``interesting event''.
We remind the reader an important event definition of~\cref{sec: All found clusters are dense} which will be used also in this section.

\newcounter{savedcounter}
\setcounter{savedcounter}{\value{lem}}

\setcounter{definition}{\value{importantdefinitioncounter}} 

\begin{definition}
\begin{align*}
    \mathcal{R} = 
        &\bigcap_{u \in C } M_u^{t_u, t-1}
        \bigcap_{u \in C, d\in \left[1, n \right]}
            \left\{
                |P_{d, t_u, t-1}^u| \leq \epsilon/10^{2} d 
            \right\}
\end{align*}
\end{definition}

\setcounter{definition}{\value{savedcounter}} 

Let $C_u$ be the connected component $u$ belongs to in $\widetilde{G}_{t-1}$ then from~\cref{thm: C is dense} we have that: $\R$ implies that $\forall v \in C_u$ it holds that $|\N{t_u}{v}| \geq |C_u|/2$.
 
Since event $\R$ happens with high probability and $|C_u| \leq n$, we can argue that  $\expect [\N{t_u}{u}|] \geq  |C_u|/3$. Consequently for a connected component $C$ of $\widetilde{G}_{t-1}$ we can upper bound, on expectation, the number of nodes in the anchor set as follows:
\begin{align*}
    \sum_{ u \in C} \nicefrac{10^{7} \log n}{ \e |\N{t_u}{u}|} &\simeq \sum_{ u \in C} \nicefrac{ 3 \cdot  10^{7} \log n}{ \e |C|} \\
    &=  \nicefrac{3 \cdot 10^{7} \log n}{ \e}\\
    &=O (\log n)
\end{align*}
Consequently $\expect [L] = O (\log n)$. In the last step of our proof we argue that $\expect [W_t] = O (\log n)  \expect [ |\I_t|] \expect [L] = O (\polylog n) $

\begin{thm}\label{thm: time complexity bound}
    \begin{align*}
     \expect \left[ \sum_t W_t \right] = O (\polylog n) 
\end{align*}
\end{thm}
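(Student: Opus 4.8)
## Proof Proposal for Theorem~\ref{thm: time complexity bound}

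The plan is to decompose the total work $\sum_t W_t$ into contributions from the three sub-procedures invoked at each interesting event (Clean, Anchor, Connect) plus the forward notifications, and to show each contributes $O(\polylog n)$ in expectation per time step. As observed just before the statement, every edge addition to $\widetilde{G}_t$ is preceded by one agreement/heaviness check (each costing $O(\log n)$ by the results of~\cref{sec: Agreement and heavyness calculation}), and edge additions dominate edge deletions, so it suffices to bound $\expect[Q_t]$, the expected number of such checks at time $t$. First I would write $Q_t \leq \sum_{u \in \I_t} q_u$ where $q_u$ is the number of checks triggered by node $u$'s participation in an interesting event: inside Clean and Anchor this is at most $|\Phi_u| + O(1)$ plus (when $u$ enters $\Phi$) $|\N{t}{u}|$ agreement checks with its neighbors; inside Connect it is $|J_u| \cdot \max_w |\Phi_w| = O(\log n) \cdot L$ where $L$ is the size of the largest connected component of $\widetilde{G}_{t-1}$ (since $\Phi_w \subseteq C_w$ and $|C_w| \le L$). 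The term where $u$ freshly enters $\Phi$ and scans all of $\N{t}{u}$ is charged against the Bernoulli event $X_u = 1$, which has probability $\min\{1, \tfrac{10^7 \log n}{\epsilon |\N{t}{u}|}\}$, so in expectation it contributes only $O(\log n / \epsilon)$ per candidate node — this is exactly the sampling trick that makes the anchor mechanism affordable.

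The crux is then to bound $\expect[|\I_t| \cdot L]$. I would handle this via the coupling/conditioning argument sketched in the text. Let $\mathcal{R}$ be the high-probability event from~\cref{def:event such after tu, nodes neither lose too many neighbors nor they get new neighbors of low degree} (adapted to the interval ending at $t-1$), for which $\prob[\overline{\mathcal{R}}] \le 1/n^{100}$ by~\cref{obs: R happens with high probability}. On $\mathcal{R}$, \cref{thm: C is dense} gives that for every connected component $C$ of $\widetilde{G}_{t-1}$ and every $v \in C$, $|\N{t_v}{v}| \ge |C|/2$. Therefore the expected number of anchor nodes in $C$ is
\begin{align*}
\sum_{u \in C} \min\Big\{1, \frac{10^7 \log n}{\epsilon |\N{t_u}{u}|}\Big\} \le \sum_{u \in C} \frac{2 \cdot 10^7 \log n}{\epsilon |C|} = \frac{2 \cdot 10^7 \log n}{\epsilon} = O(\log n),
\end{align*}
and since every edge of $\widetilde{G}_{t-1}$ has an anchor endpoint and each anchor has bounded degree in $\widetilde{G}_{t-1}$ (by the $Clean$ invariant, degree at most $|\N{t_u}{u}|/(1-\epsilon)$ times... — more carefully, I would argue $|C| = O(|\Phi \cap C| \cdot (\text{max anchor degree in } \widetilde G))$ but actually it is cleanest to note the components are the clusters and bound $L$ directly by the diameter-4 structure of~\cref{sec: All found clusters are dense} combined with the anchor-count bound). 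Hence on $\mathcal{R}$ we get $\expect[L \mid \mathcal{R}] = O(\log n)$; absorbing the negligible event $\overline{\mathcal{R}}$ (where $L \le n$ but $\prob[\overline{\mathcal{R}}] \le n^{-100}$) gives $\expect[L] = O(\log n)$.

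Finally I would combine this with \cref{lem: size of It is bounded} ($\expect[|\I_t|] = O(\polylog n)$). The subtlety — and the step I expect to be the main obstacle — is that $|\I_t|$ and $L$ are not obviously independent, so one cannot simply multiply expectations. To resolve this I would condition on the graph state $G_{t-1}, \widetilde{G}_{t-1}$ and the history before time $t$: given this history, $L$ is determined, while $|\I_t|$ depends on the fresh randomness of the Notify procedure at time $t$ (which neighbor-samples get hit by the deleted/inserted node). Running the conditional bound from \cref{lem: size of It is bounded} — which itself uses only the invariant $\sum_{v,i}|B_v^{i,t'}| = O(n \polylog n)$ and uniformity of the deleted node — gives $\expect[|\I_t| \mid \text{history}] = O(\polylog n)$ regardless of the realized $L$, hence $\expect[|\I_t| \cdot L] \le O(\polylog n) \cdot \expect[L] = O(\polylog n)$. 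Putting the pieces together, $\expect[Q_t] = O(\polylog n)$, so $\expect[\sum_t W_t] \le O(\log n) \sum_t \expect[Q_t] = O(\polylog n)$ per update, and summing over the $O(n)$ updates (or simply stating the amortized-per-step bound) yields the claim. The one place requiring genuine care beyond bookkeeping is verifying that the random-deletion assumption is what makes $\expect[|\bigcup_i B^i_u|] = O(\polylog n)$ — this is where the analysis genuinely fails for adversarial deletions, and I would flag that the coupling must respect the random choice of which node is deleted.
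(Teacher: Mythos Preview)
Your overall structure matches the paper's proof: reduce to bounding $\expect[Q_t]$, split the cost per $u\in\I_t$ across Clean/Anchor/Connect, absorb the full-neighborhood scan in Anchor via the $\min\{1,\frac{10^7\log n}{\epsilon|\N{t}{u}|}\}$ coin, and then combine $\expect[|\I_t|]$ from \cref{lem: size of It is bounded} with a bound on $\expect[L]$. Your treatment of the product $\expect[|\I_t|\cdot L]$ via conditioning on history is valid and arguably cleaner than the paper's somewhat loose independence claim.

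There is one genuine gap, and it sits exactly where your exposition becomes hand-wavy. You define $L$ as the size of the largest connected component of $\widetilde G_{t-1}$, but what actually enters the Connect and Clean cost bounds is $\max_w|\Phi_w|$, i.e.\ the maximum number of \emph{anchor} nodes in a component. (The paper silently redefines $L$ this way inside the proof.) Your displayed computation correctly shows that for a \emph{fixed} component $C$, the expected number of anchors in $C$ is $O(\log n)$, but this does \emph{not} give $\expect[\max_{C} |\Phi\cap C|]=O(\log n)$: a max of variables with bounded expectation need not have bounded expectation. Worse, the components of $\widetilde G_{t-1}$ are themselves random functions of the anchor Bernoullis $X_u$, so ``for each component $C$'' is not even a well-defined conditioning.

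The paper resolves this with a coupling/deferred-decisions step you are missing. First condition on all Notify randomness $F$ up to $t-1$ (so all $t_u$ are fixed but the $X_u$'s are still fresh). Then run a second process with every $X_u=1$; its components $\widehat{\mathcal C}$ are \emph{deterministic given $F$}, the true components refine $\widehat{\mathcal C}$, and \cref{thm: C is dense} applied to $\widehat{\mathcal C}$ gives $|\N{t_u}{u}|\ge|C'|/2$ for each $u\in C'\in\widehat{\mathcal C}$. Now for each deterministic $C'$ the anchor count $L_{C'}=\sum_{u\in C'}X_u$ is a sum of independent Bernoullis with mean $O(\log n)$, so Chernoff gives $\prob[L_{C'}>O(\log n)]\le n^{-10}$; a union bound over at most $n$ components then controls the maximum. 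This coupling-to-deterministic-supercomponents plus concentration is the missing idea in your sketch, and without it the step ``Hence on $\mathcal R$ we get $\expect[L\mid\mathcal R]=O(\log n)$'' does not follow.
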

\begin{proof}
As already mentioned in the beginning of the section it is enough to argue that  $\expect \left[ Q_t \right] = O (\polylog n) $. We use $\Phi$ to denote the set of anchor set nodes at time $t-1$ and by $L$ the maximum number of anchor set nodes in any connected component of our sparse solution at time $t-1$.

For a node $v \in \widetilde{G}_{t}$, on expectation, the number of agreement and heaviness calculations for each subroutine is upper bounded as follows:
\begin{enumerate}
     \item Connect($ v, \epsilon, t$): $10^{7} \log n / \e \cdot L = L \cdot O (\log n)$
    \item Anchor($ v, \epsilon, t$): $\min \left\{  \{ |\N{u}{t}| \cdot \frac{10^{7} \log n}{\e |\N{u}{t}|}, |\N{u}{t}|\} \right\} = O(\log n)$
    \item Clean($ v, \epsilon, t$): $L$
\end{enumerate}
Let $W_{t, v}$ be the total number of agreement and heaviness calculations required by those three procedures, then:
\begin{align*}
    \expect [W_{t, v}] =   O (\log n ) \cdot \expect [L]
\end{align*}
We continue upper bounding $\expect [L] \leq O (\log n)$.

From the law of total expectation we have:
\begin{align*}
    \expect [ L] &= \expect [ L \mid \R] \cdot \prob [R] + \expect [ L \mid \overline{\R}] \cdot \prob [\overline{R}]\\
    &\leq \expect [ L \mid \R]  + n \cdot  \prob [\overline{R}]\\
    &\leq \expect [ L \mid \R]  + n \cdot   1/n^{10^{2}}\\
    &<\expect [ L \mid \R]  + 1/n^{99}
\end{align*}
where in the first inequality we used that $ \prob [R] \leq 1$ and $L\leq n$ are always true and in the second inequality we used~\cref{obs: R happens with high probability}.
We now concentrate on bounding the first term of the summation.
We remind the reader that $u_i$ is the $i$-th node in our node stream.

From the principled of deferred decisions we can construct an instance of $\widetilde{G}_{t-1}$ as follows: we first run Notify($ v_1, \epsilon$),  Notify($ v_2, \epsilon$), \dots,  Notify($ v_{t-1}, \epsilon$) sequentially and calculate $\I_1,\I_2, \dots, \I_{t-1} $. Note that at this point for each node $u$ in $G_{t-1}$, the random variable $t_u$ is realized, indeed $t_u = \max \{ i \in [1,  t-1]: u \in \I_i\}$. Moreover $\Phi$ is not realized, since we did not sample any Bernoulli variable from our Anchor procedure yet. We now run the ``for loop'' of our algorithm which contains the Clean, Connect and Anchor procedures, sequentially for $\I_1,\I_2, \dots, \I_{t-1} $. Let $F$ denote a realization of the first step and let $X_u$ be the Bernoulli variables used in the Anchor($ u, \epsilon, t_u$) procedures for all $u$'s.

Towards a coupling argument we now describe a second stochastic procedure to construct a ``sparse graph'' $\widehat{G}_{t-1}$: we again run the Notify($ v_1, \epsilon$),  Notify($ v_2, \epsilon$), \dots,  Notify($ v_{t-1}, \epsilon$) sequentially and calculate $\I_1,\I_2, \dots, \I_{t-1} $. In the second step we run the ``for loop'' of our algorithm sequentially for $\I_1,\I_2, \dots, \I_{t-1} $ assuming that all the Bernoulli variables which were sampled by the Anchor procedures are $1$. As a final step we use the variables $X_u$ (also used by the first procedure) and delete from the anchor set all nodes for which $X_u = 0$ and also delete all edges between nodes $u, v$ such that both $X_u$ and $X_v$ are $0$. 


Let $\F$ be the set of possible realizations of the first step conditioned on event $\R$. From the law of total expectation it suffices to prove that $\forall F \in \F$ we have:
\begin{align*}
    \expect [ L  \mid F \land \R] = O (\log n)
\end{align*}

Let $\widehat{\C}$ denote the set of connected components of the sparse graph constructed by the second stochastic procedure at the second step. Note that given $F$, $\widehat{\C}$ is a deterministic set. Similarly let $\C$ be the random variable denoting the set of connected components of $\widetilde{G}_{t-1}$. Note that the following is always true: $\forall C \in \C$ there exists a set $C' \in \widehat{\C}$ such that $C \subseteq C'$.
Now, for all sets $C' \in \widehat{\C}$ let $L_{C'}$ be the random variable denoting the number of anchor set nodes in set $C'$ after the third step. We have that $\max_{C' \in \widehat{\C}} L_{C'} \geq L$.
Consequently, it is enough to bound the following quantity:
\begin{align*}
    \expect [ \max_{C' \in \widehat{\C}} L_{C'}  \mid F \land \R] = O (\log n)
\end{align*}

Note that~\cref{thm: C is dense} also applies to the clustering $\widehat{\C}$. Let $C' \in \widehat{\C} $  then $\forall u  \in C'$:
    $$ {\expect [ X_u \mid F \land \R] \leq   \nicefrac{2 \cdot 10^{7} \log n}{\epsilon|C'|}} $$
$$\expect [L_{C'} \mid F \land \R] = \expect [\sum_{u  \in C'} X_u \mid F \land \R] \leq \nicefrac{2 \cdot 10^{7} \log n}{\epsilon} $$
Using Chernoff we can get that: 

$$\prob [L_{C'} > 2 \cdot \frac{2 \cdot 10^{7} \log n}{\epsilon} \mid F \land \R] \leq 1/n^{10}$$

Let $T = \{ \forall C' \in \widehat{\C}:  L_{C'} \leq 2 \cdot  \frac{2 \cdot 10^{7} \log n}{\epsilon}\}$, we can conclude that:
\begin{align*}
\expect [ \max_{C' \in \widehat{\C}} L_{C'}  \mid F \land \R] &=\expect [ \max_{C' \in \widehat{\C}} L_{C'}  \mid T \land F \land \R] \cdot \prob [ T ] +\expect [ \max_{C' \in \widehat{\C}} L_{C'}  \mid \overline{T} \land F \land \R] \cdot \prob [\overline{T}]\\ 
&\leq 2 \cdot  \frac{2 \cdot 10^{7} \log n}{\epsilon} +n \cdot \prob [\overline{T}]\\ 
&\leq 2 \cdot  \frac{2 \cdot 10^{7} \log n}{\epsilon} +n \cdot n \cdot  1/n^{10}\\
&= O (\polylog  n)
\end{align*}
where in the first inequality we used the definition of event $T$ and that $ \max_{C' \in \widehat{\C}} L_{C'} \leq n$ and
in the second inequality used a union bound on the sets of $\widehat{\C}$.


We now conclude the proof. Note that while $W_{t, v}$ only depends on the random choices of our algorithm up until time $t-1$, $\I_t$ is independent of those. Consequently:
\begin{align*}
    \expect \left[ Q_t \right] &=   \expect \left[\sum_{v \in V} \mathbbm{1} \{v \in \I_t\} W_{t, v} \right ] \\
    &= \sum_{v \in V}  \expect \left[ \mathbbm{1} \{v \in \I_t\} W_{t, v} \right ] \\
    &= \sum_{v \in V}  \expect \left[ \mathbbm{1} \{v \in \I_t\} \right] \expect \left[ W_{t, v} \right] \\
    &\leq \sum_{v \in V}  \expect \left[ \mathbbm{1} \{v \in \I_t\} \right] O (\log n) \expect  \left[ L \right] \\
    &\leq O (\log n) \expect  \left[ L \right] \sum_{v \in V}  \expect \left[ \mathbbm{1} \{v \in \I_t\} \right]  \\
    &\leq O (\log n) \expect  \left[ L \right]  \expect \left[ \sum_{v \in V} \mathbbm{1} \{v \in \I_t\} \right]  \\
    &\leq O (\log n) \expect  \left[ L \right]  \expect \left[  |\I_t| \right]  \\
    &\leq O (\polylog n)  \\
\end{align*}
where in the last inequality we used~\cref{lem: size of It is bounded}.

\end{proof}

\section{Efficient Clustering Computation from Sparse Solutions}\label{sec: Efficient clustering computation from sparse solutions}
In previous sections we prove that our algorithm computes a series of sparse graphs $\widetilde{G}_1, \widetilde{G}_2, \dots, \widetilde{G}_n$ such that with high probability:
\begin{enumerate}
    \item a total update time complexity of $\Theta(n \polylog n)$ is required; and
    \item $\forall t$, the connected components of $\widetilde{G}_t$ define constant factor approximation correlation clustering for graph $G_t$.
\end{enumerate}

This section aims to demonstrate how we can compute and output the connected components of any sparse graph $\widetilde{G}_t$ in $\Theta(n)$ time. Note that $\widetilde{G}_t$ contains at most $\Theta(n \polylog n)$ edges. Consequently, a naive approach of computing connected components by traversing all edges of $\widetilde{G}_t$ would result in a time complexity of $\Theta(n \polylog n)$.

It is worth emphasizing that the complexity of the offline algorithm proposed in~\cite{AssasdiCC} is also  $\Theta(n \polylog n)$. Therefore, if $\Theta(n \polylog n)$ time is required just to compute the connected components of $\widetilde{G}_t$, simply rerunning the algorithm from~\cite{AssasdiCC} would be an equivalent solution our algorithm; rendering our effort in maintaining a sequence of sparse graphs meaningless.

Following the notation of the previous sections, let $\widetilde{G} = (V, \widetilde{E})$ be a sparse graph maintained by our algorithm, $\Phi$ the set of anchor nodes and $\Phi_{v} = \Nt{}{v} \cap \Phi$ the anchor set nodes which are connected to $v$ in our sparse solution $\widetilde{G}$. To ease notation, we write $u \sim v$ if and only if $u$ and $v$ belong to the same connected component of $\widetilde{G}$. 

We will describe a procedure that constructs a function $f:V \xrightarrow{}  \mathbb{N}$. 
$f$ will encode our clustering solution as follows: two nodes $u, v \in V$ are in the same cluster if and only if $f(u) = f(v)$. We then argue that: if $\forall u, v \in \Phi$ which are in the same connected of $\widetilde{G}$ it holds that $\Nt{}{u} \cap \Nt{}{v} \neq \emptyset$ then $f$'s clustering coincides with the connected components of $\widetilde{G}$. Note that from~\cref{sec: All found clusters are dense} we can use~\cref{lem: two anchor sets have a non empty intersection in our sparse solution} to prove that the latter property holds with high probability. To ease notation, We write $u \sim v$ if and only if $u$ and $v$ belong to

\renewcommand{\algorithmicwhile}{\textbf{while}}
\renewcommand{\algorithmicendwhile}{\algorithmicend\ \textbf{while}}
\newcommand{\algorithmicbreak}{\textbf{break from for loop}}
\newcommand{\BREAK}{\STATE \algorithmicbreak}
\renewcommand{\algorithmicrequire}{\textbf{Initialization:}}

\begin{algorithm}
\caption{ComputeConnectedComponents($\widetilde{G}$)}\label{alg:ComputeConnectedComponents}
\begin{algorithmic}
\REQUIRE $\forall v \in V$: $f(v) \xleftarrow{} -1$, $ID \xleftarrow{} 0$, $Q \xleftarrow{} \Phi$
\WHILE{$Q \neq \emptyset$}
    \STATE Let $u$ be any node in $Q$
    \STATE $f(u) \xleftarrow{} ID$
    \STATE $T \xleftarrow{} \{u \}$
    \FORALL{$ v \in \Nt{}{u}$}
        \IF{$f(v) = -1$}
            \STATE \textcolor{blue}{No conflict phase}
            \STATE $f(v) \xleftarrow{} ID$
            \STATE $T \xleftarrow{} T \cup \{v\}$
        \ELSE
            \STATE \textcolor{blue}{Resolving conflict phase}
            \FORALL{$ v' \in T$}
                \STATE $f(v') \xleftarrow{} f(v)$
            \ENDFOR
            \BREAK
        \ENDIF
    \ENDFOR
\STATE $ID \xleftarrow{} ID + 1$
\STATE $Q \xleftarrow{} Q \setminus T$
\ENDWHILE
\FORALL{$v \in V \setminus \Phi$ such that $f(v) = -1$}
    \STATE Let $u$ be a node in $\Phi_v$
    \STATE $f(v)\xleftarrow{} f(u) $
\ENDFOR
\end{algorithmic}
\end{algorithm}

\begin{lem}
    If for $\widetilde{G}$ and $\Phi$:
    \begin{enumerate}
        \item $V = \bigcup\limits_{v \in \Phi} \Nt{}{v}$; and
        \item $\forall u, v \in \Phi$ such that $u \sim v$, $\Nt{}{u} \cap \Nt{}{v} \neq \emptyset$
    \end{enumerate}
    then $f(u) = f(v)$ if and only if $u \sim v$.
\end{lem}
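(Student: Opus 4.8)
The plan is to follow the labels $f$ as the while loop of \cref{alg:ComputeConnectedComponents} runs and to show that, once the whole procedure finishes, $f(x)$ depends only on the connected component of $x$ in $\widetilde{G}$; writing $C_x$ for that component, I will prove $f(x)=\ell_{C_x}$ for a suitable \emph{component label} $\ell_{C_x}$. First I would record structural facts. Every edge of $\widetilde{G}$ has an endpoint in $\Phi$, so every neighbor of an anchor lies in that anchor's component and $V\setminus\Phi$ is independent. The while loop terminates because each iteration removes its chosen node $u$ from $Q$; and a node leaves $Q$ only by being put into some $T$, which assigns it a label $\neq-1$, so by termination every anchor is settled. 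Once a node has a label $\neq-1$ it is never put into a later $T$ and never re-chosen, so its label can change only during the single iteration in which it first becomes settled (when the conflict step relabels $T$). Finally, in the iteration that processes an anchor $u$, its neighbors are examined in some order, those visited before any conflict receive the current $ID$, and all of them together with $u$ lie in $C_u$; moreover a conflict at a neighbor $v$ happens exactly when $v$ was already settled at the \emph{start} of that iteration, since the neighbors examined earlier are distinct from $v$ and receive labels only during this iteration.

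For each component $C$ let $k^{*}(C)$ be the first iteration in which some node of $C$ gets settled; its chosen anchor then lies in $C$, and I set $\ell_{C}$ to be the value of $ID$ used in that iteration. The core is the invariant $P(k)$: at the end of iteration $k$, every settled node $x$ satisfies $f(x)=\ell_{C_x}$. To carry the induction through iteration $k$ with chosen anchor $u_k$ and $C:=C_{u_k}$, I split into two cases. If $k=k^{*}(C)$ then no neighbor of $u_k$ was settled before (all lie in the so-far-untouched $C$), so there is no conflict and $u_k$ and all its neighbors become settled with $ID=\ell_C$. If $k>k^{*}(C)$, I argue a conflict is forced: the first-touch anchor $u_{k^{*}(C)}$ is distinct from $u_k$ (it is settled while $u_k$ is not at the start of iteration $k$), and by the second hypothesis of the lemma these two same-component anchors share a neighbor $w$; since $w\in\Nt{}{u_{k^{*}(C)}}$, it became settled by the end of iteration $k^{*}(C)<k$, so $u_k$ has a neighbor settled before iteration $k$. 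Taking $v^{*}$ to be the first such neighbor in the processing order, the conflict step relabels $\{u_k\}$ together with the earlier-visited neighbors of $u_k$ to $f(v^{*})$, which equals $\ell_C$ by the inductive hypothesis since $v^{*}\in\Nt{}{u_k}\subseteq C$; the ``late'' neighbors are untouched and keep their (inductively correct) labels. This establishes $P(k)$.

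At the end of the while loop every anchor is settled, and every non-anchor settled inside the loop was a neighbor of the anchor chosen in its iteration, hence lies in that component; so every settled node $x$ has $f(x)=\ell_{C_x}$. For the final for-loop, the first hypothesis of the lemma guarantees that each remaining non-anchor $v$ has a neighbor anchor $u\in\Phi_v$, which is already settled with $f(u)=\ell_{C_u}=\ell_{C_v}$ (because $u\sim v$), so setting $f(v):=f(u)$ keeps $f(x)=\ell_{C_x}$, now for every node $x$. Finally, distinct components get distinct labels: if $C\neq C'$ then $k^{*}(C)\neq k^{*}(C')$ since all nodes settled in one iteration lie in a single component, and $ID$ strictly increases over iterations, so $\ell_C\neq\ell_{C'}$. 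Combining everything, $f(u)=f(v)\iff\ell_{C_u}=\ell_{C_v}\iff C_u=C_v\iff u\sim v$.

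The step I expect to be the main obstacle is handling the conflict step's semantics carefully: the inner loop \texttt{break}s at the first already-settled neighbor, so an anchor may get relabeled by one neighbor while its remaining neighbors --- and even other anchors swept into its $T$ --- are left for later iterations or for the final for-loop, and one must check this never produces two different labels inside a single component. This is also precisely where the second hypothesis is indispensable: it is what forces a fresh anchor re-entering an already-touched component to see an already-settled neighbor, and hence to inherit the component's existing label rather than spawn a new one.
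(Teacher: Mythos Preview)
Your proof is correct and follows essentially the same strategy as the paper: an induction over iterations of the while loop with a case split on whether a conflict occurs, using hypothesis~(2) to force a conflict whenever the current anchor lies in an already-touched component, and then handling the remaining non-anchors via hypothesis~(1) in the final for-loop. Your formulation via the explicit component label $\ell_C$ and the invariant $f(x)=\ell_{C_x}$ is arguably cleaner than the paper's version, and you are more explicit about why distinct components receive distinct labels, but the underlying argument is the same.
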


\begin{proof}
     Note that after the while loop is terminated all nodes in $\Phi$ are assigned a value different than $-1$. 
     In addition, $\forall v \in V \setminus \Phi$ there exists a node $u \in \Phi_{v}$ such that $f(v) = f(u)$.
     Thus, it suffices to argue that for any two nodes $u, v \in \Phi$: $f(u) = f(v)$ if and only if $u \sim v$. 
         Let $u_i$ be the node selected from $Q$ in the beginning of the while's loop $i$-th iteration, $f_i$ the $f$ function assignment after the termination of the $i$-th iteration and $\Phi_i$ all the nodes in $u \in \Phi$ such that $f_i(u) \neq -1$. Note that for any node $u$ if $f_i(u) \neq -1 $ then $\forall j > i$: $f_j(u) \neq -1 $ and $f_j(u) = f_i (u)$. Consequently, $\Phi_i \subset\Phi_{i+1}, \forall i $. We prove by induction the following statements in tandem:
         \begin{enumerate}
             \item if $f(u_i) = i$ then $\forall j \geq i$ and $v \in \Nt{}{u}$ $f_j(v) = k$; and
             \item $\forall u, v \in \Phi_i$: $f_i(u) = f_i(v)$ if and only if $u \sim v$.
         \end{enumerate}
         For $i = 1$ note that the algorithm updates the $f$ function only in the \textcolor{blue}{No conflict phase} for all nodes in $\Nt{}{u_1}$ with the same $ID = 1$. Thus, both statements hold.
         For $i>1$ we consider two cases:
         \begin{enumerate}
             \item If $f$ is updated only in the \textcolor{blue}{No conflict phase} then all nodes in $\Nt{}{u_i}$ receive the same $ID = i$. Since $ID = i$ is first used in iteration $i$ all nodes that received an $ID$ in previous iterations maintain their previous $ID$ (which is smaller than $i$). To conclude this case, we need to argue that $\forall u \in \Phi_{i-1}$ $u \not \sim u_i$. Towards a contradiction assume that such a node exists. Let $u \in \Phi_{i-1}$ be such that $f_{i-1} (u) = j < i$ and $u \sim u_i$. By definition, $u_j$ is the first node to receive $ID = j$ and since $\sim$ is a transitive relation $u_j \sim u_i$. By our inductive hypothesis $\forall v \in \Nt{}{u_j}$ $f_{i-1}(v) = j$ and by the conditions of the current lemma, $\exists w \in \Nt{}{u_j} \cap \Nt{}{u_i}$. However, this is a contradiction because $f$ would have been updated also in \textcolor{blue}{Resolving conflict phase}.
             \item If $f$ is updated also in the \textcolor{blue}{Resolving conflict phase} then we have that $f_i (u_i) = j < i$ and it suffices to argue that there exists a node $ u \in \Phi_{i-i}$ such that $f_{i-1} (u) = j$ it holds that $u \sim u_i$. Let $v \in \Nt{}{u_i}$ be a node such that $f_{i-1} (v) = j \neq -1$. Note that such a node exists since $f$ is also updated in the \textcolor{blue}{Resolving conflict phase}. W.l.o.g. assume that the value of $f$ at $v$ was updated at time $j'\in [j, i-1]$. Then, either $v \in \Phi_{i-i}$ or $v \in \Nt{}{u_{j'}}$ where for $u_{j'} \in \Phi_{i-1}$ it holds that $f_{i-1}(u_{j'}) = j$. In both cases since $\sim$ is a transitive relation we conclude that a node $ u \in \Phi_{i-i}$ such that $f_{i-1} (u) = j$ and $u \sim u_i$ always exists.
         \end{enumerate}
\end{proof}

\begin{lem}
    \cref{alg:ComputeConnectedComponents} has complexity $O (|V|)$.
\end{lem}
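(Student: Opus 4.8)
The plan is to bound the running time by charging work to three sources: the initialization, the body of the \texttt{while} loop, and the final \texttt{for} loop over $V\setminus\Phi$. The initialization is immediately $O(|V|)$: it sets $f(v)\gets -1$ once per vertex and copies $\Phi$ into $Q$. The final loop is also $O(|V|)$: it performs $O(1)$ work per vertex of $V\setminus\Phi$ (read one element of $\Phi_v$ and copy its $f$-value), where $\Phi_v\ne\emptyset$ because of the precondition $V=\bigcup_{u\in\Phi}\Nt{}{u}$, and that $f$-value is never $-1$ since every anchor was at some point popped from $Q$, hence inserted into some $T$, hence assigned a nonnegative $ID$. So the whole question reduces to bounding the total cost of the \texttt{while} loop.

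For the \texttt{while} loop I would set up an amortization argument resting on one invariant: during the whole execution, for each vertex $v$ the value $f(v)$ goes from $-1$ to a nonnegative number at most once, because the algorithm never resets an $f$-value back to $-1$ (every assignment to $f$ after the initialization uses either the current $ID\ge 0$ or an already-nonnegative value). First, each iteration pops a vertex $u$ from $Q$ and puts it into $T$, so it leaves $Q$ and never returns; hence the loop runs at most $|\Phi|\le |V|$ times. The cost of one iteration is $O(1)$ plus the number $k$ of neighbours of $u$ that the inner \texttt{for} loop actually visits, plus $|T|$ for the relabeling; since $T\subseteq\{u\}\cup(\text{visited neighbours})$ we have $|T|=O(k)$, so one iteration costs $O(1+k)$.

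The crux is then to show $\sum_i k_i=O(|V|)$, summing over iterations. The inner loop over $\Nt{}{u}$ either visits all of $\Nt{}{u}$ (when no conflict is ever found) or stops and breaks at the \emph{first} neighbour $v$ with $f(v)\ne -1$. In either case, every visited neighbour except at most one — the conflict node, in the second case — had $f=-1$ when it was visited and is assigned a nonnegative value during this iteration (directly in the ``no conflict'' phase, possibly overwritten in the ``resolving conflict'' phase, always by a nonnegative value). By the invariant, each vertex can be such a visited-and-flipped neighbour in at most one iteration, so $\sum_i (k_i-1)\le |V|$; adding the at most one conflict node per iteration gives $\sum_i k_i\le |V|+|\Phi|\le 2|V|$. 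Hence the \texttt{while} loop costs $\sum_i O(1+k_i)=O(|V|)$, and summing the three parts yields the claimed $O(|V|)$ bound.

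The step I expect to be the main obstacle — and the only place with real content — is this last amortization, because an anchor vertex $u$ can have $\widetilde G$-degree as large as $\Theta(n\,\polylog n)$, so a crude per-iteration bound of $O(|\Nt{}{u}|)$ would only give $\Theta(n\,\polylog n)$ in total. The point of the break-on-conflict rule is exactly that, apart from the single conflict vertex, every neighbour scanned in an iteration is paid for by a once-only $-1\to\text{label}$ flip of that neighbour; making the bookkeeping ``at most one unpaid visit per iteration'' precise is where care is needed. A minor secondary point is checking well-definedness of the final loop, which is where the hypothesis $V=\bigcup_{u\in\Phi}\Nt{}{u}$ is used.
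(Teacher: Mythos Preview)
Your proof is correct and uses essentially the same amortization as the paper. The paper's version is terser: it simply observes that every vertex has its $f$-value updated at most twice (once from $-1$ to an $ID$, and at most once more in the conflict-resolution relabel), so the total number of ``$f(\cdot)\leftarrow$'' operations in the \texttt{while} loop is at most $2|V|$, and every unit of work in that loop is tied to such an assignment. Your version unpacks the same count by separating the at-most-one conflict node per iteration from the neighbours that get their unique $-1\to\text{label}$ flip, arriving at the same $O(|V|)$ bound.
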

\begin{proof}
    The final for loop has complexity at most $\Theta(|V|)$. To bound the complexity of the while loop we simply note that for every node $v$ its $f$ function value changes at most twice. Consequently, operations of the form ``$f(\cdot) \xleftarrow{}$'' are at most $2 |V|$.
\end{proof}

\section{Agreement and Heaviness Calculation with \texorpdfstring{$ O( \log n )$}{} Sampled Nodes}\label{sec: Agreement and heavyness calculation}

In this section we design the ProbabilisticAgreement($ u, v, \epsilon$) and Heavy($ u, \epsilon$) procedures.

These two procedure are used to test if two nodes $u$ and $v$ are in $\e$-agreement and if a node $u$ is $\epsilon$-heavy.
The idea is that if we let some slack on how much in agreement and how much heavy a node is then $O (\polylog n)$ samples of each node's neighborhood are enough to design a ProbabilisticAgreement($ u, v, \epsilon$) procedure that, w.h.p. answers affirmatively if the two nodes are indeed in $0.1 \e$-agreement and at the same time answers negatively if these two nodes are not in $\e$-agreement. Consequently, if two nodes are in an $\e'$-agreement for $\e'$ between $0.1 \e$ and $\e$ then  ProbabilisticAgreement($ u, v, \epsilon$) may answer positively or negatively. We do the same for the Heavy($ u, \epsilon$) procedure.

We start by proving some useful inequalities regarding the neighborhood of two nodes in case they are indeed in $\e$-agreement and in case they are not. Subsequently we analyze the properties of ProbabilisticAgreement($ u, v, \epsilon$) and finally we do the same for the Heavy($ u, \epsilon$) procedure.

\begin{algorithm}
\caption{\probabilisticagreement{u}{v}{\epsilon}}\label{alg:probagreem}
\begin{algorithmic}
\STATE {\bfseries Initialization:} $k = 300 \log n / \epsilon$.
\FOR{$i = 1$ to $k$}
    \STATE Draw a random neighbor $r_i$ of $u$ and a random neighbor $s_i$ of $v$.
    \STATE Let $x_i = \mathbbm{1} \{ r_i \in N(u) \setminus N(v) \} $, $y_i = \mathbbm{1} \{ r_i \in N(v) \setminus N(u) \} $
\ENDFOR
\IF{$\sum_i x_i / k < 0.4 \epsilon$ and $\sum_i y_i / k < 0.4 \epsilon$}
    \STATE Output `` YES''
\ELSE
    \STATE Output ``NO''
\ENDIF
\end{algorithmic}
\end{algorithm}

\begin{algorithm}
\caption{\heavyprocedure{u}{\epsilon}}\label{alg:heavy}
\begin{algorithmic}
\STATE {\bfseries Initialization:} $k = 1200 \log n / \epsilon$.
\FOR{$i = 1$ to $k$}
    \STATE Draw a random neighbor $v_i$ of $u$.
    \STATE Let $x_i = \mathbbm{1} \{    \text{ProbabilisticAgreement}(u, v_i, \epsilon) = \text{``No''} \} $
\ENDFOR
\IF{$\sum_i x_i / k < 1.2 \epsilon$}
    \STATE Output `` Yes''
\ELSE
    \STATE Output ``No''
\ENDIF
\end{algorithmic}
\end{algorithm}

\begin{prop}
    Let $u, v$ be two nodes which are in $\epsilon$-agreement then $\frac{|N(v)|}{1 - \epsilon} \geq  |N(u)| \geq (1 - \epsilon) |N(v)|$.
\end{prop}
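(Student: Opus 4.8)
The statement is a direct unpacking of the definition of $\epsilon$-agreement (\cref{def: epsilon-agreement between nodes}), so the plan is essentially a short case analysis with no real obstacle. Recall that $u,v$ being in $\epsilon$-agreement means $|N(u)\triangle N(v)| < \epsilon \max\{|N(u)|,|N(v)|\}$. I would first observe that the claimed two inequalities are symmetric in $u$ and $v$ up to relabeling, so it suffices to argue under the assumption $|N(u)| \ge |N(v)|$, and then note the other ordering follows by swapping the roles of $u$ and $v$.

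Assume then $|N(u)| \ge |N(v)|$, so that $\max\{|N(u)|,|N(v)|\} = |N(u)|$ and in particular $|N(u)| \ge |N(v)| \ge (1-\epsilon)|N(v)|$, which already gives the lower bound $|N(u)| \ge (1-\epsilon)|N(v)|$. For the upper bound, I would write $|N(u)| = |N(u)\cap N(v)| + |N(u)\setminus N(v)|$, bound $|N(u)\cap N(v)| \le |N(v)|$ and $|N(u)\setminus N(v)| \le |N(u)\triangle N(v)| < \epsilon|N(u)|$, and combine these to get $|N(u)| < |N(v)| + \epsilon|N(u)|$, i.e. $(1-\epsilon)|N(u)| < |N(v)|$, hence $|N(u)| \le \tfrac{|N(v)|}{1-\epsilon}$ (for $\epsilon<1$). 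This is exactly the desired chain.

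Finally, for the symmetric case $|N(v)| \ge |N(u)|$, the same computation with $u,v$ interchanged gives $(1-\epsilon)|N(v)| \le |N(u)|$ and $|N(u)| \le |N(v)| \le \tfrac{|N(v)|}{1-\epsilon}$; both displayed inequalities of the proposition hold in this case as well. There is no genuine difficulty here — the only things to be careful about are (i) keeping track of which of $|N(u)|,|N(v)|$ realizes the maximum, and (ii) the harmless assumption $\epsilon < 1$ needed to divide by $1-\epsilon$, which is fine since $\epsilon$ is a small constant throughout the paper.
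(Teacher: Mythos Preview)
Your proposal is correct and follows essentially the same approach as the paper: assume w.l.o.g.\ that $|N(u)| \ge |N(v)|$ and use $|N(u)| - |N(v)| \le |N(u)\setminus N(v)| \le |N(u)\triangle N(v)| < \epsilon|N(u)|$ to get the nontrivial direction. The paper's proof is a one-liner that leaves the symmetric case and the trivial inequality implicit, whereas you spell both out; the substance is identical.
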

\begin{proof}
    Assume w.l.o.g. that $|N(u)| > |N(v)|$. Then $|N(u)| - |N(v)| \leq |N(u) \setminus N(v)| \leq |N(u) \triangle N(v)|  \leq \epsilon \max \{|N(u)|, |N(v)| \} $
\end{proof}

\begin{prop}
     Let $u, v$ be two nodes which are in $\epsilon$-agreement then $\frac{|N(u) \setminus N(v)|}{|N(u)|} \leq 1.1 \epsilon$ and $\frac{|N(v) \setminus N(u)|}{|N(v)|} \leq 1.1 \epsilon$.
\end{prop}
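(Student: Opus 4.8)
The plan is to read both inequalities straight off the definition of $\epsilon$-agreement (\cref{def: epsilon-agreement between nodes}) together with the degree comparison from the preceding proposition. The starting observation is that $N(u)\setminus N(v)\subseteq N(u)\triangle N(v)$, so the agreement hypothesis immediately gives
$$|N(u)\setminus N(v)|\ \le\ |N(u)\triangle N(v)|\ <\ \epsilon\,\max\{|N(u)|,|N(v)|\}.$$
All that remains is to replace the $\max$ on the right by a quantity controlled by $|N(u)|$ alone (and, for the second inequality, by $|N(v)|$ alone).

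I would then do a two-case split on which neighborhood is larger. If $|N(u)|\ge|N(v)|$, the $\max$ equals $|N(u)|$ and we are done with slack, since $\epsilon<1.1\epsilon$. If instead $|N(v)|>|N(u)|$, then the preceding proposition — more precisely its symmetric counterpart, valid because the agreement condition is symmetric in $u,v$ — yields $|N(v)|\le |N(u)|/(1-\epsilon)$, and substituting gives
$$|N(u)\setminus N(v)|\ <\ \epsilon\,|N(v)|\ \le\ \frac{\epsilon}{1-\epsilon}\,|N(u)|\ \le\ 1.1\,\epsilon\,|N(u)|,$$
where the last step uses that $\epsilon$ is a small enough constant (any $\epsilon\le 1/11$ suffices). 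Dividing by $|N(u)|$ establishes the first bound; the second, $|N(v)\setminus N(u)|/|N(v)|\le 1.1\epsilon$, follows by the identical argument with $u$ and $v$ swapped, using $N(v)\setminus N(u)\subseteq N(u)\triangle N(v)$ and the mirror degree inequality.

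There is no genuinely hard step here: the statement is a routine consequence of the definition and the previous proposition. The only point requiring a little care is the constant $1.1$ in place of $1$, which is needed exactly to absorb the factor $1/(1-\epsilon)$ arising when the two neighborhoods differ in size — this is why the hypothesis implicitly needs $\epsilon$ bounded away from $1$. One could alternatively state the sharper bound $\epsilon/(1-\epsilon)$ times the respective degree, but the $1.1\epsilon$ form is the one used in the sequel.
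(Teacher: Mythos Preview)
Your proof is correct and follows essentially the same approach as the paper: bound $|N(u)\setminus N(v)|$ by $|N(u)\triangle N(v)|\le \epsilon\max\{|N(u)|,|N(v)|\}$, then invoke the preceding proposition to replace the $\max$ by $|N(u)|/(1-\epsilon)\le 1.1|N(u)|$ for small $\epsilon$. The paper skips your explicit case split by applying the degree bound $\max\{|N(u)|,|N(v)|\}\le |N(u)|/(1-\epsilon)$ directly (which already subsumes both cases), but the argument is otherwise identical.
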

\begin{proof}
    $|N(u) \setminus N(v)| \leq |N(u) \triangle N(v)|  \leq \epsilon \max \{|N(u)|, |N(v)| \}  \leq  \epsilon \frac{|N(u)|}{1-\epsilon} \leq 1.1 |N(u)| $ for $\epsilon$ small enough.
\end{proof}

\begin{prop}
     Let $u, v$ be two nodes which are not in $\epsilon$-agreement then either $\frac{|N(u) \setminus N(v)|}{|N(u)|} \geq 0.5 \epsilon$ or $\frac{|N(v) \setminus N(u)|}{|N(v)|} \geq 0.5 \epsilon$.
\end{prop}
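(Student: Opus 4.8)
The plan is to unpack the hypothesis and split the symmetric difference into its two halves. Write $a = |N(u)\setminus N(v)|$ and $b = |N(v)\setminus N(u)|$, so that $|N(u)\triangle N(v)| = a + b$. The assumption that $u$ and $v$ are \emph{not} in $\epsilon$-agreement is, by definition, exactly the statement $a + b \geq \epsilon \max\{|N(u)|, |N(v)|\}$. Consequently at least one of $a$ and $b$ is at least $\frac{\epsilon}{2}\max\{|N(u)|, |N(v)|\}$, since otherwise their sum would be strictly less than $\epsilon\max\{|N(u)|,|N(v)|\}$.

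Since the claimed conclusion is symmetric under swapping $u$ and $v$, I would assume without loss of generality that $|N(u)| \geq |N(v)|$, so that $\max\{|N(u)|, |N(v)|\} = |N(u)|$. If the larger of $a, b$ is $a$, then $|N(u)\setminus N(v)| = a \geq \frac{\epsilon}{2}|N(u)| = 0.5\epsilon\,|N(u)|$, which is the first alternative in the statement. If instead the larger one is $b$, then $|N(v)\setminus N(u)| = b \geq \frac{\epsilon}{2}|N(u)| \geq \frac{\epsilon}{2}|N(v)| = 0.5\epsilon\,|N(v)|$, where the middle inequality is precisely where the normalization $|N(u)| \geq |N(v)|$ enters; this is the second alternative.

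The argument is entirely elementary and I do not anticipate any genuine obstacle. The only point that requires a moment's attention is that the two alternatives are normalized by different quantities ($|N(u)|$ versus $|N(v)|$), so one must invoke the choice of which neighborhood is the larger one in order to transfer the bound $a+b \geq \epsilon\max\{|N(u)|,|N(v)|\}$ into the form stated in the second case; no further idea is needed.
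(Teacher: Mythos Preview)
Your proof is correct and follows essentially the same approach as the paper: split the symmetric difference into its two parts, use that their sum is at least $\epsilon\max\{|N(u)|,|N(v)|\}$, and conclude that one of the two normalized quantities is at least $0.5\epsilon$. The paper's proof is terser and omits the WLOG normalization step you spell out, but the argument is identical.
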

\begin{proof}
    Since $|N(v) \setminus N(u)| + |N(u) \setminus N(v)|  = |N(u) \triangle N(v)|  \geq \epsilon \max \{|N(u)|, |N(v)| \}$  then either  $|N(v) \setminus N(u)| > 0.5 \epsilon |N(v)| $ or $|N(u) \setminus N(v)| > 0.5 \epsilon |N(u)| $.
\end{proof}

\begin{obs} \label{obs:probsofindicatorsforagreement}
Let $u, v$ be two nodes which share an edge, $s, t$ two nodes chosen uniformly at random from $N(u)$ and $N(v)$ respectively and  $X_s = \mathbbm{1} \{ s \in N(u) \setminus N(v) \} $, $X_t = \mathbbm{1} \{ s \in N(v) \setminus N(u) \} $. We have the following:
\begin{enumerate}
    \item If $u, v$ are \underline{in agreement} then $\prob [ X_s = 1]  \leq 1.1 \epsilon$ and $\prob [ X_t = 1]  \leq 1.1 \epsilon$.
    \item If $u, v$ are \underline{not in agreement} then either $\prob [ X_s = 1]  \geq 0.5 \epsilon$ or $\prob [ X_t = 1]  \geq 0.5 \epsilon$.
\end{enumerate}
\end{obs}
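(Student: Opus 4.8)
The plan is to observe that \cref{obs:probsofindicatorsforagreement} is an immediate reformulation of the three preceding propositions in terms of probabilities, so the proof is essentially a one-line translation. The first step is to compute the two probabilities exactly. Since $u$ and $v$ share an edge, both $N(u)$ and $N(v)$ are non-empty, so drawing $s$ uniformly from $N(u)$ and $t$ uniformly from $N(v)$ is well defined, and
$$\prob[X_s = 1] = \prob[s \in N(u)\setminus N(v)] = \frac{|N(u)\setminus N(v)|}{|N(u)|}, \qquad \prob[X_t = 1] = \frac{|N(v)\setminus N(u)|}{|N(v)|},$$
where I read the indicator defining $X_t$ as $\mathbbm{1}\{t \in N(v)\setminus N(u)\}$.

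For part (1), assuming $u,v$ are in $\epsilon$-agreement, I would invoke the second of the three propositions above, which gives $|N(u)\setminus N(v)|/|N(u)| \le 1.1\epsilon$ and $|N(v)\setminus N(u)|/|N(v)| \le 1.1\epsilon$; plugging these into the two displayed identities yields $\prob[X_s=1]\le 1.1\epsilon$ and $\prob[X_t=1]\le 1.1\epsilon$. For part (2), assuming $u,v$ are not in $\epsilon$-agreement, I would invoke the third proposition, which states that either $|N(u)\setminus N(v)|/|N(u)| \ge 0.5\epsilon$ or $|N(v)\setminus N(u)|/|N(v)| \ge 0.5\epsilon$; by the same displayed identities this is precisely the claim that $\prob[X_s=1]\ge 0.5\epsilon$ or $\prob[X_t=1]\ge 0.5\epsilon$.

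I do not expect any genuine obstacle here: the only real content — converting the symmetric-difference definition of $\epsilon$-agreement into one-sided bounds on $|N(u)\setminus N(v)|/|N(u)|$ and $|N(v)\setminus N(u)|/|N(v)|$ — has already been done in the three propositions, and what is left is simply the elementary fact that the probability a uniform sample from a finite non-empty set $S$ lands in a subset $T\subseteq S$ equals $|T|/|S|$. The one thing worth noting explicitly is that both neighborhoods are non-empty, which is exactly the hypothesis that $u$ and $v$ are adjacent.
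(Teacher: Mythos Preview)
Your proposal is correct and matches the paper's intent: the paper does not even supply a proof for this observation, since it follows immediately from the three preceding propositions exactly as you describe. Your remark about reading the indicator for $X_t$ with $t$ rather than $s$ is also the intended interpretation.
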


\begin{thm}{(Chernoff bound)}\label{thm: chernoff}
    Let $X_1, X_2, \dots, X_k$ be $k$ i.i.d. random variables in $[0,1]$. Let $X = \sum_i X_i / k$. Then:
    \begin{enumerate}
        \item For any $\delta \in [0, 1]$ and $U \geq E [X] $ we have
        $$ \prob [X \geq (1 + \delta) U] \leq \exp (- \delta^2  U k /3)$$
        \item For any $\delta > 0$ and $U \leq E [X] $ we have
        $$ \prob [X \leq (1 - \delta) U] \leq \exp (- \delta^2  U k /2)$$
    \end{enumerate}
\end{thm}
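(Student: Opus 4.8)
The plan is to run the standard Chernoff--Cram\'er exponential moment method on the sum $S := kX = \sum_{i=1}^k X_i$, which by linearity satisfies $\mathbb{E}[S] = k\,\mathbb{E}[X]$. For part~(1), write $\mu := kU$, so $\mu \geq \mathbb{E}[S]$, and we wish to bound $\prob[S \geq (1+\delta)\mu]$. First I would apply Markov's inequality to $e^{tS}$ for a parameter $t>0$ to be chosen: $\prob[S \geq (1+\delta)\mu] \leq e^{-t(1+\delta)\mu}\,\mathbb{E}[e^{tS}] = e^{-t(1+\delta)\mu}\prod_{i} \mathbb{E}[e^{tX_i}]$, where the last equality uses the independence of the $X_i$.

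The key per-variable estimate is the convexity bound $e^{tx} \leq 1 + (e^t-1)x$, valid for all $x\in[0,1]$ and $t\geq 0$, since $x\mapsto e^{tx}$ is convex and hence lies below the chord through $(0,1)$ and $(1,e^t)$. Taking expectations and using $1+y\le e^y$ gives $\mathbb{E}[e^{tX_i}] \leq 1+(e^t-1)\mathbb{E}[X_i] \leq \exp\!\big((e^t-1)\mathbb{E}[X_i]\big)$, hence $\mathbb{E}[e^{tS}] \leq \exp\!\big((e^t-1)\mathbb{E}[S]\big) \leq \exp\!\big((e^t-1)\mu\big)$, using $e^t-1>0$ and $\mathbb{E}[S]\le\mu$. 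Plugging this in and minimizing $-t(1+\delta)\mu + (e^t-1)\mu$ over $t>0$ gives the optimal $t=\ln(1+\delta)$ and the classical bound $\prob[S\ge(1+\delta)\mu] \leq \big(e^{\delta}/(1+\delta)^{1+\delta}\big)^{\mu}$. To reach the stated clean form I would then invoke the elementary scalar inequality $(1+\delta)\ln(1+\delta)-\delta \geq \delta^2/3$ for $\delta\in[0,1]$ (a one-variable calculus check: the two sides agree at $\delta=0$ and a derivative comparison shows the inequality on $[0,1]$), which yields $\prob[S\ge(1+\delta)\mu] \leq e^{-\delta^2\mu/3} = e^{-\delta^2 Uk/3}$, establishing part~(1).

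Part~(2) is entirely symmetric. Set again $\mu := kU$, now with $\mu \leq \mathbb{E}[S]$, and apply Markov to $e^{-tS}$: $\prob[S \leq (1-\delta)\mu] \leq e^{t(1-\delta)\mu}\,\mathbb{E}[e^{-tS}]$. The companion convexity bound $e^{-tx}\le 1-(1-e^{-t})x$ on $[0,1]$ gives $\mathbb{E}[e^{-tX_i}] \le \exp\!\big((e^{-t}-1)\mathbb{E}[X_i]\big)$, and since $e^{-t}-1<0$ while $\sum_i\mathbb{E}[X_i]=\mathbb{E}[S]\ge\mu$, we get $\mathbb{E}[e^{-tS}] \le \exp\!\big((e^{-t}-1)\mu\big)$. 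Optimizing over $t>0$ with $t=\ln\frac{1}{1-\delta}$ (for $\delta\in[0,1)$) produces $\big(e^{-\delta}/(1-\delta)^{1-\delta}\big)^{\mu}$, and the inequality $(1-\delta)\ln(1-\delta)+\delta \geq \delta^2/2$ on $[0,1)$ finishes it. For $\delta\ge 1$ the claimed bound is trivial: $(1-\delta)U\le 0\le X$ forces $\prob[X\le(1-\delta)U]=0$ unless $U=0$, in which case the right-hand side equals $1$.

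The only genuinely non-mechanical steps are the two scalar inequalities relating $\delta\ln(1\pm\delta)$ to $\delta^2$; everything else is routine bookkeeping. I do not anticipate any real obstacle beyond pinning down the classical constants $1/3$ and $1/2$, which are standard.
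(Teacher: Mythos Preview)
Your argument is correct and follows the classical Cram\'er--Chernoff route. Note, however, that the paper does not actually prove this theorem: it is stated without proof as the standard Chernoff bound and then invoked as a black box in the subsequent lemmas. So there is no ``paper's own proof'' to compare against; your write-up simply supplies the textbook derivation that the authors take for granted.
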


\begin{lem}\label{lem: 1 agreement}
If Algorithm \ref{alg:probagreem} outputs ``YES'' then $u,v$ are in $\epsilon$-agreement with probability greater than $1 - 1/n^3$.
\end{lem}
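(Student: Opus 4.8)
**Proof proposal for Lemma (if ProbabilisticAgreement outputs ``YES'' then $u,v$ are in $\epsilon$-agreement w.p. $\geq 1-1/n^3$).**

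The plan is to prove the contrapositive in probability: if $u,v$ are \emph{not} in $\epsilon$-agreement, then with probability at least $1-1/n^3$ the algorithm outputs ``NO''. By~\cref{obs:probsofindicatorsforagreement}, when $u,v$ are not in agreement, at least one of $\prob[X_s=1]\geq 0.5\epsilon$ or $\prob[X_t=1]\geq 0.5\epsilon$ holds; assume w.l.o.g. it is the first, i.e.\ $\EE[x_i] = \prob[r_i \in N(u)\setminus N(v)] \geq 0.5\epsilon$ for each $i$, where the $x_i$ are i.i.d.\ Bernoulli (the $r_i$ are independent uniform samples from $N(u)$, available via the neighborhood-sample query). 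The algorithm outputs ``YES'' only if $\sum_i x_i/k < 0.4\epsilon$, so it suffices to bound $\prob[\sum_i x_i/k < 0.4\epsilon]$.

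The key step is a Chernoff lower-tail bound (\cref{thm: chernoff}, part 2). Set $U = 0.5\epsilon \leq \EE[\sum_i x_i/k]$ and choose $\delta = 0.2$ so that $(1-\delta)U = 0.4\epsilon$. Then
\[
\prob\!\left[\tfrac{1}{k}\sum_i x_i < 0.4\epsilon\right] \leq \exp\!\left(-\delta^2 U k/2\right) = \exp\!\left(-(0.04)(0.5\epsilon)(300\log n/\epsilon)/2\right) = \exp(-3\log n) = 1/n^{3}.
\]
Since ``NO'' is output whenever \emph{either} threshold test fails, and the event $\{\sum_i x_i/k \geq 0.4\epsilon\}$ already forces ``NO'', we conclude that when $u,v$ are not in $\epsilon$-agreement the algorithm outputs ``YES'' with probability at most $1/n^3$. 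Equivalently, conditioned on the output being ``YES'', the nodes are in $\epsilon$-agreement except on an event of probability at most $1/n^3$; a clean way to phrase the final line is that $\prob[\text{output ``YES''} \wedge u,v \text{ not in } \epsilon\text{-agreement}] \leq 1/n^3$, which is what the downstream union bounds actually need.

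I do not expect any real obstacle here: the only things to be careful about are (i) matching $\delta$ and $U$ to the constants $0.4\epsilon$ in the code and $0.5\epsilon$ from~\cref{obs:probsofindicatorsforagreement}, and (ii) the fact that the relevant indicators are $x_i$ alone (the $s_i$ and $y_i$ samples play no role in this direction, since we only need one of the two failure modes to trigger ``NO''). The arithmetic with the constant $k = 300\log n/\epsilon$ is chosen precisely so the exponent works out to $3\log n$; the $\epsilon$'s cancel, which is why the sample size is $\Theta(\log n/\epsilon)$ rather than depending on $\epsilon$ in a worse way. One minor point worth stating explicitly is the w.l.o.g.\ reduction: the symmetric case $\prob[X_t=1]\geq 0.5\epsilon$ is handled identically using the $y_i$ indicators and the second threshold test, so no union bound over the two cases is needed.
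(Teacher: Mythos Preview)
Your proposal is correct and essentially identical to the paper's proof: both argue the contrapositive, invoke \cref{obs:probsofindicatorsforagreement} to get $\EE[x_i]\geq 0.5\epsilon$ (w.l.o.g.), and apply the lower-tail Chernoff bound with $U=0.5\epsilon$, $\delta=0.2$, $k=300\log n/\epsilon$ to obtain $\exp(-3\log n)=1/n^3$.
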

\begin{proof}
    We proceed by upper bounding the probability that the algorithm outputs ``YES'' and $u,v$ are not in agreement. Since $u,v$ are not in agreement by observation \ref{obs:probsofindicatorsforagreement} we have that either $E[\sum_i X_i / k] > 0.5 \epsilon$ or $E[\sum_i Y_i / k] > 0.5 \epsilon$. W.l.o.g. assume that $E[\sum_i X_i / k] > 0.5 \epsilon$.  By using the second inequality of the Chernoff bound with $U = 0.5 \epsilon$ and $\delta = 0.2$. We have 
    $$
    \prob [\sum_i X_i / k < (1 -\delta) U ] \leq  \exp( - 0.2^2 \cdot 0.5 \epsilon \cdot 300 \log n / \epsilon \cdot 0.5) = \exp (- 3 \cdot \log n) = \frac{1}{n^3}
    $$
\end{proof}

\begin{lem}\label{lem: 2 agreement}
If $u,v$ are in $0.1 \epsilon$-agreement then Algorithm \ref{alg:probagreem} outputs ``YES'' with probability greater than $1 - 1/n^3$.
\end{lem}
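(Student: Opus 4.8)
The plan is a one-sided Chernoff argument, the mirror image of the proof of \cref{lem: 1 agreement}. Assume $u$ and $v$ are in $0.1\epsilon$-agreement. First I would apply part (1) of \cref{obs:probsofindicatorsforagreement} with the parameter $0.1\epsilon$ in place of $\epsilon$: since $u$ and $v$ are in $0.1\epsilon$-agreement (and, as always, share an edge), a uniformly random neighbor $r_i$ of $u$ lies in $N(u)\setminus N(v)$ with probability at most $1.1\cdot 0.1\epsilon = 0.11\epsilon$, and symmetrically for a uniformly random neighbor of $v$. Writing $X=\sum_i x_i/k$ and $Y=\sum_i y_i/k$ for the two empirical averages computed by \cref{alg:probagreem}, this gives $\mathbb{E}[X]\le 0.11\epsilon$ and $\mathbb{E}[Y]\le 0.11\epsilon$, where in each case the $k=300\log n/\epsilon$ summands are i.i.d.\ $\{0,1\}$-valued.

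Next I would bound the probability that the algorithm fails to output ``YES'', i.e.\ that $X\ge 0.4\epsilon$ or $Y\ge 0.4\epsilon$. The only delicate point is the choice of parameters in \cref{thm: chernoff}(1), whose hypothesis forces $\delta\in[0,1]$; taking $U=\mathbb{E}[X]$ would require $\delta$ as large as $0.4\epsilon/\mathbb{E}[X]>2$, which is not allowed. Instead I would use the (valid) choice $U=0.2\epsilon\ge 0.11\epsilon\ge\mathbb{E}[X]$ and $\delta=1$, obtaining
\[
\Pr[X\ge 0.4\epsilon]=\Pr[X\ge(1+\delta)U]\le\exp\!\left(-\delta^2 U k/3\right)=\exp\!\left(-0.2\epsilon\cdot\frac{300\log n}{\epsilon}\cdot\frac13\right)=\exp(-20\log n)=\frac{1}{n^{20}},
\]
and the identical computation gives $\Pr[Y\ge 0.4\epsilon]\le 1/n^{20}$.

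Finally I would conclude by a union bound: \cref{alg:probagreem} outputs ``YES'' unless $X\ge 0.4\epsilon$ or $Y\ge 0.4\epsilon$, an event of probability at most $2/n^{20}<1/n^{3}$ (for $n$ at least a small constant). I do not expect a genuine obstacle here; the argument is routine Chernoff, and the one thing to watch is keeping the multiplicative deviation $\delta$ inside $[0,1]$, which is why we settle for the somewhat wasteful bound $U=0.2\epsilon$ — still leaving a gap of $1/n^{20}$ versus the required $1/n^{3}$. This slack is exactly the point of the two procedures: an input pair whose true agreement level lies strictly between $0.1\epsilon$ and $\epsilon$ may be answered either way, but \cref{lem: 1 agreement} and \cref{lem: 2 agreement} together guarantee that nothing outside that band is misclassified with high probability.
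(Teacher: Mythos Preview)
Your proof is correct and follows essentially the same approach as the paper: bound $\mathbb{E}[X],\mathbb{E}[Y]\le 0.11\epsilon$ via \cref{obs:probsofindicatorsforagreement}, apply the upper-tail Chernoff bound to each average, and finish with a union bound. Your parameter choice $U=0.2\epsilon$, $\delta=1$ is in fact cleaner than the paper's, which invokes \cref{thm: chernoff}(1) with $\delta=2$ despite the stated hypothesis $\delta\in[0,1]$; your version avoids that slip while still clearing the $1/n^{3}$ target by a wide margin.
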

\begin{proof}
We will upper bound the probability that $u, v$ are in $0.1 \epsilon$-agreement and the algorithm outputs ``NO''. Note that since $u,v$ are in $0.1 \epsilon$-agreement by observation \ref{obs:probsofindicatorsforagreement} we have that $E[\sum_i X_i / k] < 1.1 \cdot 0.1 \epsilon = 0.11 \epsilon$ and $E[\sum_i Y_i / k] \leq 1.1 \cdot 0.1 \epsilon = 0.11 \epsilon$. The algorithm outputs ``NO'' if either $ \sum_i X_i / k \geq 0.4 \epsilon $ or $ \sum_i Y_i / k \geq 0.4 \epsilon $. We bound the probability of the first event using the first inequality of the Chernoff bound with $U = 0.4 \epsilon$ and $\delta = 2 $ $( > 0.4/0.11 - 1)$. We have 
$$
  \prob [\sum_i X_i / k \geq (1 +\delta) U ] \leq  \exp( - 2^2 \cdot 0.11 \epsilon \cdot 300 \log n / \epsilon \cdot 0.334) < \exp (- 44 \cdot \log n) = 1 / n^{44}
$$
Overall the probability that the algorithm outputs ``NO'' by the union bound is upper bounded by $ 1 / n^{44} + 1 / n^{44} < 1 /n^3$
\end{proof}

\begin{lem} \label{lem:probsofindicatorsforheavy}
Let $u$ be a node and $v$ a random neighbor of $u$. Let $X_v = \mathbbm{1} \{  \text{ProbabilisticAgreement}( u, v, \epsilon) = \text{``No''} \} $. We have the following:
\begin{enumerate}
    \item If $u$ is in \underline{$0.1 \epsilon$-agreement} with a \underline{$(1 - \epsilon)$-fraction} of its neighborhood then $\prob [ X_u = 1]  \leq 1.1 \epsilon $.
    \item If $u$ is not in \underline{$\epsilon$-agreement} with a \underline{$\epsilon$-fraction} of its neighborhood then $\prob [ X_u = 1]  > 3 \epsilon $.
\end{enumerate}
\end{lem}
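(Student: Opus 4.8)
The plan is to reduce both statements to the two correctness guarantees already proven for \probabilisticagreement{u}{v}{\epsilon}. Reading \cref{lem: 1 agreement} contrapositively, if $u$ and $v$ are \emph{not} in $\epsilon$-agreement then \probabilisticagreement{u}{v}{\epsilon} outputs ``NO'' (i.e.\ $X_v=1$) with probability at least $1-1/n^3$; and \cref{lem: 2 agreement} says that if $u$ and $v$ are in $0.1\epsilon$-agreement then it outputs ``YES'' (i.e.\ $X_v=0$) with probability at least $1-1/n^3$. Since $v$ is drawn uniformly from $N(u)$ and the internal coins of \probabilisticagreement{u}{v}{\epsilon} are independent of that draw, the law of total probability gives $\prob[X_u=1]=\frac{1}{|N(u)|}\sum_{v\in N(u)}\prob[\probabilisticagreement{u}{v}{\epsilon}=\text{``NO''}]$, so everything reduces to bounding the per-neighbor probabilities. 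I would partition $N(u)$ into three classes: $G$, the neighbors in $0.1\epsilon$-agreement with $u$; $B$, the neighbors not in $\epsilon$-agreement with $u$; and $M$, the remaining ``gray-zone'' neighbors (in $\epsilon$- but not $0.1\epsilon$-agreement), for which \probabilisticagreement{u}{v}{\epsilon} carries no guarantee. On $G$ each term is at most $1/n^3$, on $B$ each term is at least $1-1/n^3$, and on $M$ I would use only the trivial bounds in $[0,1]$.

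For the first statement the hypothesis says $|G|\ge(1-\epsilon)|N(u)|$, hence $|B\cup M|\le\epsilon|N(u)|$; bounding the $G$-terms by $1/n^3$ and charging every $B\cup M$-term by $1$ gives $\prob[X_u=1]\le 1/n^3+\epsilon\le 1.1\epsilon$, where the last step holds because $\epsilon$ is a fixed constant, so $1/n^3\le 0.1\epsilon$ once $n$ is large (and for small $n$ the whole problem is trivial). For the second statement, the hypothesis guarantees that a constant-times-$\epsilon$ fraction of $N(u)$ lies in $B$; keeping only those terms and invoking \cref{lem: 1 agreement} gives $\prob[X_u=1]\ge(1-1/n^3)\cdot\frac{|B|}{|N(u)|}$, and one checks that this exceeds $3\epsilon$ for $n$ large. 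The constant $3$ here is chosen precisely so that a later Chernoff argument (\cref{thm: chernoff}) against the threshold $1.2\epsilon$ inside the \textsc{Heavy} subroutine certifies the correct answer on both genuinely-heavy and genuinely-light nodes.

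I do not anticipate a real obstacle: each part is a one-line neighborhood decomposition followed by \cref{lem: 2 agreement} (resp.\ \cref{lem: 1 agreement}). The only places that need care are purely bookkeeping: absorbing the additive $1/n^3$ error terms into the $1.1\epsilon$ and $3\epsilon$ constants, and handling the gray-zone class $M$ only in the harmless direction — counting it against the small ``bad'' budget in the first statement and simply discarding it in the second. This mirrors the slack deliberately built into the section: the test is only required to be correct for genuinely-heavy and genuinely-light nodes, with an unconstrained regime in between.
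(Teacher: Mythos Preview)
Your approach is exactly the paper's: condition on the agreement class of the random neighbor $v$, invoke \cref{lem: 2 agreement} on the ``good'' side and \cref{lem: 1 agreement} on the ``bad'' side, and bound the remaining probability mass trivially. The paper uses a two-way split ($A_{uv}$ versus its complement for part (1), $B_{uv}$ versus its complement for part (2)) rather than your three-way $G/M/B$ partition, but the arithmetic is the same and your handling of the gray zone $M$ is the correct refinement.

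One caveat worth flagging: your hedge ``one checks that this exceeds $3\epsilon$'' in part (2) does not actually go through as written. The hypothesis only guarantees $|B|/|N(u)|\ge\epsilon$, so your lower bound is $(1-1/n^3)\epsilon$, which is \emph{not} $>3\epsilon$. The paper's own displayed inequality for part (2) has the very same slip (it plugs in $\prob[B_{uv}]=1$ and $\prob[X_u=1\mid\overline{B_{uv}}]=1$, neither of which is justified). So this is a constants issue in the lemma as stated, not a flaw in your strategy relative to the paper; the downstream use in \cref{lem: 2 heavy} would go through if the hypothesis in part (2) were strengthened to, say, a $4\epsilon$-fraction, or the conclusion weakened.
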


\begin{proof}
Let $A_{uv}$ be the event that $u$ and $v$ are in $0.1 \epsilon$-agreement and $B_{uv}$ be the event that $u$ and $v$ are not in $\epsilon$-agreement. For (1) we have:
    \begin{dmath*}
    {\prob [ X_u = 1]  = \prob [ X_u = 1 \mid A_{uv}] \cdot  \prob [A_{uv}]} + {\prob [ X_u = 1 \mid \widetilde{A_{uv}}] \cdot  \prob [\widetilde{A_{uv}}]}  \leq {1/n^3 \cdot 1 + 1 \cdot \epsilon} < 1.1 \epsilon
    \end{dmath*}
and for (2) we have that:
 \begin{dmath*}
    {\prob [ X_u = 1]  = \prob [ X_u = 1 \mid B_{uv}] \cdot  \prob [B_{uv}]} + {\prob [ X_u = 1 \mid \widetilde{B_{uv}}] \cdot  \prob [\widetilde{B_{uv}}]}  
    \geq { (1 - 1/n^3) \cdot 1 + 1 \cdot (1 - \epsilon)} > 3 \epsilon
    \end{dmath*}
\end{proof}

\begin{lem}\label{lem: 1 heavy}
If $u$ is in $0.1 \epsilon$-agreement with at least a $(1 - \epsilon)$-fraction of its neighborhood, then Algorithm \ref{alg:heavy} outputs ``Yes'', i.e., that the node is heavy, with probability greater than $1 - 1/n^3$.
\end{lem}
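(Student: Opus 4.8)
The plan is to mimic the analysis of \cref{lem: 2 agreement}, applying a Chernoff bound to the indicator variables $x_i = \mathbbm{1}\{\text{ProbabilisticAgreement}(u, v_i, \epsilon) = \text{``No''}\}$ computed inside \heavyprocedure{u}{\epsilon}. First I would invoke part (1) of \cref{lem:probsofindicatorsforheavy}: since $u$ is in $0.1\epsilon$-agreement with at least a $(1-\epsilon)$-fraction of its neighborhood, each $x_i$ (for $v_i$ a uniformly random neighbor of $u$) satisfies $\EE[x_i] \le 1.1\epsilon$, and the $x_i$ are i.i.d.\ because the neighbors $v_i$ are drawn independently and each call to \probabilisticagreement{u}{v_i}{\epsilon} uses fresh randomness. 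Thus $\EE\big[\sum_i x_i / k\big] \le 1.1\epsilon$.

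Next I would bound the probability that the algorithm outputs ``No'', i.e.\ that $\sum_i x_i / k \ge 1.2\epsilon$. Apply the first (upper-tail) inequality of the Chernoff bound in \cref{thm: chernoff} with $U = 1.2\epsilon \ge \EE[\sum_i x_i/k]$ and $\delta$ chosen so that $(1+\delta)U = 1.2\epsilon$ — wait, that forces $\delta = 0$, which is too weak. The correct move is to take $U = 1.1\epsilon$ (a valid upper bound on the mean) and $\delta$ with $(1+\delta)\cdot 1.1\epsilon = 1.2\epsilon$, giving $\delta = 1/11$; then
\begin{align*}
\prob\Big[\textstyle\sum_i x_i / k \ge 1.2\epsilon\Big] = \prob\Big[\textstyle\sum_i x_i/k \ge (1+\tfrac{1}{11})\cdot 1.1\epsilon\Big] \le \exp\!\big(-\tfrac{1}{121}\cdot 1.1\epsilon \cdot \tfrac{1200\log n}{\epsilon}\cdot \tfrac13\big),
\end{align*}
and since $\tfrac{1}{121}\cdot 1.1 \cdot \tfrac{1200}{3} \approx 3.6 > 3$, this is at most $1/n^{3}$. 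Hence the algorithm outputs ``Yes'' with probability at least $1 - 1/n^3$.

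The only subtlety — and the main thing to get right — is the independence claim: the variables $x_i$ are functions of the random neighbor $v_i$ \emph{and} the internal coins of \probabilisticagreement{u}{v_i}{\epsilon}, and one must note that conditioning on the realization of $v_i$ does not change the fact that $\EE[x_i] \le 1.1\epsilon$ holds (this is exactly what part (1) of \cref{lem:probsofindicatorsforheavy} asserts, with $v$ a uniformly random neighbor). So the $x_i$ are genuinely i.i.d.\ and the Chernoff bound of \cref{thm: chernoff} applies directly. Everything else is the routine constant-chasing shown above, and the slack in the constants ($1200$ versus the $\approx 1000$ that would suffice) makes the bound go through comfortably.
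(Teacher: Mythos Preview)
Your proposal is correct and essentially identical to the paper's proof: both invoke part (1) of \cref{lem:probsofindicatorsforheavy} to get $\EE[\sum_i x_i/k]\le 1.1\epsilon$, then apply the upper-tail Chernoff bound from \cref{thm: chernoff} with $U=1.1\epsilon$ and $\delta=1/11$, yielding $\exp(-\tfrac{1}{121}\cdot 1.1\epsilon\cdot \tfrac{1200\log n}{\epsilon}\cdot\tfrac{1}{3})<1/n^{3}$. Your extra remark on the i.i.d.\ structure of the $x_i$'s is a welcome clarification the paper leaves implicit.
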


\begin{proof}
We will upper bound the probability that Algorithm \ref{alg:heavy} outputs ``No''. Note that from lemma \ref{lem:probsofindicatorsforheavy} $E[\sum_i X_i / k] > 1.1 \epsilon$. By using the first inequality of the Chernoff bound with $U = 1.1 \epsilon$ and $\delta = 1 / 11$. We have:
$$
  \prob [\sum_i X_i / k \geq (1 +\delta) U ] \leq  \exp( - 1/11^2 \cdot 1.1 \epsilon \cdot 1200 \log n / \epsilon \cdot 1/3) < \exp (- 3.63 \cdot \log n) < 1 / n^3
$$
\end{proof}

\begin{lem}\label{lem: 2 heavy}
If $u$ is not in $\epsilon$-agreement with at least a $\epsilon$-fraction of its neighborhood, then Algorithm \ref{alg:heavy} outputs ``No'', i.e., that the node is not heavy, with probability greater than $1 - 1/n^3$.
\end{lem}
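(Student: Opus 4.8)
The plan is to mirror the proof of \cref{lem: 1 heavy}, replacing the upper-tail Chernoff bound by the lower-tail one. First I would record that, by construction, the indicators $x_1,\dots,x_k$ computed inside \heavyprocedure{u}{\epsilon} are i.i.d.\ Bernoulli random variables: each iteration of \cref{alg:heavy} draws a fresh uniform neighbour $v_i$ of $u$ and runs an independent invocation of \probabilisticagreement{u}{v_i}{\epsilon} on fresh internal randomness, so the $x_i=\mathbbm{1}\{\text{that invocation outputs ``No''}\}$ are mutually independent with a common mean $p:=\prob[x_i=1]$.

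Next I would invoke \cref{lem:probsofindicatorsforheavy}(2): since $u$ is \emph{not} in $\epsilon$-agreement with at least an $\epsilon$-fraction of its neighbourhood, we have $p>3\epsilon$, hence $\EE\left[\sum_i x_i/k\right]=p>3\epsilon$. Inspecting \cref{alg:heavy}, the algorithm returns the wrong answer ``Yes'' precisely on the event $\sum_i x_i/k<1.2\epsilon$, i.e.\ when the empirical mean falls below $0.4$ times the lower bound $3\epsilon$ on its expectation, so it suffices to bound the probability of this event.

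Finally I would apply the second (lower-tail) inequality of \cref{thm: chernoff} with $U=3\epsilon\le\EE\left[\sum_i x_i/k\right]$ and $\delta=3/5$, so that $(1-\delta)U=1.2\epsilon$:
\[
  \prob\left[\sum_i x_i/k<1.2\epsilon\right]\le\exp\left(-\frac{\delta^2 U k}{2}\right)=\exp\left(-\frac{(3/5)^2\cdot 3\epsilon\cdot(1200\log n/\epsilon)}{2}\right)=\exp(-648\log n)=n^{-648},
\]
which is far below $1/n^3$, completing the argument. I do not expect a genuine obstacle: the only points that need a little care are (i) verifying that the $x_i$ are truly i.i.d., which relies on each call to \probabilisticagreement{}{}{} using independent samples, and (ii) choosing the Chernoff parameters so that $U\le\EE\left[\sum_i x_i/k\right]$ while $(1-\delta)U$ equals the decision threshold $1.2\epsilon$; once these are set the bound is immediate, exactly as in \cref{lem: 1 heavy}.
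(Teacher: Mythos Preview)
Your proposal is correct and essentially identical to the paper's proof: both invoke \cref{lem:probsofindicatorsforheavy}(2) to get $\EE[\sum_i x_i/k]>3\epsilon$, then apply the lower-tail Chernoff bound with $U=3\epsilon$ and $\delta=0.6$ to obtain $\exp(-648\log n)<1/n^3$. You add a sentence justifying the independence of the $x_i$, which the paper leaves implicit, but otherwise the arguments coincide.
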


\begin{proof}
We will upper bound the probability that Algorithm \ref{alg:heavy} outputs ``Yes''. Note that from lemma \ref{lem:probsofindicatorsforheavy} $E[\sum_i X_i / k] > 3 \epsilon$. By using the second inequality of the Chernoff bound with $U = 3 \epsilon$ and $\delta = 0.6$. We have:
$$
  \prob [\sum_i X_i / k \leq (1 -\delta) U ] \leq  \exp( - 0.6^2 \cdot 3 \epsilon \cdot 1200 \log n / \epsilon \cdot 1/2) < \exp (- 648 \cdot \log n) < 1 / n^3
$$
\end{proof}

\section{Final Theorem}

We are now ready to prove that the clustering produced by our algorithm for every $t$ is a constant factor approximation to the optimal correlation clustering solution for graph $G_t$. To this end note that by~\cref{lem: original paper and how to bound the approximation ratio} it is enough to argue that with high probability:
\begin{enumerate}
    \item all dense enough clusters found by the agreement algorithm on each graph $G_t$ are also identified by our algorithm and all their nodes are clustered together; and
    \item all clusters that are found by our dynamic agreement algorithm are dense enough
\end{enumerate} 
For (1) we use~\cref{thm: all nodes of C are clustered together} and for (2) we use~\cref{thm: C is dense}.

\maintheorem*

\begin{proof}
\cref{thm: all nodes of C are clustered together} assumes the following event:  for all pair of nodes $u, v$ that are in $\epsilon/10^{14}$-agreement and all nodes $u'$ that are $\epsilon/10^{14}$-heavy the ProbabilisticAgreement($ u, v, \epsilon$) and Heavy($ u', \epsilon$) procedures of~\cref{sec: Agreement and heavyness calculation} output ``Yes'' and~\cref{thm: C is dense} assumes that: for all or all pair of nodes $u, v$ that are not in $\epsilon$-agreement and all nodes $u'$ that are not $\epsilon$-heavy the ProbabilisticAgreement($ u, v, \epsilon$) and Heavy($ u', \epsilon$) procedures of~\cref{sec: Agreement and heavyness calculation} output ``No''. 
Using~\cref{lem: 1 agreement},~\cref{lem: 2 agreement},~\cref{lem: 1 heavy} and ~\cref{lem: 2 heavy} we bound the probability that this event does not happen by $n^2 /n^3 + n^2/n^3 +n /n^3 + n/n^3 < 4/n$. Upper bounding over all times and possible clusters and using~\cref{thm: all nodes of C are clustered together} and~\cref{thm: C is dense} we conclude that the probability of the dynamic algorithm to output at each time a $O(1)$-approximation is at least:
\begin{equation*}
    1 - n^3/n^{10^{2}} - n^3/n^{10^{2}} - 4/n > 1 - 5/n
\end{equation*}
\end{proof}
\section{Structural Properties of the \textit{Agreement} Decomposition}\label{sec: structural properties of the decomposition}

Let $G$ be the graph and let $\mathcal{C}$ be the clustering produced by \vanillaagreementalgo{G}. Let $u,v$ be two nodes which belong to the same non-trivial cluster $C$ of $\mathcal{C}$, then for $\epsilon$ small enough the following properties hold, which were shown in \cite{OCCC}.

\newlist{Properties}{enumerate}{2}
\setlist[Properties]{label=Property \arabic*, font=\textbf, itemindent=1.5cm}

\begin{Properties}\label{properties}
    \item $ \abs{ N_G(u) \cap C}  \geq (1 - 3 \epsilon) \abs{N_G(u)}$ \label{property: N_G(u) cap C  >= (1 - 3 epsilon) N_G(u)}
    \item $ \abs{ N_G(u) \setminus C}  <  3 \epsilon \abs{N_G(u)}$ \label{property: N_G(u) setminus C  <  3 epsilon N_G(u)}
    \item $ \abs{C}  \geq (1 - 3 \epsilon) \abs{N_G(u)}$ \label{property: C  >= (1 - 3 epsilon) N_G(u)}
    \item $ \abs{ N_G(u) \cap C}  \geq (1 - 9 \epsilon) \abs{C}$ \label{property: N_G(u) cap C  >= (1 - 9 epsilon) C}
    \item $ \abs{ C \setminus N_G(u) }  <  9 \epsilon \abs{C}$ \label{property: C setminus N_G(u)   <  9 epsilon C}
    \item $ \abs{N_G(u)}  \geq (1 - 9 \epsilon) \abs{C}$ \label{property: N_G(u)  >= (1 - 9 epsilon) C}
    \item $\abs{N_G(u) \cap N_G(v)} \geq (1 - 5 \epsilon) \max \{ \abs{N_G(u)}, \abs{N_G(v)}  \}   $ \label{property: N_G(u) cap N_G(v) >= (1 - 5 epsilon) max(N_G(u), N_G(v)}
    \item $ \abs{N_G(v)} (1 - 5 \epsilon) \leq \abs{N_G(u)} \leq \frac{\abs{N_G(v)}}{1 - 5 \epsilon}$ \label{property: N_G(v) (1 - 5 epsilon) <= N_G(u) <= N_G(v)/(1 - 5 epsilon)}
    \item $\abs{C \setminus N_G(u)} < 9 \epsilon \abs{C} < \frac{9 \epsilon}{1 - 9 \epsilon} \abs{N_G(u)} $ \label{property: C setminus N_G(u) < 9 epsilon C < (9 epsilon)/(1 - 9 epsilon) N_G(u) }
    \item $\abs{N_G(u) \setminus C} < 3 \epsilon \abs{N_G(u)} < \frac{3 \epsilon}{1 - 3 \epsilon} \abs{C} $ \label{property: N_G(u) setminus C < 3 epsilon N_G(u) < (3 epsilon)/(1 - 3 epsilon) C}
    \item $N_G (u) \cap N_G (v)  \neq \emptyset $ \label{property: N_G (u) cap N_G (v) neq emptyset}
\end{Properties}

\section{Additional Experiments}\label{sec: AdditionalExperiments}
Here we have the experiments for the rest of the datasets presented in~\cref{sec: experimental evaluation}. Moreover, we present the performance of each dataset/algorithm pair when we restrict the node stream to only additions and we calculate the objective after all nodes have arrived. The correlation clustering objective value of each algorithm is divided by the performance of \algosingletons. Since \textsc{PIVOT} is (on expectation) a 3-approximation we note that the solution achieved by all other algorithms is at most a multiplicative factor 3 away from the optimum offline solution.

\begin{table}[h!]\label{tab: comparison with offline pivot to get a sense of the approximation ratio}
\centering
\caption{Performance on the entire graph when node stream contains only additions}
\begin{tabular}{lcccc}
\toprule
\textbf{Dataset} & \algoagreement & \algopivot & \algosingletons & \textsc{PIVOT} \\
\midrule
\datafacebook & 0.97 & 1.13 & 1.00 & 1.19 \\
\dataemail    & 0.95 & 1.08 & 1.00 & 1.25 \\
\datahepth    & 1.00 & 1.20 & 1.00 & 1.22 \\
\dataastroph  & 0.99 & 1.10 & 1.00 & 0.98 \\
\bottomrule
\end{tabular}
\end{table}

\begin{figure}[ht]
  \centering
  \includegraphics[width=0.55\textwidth]{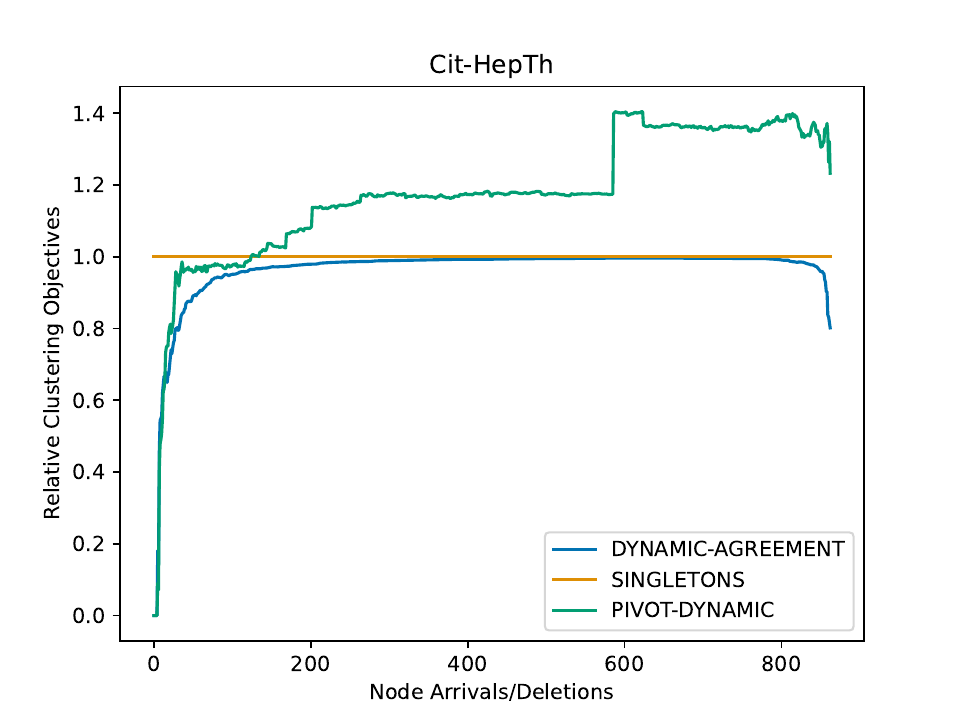} 
  \caption{Correlation clustering objective relative to singletons}
  \label{fig: Cit-HepTh}
\end{figure}

\begin{figure}[ht]
  \centering
  \includegraphics[width=0.55\textwidth]{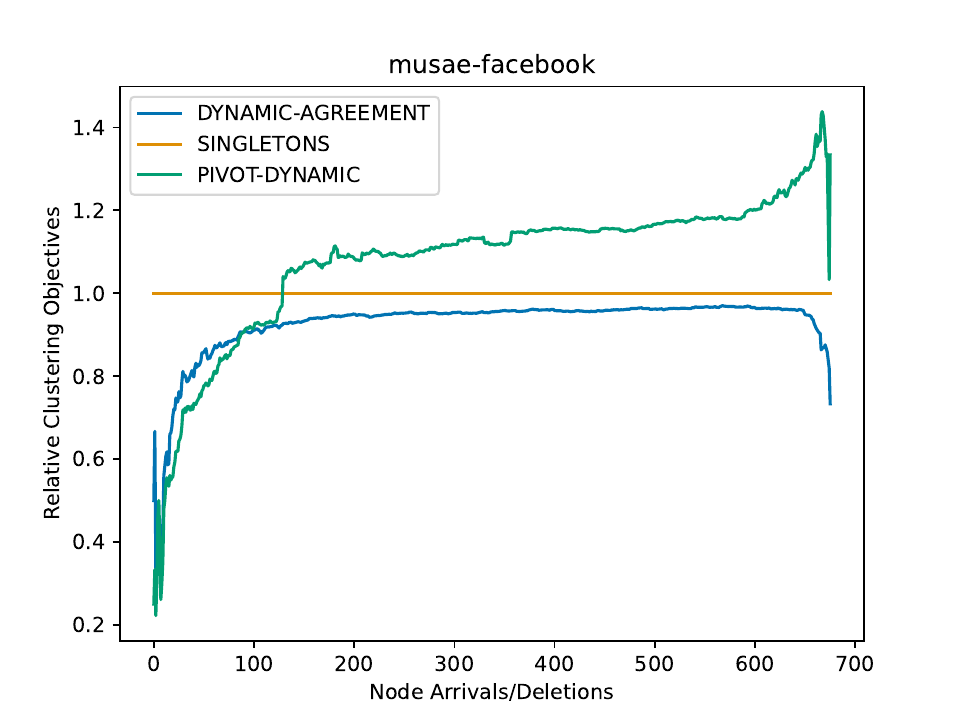} 
  \caption{Correlation clustering objective relative to singletons}
  \label{fig: musae-facebook}
\end{figure}

\begin{figure}[ht]
  \centering
  \includegraphics[width=0.55\textwidth]{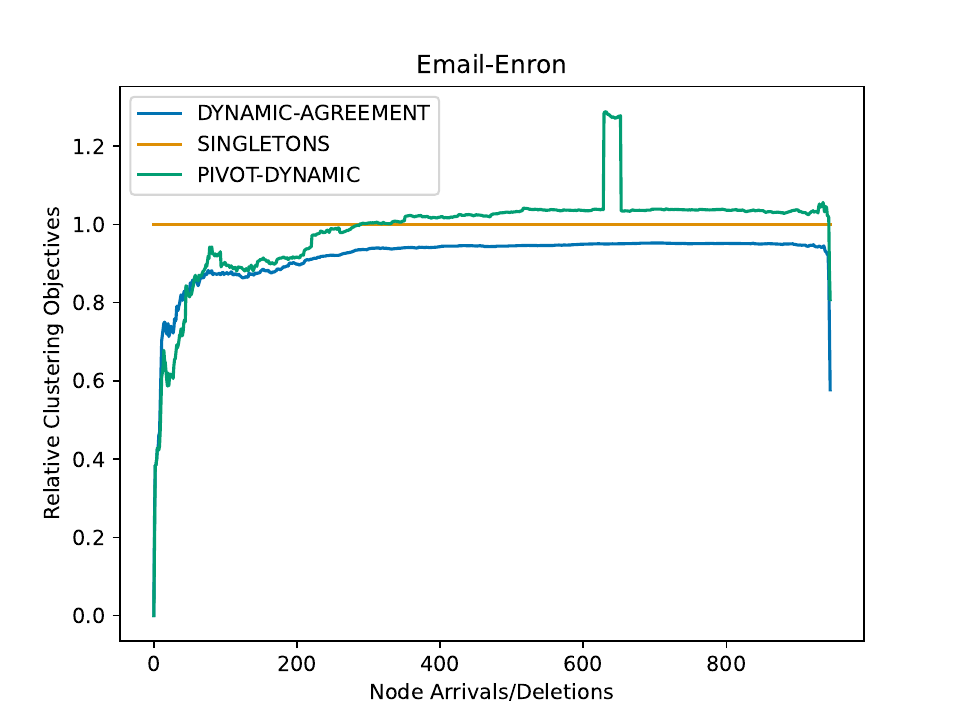} 
  \caption{Correlation clustering objective relative to singletons}
  \label{fig: Email-Enron}
\end{figure}

\end{document}